\documentclass[a4paper,11pt]{article}

\usepackage[margin=2.5cm]{geometry}

\usepackage{lmodern} 
\usepackage{blindtext}
\usepackage{float}
\usepackage[T1]{fontenc} 
\usepackage[utf8]{inputenc}

\usepackage[backend=biber,date=year,doi=true,isbn=false,url=false,maxbibnames=5,sorting=none,style=numeric-comp]{biblatex}
\defbibfilter{appendixOnlyFilter}{
	segment=1 
	and not segment=0 
}
\newbibmacro{string+doiurlisbn}[1]{%
  \iffieldundef{doi}{%
    \iffieldundef{url}{%
      \iffieldundef{isbn}{%
        \iffieldundef{issn}{%
          #1%
        }{%
          \href{http://books.google.com/books?vid=ISSN\thefield{issn}}{#1}%
        }%
      }{%
        \href{http://books.google.com/books?vid=ISBN\thefield{isbn}}{#1}%
      }%
    }{%
      \href{\thefield{url}}{#1}%
    }%
  }{%
    \href{http://dx.doi.org/\thefield{doi}}{#1}%
  }%
}
\DeclareFieldFormat{title}{\usebibmacro{string+doiurlisbn}{\mkbibemph{#1}}}
\DeclareFieldFormat[article,thesis,incollection,inproceedings]{title}{\usebibmacro{string+doiurlisbn}{\mkbibquote{#1}}}

\usepackage{mathtools} 
\usepackage{amsfonts}
\usepackage{amsmath} 
\usepackage{dsfont}
\usepackage{physics}
\usepackage{enumerate}
\usepackage[shortlabels]{enumitem}
\usepackage[font=small,labelfont=bf]{caption}
\usepackage{tikz}
\usetikzlibrary{hobby} 
\usepackage{subcaption}
\usepackage{graphicx}
\graphicspath{{plots/}} 
\usepackage{placeins} 
\usepackage[normalem]{ulem} 
\usepackage{array} 
\newcolumntype{C}[1]{>{\centering\let\newline\\\arraybackslash\hspace{0pt}}m{#1}}
\usepackage{color}
\newcolumntype{L}[1]{>{\raggedright\let\newline\\\arraybackslash\hspace{0pt}}m{#1}}

\usepackage{multirow}
\allowdisplaybreaks
\usepackage[colorlinks=true,linkcolor=blue,citecolor=citegreen]{hyperref}
\usepackage{enumerate}
\usepackage{algorithm}
\usepackage{algpseudocode}

\usepackage{amsthm}

\usepackage[affil-it]{authblk}
\usepackage[nameinlink,capitalise]{cleveref}
\usepackage{aliascnt}

\newtheorem{theorem}{Theorem}
\crefname{theorem}{Theorem}{Theorems}
\newaliascnt{lemma}{theorem}
\newtheorem{lemma}[lemma]{Lemma}
\aliascntresetthe{lemma}
\crefname{lemma}{Lemma}{Lemmas}
\newaliascnt{corollary}{theorem}

\aliascntresetthe{corollary}
\crefname{corollary}{Corollary}{Corollaries}
\newaliascnt{definition}{theorem}
\newtheorem{definition}[definition]{Definition}
\aliascntresetthe{definition}
\crefname{definition}{Definition}{Definitions}
\newaliascnt{conjecture}{theorem}

\aliascntresetthe{conjecture}
\crefname{conjecture}{Conjecture}{Conjectures}
\newaliascnt{remark}{theorem}

\aliascntresetthe{remark}
\crefname{remark}{Remark}{Remarks}
\newaliascnt{proposition}{theorem}

\aliascntresetthe{proposition}
\crefname{proposition}{Proposition}{Propositions}
\crefname{section}{Section}{Sections}
\crefname{subsection}{Subsection}{Subsections}
\crefname{appendix}{Appendix}{Appendices}
\crefname{equation}{Eq.}{Eqs.}

\renewcommand{\norm}[1]{\|#1\|}

\newcommand{\identity}{\mathds{1}}

\newcommand{\id}{\identity}
\newcommand{\chan}{\mathcal{E}}
\newcommand{\curlU}{\mathcal{U}}
\newcommand{\curlA}{\mathcal{A}}
\newcommand{\arr}[1]{\overrightarrow{#1}}
\newcommand{\sumvec}[1]{\sum_{\arr{#1}}}
\newcommand{\sumvecp}[1]{\sum_{\arr{#1}'}}

\newcommand{\psim}{\psi_{\arr{m}}}

\newcommand{\phimv}[2]{\phi_{\arr{#1},\arr{#2}}}

\newcommand{\phimpv}[2]{\phi_{\arr{#1}',\arr{#2}}}

\newcommand{\cte}{\widetilde{C}}

\DeclareMathOperator{\poly}{poly}

\DeclareMathOperator{\gen}{Gen}

\definecolor{citegreen}{RGB}{0,165,0}

\renewcommand{\paragraph}[1]{\textbf{\textit{#1}}}

\usepackage{todonotes}
\usepackage{xcolor}
\usepackage{booktabs}
\usepackage{multirow}
\usepackage{multicol}
\usepackage{qcircuit}

\renewcommand{\rm}[1]{\mathrm{#1}}
\renewcommand{\cal}[1]{\mathcal{#1}}
\newcommand{\bb}[1]{\mathbb{#1}}
\renewcommand{\ketbra}[1]{| #1  \rangle \!  \langle  #1 |}
\newcommand{\SWAP}{\mathrm{SWAP}}
\newcommand{\red}[1]{{\color{red}#1}}

\DeclareMathOperator{\Var}{Var}
\definecolor{orange-red}{rgb}{1.0, 0.27, 0.0}
\newcommand{\Xiaug}{\Xi^*}
\newcommand{\FNL}{\Phi}

\algrenewcommand\algorithmicrequire{\textbf{Input:}}
\algrenewcommand\algorithmicensure{\textbf{Output:}}

\title{Optimal and improved gate decompositions for accelerated classical simulation of near-Gaussian fermionic circuits}
\author[1,2,3]{Beatriz Dias}
\author[1]{Jan Lukas Bosse}
\author[1]{James R. Seddon}
\affil[1]{Phasecraft Ltd, London, UK}
\affil[2]{Department of Mathematics, School of Computation, Information and Technology, Technical~University~of~Munich, Germany}
\affil[3]{Munich Center for Quantum Science and Technology, Munich, Germany}

\begin{document}
\date{\today}
\maketitle

\begin{abstract}
Fermionic Gaussian circuits can be simulated efficiently on a classical computer, but become universal when supplemented with non-Gaussian operations. Similar to stabilizer circuits augmented with non-stabilizer resources, these non-Gaussian circuits can be simulated classically using rank- or extent-based methods. These methods decompose non-Gaussian states or operations into Gaussian ones, with runtimes that scale polynomially with measures of non-Gaussianity such as the rank and the extent -- quantities that typically grow exponentially with the number of non-Gaussian resources. Current fermionic rank- and extent-based simulators are limited to Gaussian circuits with magic-state injection. Extending them to mixed states and non-unitary channels has been hindered by the lack of known extent-optimized decompositions for physically relevant gates and noisy channels. In this work, we address this gap. First, we derive analytic decompositions for key non-Gaussian gates and channels, including decompositions for arbitrary two-qubit fermionic gates which are provably optimal for diagonal gates or those acting on Jordan-Wigner-adjacent qubit pairs.  Second, we show that stochastic Pauli noise can reduce the effective extent of non-Gaussian rotation gates, but that fermionic magic is substantially more robust to such noise than stabilizer magic. Finally, we demonstrate how these decompositions can accelerate classical sampling from the output distribution of a quantum circuit. This involves a generalization of existing sparsification methods, previously limited to convex-unitary channels, to circuits involving intermediate measurements and feed-forward. Our decompositions also yield speedups for emulating noisy Pauli rotations with quasiprobability simulators in the large-angle/arbitrary-strength-noise and small-angle/low-noise parameter regimes.
\end{abstract}

\newpage
\tableofcontents

\section{Introduction}

Quantum computers are expected to eventually exceed the capabilities of classical computers in performing certain specialized tasks. Problems such as factoring and unstructured search famously admit quantum algorithms~\cite{10.1109/SFCS.1994.365700,10.1145/237814.237866} offering  superpolynomial and polynomial speedups respectively compared to the fastest known classical algorithms.  Quantum computers also provide a natural platform for simulating the behaviour of other quantum systems~\cite{Lloyd2026univQSim}. It has been shown that there exist quantum simulation tasks that can be solved in polynomial time on a quantum computer, but cannot be solved efficiently classically unless widely-held complexity-theoretic conjectures are violated~\cite{Jordan2018bqpcomplete}. Qubit-based devices are naturally suited to studying spin systems, while fermionic systems can be simulated via fermion-to-qubit mappings such as the Jordan-Wigner transform~\cite{jordan_uber_1928}. The simulation of fermionic systems is of key interest in fields such as materials science~\cite{clinton_towards_2024} and quantum chemistry~\cite{cao_quantum_2019,McArdle2020qchem}. Recent years have seen increasingly large-scale demonstrations of spin~\cite{IBM2023utility,quantinuum2025digitalquantummagnetism,zhang2025googleNMR} and fermion~\cite{quantinuum2025FH,evered_probing_2025,phasecraft2025quantinuum,phasecraft2025google} models simulated on real quantum hardware, and this has in turn helped to drive parallel progress in the simulation of quantum computers by classical methods~\cite{PhysRevLett.69.2863,PhysRevLett.91.147902,Verstraete01032008,SCHOLLWOCK201196,10.1145/3564246.3585234, Tindall2024kickedIsingTN,fontana2023classicalsimulationsnoisyvariational,PhysRevA.99.062337,rudolph2023classicalsurrogatesimulationquantum,schuster2024polynomialtimeclassicalalgorithmnoisy,miller2025simulationfermioniccircuitsusing,PhysRevX.6.021043,Bravyi_2016_fastnorm,Bravyi_2019,Dias_2024,reardonsmith2024improved,PhysRevA.110.042402,hahn2024classicalsimulationquantumresource,gottesman1997stabilizercodesquantumerror,gidney_stim_2021,PhysRevA.70.052328,howard2017application,heinrich2018robustness,seddon2019quantifying,bourassaFastSimulationBosonic2021,10.1145/380752.380785,PhysRevA.65.032325,knill2001fermionic,bravyiLagrangianRepresentationFermionic2004,WANG20071}.

General quantum circuits are believed to be hard to simulate classically, requiring time and space exponential in the system size, but any claims of genuine quantum advantage for contemporary noisy intermediate-scale quantum (NISQ) devices are contingent on the limitations of currently known classical simulation techniques. Indeed tensor network methods~\cite{PhysRevLett.69.2863,PhysRevLett.91.147902,Verstraete01032008,SCHOLLWOCK201196}
have proved remarkably successful in simulating circuit sizes comparable to those currently available on hardware, and have recently been used to refute particular quantum advantage claims~\cite{Tindall2024kickedIsingTN}. The main barrier to practical quantum computation is the susceptibility of quantum devices to physical noise, so that competing classical simulation techniques need only simulate noisy, rather than perfect, quantum evolution in order to match the capabilities of current quantum hardware. Indeed, it is well known in several settings that the presence of noise reduces the overhead for classical simulation of quantum circuits, and sufficiently strong noise can wash out any possibility of quantum advantage. Classical simulation also plays a vital role in benchmarking, verification, and error mitigation~\cite{Czarnik2021errormitigation,TFLO} in current quantum hardware demonstrations~\cite{IBM2023utility,quantinuum2025digitalquantummagnetism,zhang2025googleNMR,quantinuum2025FH,evered_probing_2025,phasecraft2025quantinuum,phasecraft2025google}. Aside from verifying and aiding near term quantum  computations of fermionic systems, being able to simulate a wider class of fermionic systems on classical computers also has applications in computational physics. Being able to go beyond the mean-field description of free fermions is a crucial ingredient in many post-Hartree-Fock methods such as the Coupled Cluster~\cite{Kuemmel1991,ek1991} or the Configuration Interaction~\cite{DavidSherrill1999} methods.

The question of which classical simulation method is best suited to a given task depends on the particular setting. The most direct strategy is to explicitly store the state vector or density matrix, which requires memory constant in the circuit depth but scaling exponentially with the system size. This is feasible for small systems, but quickly becomes infeasible for a few tens of qubits. More sophisticated methods exploit circuit structure. Tensor network approaches~\cite{PhysRevLett.69.2863,PhysRevLett.91.147902,Verstraete01032008,SCHOLLWOCK201196}
leverage low entanglement to compress the description of a quantum state.  The more recent Pauli~\cite{10.1145/3564246.3585234,fontana2023classicalsimulationsnoisyvariational,PhysRevA.99.062337,rudolph2023classicalsurrogatesimulationquantum,schuster2024polynomialtimeclassicalalgorithmnoisy} and Majorana~\cite{miller2025simulationfermioniccircuitsusing} path propagation methods rely on the fact that it is often cheaper to backpropagate (Heisenberg-evolve) an observable through a circuit in a suitable basis, truncating negligible terms, rather than track the forward evolution of a quantum state. Randomness in circuit structure and the presence of incoherent noise both prove important in making these path propagation methods competitive.

Rank-based approaches are another class of method that simulate circuits built from classically tractable models (e.g., stabilizer or Gaussian operations) augmented with magic gates (or states)~\cite{PhysRevX.6.021043,Bravyi_2016_fastnorm,Bravyi_2019,Dias_2024,reardonsmith2024improved,PhysRevA.110.042402,hahn2024classicalsimulationquantumresource}. Stabilizer operations, encompassing Clifford circuits with stabilizer initial states, Pauli measurements and classical feed-forward, can be simulated classically in an efficient way using the stabilizer formalism~\cite{gottesman1997stabilizercodesquantumerror,PhysRevA.70.052328}. State-of-the-art stabilizer simulators have been demonstrated for Clifford circuits involving tens of thousands of qubits and millions of operations~\cite{gidney_stim_2021}.  Despite not being universal, stabilizer computations can be promoted to universality by consuming additional resources: either by supplementing the set of stabilizer operations with magic gates (e.g., the~$T$ gate) or equivalently by adding ancilla qubits initialized in magic states~\cite{PhysRevA.71.022316}. Rank-based methods simulate these augmented stabilizer circuits by describing the state vector as a linear combination of stabilizer states~\cite{PhysRevX.6.021043,Bravyi_2016_fastnorm,Bravyi_2019}. The stabilizer rank is an example of a magic monotone, meaning it is non-increasing under stabilizer operations, and is defined as the minimum number of stabilizer terms needed to represent the state vector. It generally scales exponentially in the number of magic gates implemented or magic states consumed within the circuit, and it is this rank that governs the runtime and space requirements of exact classical simulation. Admitting approximate simulation does not allow evasion of asymptotic exponential scaling, but does buy significant reductions in runtime, by replacing an exact high-rank state decomposition with a sparsified approximation with greatly reduced rank. It turns out that the approximate rank has an upper bound proportional to a measure of magic known as stabilizer extent, which is the minimal squared $L^1$-norm over all possible decompositions of a given non-stabilizer state into stabilizer terms. Despite the unavoidable exponential scaling, these methods are feasible for the simulation of circuits with little injected magic resource, even on large system sizes. Additionally, recent results show that considering different types of noise can either increase or decrease the hardness of simulation for this type of algorithm. For example, stochastic Pauli noise channels, such as the commonly considered depolarizing and dephasing channels, are stabilizer-preserving and tend to reduce simulation overhead, whereas non-stabilizer-preserving error channels such as amplitude-damping noise can increase runtime~\cite{seddon2019quantifying}.

A classically simulable quantum model that is more natural to the setting of fermionic quantum simulation is the set of (fermionic) Gaussian circuits~\cite{10.1145/380752.380785,PhysRevA.65.032325,knill2001fermionic,bravyiLagrangianRepresentationFermionic2004}. Unlike stabilizer circuits, fermionic Gaussian circuits are motivated by a well-known physical model which is integrable: free fermion dynamics, i.e., the dynamics of non-interacting fermions. Simultaneously, the bosonic flavour of these free models known as (bosonic) linear optics also admits efficient classical simulation algorithms~\cite{WANG20071}. Analogously to the stabilizer setting, supplementing fermionic Gaussian operations with appropriate (magic) gates or states leads to universality~\cite{doi:10.1098/rspa.2008.0189,PhysRevA.84.022310}.  Recent works based on minimal extensions of the stabilizer formalism have been adapted to the Gaussian context~\cite{Dias_2024,reardonsmith2024improved}, where fermionic magic states are decomposed into linear combinations of Gaussian states. This adaptation is primarily based on a novel phase-sensitive Gaussian simulator which is a minimal extension of the standard efficient simulation algorithms for Gaussian circuits (which do not keep track of phases). Similar algorithms have also been developed for non-Gaussian bosonic circuits~\cite{bourassaFastSimulationBosonic2021,PhysRevA.110.042402,hahn2024classicalsimulationquantumresource}. 

While in the context of near-stabilizer circuits, different algorithms have been developed based on decomposing either states, gates or channels into linear combinations of stabilizer components~\cite{Bravyi_2016_fastnorm,PhysRevX.6.021043,Bravyi_2019,Seddon_2021,Seddon_2021,howard2017application,heinrich2018robustness}, most work in the fermionic Gaussian setting has focused on decompositions of (magic) states~\cite{Dias_2024,cudby2024gaussiandecompositionmagicstates,reardonsmith2024improved}. This means that existing classical algorithms in the context of near-Gaussian circuits do not enable the application of magic gates directly. Instead, magic gates are implemented by introducing ancilla qubits initialized in the corresponding magic state and by applying an appropriate gadget of Gaussian operations, a technique known as gadgetization or magic state injection. 

Another line of research has considered the decomposition of channels into a linear combination of fermionic Gaussian maps~\cite{Hakkaku_2022} with weights that can be non-positive, forming a quasiprobability distribution. This leads to a Monte Carlo type  simulator, operating at the level of density matrices rather than state vectors, where observables are estimated by sampling Gaussian maps based on their weights and averaging over many trials.  The sampling cost is governed by the $L^1$-norm of the quasiprobability distribution, known as the fermionic nonlinearity~\cite{Hakkaku_2022}, and is again expected to scale exponentially with the number of magic gates. However here the sampling cost scales quadratically with the monotone rather than linearly in the case of Born rule probability estimation using extent-based techniques (see e.g.,~\cite[Section 6]{reardonsmith2024improved}). 

In this work, we obtain speedups for approximate rank-based methods for classically simulating fermionic non-Gaussian circuits, extending the class of circuits for which simulation is practically possible given finite classical computational resources. Our results hinge on the fact that the runtime and memory requirement for simulating circuit evolution and measurement using these methods scales linearly with the $L^1$-norm of some known decomposition of the evolved state vector into Gaussian terms. It is crucial to minimize this quantity as far as possible at the gate level, as modest improvements in the decomposition of individual gates can lead to reduction in runtime by a factor exponential in the number of resourceful gates. For example, for a circuit with $t$ instances of some non-Gaussian gate $U$, the runtime scales as $O(c^t)$ if the gate decomposition used has $L^1$-norm $c$. Finding an improved decomposition of $U$ with reduced norm $c/\alpha$ for some $\alpha>1$ will decrease the runtime for simulating the circuit by a factor $\alpha^{-t}$. The optimal $L^1$-norm achievable for a given magic state or gate is called its Gaussian extent.

While it is straightforward to find a feasible Gaussian decomposition, it is nontrivial to prove that a given decomposition is optimal. Finding an optimal decomposition numerically is hampered by the fact that, unlike in the stabilizer case, where the unitary extent could be found by convex optimization over the finite set of Clifford gates, the set of Gaussian unitary operations is continuous. Previous works on the Gaussian extent have focused on proving optimality for decompositions of pure resource states used for magic state injection \cite{Dias_2024,reardonsmith2024improved}. Pure state and unitary extent can be extended to mixed states and channels, via so-called convex roof extensions \cite{Uhlmann1998entropy,Uhlmann2010convexroof,Regula_2018,Seddon_2021,seddonThesis}. Computing these generalized resource measures numerically is harder still, as it involves non-convex optimization over the infinite set of all possible ensemble decompositions into pure states or unitary gates \cite{Adriazola2025nonconvex}.  Our strategy in this work is to obtain explicit analytic decompositions for unitary gates and channels commonly used in fermionic circuits, finding extent-optimal decompositions wherever possible, and improved decompositions in other cases.
 
Our most practically significant contribution is to provide analytic decompositions for non-Gaussian gates subject to stochastic Pauli noise, such that the effective extent (and therefore classical simulation cost) is reduced relative to that of the noiseless gate. In particular, we show that our decompositions are optimal for particular gate/noise combinations, such as two-qubit $ZZ$-rotations subject to Pauli noise of the same type. The results for noisy channels are facilitated by a detailed study of unitary gate decompositions. We provide analytic decompositions for arbitrary two-qubit fermionic gates that are provably optimal with respect to the Gaussian unitary extent for diagonal gates or those acting on on Jordan-Wigner-adjacent pairs. By providing direct decompositions of unitary non-Gaussian gates, we avoid the need to recompile the circuit as a Gaussian circuit with injected magic states, and sidestep overhead associated with classically simulating the intermediate measurements needed for state injection. This also allows us to simulate non-parity-preserving single-qubit gates that cannot be implemented by magic state injection, and we provide optimal decompositions for gates such as the Hadamard, and $X$ and $Y$ rotations along with their noisy counterparts. We study the link between Gaussian extent and fermionic nonlinearity, showing that our circuit decompositions can be used both for bit-string sampling using rank-based classical simulation, and for observable estimation using quasiprobability sampling~\cite{Hakkaku_2022}. Moreover, we show that our analytic decompositions of noisy $ZZ$ rotations improve on those determined numerically in~\cite{Hakkaku_2022} over a large range of parameters, namely for all noise strengths in the large-angle regime, and for low noise strength in the small-angle regime. Our work illustrates how the presence of noise shifts the boundary between circuits that are tractable for current classical computing hardware and those that are not.

The rest of this manuscript is structured as follows. In~\cref{sec:priorwork} we survey key measures of magic introduced in prior work that quantify classical simulation cost, before detailing our main technical contributions in~\cref{sec:contributions}. In~\cref{sec:background} we provide background on fermionic Gaussian states and operations, their classical simulability and fermionic magic states and gates. In~\cref{sec:magicmonotones} we survey magic monotones, including the unitary rank, convex-unitary channel extent and augmented channel extent. \cref{sec:extentplus} gives extent-optimal decompositions for a family of non-fermionic states. We derive extent-optimal and improved decompositions for unitary gates in~\cref{sec:Udecomp} and for noisy channels in~\cref{sec:noisydecomps}. \cref{sec:FNL} develops connections between Gaussian extent and fermionic nonlinearity~\cite{Hakkaku_2022}, showing that our optimal and improved gate decompositions can be repurposed for obtaining speedups in observable estimation via quasiprobability sampling.  In~\cref{sec:algorithm} we describe a classical simulation algorithm for drawing bit strings from the output distribution of a non-Gaussian circuit. In~\cref{sec:SPARSE} we discuss sparsification of non-Gaussian convex-unitary and adaptive circuits, which reduces the effective Gaussian rank, establishing the technical link between our resource-theoretic results and the cost of classical simulation.  In the appendices, we prove some technical results used in the main text. We conclude in~\cref{sec:conclusion}.

\subsection{Prior work \label{sec:priorwork}}

Rank-based methods for simulating Clifford circuits with few non-Clifford resources scale polynomially with the system size, the number of Clifford gates in the circuit, and certain magic monotones which quantify the amount of magic added to the circuit~\cite{PhysRevX.6.021043,Bravyi_2016_fastnorm,Bravyi_2019}. The latter is usually the bottleneck, since magic monotones typically scale exponentially with the number of magic resources added. 
In general terms, these methods work as follows: the magic states are decomposed into superpositions of stabilizer states, and each term in the superposition is evolved under a stabilizer circuit using an extension of the Gottesmann-Knill theorem that is capable of recording and updating the phase in each term~\cite{Bravyi_2019}. The latter is a functionality that is not present in the standard Gottesman-Knill theorem~\cite{gottesman1997stabilizercodesquantumerror} and that is required here in order to retain information about the relative phases in superpositions of stabilizer states. Because of this construction, the added simulation runtime due to including magic states is naturally quantified by the rank of the initial magic state, first introduced in~\cite{PhysRevX.6.021043}:

\begin{quote}
	The \emph{(state) rank}~$\chi(\ket{\psi})$ of a state~$\psi$ is the minimum number of terms when writing the (possibly non-stabilizer) state~$\psi$ as a superposition of stabilizer states.
\end{quote}

In~\cite{PhysRevX.6.021043}, the rank-based algorithms proposed scale quadratically with the rank~$\chi(\psi)$ of the initial state~$\psi$. For approximate simulation, the runtime can be improved to linear in the rank~$\chi(\psi)$~\cite{Bravyi_2016_fastnorm}. This is due to a method proposed in~\cite{Bravyi_2016_fastnorm} for estimating norms (and hence computing measurement probabilities) of linear combinations of stabilizer states in time linear in the number of terms, which contrasts with the quadratic cost of computing these norms exactly.

Although the rank is an operationally relevant quantity, there are no known efficient methods to compute it. Moreover, since the rank does not take into account the weight of different terms in superposition, it does not cleanly evaluate how close a given state is to being a free (stabilizer) state, and does not capture fine-grained differences in the resourcefulness of different states. This motivated the definition of the extent $\xi(\ket{\psi})$ of a state $\ket{\psi}$ in \cite{Bravyi_2019}.

\begin{quote}
	The \emph{(pure state) extent}~$\xi(\ket{\psi})$ of a state~$\ket{\psi}$ is the minimum~$L^1$-norm squared of the vector of coefficients when writing the (possibly non-stabilizer) state~$\ket{\psi}$ as a superposition of stabilizer states.
\end{quote}

The extent $\xi(\ket{\psi})$ is related to an approximate version of the rank $\chi(\ket{\psi})$ (see~\cite{Bravyi_2016_fastnorm,Bravyi_2019} for a definition of the ``approximate stabilizer rank''), and it quantifies the cost of approximate simulation when initializing a stabilizer circuit with (a state close to) the magic state $\ket{\psi}$ -- the optimal runtime is linear in the extent $\xi(\ket{\psi})$. Contrary to the rank, the extent has the advantage that it can be formulated as a convex optimization program, specifically a second-order cone program. Hence, it can be computed by polynomial time algorithms in the size of the problem (see e.g.~\cite{Boyd_Vandenberghe_2004}), which here is the total number of stabilizer states which is super-exponential in the number of qubits. Nevertheless, the extent can be computed numerically for small system sizes. In addition, convex optimization techniques have been used to prove that the extent of tensor products $\bigotimes_j \ket{\psi_j}$ of up to 3-qubit states $\ket{\psi_j}$ is multiplicative, i.e., $\xi(\bigotimes_j \ket{\psi_j}) = \prod_j \xi(\ket{\psi_j})$~\cite{Bravyi_2019}. In contrast, the extent of tensor products of sufficiently high-dimensional states is strictly submultiplicative, which in principle enables performance gains~\cite{Heimendahl2021stabilizerextentis}, but makes the convex optimization problem harder. Recent works have focused on obtaining upper bounds for the stabilizer extent in cases where it cannot be computed efficiently: the approach in~\cite{Qassim2021improvedupperbounds} uses the contraction of certain cat-states; more recently~\cite{Sutcliffe_2024,wan2025simulatemagicstatecultivation} have improved the known upper bounds using ZX-calculus.  

An alternative approach in rank-based simulation is to decompose gates instead of states. The authors in Ref.~\cite{Bravyi_2019} take this route and develop a ``sum-over-Cliffords'' method which avoids gadgetization by decomposing the magic gates directly into linear combinations of Clifford gates. Here, the runtime is quantified by quantities similar to the rank and extent, but for gates:
\begin{quote}
	The \emph{unitary rank}~$\chi(U)$ of a (unitary) gate~$U$ is the minimum number of terms when writing the (possibly non-Clifford) gate~$U$ as a linear combination of Clifford gates.
\end{quote}
\begin{quote}
	The \emph{unitary extent}~$\xi(U)$ of a (unitary) gate~$U$ is the minimum~$L^1$-norm squared of the vector of coefficients when writing the (possibly non-Clifford) gate~$U$ as a linear combination of Clifford gates.
\end{quote}

More recent works have sought to extend rank-based methods, previously confined to pure states and unitary evolution, to density operators and more general channels. The work~\cite{Seddon_2021} introduced a simulator for sampling from a stabilizer circuit applied to a mixed state, with performance quantified by a generalization of the (state) extent for mixed states. Reference~\cite{seddonThesis} analyzes similar methods, in addition to algorithms based on decomposing channels instead of states, with runtime scaling with the so-called channel extent:
\begin{quote}
	The \emph{convex-unitary channel extent}~$\Xi(\cal{E})$ of a channel~$\cal{E}$ is the convex-roof extension~\cite{Regula_2018} of the unitary extent (see~\cref{def:channelextent} for a formal definition).
\end{quote}

Another set of methods available are quasiprobability-based simulators~\cite{Pashayan2015,howard2017application,Bennink2017,heinrich2018robustness,seddon2019quantifying,Seddon_2021,denzler2026frames}. These decompose density matrices or channels into sums of stabilizer density operators or channels, unlike rank-based methods that work at the level of pure states or use convex-roof extensions. Quasiprobability algorithms typically scale quadratically with the associated magic monotone, leading to slower runtimes compared to rank- and extent-based methods, which in the case of approximate simulation scale linearly with the rank or extent. This speedup comes from efficiently computing probabilities as norms of superpositions, which quasiprobability methods cannot exploit. Examples of magic monotones for quasiprobability simulation include the robustness of magic~\cite{howard2017application,heinrich2018robustness,seddon2019quantifying}, generalized robustness~\cite{seddon2019quantifying,Seddon_2021,Saxena2022}, and dyadic negativity~\cite{Seddon_2021,seddonThesis}. Parallel with the setting of rank/extent-based methods, one can either focus on gadgetized Clifford circuits consuming magic states as a resource~\cite{howard2017application,Seddon_2021}, or decompose gates and noisy channels in non-Clifford circuits directly~\cite{Bennink2017,seddon2019quantifying,seddonThesis,Saxena2022}.

While most efforts have been devoted to simulating stabilizer circuits with non-stabilizer resources, similar methods apply to (fermionic) Gaussian computations with non-Gaussian resources. Recent works~\cite{Dias_2024,reardonsmith2024improved} have extended rank-based simulation to the fermionic setting: states are decomposed as superpositions of (fermionic) Gaussian states, and algorithms rely on the efficient classical simulability of  Gaussian dynamics augmented with the ability to keep track of phases in superpositions.  These algorithms scale polynomially with the Gaussian rank/extent, and the Gaussian extent is multiplicative for tensor products of up to four-qubit states~\cite{cudby2024gaussiandecompositionmagicstates,reardonsmith2024fermioniclinearopticalextent}. Similarly to the stabilizer setting, a fast-norm algorithm enables a quadratic-to-linear speedup for estimating norms of superpositions of Gaussian states~\cite{bravyiComplexityQuantumImpurity2017a}.

In contrast, Ref.~\cite{Hakkaku_2022} uses quasiprobability methods, decomposing non-Gaussian channels into non-positive linear combinations of Gaussian maps and seeking to optimize decompositions with respect to the so-called fermionic nonlinearity:
\begin{quote}
	The \emph{fermionic nonlinearity}~$\FNL(\cal{E})$ of a fermionic channel~$\cal{E}$ is the minimum~$L^1$-norm of the vector of coefficients when decomposing~$\cal{E}$ into a sum of terms of the form~$L(\cdot)R$ with~$L,R$ Gaussian unitaries. 
\end{quote}
Rank methods for non-Gaussian simulation based on gate/channel decompositions via convex-roof extensions, which can be more efficient than quasiprobability methods (see e.g.~\cite[Section 6]{reardonsmith2024improved}), remain largely unexplored.

\subsection{Our contribution}\label{sec:contributions}

We summarize ourtechnical  results in five key contributions. 

\subsubsection*{Result 1: Optimal decompositions of non-Gaussian gates into Gaussian gates}

We prove optimality with respect to the unitary extent for the decompositions given in~\cref{tab:optimaldecomp} for
\begin{enumerate}[1)]
	\item \label{it:gatedecomp1} any two-qubit fermionic gate which a) acts on nearest neighbour qubits or b) is diagonal,
	\item the (single-qubit) Hadamard gate~$H$,
	\item the single-qubit rotation~$R_{Y}(\theta)$ around the~$Y$-axis,
	\item the single-qubit rotation~$R_{X}(\theta)$ around the~$X$-axis.
\end{enumerate}
As concrete examples of~\cref{it:gatedecomp1}, we provide explicit optimal decompositions for 
\begin{enumerate}[i)]
	\item the controlled-phase gate~$C(\theta)$,
	\item the two-qubit $ZZ$ rotation~$R_{ZZ}(\theta)$, 
	\item the~$\SWAP$ gate acting on nearest neighbours.
\end{enumerate}
These results -- including the optimal decomposition and the unitary extent -- are compiled in~\cref{tab:optimaldecomp}, where the gates are also defined.

\begin{table}[h]
\centering
\resizebox{\textwidth}{!}{  
\begin{tabular}{c|c|c|c}
\toprule
gate & unitary extent~$\xi$ & optimal decomposition & established \\ \midrule
2-qubit n.n. fermionic gate 
& \multirow{2}{*}{$1 + |\sin c|$} 
& $(R_Z(t_1) \otimes R_Z(t_2)) \cdot R_{XX}(a) \cdot R_{YY}(b) \, \cdot$ 
& \multirow{2}{*}{\cref{lem:extent2qubitgate}} \\ \cline{1-1}
2-qubit diagonal fermionic gate 
&  
& $\cdot \left( \cos(\frac{c}{2}) I - i \sin(\frac{c}{2}) Z \otimes Z \right) \cdot (R_Z(t_3) \otimes R_Z(t_4))$ 
&  \\ \hline 
$C(\theta) = \left(\begin{smallmatrix} 1 & & & \\ & 1 & & \\ & & 1 & \\ & & & e^{i\theta} \end{smallmatrix}\right)$ 
& $1+\left|\sin(\frac{\theta}{2})\right|$ 
& $e^{i\theta/4} \left( \cos(\frac{\theta}{4}) I + i \sin(\frac{\theta}{4}) Z \otimes Z \right) \left( R_Z(\frac{\theta}{2}) \otimes R_Z(\frac{\theta}{2}) \right)$ 
& \multirow{4}{*}{\cref{sec:decomp_C_RZZ_SWAP}} \\ \cline{1-3}
$R_{ZZ}(\theta) = e^{-i\frac{\theta}{2}ZZ}$ 
& $1+|\sin\theta|$ 
& $\cos(\frac{\theta}{2}) I - i \sin(\frac{\theta}{2}) Z \otimes Z$ 
&  \\ \cline{1-3}
n.n. $\SWAP = \left(\begin{smallmatrix}  1 & & & \\ & & 1 & \\ & 1 & & \\ & & & 1 \end{smallmatrix}\right)$
& $2$ 
& $\frac{e^{i\pi/4}}{\sqrt{2}}f\SWAP \left( I + i Z \otimes Z \right) ( R_{Z}(\frac{\pi}{2}) \otimes R_{Z}(\frac{\pi}{2}) )$ 
&  \\ \hline
$H = \frac{1}{\sqrt{2}} \left(\begin{smallmatrix} 1 & 1 \\ 1 & -1 \end{smallmatrix}\right)$
& $2$ 
& $\frac{1}{\sqrt{2}} (Z + X)$ 
&~\cref{sec:hadamard} \\ \hline
$R_Y(\theta) = e^{-i\frac{\theta}{2}Y}$ 
& \multirow{2}{*}{$1+|\sin\theta|$} 
& $\cos(\frac{\theta}{2}) I - i \sin(\frac{\theta}{2}) Y$ 
& \multirow{2}{*}{\cref{sec:RY}} \\ \cline{1-1}\cline{3-3}
$R_X(\theta) = e^{-i\frac{\theta}{2}X}$ 
&  
& $\cos(\frac{\theta}{2}) I - i \sin(\frac{\theta}{2}) X$ 
&  \\ \bottomrule
\end{tabular}}
\caption{The unitary extent and the respective optimal decomposition for any nearest neighbour (n.n.) or diagonal two-qubit fermionic gate -- in particular, the controlled phase~$C(\theta)$, the~$ZZ$-rotation~$R_{ZZ}(\theta)$ and the n.n.~$\SWAP$ gates -- and for the single-qubit Hadamard~$H$, $Y$-rotation~$R_{Y}(\theta)$ and $X$-rotation~$R_{X}(\theta)$ gates. Finding the optimal decomposition and extent for a general two-qubit fermionic gate requires solving a system of 16 equations to determine the parameters~$a,b,c,t_1, t_2, t_3,t_4 \in \bb{R}$.}
\label{tab:optimaldecomp}
\end{table}

The extent is multiplicative for the gates~$C(\theta)$,~$R_{ZZ}(\theta)$ and nearest neighbour~$\SWAP$ applied in parallel, i.e., 
\begin{align}
	\xi(C(\theta)^{\otimes m}) = \xi(C(\theta))^m
\end{align}
and the optimal decomposition of~$C(\theta)^{\otimes m}$ is the~$m$-fold tensor product of the optimal decomposition of~$C(\theta)$ given in~\cref{tab:optimaldecomp}, and similarly for~$R_{ZZ}(\theta)$ and the nearest-neighbour~$\SWAP$. This means that, when applying~$m$ copies of one such gate in parallel, there is no better strategy than to optimize the decomposition of each gate individually. 

All the non-Gaussian single-qubit and two-qubit gates we consider have rank two.
Notice that for the gates~$\SWAP$ and~$H$ the unitary extent coincides with the unitary rank, namely,
\begin{align}
	\chi(\SWAP) = \xi(\SWAP) = 2 = \chi(H) = \xi(H) \ . 
\end{align}
The same is true for the gate $C(\theta)$ with $\theta=\pi+ 2\pi j$ with $j\in\bb{Z}$, for two-qubit nearest neighbour or diagonal fermionic gates with $c = \pi/2 + \pi j,j\in\bb{Z}$ (see~\cref{tab:optimaldecomp}) and for the gates $R_{ZZ}(\theta)$, $R_{X}(\theta)$, $R_{Y}(\theta)$ with $\theta = \pi/2 + \pi j,j\in\bb{Z}$.
Consequently, in these cases simulating with a sparsified decomposition offers no runtime improvement over the exact decomposition, unless the circuit involves other gates with extent smaller than two.

Since the set of free unitaries is continuous in the Gaussian setting, finding optimal decompositions numerically is more challenging. Hence our focus on proving optimal decompositions analytically. This avoids having to discretize the set of Gaussian unitaries, which is required for finding optimal decompositions via convex optimization. To this end, we establish the so-called lifting lemma for Gaussian unitaries -- see~\cref{lem:lifting} -- analogous to the lifting lemma~\cite[Lemma 1]{Bravyi_2019} for stabilizer-based computations. This lemma allows us to derive the optimal decompositions presented in~\cref{tab:optimaldecomp} for two-qubit gates using the known decomposition of the corresponding magic states, previously derived in~\cite{reardonsmith2024improved,cudby2024gaussiandecompositionmagicstates,reardonsmith2024fermioniclinearopticalextent}.

We note that, while the two-qubit gates~$C(\theta)$,~$R_{ZZ}(\theta)$ and the~$\SWAP$ are magic gates for Gaussian computations, the non-Gaussian single-qubit gates~$H$, $R_{Y}(\theta)$ and~$R_{X}(\theta)$ are often not considered magic gates because they are not fermionic in the sense that they do not preserve parity (see the related discussion in~\cref{sec:intro_magic_gates}). Nevertheless, they may find application in settings such as the simulation of spin systems that can be mapped to a near-Gaussian model.

\subsubsection*{Result 2: Optimal (and improved) decompositions of non-Gaussian noisy channels into (possibly adaptive) Gaussian channels}

We give improved/optimal decompositions for rotation gates subject to Pauli noise. 
Our results demonstrate that 1) stochastic Pauli noise and 2) allowing for adaptive occupation number measurements in the decomposition lowers the cost of classically simulating rotation channels. 
 
The stochastic Pauli channel~$\cal{E}_{\cal{P}} = (1-p) \cal{I} + p \cal{P}$  leaves the state invariant with probability~$1-p$ and applies a Pauli error~$P$ with probability~$p$.
The rotation channel~$\cal{R_P}(\theta) (\cdot) = R_P(\theta) (\cdot) R_P(\theta)^\dagger$, with~$R_P(\theta) = \exp(-i \theta P / 2 )$ the rotation around the~$P$-axis, subject to Pauli noise gives the noisy rotation channel 
\begin{align}
	\label{eq:decomp0}
	\cal{N_{P,P'}} = \cal{E}_{\cal{P}} \circ \cal{R_P}(\theta) =  (1-p) \cal{R}_P(\theta) + p \cal{P'} \circ \cal{R}_P(\theta) \ .
\end{align}

In~\cref{sec:noisydecomps}, when $\cal{P} = \cal{P'}$ we show the channel~$\cal{N_{P,P}}$ has improved decomposition (compared to~\cref{eq:decomp0})
\begin{align}
	\label{eq:decomp1}
	\cal{N_{P,P}} &= s \cal{R}_P(\varphi) + (1-s) \cal{R}_P(\pi-\varphi) \ ,
\end{align}
where~$\varphi = \arcsin\left[ (1-2p) \sin\theta \right]$ and~$s = \frac{1}{2} \left( 1 + (1-2p) \cos\theta / \cos\varphi \right)$, with respect to the channel extent. 
In~\cref{sec:optimal_decomp_NY_NZZ}, we further show that for 
\begin{enumerate}[1)]
	\item the channel~$\cal{N_Y}$ which is a~$Y$-rotation subject to~$Y$-noise (i.e.~$P=P'=Y$),
	\item the channel~$\cal{N_{ZZ}}$ which is~$ZZ$-rotation subject to~$ZZ$-noise (i.e.~$P=P'=ZZ$),
\end{enumerate}
the decomposition in~\cref{eq:decomp1} is optimal with respect to the channel extent which is
\begin{align}
	\Xi(\cal{N_Y}) = \Xi(\cal{N_{ZZ}}) = 1 + (1-2p) |\sin\theta| \ .
\end{align}

In addition, we define a generalization of the channel extent~$\Xi$ called augmented channel extent~$\Xiaug$ -- see~\cref{def:channelextentaugmented} -- which allows decomposing in terms of  both unitary and adaptive Gaussian channels. For the channel~$\cal{N}'_\cal{ZZ} = \cal{N_{ZZ,Z}}$ that consists of a~$ZZ$-rotation subject to single qubit~$Z$-noise, allowing for adaptive Gaussian channels admits the decomposition 
\begin{align}
	\label{eq:decomp3}
	\cal{N'_{ZZ}} = (1-2p) \cal{R_{ZZ}}(\theta) + 2 p \cal{E} \ ,
\end{align}
where~$\cal{E} (\cdot) = K_0 (\cdot) K_0^\dagger + K_1 (\cdot) K_1^\dagger~$ is a convex-Gaussian channel with~$K_0 = \Pi_0 \otimes R_{Z}(\theta)$,~$K_1 = \Pi_1 \otimes R_{Z}(-\theta)$ where~$\Pi_0=\ketbra{0}$ and~$\Pi_1=\ketbra{1}$, see~\cref{sec:improved_decomp_NZZ}. The decomposition in~\cref{eq:decomp3} achieves~$\Xiaug(\cal{N}'_\cal{ZZ}) \leq 1 + (1-2p) |\sin\theta|$, an improvement over the upper bound~$\Xi(\cal{N}'_\cal{ZZ}) \leq 1 + |\sin\theta|$ for the channel extent.

The optimal/improved decompositions for rotation channels subject to~$Z_1$,~$Z_2$ or~$Z_1 Z_2$ depolarising noise can be combined to obtain improved decompositions when the noise channel is a combination of these components i.e., of~$Z_1$,~$Z_2$ and~$Z_1 Z_2$. As an example, in~\cref{sec:decompgeneraldephasing} we give an improved decomposition of the~$\cal{R}_{\cal{Z}\cal{Z}}(\theta)$ subject to the noise channel~$\cal{E} = (1-p)\id + p/3 (\cal{Z}_1 + \cal{Z}_1 + \cal{Z}_1 \cal{Z}_2)$.

\subsubsection*{Result 3: Optimal decomposition of a family of (not necessarily fermionic) non-Gaussian states into Gaussian states}

In~\cref{sec:extentplus} we show that the rank and state extent of product states~$\otimes_{j=1}^t \ket{\psi_j}$ with factors~$\ket{\psi_j} \in \{\ket{0},\ket{1},\ket{+_\delta}\}$ where~$\ket{+_\delta} = (\ket{0} + e^{i \delta} \ket{1})/\sqrt{2}$,~$\delta\in[0,2\pi)$ is (see~\cref{lem:xi-plus-state})
\begin{align}
	\chi\left( \otimes_{j=1}^t \ket{\psi_j} \right) &= \xi\left( \otimes_{j=1}^t \ket{\psi_j} \right) = 2 
\end{align}
if at least one qubit is in a state of the form~$\ket{+_\delta}$, otherwise the rank and state extent is one.

Notice that product states with at least one factor of the type~$\ket{+_{\delta}}$ are not fermionic as they are superpositions of states with different parity. Hence, in general such states are not injectable and thus typically not considered magic states. However, it can be algorithmically useful to initialise certain qubits in a ``non-fermionic'' state (that is, one without fixed parity), for example in circuits involving a Hadamard test \cite{Cleve1998HadTest}, or quench spectroscopy protocols where certain qubits are placed in a superposition before time-evolving the state \cite{Villa2020quench, Sun2025quench}.
 
\subsubsection*{Result 4: Optimal and improved decompositions for the fermionic non-linearity}

In~\cref{lem:FNLoptimal} we show that an optimal decomposition of a gate~$U$ with respect to the unitary extent gives an optimal decomposition of the channel~$\cal{U}(\cdot) = U (\cdot) U^\dagger$ with respect to the fermionic non-linearity and 
\begin{align}
	\Phi(\cal{U}) = \xi(U) \ .
\end{align}
Hence, the optimal unitary extent decompositions given in~\cref{tab:optimaldecomp} for the gates~$C(\theta),R_{ZZ}(\theta)$ and~$\SWAP$ provide optimal decompositions for the respective channels~$\cal{C}(\theta),\cal{R_{ZZ}}(\theta)$ and~$\cal{SWAP}$ with respect to the fermionic nonlinearity, and the fermionic non-linearity of the latter is equal to the unitary extent of the respective gates, given in~\cref{tab:optimaldecomp}. 
This result improves on the decompositions established in~\cite{Hakkaku_2022} for the noiseless channel~$\cal{R}_{\cal{ZZ}}(\theta)$ and, in some parameter regimes (large $\theta$, or small $\theta$ given sufficiently weak noise), for the noisy channel~$\cal{E}(p) \circ \cal{R}_{\cal{ZZ}}(\theta)$ with~$\cal{E}(p) = (1-p)\id + p/3 (\cal{Z}_1 + \cal{Z}_1 + \cal{Z}_1 \cal{Z}_2)$.

\subsubsection*{Result 5: Ensemble sampling lemma for adaptive circuits}

We give a procedure for sparsifying circuits involving intermediate non-unitary elements. In particular, we consider circuits of the form $\chan = \mathcal{U}_T \circ \mathcal{A}_T \circ \ldots \mathcal{U}_1 \circ \mathcal{A}_1 \circ \mathcal{U}_0 $ comprising interleaved layers of Gaussian adaptive channels $\mathcal{A}_t$ and non-Gaussian gates $\mathcal{U}_t$ with known decomposition. We show that given a Gaussian initial state $\ket{g}$, we
can generate a sparsification $\Omega$  drawn at random from an ensemble $\rho_1 = \sum_\Omega \Pr(\Omega) \Omega/\Tr[\Omega]$ that is close in trace-norm to the exact final state $\rho = \chan(\op{g})$. We will see that the exact final state can be expressed
\begin{equation}
\rho = \chan(\op{g}) = \sum_{\arr{m}} \op{\psi_{\arr{m}}}, \quad \ket{\psi_{\arr{m}}}  =  \sum_v c_v \ket*{\phi_{\arr{m},v}}
\end{equation}
where $\ket{\psi_{\arr{m}}}$ are the non-normalized non-Gaussian states obtained by applying some sequence of Kraus operators labelled by $\arr{m}$, and $\ket*{\phi_{\arr{m},v}}$ are non-normalized Gaussian states. The complex coefficients $c_v$ derive only from the known decompositions of the unitary gates $\mathcal{U}_j$ and do not depend on the adaptive trajectory $\arr{m}$. We then give approximation guarantees by proving the following lemmas.
\begin{lemma}[Ensemble sampling lemma for adaptive circuits]\label{lem:ensembleAdapt}
The procedure defined in~\cref{it:sparsification1,it:sparsification2,it:sparsification3} in~\cref{sec:SPARSE_adapt}, which has runtime $O(kT)$, returns a description of a sparsified operator $\Omega =\sum_{\arr{m}} \op{\Omega_{\arr{m}}}$, with probability $Pr(\Omega)$, where each $\ket{\Omega_{\arr{m}}}$ is a non-normalized superposition over $k$ terms, such that the expected normalized state defined by
\begin{equation}
\rho_1 = \mathbb{E}\left( \frac{\Omega}{\Tr[\Omega]} \right) = \sum_\Omega \Pr(\Omega) \frac{\Omega}{\Tr[\Omega]}
\end{equation}
satisfies
\begin{equation}
\norm{\rho_1 - \chan(\op{\phi_0})}_1 \leq  2  \frac{\norm{c}_1^2}{k} + \sqrt{\Var(\Tr[\Omega])}\ . \label{eq:ensembleresult}
\end{equation}
where $\norm{c}_1 = \sum_v \abs{c_v}$.
\end{lemma}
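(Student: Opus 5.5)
The plan is to follow the Bravyi--Gosset sparsification argument, but to share a single random sample of unitary-decomposition indices across all adaptive trajectories. We do not yet see the exact procedure, but it is presumably of the following form. We draw $k$ i.i.d. index strings $v_1,\dots,v_k$ with $\Pr(v)=|c_v|/\norm{c}_1$. Using the polar phases $c_v=|c_v|e^{i\arg c_v}$, we set
\begin{equation}
\ket{\Omega_{\arr{m}}} = \frac{\norm{c}_1}{k}\sum_{j=1}^k e^{i\arg c_{v_j}}\ket*{\phi_{\arr{m},v_j}} .
\end{equation}
Since the coefficients $c_v$ do not depend on $\arr{m}$, one draw of $k$ samples serves every trajectory simultaneously. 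Over the $T$ layers this costs $O(kT)$.

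The first step is unbiasedness. For each $\arr{m}$ we have $\mathbb{E}\ket{\Omega_{\arr{m}}}=\ket{\psi_{\arr{m}}}$. Expanding $\op{\Omega_{\arr{m}}}$ into diagonal ($j=j'$) and off-diagonal ($j\neq j'$) terms and using independence gives
\begin{equation}
\mathbb{E}\,\op{\Omega_{\arr{m}}} = \Big(1-\tfrac1k\Big)\op{\psi_{\arr{m}}} + \frac{\norm{c}_1}{k}\sum_v |c_v|\,\op*{\phi_{\arr{m},v}} .
\end{equation}
Summing over $\arr{m}$, and using that $\sum_{\arr{m}}\op*{\phi_{\arr{m},v}}$ is the adaptive Gaussian circuit applied to $\ket{g}$ with the $v$-branch of each unitary inserted (so its trace is $1$, each Gaussian component being unitary times trace-preserving), we obtain
\begin{equation}
\mathbb{E}\,\Omega = \rho + \frac1k\big(\norm{c}_1\sigma - \rho\big),
\end{equation}
where $\sigma$ is a positive operator with trace $\norm{c}_1$. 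The trace norm of the correction is therefore at most $(\norm{c}_1^2+1)/k\le 2\norm{c}_1^2/k$, using $\norm{c}_1\ge 1$. This gives the first term of \cref{eq:ensembleresult}.

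The second step handles the normalization, which I expect to be the main obstacle, because $\rho_1=\mathbb{E}(\Omega/\Tr\Omega)\neq\mathbb{E}\Omega$. We write $\rho_1-\mathbb{E}\Omega=\mathbb{E}\big[\Omega(1-\Tr\Omega)/\Tr\Omega\big]$ and bound its trace norm by $\mathbb{E}\,|1-\Tr\Omega|$, since $\Omega/\Tr\Omega$ has unit trace norm. Jensen/Cauchy--Schwarz then gives $\mathbb{E}|1-\Tr\Omega|\le\sqrt{\mathbb{E}(\Tr\Omega-1)^2}=\sqrt{\Var(\Tr\Omega)+(\mathbb{E}\Tr\Omega-1)^2}$. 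This yields the variance term, but it also carries a bias $|\mathbb{E}\Tr\Omega-1|=|\norm{c}_1^2-1|/k$. To avoid paying for this bias twice, I would combine the two steps more carefully. Instead of splitting the error through $\mathbb{E}\Omega$, I would compare against $\mathbb{E}\Omega/\mathbb{E}\Tr\Omega$. Concretely, I would split $\rho_1-\rho$ as $\big(\rho_1-\mathbb{E}\Omega/\mu\big)+\big(\mathbb{E}\Omega/\mu-\rho\big)$, where $\mu=\mathbb{E}\Tr\Omega$.

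For the first piece, $\mathbb{E}\big[\Omega(\mu-\Tr\Omega)/(\mu\Tr\Omega)\big]$ has trace norm at most $\mathbb{E}|\Tr\Omega-\mu|/\mu\le\sqrt{\Var}$ when $\mu\ge1$. For the second piece, $\mathbb{E}\Omega/\mu-\rho=(\norm{c}_1\sigma-\rho)/(k\mu)-\rho(\mu-1)/\mu$, which has norm at most $2(\norm{c}_1^2-1)/(k\mu)\le 2\norm{c}_1^2/k$. This gives exactly \cref{eq:ensembleresult}. Finally, I would check that the per-trajectory cross terms in $\Var(\Tr\Omega)$ are left unexpanded, since the statement keeps them implicit.
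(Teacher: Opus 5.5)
Your proposal is correct and follows essentially the same route as the paper. You compute $\mathbb{E}\,\Omega$ by splitting into diagonal and off-diagonal sample pairs (your $\sigma/\norm{c}_1$ is the paper's normalized $\sigma$), then split $\rho_1-\chan(\op{\phi_0})$ through $\mathbb{E}\,\Omega/\mu$, bounding the first piece by $\mathbb{E}|\Tr\Omega-\mu|/\mu\le\sqrt{\Var(\Tr[\Omega])}$ via $\mu=1+(\norm{c}_1^2-1)/k\ge 1$. One small point: your intermediate bound $2(\norm{c}_1^2-1)/(k\mu)$ silently uses that $\norm{c}_1\sigma-\rho$ is positive semidefinite (it is $k$ times a covariance), whereas the paper simply notes that the two terms combine exactly to $\frac{\norm{c}_1^2}{k\mu}\bigl(\sigma/\norm{c}_1-\rho\bigr)$; either way the final bound $2\norm{c}_1^2/k$ holds.
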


\begin{lemma}[Sparsification variance bound]\label{lem:varianceAdapt}
The variance in the normalization of the randomly drawn operator $\Omega$ from~\cref{lem:ensembleAdapt} can be bounded as
\begin{equation}
\Var(\Tr \Omega ) \leq \frac{4(\cte - 1)}{k} + 2 \left( \frac{\norm{c}_1^2}{k} \right)^2 + \frac{10 - 12\cte}{k^2} +  O\left( \frac{\cte}{k^3}\right),
\end{equation}
where
\begin{equation}
\cte =  \norm{c}_1\sum_{\arr{w}} \abs{c_{\arr{w}}} \cdot \left| \sumvec{m}   \bra*{\psim}\ket*{\phimv{m}{w}}\right|^2 \ .
\end{equation}
\end{lemma}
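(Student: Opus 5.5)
The plan is a direct second-moment computation on the random operator produced by the sparsification procedure of \cref{lem:ensembleAdapt}. I take that procedure to be the following. Draw $k$ indices $\arr{w}_1,\dots,\arr{w}_k$ i.i.d.\ with $\Pr(\arr{w}) = \abs{c_{\arr{w}}}/\norm{c}_1$. Set
\begin{equation}
\ket{\Omega_{\arr{m}}} = \frac{\norm{c}_1}{k}\sum_{j=1}^k \frac{c_{\arr{w}_j}}{\abs{c_{\arr{w}_j}}}\ket*{\phimv{m}{w_j}} .
\end{equation}
The essential feature is that the same sample $\arr{w}_1,\dots,\arr{w}_k$ is used for every trajectory $\arr{m}$, since the coefficients do not depend on $\arr{m}$.

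Introduce the rescaled vectors $\ket{\omega_{j,\arr{m}}} = \norm{c}_1 \frac{c_{\arr{w}_j}}{\abs{c_{\arr{w}_j}}}\ket*{\phimv{m}{w_j}}$, so that $\mathbb{E}\ket{\omega_{j,\arr{m}}} = \ket{\psim}$. Then
\begin{equation}
\Tr\Omega = \frac{1}{k^2}\sum_{i,j=1}^k X_{ij}, \qquad X_{ij} = \sumvec{m}\bra{\omega_{i,\arr{m}}}\ket{\omega_{j,\arr{m}}} .
\end{equation}
I will use two facts.
\begin{itemize}
\item[(a)] $\sumvec{m}\norm{\psim}^2 = \Tr\chan(\op{\phi_0}) = 1$. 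Hence, for $i\neq j$, independence gives $\mathbb{E}X_{ij} = 1$.
\item[(b)] For each fixed $\arr{w}$, $\sumvec{m}\norm*{\phimv{m}{w}}^2 = 1$. This holds because each branch $\arr{w}$ is a Gaussian unitary/adaptive trace-preserving channel applied to a normalized Gaussian state. Consequently $X_{ii} = \norm{c}_1^2$ deterministically, and by Cauchy--Schwarz $\abs{X_{ij}} \le \norm{c}_1^2$ always.
\end{itemize}

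Next I expand
\begin{equation}
\Var(\Tr\Omega) = k^{-4}\sum_{(i,j),(i',l)} \mathrm{Cov}(X_{ij},X_{i'l})
\end{equation}
and classify the pairs of index pairs by their overlap pattern.
\begin{itemize}
\item Pairs with disjoint index sets are independent and contribute nothing.
\item Diagonal entries $X_{ii}$ are constant, so they have zero covariance with everything. They only shift the mean; this is the $\norm{c}_1^2/k$ bias already accounted for in \cref{lem:ensembleAdapt}.
\item The dominant term comes from off-diagonal pairs sharing exactly one index, e.g.\ $X_{ij}$ and $X_{il}$ with $i,j,l$ distinct. There are $4k(k-1)(k-2)$ such ordered configurations, counting the four ways of placing the shared index and the conjugate-symmetric orientations.
\end{itemize}

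For the shared-index case, average over $j$ and $l$ first using $\mathbb{E}\ket{\omega_{j,\arr{m}}} = \ket{\psim}$. This gives
\begin{equation}
\mathbb{E}[X_{ij}X_{il}] = \mathbb{E}_i\Bigl|\sumvec{m}\bra{\psim}\ket{\omega_{i,\arr{m}}}\Bigr|^2 = \norm{c}_1\sum_{\arr{w}}\abs{c_{\arr{w}}}\Bigl|\sumvec{m}\bra*{\psim}\ket*{\phimv{m}{w}}\Bigr|^2 = \cte ,
\end{equation}
up to orientation and conjugation, which give the same value. The covariance is therefore $\cte - 1$. After dividing by $k^4$, these terms give $\frac{4(\cte-1)}{k}$ plus $O(\cte/k^2)$ and $O(\cte/k^3)$ corrections coming from the expansion $k(k-1)(k-2) = k^3 - 3k^2 + 2k$.

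The remaining configurations are those where both indices coincide, i.e.\ $\{i,j\} = \{i',l\}$. There are $O(k^2)$ of them, each bounded via (b) by $\mathbb{E}\abs{X_{ij}}^2 \le \norm{c}_1^4$. They contribute at most $2(\norm{c}_1^2/k)^2$ together with explicit $1/k^2$ pieces.

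Finally I collect all $1/k^2$ contributions into the stated constant $\frac{10 - 12\cte}{k^2}$ and push the rest into $O(\cte/k^3)$. The main obstacle is this bookkeeping: matching the exact combinatorial prefactors, including the $(k-1)(k-2)$ corrections and the conjugate pairs $X_{ji} = \overline{X_{ij}}$, to reproduce precisely the constants $4$, $2$, $10$ and $-12$. For that I would enumerate the set partitions of four index slots explicitly.

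The one conceptual point to verify carefully is that correlating the samples across all trajectories $\arr{m}$ introduces no extra covariance beyond what is captured by the coherent sum inside $\cte$. This holds because the $\arr{m}$-sum sits inside the inner product before any expectation is taken.
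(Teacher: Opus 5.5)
Your proposal is correct and is essentially the paper's proof. The paper likewise expands the second moment of the off-diagonal sum over index quadruples and discards the disjoint (independent) ones. It evaluates the one-shared-index terms by averaging out the independent samples to get $\cte$, and bounds the two-shared-index terms via Cauchy--Schwarz plus Kraus completeness for each fixed $\arr{w}$. The counts $4k(k-1)(k-2)$ and $2k(k-1)$ then give exactly the constants $4$, $2$, $10$, $-12$. Your covariance form is equivalent to the paper's $\mathbb{E}(B^2)-(\mathbb{E}B)^2$.

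One correction: orientations such as $X_{ij}X_{il}$ average to $\mathbb{E}[\bar z^2]$, where $z=\sum_{\arr{m}}\bra*{\psim}\ket{\omega_{i,\arr{m}}}$, not to $\mathbb{E}\abs{z}^2=\cte$, so they do not give ``the same value''. You need $\mathrm{Re}\,\mathbb{E}[\bar z^2]\le\cte$, which is the paper's triangle-inequality step; this suffices because only an upper bound is claimed.
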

Notice that normalizing $\cte$ by $\norm{c}^2_1$ gives the average over $\arr{w}$ of the overlap
\begin{equation}\left| \sumvec{m}   \bra*{\psim}\ket*{\phimv{m}{w}}\right|^2 \leq  \left( \sumvec{m}    \left|\bra*{\psim}\ket*{\phimv{m}{w}}\right|\right)^2 \end{equation} so can loosely be thought of as the average overlap of the final state with each of its Gaussian terms. We can expect this quantity 
to be small for non-trivial circuits containing significant magic. The novelty of our result compared to previous sparsification routines established in~\cite[Lemma 6]{Bravyi_2019}, \cite[Theorem 16]{Seddon_2021} and~\cite{seddonThesis} is the incorporation of intermediate adaptive channels, whereas the prior works dealt with sparsifying either initial magic states~\cite{Bravyi_2019,Seddon_2021} or convex-unitary circuits~\cite{seddonThesis}.
It thus allows adapting existing stabilizer-based simulation techniques~\cite{Bravyi_2019,Seddon_2021} to adaptive circuits. These algorithms leverage the optimal and improved decompositions of non-Gaussian gates and channels we propose here.

\section{Background \label{sec:background}}

In~\cref{sec:gaussianoperations} we discuss Gaussian computations. We define Gaussian states and operations. We briefly discuss the efficient classical simulability of Gaussian computations and a recent phase-sensitive simulator for Gaussian computations. \cref{sec:intro_magic_gates} introduces magic gates and magic states that augmented Gaussian computations.

Before proceeding, we establish some notation used throughout this work:

\begin{enumerate}[1)]
	\item We define~$[j] := \{1, \ldots, j\}$,~$j\in\bb{N}$.
	\item We denote the single-qubit Pauli gates by
	\begin{align}
		X = 
		\begin{pmatrix}
		0 & 1 \\
		1 & 0	
		\end{pmatrix} \ , \qquad
		Y = 
		\begin{pmatrix}
		0 & -i \\
		i & 0	
		\end{pmatrix} \ , \qquad
		Z = 
		\begin{pmatrix}
		1 & 0 \\
		0 & -1	
		\end{pmatrix} \ .
	\end{align}
	\item Let~$U$ be a single-qubit unitary operator. For~$j\in[n]$, we define the~$n$-qubit unitary 
	\begin{align}
		U_j = I^{\otimes(j-1)} \otimes U \otimes I^{\otimes(n-j)} \ ,
	\end{align}
	where~$U$ acts on the~$j$th qubit and the identity operator~$I$ acts on all other qubits.
	\item Let~$U$ be a two-qubit unitary. For~$j,k\in [n]$, we define the~$n$-qubit unitary~$U_{j,k}$ as the operator which acts with~$U$ on qubits~$j$ and~$k$, and trivially (with the identity operator) elsewhere. 
	\item We denote an~$n$-qubit unitary channel by~$\cal{U}(\cdot) = U (\cdot) U^\dagger$ where~$U$ is an~$n$-qubit unitary. A unitary channel~$\cal{U}$ can be represented by its transfer matrix~$U \otimes \overline{U}$. 
	\item We use~$\ket{\psi^+} = (\ket{00} + \ket{11})/\sqrt{2}$.
	\item We denote by~$|s|$ the number of elements in the set~$s$.
	\item The Pauli rotation gate around the axis defined by the Pauli string~$P$ by an angle~$\theta$ is~$R_P(\theta) = \exp(-i \theta P /2)$. For two-qubit Pauli strings $P\otimes P'$, we use the notation~$R_{PP'}(\theta) = \exp(-i \theta P \otimes P' /2)$.
\end{enumerate}

\subsection{Gaussian computations \label{sec:gaussianoperations}} 

Here we give a brief characterization of Gaussian states, unitaries and measurements. For a more thorough description of Gaussian computations see e.g.,~\cite{bravyiLagrangianRepresentationFermionic2004,knill2001fermionic,PhysRevA.65.032325,10.1145/380752.380785}.
We consider a~$n$-qubit system with Hilbert space~$\cal{H}_n \cong (\bb{C}^2)^{\otimes n}$.
We define the so-called Jordan-Wigner operators as~$c_1 =  X \otimes \id^{\otimes(n-1)}$,~$c_2 = Y \otimes \id^{\otimes(n-1)}$ and
\begin{align}
	\label{eq:JW1}
	c_{2j-1} = Z^{\otimes (j-1)} \otimes X \otimes \id^{\otimes(n-j)} \\
	\label{eq:JW2} c_{2j} = Z^{\otimes (j-1)} \otimes Y \otimes \id^{\otimes(n-j)}
\end{align}
for~$j\in\{2,\ldots,n\}$. These are related to the so-called Jordan-Wigner mapping~$\gamma_{2j-1} \rightarrow c_{2j-1}$ and~$\gamma_{2j} \rightarrow c_{2j}$ between Majorana operators~$\{\gamma_j\}_{j=1}^{2n}$ which act on fermionic modes~\cite{jordan_uber_1928} (see also e.g.,~\cite{doi:10.1098/rspa.2008.0189}) and Pauli operators~$\{c_j\}_{j=1}^{2n}$ which act on qubits. Throughout this work we will work in the qubit picture. The results carry over to the Majorana picture by applying the Jordan Wigner transformation. 

A (pure)~$n$-qubit state~$\ket{\phi} \in \cal{H}_n$ is a fermionic state if it is an eigenstate of the parity operator~$Z^{\otimes n}$. Eigenstates of~$Z^{\otimes n}$ with eigenvalue~$+1$ ($-1$) are said to have even (odd) parity. Similarly, a unitary~$V$ is a fermionic unitary if it preserves parity, i.e., if~$[V, Z^{\otimes n}] = 0$.

\subsubsection*{Gaussian unitaries}

A two-qubit matchgate~\cite{10.1145/380752.380785} is a unitary gate defined as 
\begin{align}
	\label{eq:matchgate2}
	G(A,B) :=
	\begin{pmatrix}
		A_{1,1} & 0 & 0 & A_{1,2} \\
		0 & B_{1,1} & B_{1,2} & 0 \\
		0 & B_{2,1} & B_{2,2} & 0 \\
		A_{2,1} & 0 & 0 & A_{2,2} \\
	\end{pmatrix}
	\text{ where } A,B \in U(2) \text{ and } \det A = \det B \ . 
\end{align}
We define a~$n$-qubit nearest neighbour (n.n.) matchgate as~$G(A,B)_{j,j+1}$ for~$j\in[j-1]$.
We denote by~$G_n$ the set of~$n$-qubit unitaries generated by n.n. matchgates and the gates~$X_j$ and~$Y_j$ for~$j\in\bb{N}$. Concretely,~$G_n$ is generated by
\begin{align}
	\gen_1(G_n) = 
	\{ G(A,B)_{j,j+1} : A,B \in U(2) , \det A = \det B, j\in[n-1] \} 
	\cup \{ X_j, Y_j : j \in [n]\} \ .
\end{align}
Oftentimes it is useful to consider the alternative set of generators
\begin{align}
	\label{eq:gen2}
  \gen_2(G_n) = \{ U_{j,k}(\theta) := \exp(\tfrac{\theta}{2} c_j c_j) : j , k \in [2n] ; \theta \in [0,2\pi) \} \cup \{ U_j := c_j : j\in [2n] \} \ .
\end{align}
We call an element of~$G_n$ a ($n$-qubit) Gaussian unitary or circuit. 
A unitary gate~$U_R$ is Gaussian if and only if (see e.g.,~\cite{bravyiLagrangianRepresentationFermionic2004,bravyiComplexityQuantumImpurity2017a})
\begin{align}
	\label{eq:UR_R_relation}
	U_R c_j U_R^\dagger = \sum_{j=1}^{2n} R_{j,k} c_k \text{ where } R \in O(2n)
\end{align}
for all~$j\in[2n]$ and~$U_R \in G_n$. 
Note that our definition of Gaussian unitaries extends the most standard definition, which typically includes only the two-qubit matchgates defined in~\cref{eq:matchgate2} (and the circuits they generate), excluding~$X_j$ and~$Y_j$.
This extension still renders matchgate computations efficiently classically simulable, as we show later in this section. 

\subsubsection*{Gaussian states}

A pure state~$\ket{\psi} \in \cal{H}_n$ is an~$n$-qubit (pure fermionic) Gaussian state if it can be obtained by the action of a Gaussian circuit on the all-zero state~$\ket{0}^{\otimes n}$. We denote the set of~$n$-qubit (pure) fermionic Gaussian states by 
\begin{align}
	\cal{G}_n = \{ U \ket{0}^{\otimes n} : U \in G_n \} \ .
\end{align}
A Gaussian state~$\ketbra{\psi}$ is described by its covariance matrix, a~$2n\times 2n$ antisymmetric real matrix~$\Gamma(\psi)$ with entries
\begin{align}
	\label{eq:covmatrix}
	\Gamma(\psi)_{j,k} = \langle \psi, i c_jc_k\psi \rangle \qquad \text{ for } j,k \in[2n] \ .
\end{align}
Expectation values of higher order products of Jordan-Wigner operators can be obtained as a function of the covariance matrix~$\Gamma$ using Wick's theorem~\cite{PhysRev.80.268}. 

It follows from the definition of a Gaussian state that matchgate circuits map (convex mixtures of) Gaussian states to (convex mixtures of) Gaussian states. A Gaussian unitary channel is a channel~$\cal{U}(\cdot) = U(\cdot)U^\dagger$ where~$U\in G_n$ is a Gaussian circuit.
We work with the following definition of a convex-Gaussian channel.

\begin{definition}[Convex-Gaussian channels]
	A convex-Gaussian channel is a completely positive trace-preserving map which maps convex mixtures of pure Gaussian states to convex mixtures of pure Gaussian states.
\end{definition}
A Gaussian channel is a channel which maps Gaussian states to Gaussian states, for more details see e.g.,~\cite{doi:10.1142/9781860948169_0002}. 
Notice that convex mixtures of Gaussian states are not necessarily Gaussian states~\cite{Melo_2013,vershyninaCompleteCriterionConvexGaussianstate2014}, and not all convex-Gaussian channels are Gaussian channels. 
\subsubsection*{Gaussian measurements}

Gaussian measurements are defined as measurements which map a Gaussian state onto a Gaussian state on the remaining qubits (i.e., the qubits that are not measured). 

The most commonly considered Gaussian measurements are computational basis measurements, or equivalently~$Z$-measurements. A computational basis measurement of the~$j$th qubit has possible outcomes~$\{0,1\}$ and POVM given by~$\{\Pi_k(0),\Pi_k(1)\}$ with 
\begin{align}
	\Pi_k(0) = I^{\otimes (k-1)} \otimes \ketbra{0} \otimes I^{\otimes (n-k)}
	\quad\text{and}\quad
	\Pi_k(1) = I^{\otimes (k-1)} \otimes  \ketbra{1} \otimes I^{\otimes (n-k)} \ .	
\end{align}

\subsubsection*{Classical simulability of Gaussian computations \label{sec:classical_simulation_matchgates}}

Gaussian computations --~$\poly(n)$ size Gaussian circuits acting on~$n$-qubit Gaussian states followed by computational basis measurements -- can be simulated classically in an efficient way (i.e., in polynomial time and space) using the covariance matrix formalism which we describe here~\cite{knill2001fermionic,PhysRevA.65.032325}. 
This is because Gaussian states and operations admit a description in space~$\poly(n)$, and Gaussian operations map Gaussian states to Gaussian states in a tractable way: there are~$\poly(n)$ time rules to update the covariance of a Gaussian state subject to a Gaussian operation. 

Gaussian states and operations admit the following~$\poly(n)$ classical descriptions:
\begin{enumerate}[1)]
	\item A Gaussian state~$\ket{\psi}$ is specified by its covariance matrix, see~\cref{eq:covmatrix}.
	\item A Gaussian unitary~$U_R$ is specified by a~$2n \times 2n$ orthogonal matrix~$R\in O(2n)$, see~\cref{eq:UR_R_relation}.
	\item A single-qubit computational basis measurement is described by a tuple~$(j,m) \in [n] \times \{0,1\}$ where the integer~$j$ identifies the measured mode, and~$m$ is the measurement outcome.
\end{enumerate}

Consider a Gaussian state~$\ket{\psi} \in \cal{G}_n$ with covariance matrix~$\Gamma(\psi)$. The state~$U_R \ket{\psi}$ obtained by evolution under a Gaussian unitary~$U_R$ has covariance matrix
\begin{align}
	\label{eq:cov_evolved_psi}
	\Gamma(U_R \psi) = R \Gamma(\psi) R^T \ ,
\end{align}
where~$R$ is the~$2n\times 2n$ orthogonal matrix that specifies~$U_R$ by~\cref{eq:UR_R_relation}. \cref{eq:cov_evolved_psi} takes time~$O(n^3)$ to compute.

Let~$\ket{\psi}$ be a Gaussian state. The probability of obtaining outcome~$m\in\{0,1\}$ when performing a computational basis measurement of the~$k$th qubit is
\begin{align}
	\label{eq:meas_prob}
	p(m) = \frac{1}{2}\left( 1 + (-1)^m \Gamma_{2j-1,2j} \right) \ , 
\end{align}
and the post-measurement state
\begin{align}
	\ket{\psi(m)} = \frac{1}{\sqrt{p(m)}} \Pi(m) \ket{\psi}
\end{align}
is a Gaussian state whose covariance matrix is the antisymmetric matrix~$\Gamma(\psi(m))$ with entries~\cite{bravyi2012disorder}
\begin{align}
	\label{eq:cov_mat_post_measure}
	\Gamma(\psi(m))_{k,\ell} =  
	\begin{cases}
		0 & \hspace{-2cm} \text{if } k = \ell \\
		(-1)^m & \hspace{-2cm} \text{if }(k,\ell) = (2j-1,2j) \\
		-(-1)^m & \hspace{-2cm} \text{if }(k,\ell) = (2j,2j-1) \\
		\frac{1+(-1)^{\delta_{\lfloor k/2 \rfloor, j}+\delta_{\lfloor \ell/2 \rfloor, j}}}{2}
		\left(\Gamma_{k,\ell} + \frac{(-1)^m}{2 p(m)} \left( \Gamma_{2j-1,\ell} \Gamma_{2j,k} - \Gamma_{2j-1,k} \Gamma_{2j,\ell} \right)\right) & \text{otherwise} 
	\end{cases} \ .
\end{align}
The probability~$p(m)$ in~\cref{eq:meas_prob} takes time~$O(1)$ to compute, and the covariance matrix~$\Gamma(\psi(m))$ in~\cref{eq:cov_mat_post_measure} takes time~$O(n^2)$.
We collect the runtimes for the different tasks in~\cref{tab:FLOsimulation_runtimes}.

We note that including intermediate computational basis measurements on a subset of qubits and allowing the choice of subsequent gates conditioned on measurement outcomes still results in a Gaussian computation that is efficiently classically simulable.

\begin{table}[b]
\centering
\begin{tabular}{@{}llcc@{}}
\toprule
\multicolumn{2}{c}{operation} & \begin{tabular}[c]{@{}c@{}}runtime \\ (without phase)\end{tabular} & \begin{tabular}[c]{@{}c@{}}runtime \\ (with phase)\end{tabular} \\ \midrule
\multicolumn{2}{l}{ evolution by Gaussian unitary~$U_R\in\gen_2(G_n)$} & $O(n^3)$ & $O(n^3)$ \\ \midrule
 \hspace{0.9mm} single-qubit computational & measurement probability & $O(1)$ & $O(1)$ \\ \cmidrule(l){2-4} 
 \hspace{0.9mm} basis measurement & postmeasurement state & $O(n^2)$ & $O(n^3)$ \\ \bottomrule
\end{tabular}
\caption{Time complexity of classical algorithms for Gaussian unitary evolution and computational basis measurements of a Gaussian state. We consider unitary evolution by one of the generators of Gaussian dynamics in~$\gen_2(G_n)$. The task of measuring is split into two: computing the measurement probability (done similarly for phase-sensitive and phase-agnostic computations), and computing the post-measurement state. The column ``without phase'' gives the runtimes for phase-agnostic simulation, i.e., the time it takes to evaluate~\cref{eq:meas_prob,eq:cov_evolved_psi,eq:cov_mat_post_measure}. The column ``with phase'' gives the runtimes for phase-sensitive simulation: see~\cite{Dias_2024,reardonsmith2024improved}.
\label{tab:FLOsimulation_runtimes}}
\end{table}

\subsubsection*{Phase sensitive simulator for Gaussian computations}

The simulator given in the previous section is not phase-sensitive: it describes the system up to a global phase. For example, the covariance matrix~$\Gamma(\psi)$ specifies the density operator~$\ketbra{\psi}$, not the state vector~$\ket{\psi}$.

In~\cite{Dias_2024,reardonsmith2024improved} the authors introduce an augmented covariance matrix formalism that keeps track of the phase of pure states in Gaussian computations. 
We refer to these works for details, and simply highlight that each of the tasks
\begin{enumerate}[1)]
	\item evolution of a Gaussian state by a Gaussian unitary in~$\gen_2(G_n)$ (see~\cref{eq:gen2}),
	\item a computational basis measurement of a single-qubit
\end{enumerate}
takes time~$O(n^3)$.
We note that the work in~\cite{Dias_2024} formulates Gaussian computations for fermions and not qubits: to translate~\cite{Dias_2024} to our setting it suffices to take the Jordan-Wigner transformation (note the similarity with~\cref{eq:JW1,eq:JW2})
\begin{align}
	\gamma_{2j-1} \rightarrow Z^{\otimes (j-1)} \otimes X \otimes \id^{\otimes(n-j)} \\
	\gamma_{2j} \rightarrow Z^{\otimes (j-1)} \otimes Y \otimes \id^{\otimes(n-j)}
\end{align}
for~$j\in[n]$ from Majorana operators~$\{\gamma_j\}_{j=1}^{2n}$ to Pauli operators.
In~\cref{tab:FLOsimulation_runtimes} we summarize the runtime for the relevant tasks in Gaussian computations when taking into account phases.

\subsection{Magic gates and magic states \label{sec:intro_magic_gates}}

Gaussian computations are not universal. Magic gates are defined as gates that, when combined with the set~$G_n$ of Gaussian circuits, form a universal set of gates, which can be used to perform universal quantum computation. 
Examples of magic gates for Gaussian computations are the controlled-phase~$C(\theta)$, the~$ZZ$-rotation~$R_{ZZ}(\theta)$ and the~$\SWAP$ gates (see~\cref{tab:optimaldecomp} for definitions of these gates). The~$\SWAP$ gate is not to be confused with the matchgate 
\begin{align}
	\label{eq:fswap}
		f\SWAP = 
		\begin{pmatrix}
			1 & 0 & 0 & 0 \\
			0 & 0 & 1 & 0 \\
			0 & 1 & 0 & 0 \\
			0 & 0 & 0 & -1 
		\end{pmatrix} 
\end{align}
which performs a swap of the two-qubits but, unlike the~$\SWAP$ gate, adds a~$-1$ phase to the state~$\ket{11}$. 

In Ref.~\cite{PhysRevA.84.022310} the authors give the following decomposition of any two-qubit fermionic (i.e., parity preserving) gate using single- and two-qubit rotation gates. (This will be useful to prove~\cref{lem:extent2qubitgate}.)
\begin{lemma}[Lemma III.1 in Ref.~\cite{PhysRevA.84.022310}]
\label{lem:fermionicgatedecomposition}
Any two-qubit fermionic gate~$U$ can be written as
\begin{align}
	\label{eq:gateUdecomp}
	U = (R_Z(t_1) \otimes R_Z(t_2)) \cdot R_{XX}(a) \cdot R_{YY}(b) \cdot R_{ZZ}(c) \cdot  (R_Z(t_3) \otimes R_Z(t_4)) \ ,
\end{align}
for~$t_1,t_2,t_3,t_4,a,b,c\in\bb{R}$.
\end{lemma}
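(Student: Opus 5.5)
This is a KAK-type decomposition restricted to the parity-preserving subgroup, and I would prove it directly from the block structure of two-qubit fermionic gates. A fermionic $U$ commutes with $Z\otimes Z$, so it preserves the even subspace $\mathrm{span}\{\ket{00},\ket{11}\}$ and the odd subspace $\mathrm{span}\{\ket{01},\ket{10}\}$. Hence $U = U_e \oplus U_o$ with $U_e,U_o\in U(2)$, and the group of such gates is $U(2)\times U(2)$.

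The plan is to show that every factor on the right-hand side of \cref{eq:gateUdecomp} acts as a simple $SU(2)$ (or phase) element on each block, and then to realise an Euler decomposition blockwise.

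\begin{enumerate}[1)]
\item Compute the action of each factor on the two blocks.
\begin{itemize}
\item $R_Z(t)\otimes R_Z(t')$ is diagonal. On the even block it acts as $\mathrm{diag}(e^{-i(t+t')/2},e^{i(t+t')/2})$, a $Z$-rotation by $t+t'$. On the odd block it acts as a $Z$-rotation by $t-t'$.
\item $XX$ and $YY$ both preserve the blocks. $XX$ acts as $\sigma_x$ on each block. $YY$ acts as $-\sigma_x$ on the even block and $+\sigma_x$ on the odd block.
\item Consequently $R_{XX}(a)R_{YY}(b)$ acts as an $X$-rotation by $a-b$ on the even block and by $a+b$ on the odd block.
\item $R_{ZZ}(c)$ acts as the scalar $e^{-ic/2}$ on the even block and $e^{ic/2}$ on the odd block.
\end{itemize}

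\item Factor out the block determinants. Write $U_e = e^{i\alpha}V_e$ and $U_o = e^{i\beta}V_o$ with $V_e,V_o\in SU(2)$. The global phase $e^{i(\alpha+\beta)/2}$ is irrelevant, since the statement is up to a global phase (or it can be absorbed by shifting $c$ by $2\pi$ and the $t_i$ suitably). The relative phase $\alpha-\beta$ is produced by $R_{ZZ}(c)$ with $c=\beta-\alpha$. Because $R_{ZZ}(c)$ is a scalar on each block, it commutes with everything and can be placed where \cref{eq:gateUdecomp} requires.

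\item Decompose each $SU(2)$ part. Use the $Z$-$X$-$Z$ Euler decomposition
\[
V_e = R_z(\phi_1)R_x(\eta_1)R_z(\phi_2), \qquad V_o = R_z(\phi_3)R_x(\eta_2)R_z(\phi_4).
\]
Then solve the linear system
\[
t_1+t_2=\phi_1,\quad t_1-t_2=\phi_3,\quad a-b=\eta_1,\quad a+b=\eta_2,\quad t_3+t_4=\phi_2,\quad t_3-t_4=\phi_4.
\]
Each pair is an invertible $2\times2$ linear map, so real solutions always exist. This gives \cref{eq:gateUdecomp}.
\end{enumerate}

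The main obstacle is the sign and phase bookkeeping. Two points need care. First, the blockwise $SU(2)$ elements are defined only up to $\pm1$ (a $2\pi$ shift of an Euler angle). Such a sign discrepancy between blocks is fixed by shifting $c$ by $2\pi$ or one $t_i$ by $2\pi$, which flips a relative sign. Second, the overall phase must be reconciled with the exact equality claimed. If exact equality (not equality up to phase) is required, I would check that the residual global phase can be absorbed into these $2\pi$ shifts; otherwise I would state the lemma up to a global phase, as is standard in the cited source.
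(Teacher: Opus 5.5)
Your blockwise computation is correct. On the even and odd blocks the right-hand side of \cref{eq:gateUdecomp} acts as $e^{-ic/2}R_z(t_1+t_2)R_x(a-b)R_z(t_3+t_4)$ and $e^{ic/2}R_z(t_1-t_2)R_x(a+b)R_z(t_3-t_4)$. With $c=\beta-\alpha$, the $ZXZ$ Euler decompositions and your invertible linear system then give $U=e^{i(\alpha+\beta)/2}\cdot(\text{right-hand side})$. The paper gives no proof of its own; it imports the lemma from Ref.~\cite{PhysRevA.84.022310}, so this self-contained derivation is the right kind of argument. Once $V_e$ and $V_o$ are fixed, the Euler angles and the linear system are solved exactly. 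There is therefore no residual sign discrepancy between the blocks that needs repairing.

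Your phase bookkeeping, however, contains false claims.
\begin{enumerate}[1)]
\item Shifting $c$ by $2\pi$, or a single $t_i$ by $2\pi$, does not flip a relative sign between the blocks. It multiplies the whole product by $-1$, because $e^{-i\pi Z\otimes Z}=-I$ and $e^{-i\pi Z}=-I$. A relative sign would need, e.g., $(t_1,t_2)\mapsto(t_1+\pi,t_2+\pi)$.
\item More importantly, no choice of real parameters can absorb a general global phase, so your parenthetical claim fails. Every factor on the right-hand side has determinant $1$, so the product always lies in $SU(4)$. By contrast, a fermionic gate such as $C(\theta)$ has $\det C(\theta)=e^{i\theta}$.
\end{enumerate}
Hence the check you defer (``I would check that the residual global phase can be absorbed'') fails whenever $\det U\neq 1$. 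The statement as literally written, with exact equality, is false in that case.

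The correct resolution is as follows. Exact equality holds if and only if $\det U=1$: choose $c$ with $e^{-ic}=\det U_e$, so that $e^{ic}=\det U_o$, and then $e^{ic/2}U_e$ and $e^{-ic/2}U_o$ both lie in $SU(2)$. In general the lemma holds only up to a global phase. That weaker form is all the paper needs, since a global phase can be absorbed into the coefficients of a Gaussian decomposition without changing $\|c\|_1$. This is also why the paper's own decomposition of $C(\theta)$ carries the explicit prefactor $e^{i\theta/4}$.
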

Notice that the decomposition \eqref{eq:gateUdecomp} of~$U$ involves only one gate that is not Gaussian: The rotation~$R_{ZZ}(c)$ around the~$ZZ$-axis. All remaining gates in the decomposition are Gaussian. 

An alternative approach to augmenting Gaussian computations is to include ancilla qubits initialized in so-called magic states -- this procedure is commonly referred to as (magic) state injection. A magic state for a magic gate~$M$ is a state which can be injected anywhere in the circuit and that, combined with a Gaussian circuit that may use adaptive Gaussian measurements, implements the magic gate~$M$. Such a construction is called a gadget. 

In~\cite{Hebenstreit_2019} the authors establish that all pure fermionic non-Gaussian states are magic states for Gaussian computations. Magic states are required to be fermionic so that they can be injected at any point in the circuit. Precisely, fermionic states can be swapped through arbitrary states using Gaussian unitaries only (see~\cite[Lemma 1]{Hebenstreit_2019}). Analogously, magic gates are typically required to be fermionic gates (i.e., gates which map fermionic states to fermionic states) so that they can be swapped through any unitary using Gaussian unitaries only. 

For~$\theta \in (0, 2\pi)$, the Choi–Jamiołkowski state~\cite{Jamiolkowski1972,Choi1975}
\begin{align}
	\ket{m(\theta)} = ( I\otimes C(\theta) \otimes I ) \ket{\psi_+}^{\otimes 2}
\end{align}
corresponding to the controlled phase gate~$C(\theta)$ is a magic state for Gaussian computations. Indeed, the state~$\ket{m(\theta)}$ can be used in the gadget given in~\cref{fig:gadget} to implement the gate~$C(\theta)$. Notice that the gadget in~\cref{fig:gadget} used postselection. Instead, one can use the gadget proposed in~\cite{Hebenstreit_2019} (see also~\cite{bampounis2024matchgatehierarchycliffordlikehierarchy}) which includes a correction circuit composed of Gaussian operations alone, avoiding postselection.
In~\cite{Hebenstreit_2019} the authors show that any 4-qubit fermionic state is Gaussian equivalent to~$\ket{m(\theta)}$. 

The lowest dimensional magic states are 4-qubit fermionic states. This is because fermionic states of up to three qubits are Gaussian~\cite{Melo_2013}. 
Recently, the state extent was shown to be multiplicative for tensor products of states of up to four qubits. 
\begin{lemma}[see~\cite{cudby2024gaussiandecompositionmagicstates,reardonsmith2024fermioniclinearopticalextent}]
	The state extent of tensor product states is multiplicative, i.e., 
	\begin{align}
		\xi\left(\bigotimes_{j=1}^m \ket{\psi_j}\right) = \prod_{j=1}^m \xi\left(\psi_j\right) \ , 
	\end{align}
	if each factor~$\ket{\psi_j}$ is a fermionic state of at most four qubits.
	\label{lem:extent_multiplicative}
\end{lemma}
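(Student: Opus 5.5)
The plan is to prove the two inequalities $\xi(\otimes_j\psi_j)\le\prod_j\xi(\psi_j)$ and $\xi(\otimes_j\psi_j)\ge\prod_j\xi(\psi_j)$ separately.

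\emph{Submultiplicativity.} This direction is immediate. Take optimal decompositions $\ket{\psi_j}=\sum_{a}c^{(j)}_a\ket{\phi^{(j)}_a}$ into Gaussian states. Their tensor product is a decomposition of $\otimes_j\ket{\psi_j}$ with $\ell_1$-norm equal to $\prod_j\|c^{(j)}\|_1$. It is a decomposition into free states because a tensor product of Gaussian states is again Gaussian: each factor is a Gaussian circuit applied to $\ket{0}^{\otimes n_j}$, and the product circuit lies in $G_n$.

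\emph{Dual description for the reverse inequality.} I would use the convex-duality characterization of the extent,
\begin{align}
\xi(\psi)=\max\Bigl\{|\langle\omega|\psi\rangle|^2:\ |\langle\omega|\phi\rangle|\le 1\ \ \forall \ket{\phi}\in\cal{G}_n\Bigr\},
\end{align}
together with the Gaussian fidelity $F(\psi)=\max_{\phi\in\cal{G}_n}|\langle\phi|\psi\rangle|^2$. Choosing $\omega=\psi/\sqrt{F(\psi)}$ always gives $\xi(\psi)\ge 1/F(\psi)$.

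\emph{Step 1: $\xi=1/F$ for fermionic states of at most four qubits.} States on at most three qubits are Gaussian, so both sides equal $1$. For four qubits, I would invoke the result of~\cite{Hebenstreit_2019} that any such state is Gaussian-equivalent to $\ket{m(\theta)}$. Up to Gaussian unitaries, which leave both $\xi$ and $F$ invariant, this gives a two-term normal form such as $\alpha\ket{0000}+\beta\ket{1111}$ with $|\alpha|\ge|\beta|$. The two components are orthogonal Gaussian states.

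I would then compute $F$ explicitly by maximizing over the covariance matrix; I expect $F=|\alpha|^2$. The two-term decomposition has norm $(|\alpha|+|\beta|)^2$. The upper bound $\xi\le(|\alpha|+|\beta|)^2$ then needs to be matched by a dual witness, in general a combination of $\psi$ and $\ket{0000}$, chosen so that $|\langle\omega|\phi\rangle|\le1$ for all Gaussian $\phi$. This establishes the single-factor extent; the claim $\xi=1/F$ is what I would aim for, but if the computation shows otherwise I would carry the explicit witness into the next step instead.

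\emph{Step 2: multiplicativity of the witness bound.} The crux is to show that the product witness $\omega=\otimes_j\omega_j$ is still feasible. That is, for every Gaussian $\ket{\phi}$ on all $\sum_j n_j$ qubits,
\begin{align}
|\langle\omega_1\otimes\cdots\otimes\omega_m|\phi\rangle|\le1 .
\end{align}
I would argue by induction on $m$. The key observation is that each $\omega_j$, being built from the normal form, is a short combination of fixed-parity Gaussian states such as $\ket{0000}$ and $\ket{1111}$.

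Contracting a Gaussian $\ket{\phi}$ with a fixed-parity Gaussian bra on a block of modes should leave an unnormalized Gaussian state on the remaining modes. Fixed parity is essential here, because it avoids Jordan--Wigner sign problems. By the induction hypothesis, each such contraction is then bounded by $1$ against $\otimes_{j\ge2}\omega_j$, times its norm. The remaining task is to bound a two-term combination of these contractions, weighted by the coefficients of $\omega_1$, by $1$. This should reduce to the single-factor feasibility from Step 1, applied to the effective four-qubit Gaussian state obtained by purifying the contraction.

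\emph{Main obstacle.} I expect this last reduction to be the hardest step. The difficulty is showing that the joint distribution of the two contracted Gaussian vectors behaves like overlaps of a four-qubit Gaussian state. Concretely, one must check that the two Gaussian bras span a ``Gaussian-closed'' pair, so that the relevant $2\times2$ Gram data obey the same constraints as in the single-copy case. This is where the restriction to at most four qubits is genuinely used. If a direct argument fails, my fallback is to verify it through the explicit parametrization of Gaussian overlaps via Pfaffians of covariance matrices, as done in~\cite{cudby2024gaussiandecompositionmagicstates,reardonsmith2024fermioniclinearopticalextent}.
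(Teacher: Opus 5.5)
\textbf{There is a genuine gap: Step 2, which carries the whole content of the lemma, is not proved.}

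Your submultiplicativity argument and the dual framing are fine. However, the target of Step 1 is wrong for generic four-qubit states. In the normal form $\alpha\ket{0000}+\beta\ket{1111}$ with $\alpha\ge\beta\ge0$ you do have $F=\alpha^2$. But $\xi=(\alpha+\beta)^2=1+2\alpha\beta$, and this exceeds $1/\alpha^2$ unless $\beta=0$ or $\alpha=\beta$.

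The right witness is $\ket{0000}+\ket{1111}$ for every $\beta>0$. It is feasible because $F\bigl((\ket{0000}+\ket{1111})/\sqrt2\bigr)=1/2$. This follows from the single Pfaffian relation satisfied by four-mode Gaussian amplitudes, which writes $g_{0000}g_{1111}$ as a signed sum of the three products of complementary two-particle amplitudes. Combining it with $|ab|\le(|a|^2+|b|^2)/2$ gives $|g_{0000}|+|g_{1111}|\le1$.

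After applying local Gaussian unitaries (Gaussian factors contribute trivially), the lemma reduces to one inequality. Setting $\ket{\Omega_m}=(\ket{0000}+\ket{1111})^{\otimes m}$, you need $|\langle\Omega_m|\phi\rangle|\le1$ for every normalized Gaussian $\ket{\phi}$ on $4m$ qubits. Equivalently, the Gaussian fidelity of these witness states must be multiplicative.

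Your induction does not establish this. Let $\ket{\phi_0},\ket{\phi_1}$ be the contractions of $\ket{\phi}$ with $\bra{0000}$ and $\bra{1111}$ on the first block. The induction hypothesis gives $|\langle\Omega_{m-1}|\phi_i\rangle|\le\norm{\phi_i}$. The only other information you use is $\norm{\phi_0}^2+\norm{\phi_1}^2\le1$, and together these only bound the sum by $\sqrt2$, not $1$.

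In the one-block case the missing factor $\sqrt2$ comes from the Pfaffian relation above, i.e.\ from a joint constraint coupling $g_{0000}$ and $g_{1111}$ to the other amplitudes. After contraction, however, the four-qubit vector $(I\otimes\bra{\Omega_{m-1}})\ket{\phi}$ is not Gaussian. So there is no ``purified'' four-qubit Gaussian state to which single-copy feasibility could be applied, and you supply no substitute constraint linking the two contracted vectors.

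Your fallback, redoing the Pfaffian/covariance-matrix analysis of \cite{cudby2024gaussiandecompositionmagicstates,reardonsmith2024fermioniclinearopticalextent}, is exactly the missing argument. The paper does not prove this lemma itself; it imports it from those works, whose nontrivial content is precisely this multiplicativity of the Gaussian fidelity for tensor products of four-qubit witness states.
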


\section{Magic monotones \label{sec:magicmonotones}}

We define several magic monotones for states, gates and channels which are relevant for Gaussian-based simulation. The definitions of rank, pure state and unitary extent, and channel extent presented here are the Gaussian analogues of previously defined quantities respectively in~\cite{PhysRevX.6.021043,Bravyi_2019,seddonThesis} in the context of stabilizer-based simulations. Additionally, we introduce a new monotone called augmented channel extent which is a generalization of the channel extent. Lastly, we define the Gaussian fidelity which is the Gaussian analogue of the stabilizer fidelity introduced in~\cite{Bravyi_2019}. 

\begin{definition}[Rank, analogous to the definition of stabilizer rank in~\cite{PhysRevX.6.021043}]
	The rank~$\chi(\psi)$ of a~$n$-qubit state~$\psi \in \cal{H}_n$ is defined as
	\begin{alignat}{2}
		\chi(\ket{\psi}) 
		&= \min \left\{ \chi : \ket{\psi} = \sum_{j=1}^\chi c_j \ket{\phi_j} ,\ket{\phi_j}\in \cal{G}_n , c_j \in \bb{C}, \chi \in \bb{Z} \right\} \ .
	\end{alignat}
\end{definition}

See~\cite{Bravyi_2019,Heimendahl2021stabilizerextentis} respectively for the primal and the dual formulation of the stabilizer extent, which translate directly to the Gaussian analogue given in~\cref{def:extent} (see e.g.,~\cite[Chapter 5]{Boyd_Vandenberghe_2004} for the notion of dual program in convex optimization).

\begin{definition}[Pure state extent, analogous to the Definition 3 of stabilizer extent in~\cite{Bravyi_2019}]
\label{def:extent}
The (pure) state extent~$\xi(\psi)$ of a~$n$-qubit state~$\psi \in \cal{H}_n$ is defined as
\begin{alignat}{2}
	\xi(\ket{\psi}) 
	&= \min \left\{\|c\|_1^2: \ket{\psi} = \sum_{j} c_j \ket{\phi_j} ,\ket{\phi_j}\in \cal{G}_n , c_j \in \bb{C}\right\} 
	\quad&&\text{(primal form)}\ , \\
	\label{eq:extent-state-dual}
	&= \max \left\{ |\langle \omega, \psi \rangle|^2 : \ket{\omega} \in (\bb{C}^2)^{\otimes n} , |\langle{\omega}|{\phi}\rangle| \leq 1 \text{ for all } \ket{\phi} \in \cal{G}_n  \right\}
	\quad&&\text{(dual form)}\ .
\end{alignat}
\end{definition}

A vector~$\ket{\omega} \in (\bb{C}^2)^{\otimes n}$ which obeys~$|\langle{\omega} | {\phi} \rangle| \leq 1$ for all~$\ket{\phi} \in \cal{G}_n~$ is called a (valid) state extent witness.
A decomposition~$\ket{\psi} = \sum_{j} c_j \ket{\phi_j}$ where~$\ket{\phi_j}\in \cal{G}_n , c_j \in \bb{C}$ is called a feasible decomposition of the state~$\ket{\psi}$.
A feasible decomposition that achieves~$\xi(\ket{\psi}) = \|c\|_1^2$ is called an optimal decomposition of the state~$\ket{\psi}$ (with respect to the pure state extent).

\begin{definition}[Unitary extent and unitary rank, cf.~Definition 6 in~\cite{Bravyi_2019}]
Let~$U$ be an $n$-qubit unitary. 
Its unitary extent~$\xi(U)$ is
	\begin{align}
		\xi(U)=\min \left\{\| c \|_1^2: \chi \in \mathbb{N}, K_j \in \rm{G}_n, c_j \in \bb{C}\text{ such that } U=\sum_{j=1}^\chi c_j K_j \right\} \ .
	\end{align}
Its unitary rank~$\chi(U)$ is
	\begin{align}
		\chi(U)=\min \left\{ \chi : \chi \in \mathbb{N}, K_j \in \rm{G}_n, c_j \in \bb{C}\text{ such that } U=\sum_{j=1}^\chi c_j K_j \right\} \ .
	\end{align}
\end{definition}
A decomposition~$U=\sum_j c_j K_j$ where~$K_j \in \rm{G}_n, c_j \in \bb{C}$ is called a feasible decomposition of the unitary~$U$. A feasible decomposition that achieves~$\xi(U) = \|c\|_1^2$ is called an optimal decomposition of the unitary~$U$ with respect to the unitary extent.
The unitary extent gives the following upper bound for the pure state extent:
\begin{align}
	\label{eq:unitaryextentbound}
	\xi(U\ket{\phi}) \leq \xi(U) \qquad \text{for any Gaussian state~$\ket{\phi}\in \cal{G}_n$} \ .
\end{align}
This is, the pure state extent of a state obtained by applying a unitary~$U$ to any Gaussian state is upper bounded by the unitary extent of~$U$. This is due to the fact that an optimal decomposition~$U = \sum_{j} c_j V_j$ for the unitary~$U$ gives a feasible decomposition~$U \ket{\phi} = \sum_{j} c_j V_j \ket{\phi}$ for the state~$U\ket{\phi}$. As a consequence,~$\xi(U \ket{\phi} ) \leq \|c\|_1^2 = \xi(U)$.

Two states~$\ket{\psi_1}, \ket{\psi_2} \in\cal{H}_n$ are said to be Gaussian-unitarily-equivalent if there is a Gaussian unitary circuit~$V \in G_n$ such that~$\ket{\psi_2} = V \ket{\psi_1}$. Similarly, two unitaries~$U_1$ and~$U_2$ are said to be Gaussian-unitarily-equivalent if there is a Gaussian circuit~$V \in G_n$ such that~$U_2 = V U_1$. 
By the definition of the state extent, Gaussian-unitarily-equivalent states have the same state extent. Similarly, by the definition of unitary extent, Gaussian-unitarily-equivalent unitaries have the same unitary extent. Respectively,
\begin{align}
	\label{eq:extent_equiv}
	\xi(V \ket{\psi}) = \xi(\ket{\psi})
	\qquad\text{ and }\qquad
	\xi(V U) = \xi(U)
	\qquad\text{ for any } V \in G_n \ .
\end{align}

Before proceeding to define magic monotones for channels, it is convenient to define the notion of a convex-unitary channel.

\begin{definition}[Convex-unitary channel]
	A channel~$\cal{E}$ is a convex-unitary channel if it can be written as a convex combination of unitary channels, i.e., if~$\cal{E} = \sum_j p_j \mathcal{U}_j$ where~$\sum_j p_j = 1$ and~$\mathcal{U}_j = U_j^\dagger (\cdot) U_j$ with~$U_j$ a unitary operator for each~$j$.
\end{definition}

\begin{definition}[Convex-unitary channel extent, analogous to Definition 4.19 of stabilizer channel extent in~\cite{seddonThesis}]
	\label{def:channelextent}
The channel extent~$\Xi(\cal{E})$ of a convex-unitary channel~$\cal{E}$ is
	\begin{align}
	\Xi(\mathcal{E}) &= \min \left\{\sum_j p_j \xi\left(V_j\right): \cal{E}=\sum_j p_j \mathcal{V}_j, \mathcal{V}_j \, \text{a unitary channel}, p_j\geq 0, \sum_j p_j = 1  \right\} \ .
	\end{align}
\end{definition}
A decomposition~$\cal{E}=\sum_j p_j \mathcal{V}_j$ where each~$\mathcal{V}_j$ is a unitary channel and~$p_j \geq 0$ is called a feasible decomposition of the channel~$\mathcal{E}$. A decomposition that achieves~$\Xi(\mathcal{E})=\sum_j p_j \xi\left(V_j\right)$ is called an optimal decomposition of the channel~$\cal{E}$ with respect to the channel extent. A decomposition~$\cal{E}=\sum_j p_j \mathcal{V}_j$ of the channel~$\cal{E}$ is said to be equimagical if~$\xi(V_i) = \xi(V_j)$ for all~$i,j$. A feasible equimagical decomposition~$\cal{E}=\sum_j p_j \mathcal{V}_j$ of~$\cal{E}$ gives the upper bound~$\Xi(\cal{E}) \leq \xi(V_1)$ for the channel extent.

We introduce the following generalization of the channel extent, which allows a decomposition of a channel not only in terms of unitary channels but also in terms of general Gaussian channels.
\begin{definition}[Augmented channel extent]
\label{def:channelextentaugmented}
The augmented channel extent~$\Xiaug(\cal{E})$ of a convex-unitary channel~$\cal{E}$ is
\begin{align}
	\Xiaug(\mathcal{E}) = \min \Bigg\{&\sum_j p_j \xi\left(V_j\right) + \sum_j q_j : 
	\mathcal{E}=\sum_j p_j \mathcal{V}_j + \sum_k q_k \cal{G}^\mathrm{convex}_k ; \cal{V}_j \, \text{a unitary channel}; \\
	&\qquad \cal{G}^\mathrm{convex}_k \, \text{a convex-Gaussian channel}; p_j, q_k\geq 0; \sum_j p_j + \sum_k q_k = 1 \, \Bigg\} \ .
\end{align}	
\end{definition}
A decomposition~$\mathcal{E}=\sum_j p_j \mathcal{V}_j + \sum_k q_k \cal{G}^\mathrm{convex}_k$ where~$\cal{V}_j$ is a unitary channel,~$\cal{G}^\mathrm{convex}_k$ is a convex-Gaussian channel and~$p_j, q_k\geq 0$ is called a feasible decomposition of the channel~$\cal{E}$. If~$\Xiaug(\cal{E})=\sum_j p_j \xi\left(V_j\right) + \sum_j q_j$ then such a decomposition is called optimal (with respect to the augmented channel extent).
It is worth noting that the augmented channel extent is well-defined for a broader class of channels than convex-unitary channels. A detailed analysis of this extended set is left for future work.

We define the Gaussian fidelity which will be useful in proving a number of results in this work.
\begin{definition}[Gaussian fidelity, analogous to definition of stabilizer fidelity in Eq.~(125) of~\cite{Bravyi_2019}]
\label{def:fidelity}
The Gaussian fidelity~$F(\ket{\psi})$ of a~$n$-qubit state~$\ket{\psi}\in\cal{H}_n$ is the maximum squared overlap between the state~$\ket{\psi}$ and any~$n$-qubit Gaussian state, i.e., 
\begin{align}
F(\ket{\psi}) = \max_{\ket{\phi} \in \cal{G}_n} |\langle{\psi}|{\phi}\rangle|^2 \ . 
\end{align}
\end{definition}
Similarly to the stabilizer setting~\cite{Bravyi_2019}, the inverse Gaussian fidelity lower bounds the pure state extent:
\begin{align}
	\label{eq:xi-fidelity-bound}
	\xi(\ket{\psi}) \geq F(\ket{\psi})^{-1} \quad\text{ for any } \ket{\psi}\in\cal{H}_n \ .
\end{align}

We introduce an additional monotone called the dyadic negativity, previously defined in~\cite{Seddon_2021} in the context of stabilizer-based decompositions. This monotone will be useful in proving optimality results for decompositions of noisy rotation channels. 

\begin{definition}[Dyadic negativity, analogous to the Definition 5 of the (stabilizer) dyadic negativity in~\cite{Seddon_2021}]
\label{def:dyadic-negativity}
The dyadic negativity~$\Lambda(\rho)$ of a state~$\rho$ is 
	\begin{alignat}{2}
	\Lambda(\rho) &= \min \left\{\|q\|_1: \rho=\sum_j q_j \ket{L_j}\!\bra{R_j}; \ket{L_j}, \ket{R_j} \in  \cal{G}_n ; q_j \in \mathbb{C}\right\}  &&\quad\text{(primal form)}\\
	\label{eq:dyadicneg_dual}
	&=\max \left\{\tr[W \rho]: W \in \mathcal{W}_{\Lambda}\right\}  &&\quad\text{(dual form)} \ ,
	\end{alignat}
	where~$\mathcal{W}_{\Lambda} =\left\{W : W \text{ Hermitian}, |\langle L|W| R\rangle| \leq 1 \text{ for all } |L\rangle,|R\rangle \in  \cal{G}_n \right\}$.
\end{definition}

An operator~$W$ in the set~$\mathcal{W}_{\Lambda}$ is called a dyadic negativity witness and~$\mathcal{W}_{\Lambda}$ is called the set of dyadic negativity witnesses. The witness~$W$ which achieves~$\tr[W \rho] = \Lambda(\rho)$ is called optimal.

Next, we provide a lemma which is the Gaussian analogue of the ``sandwich theorem'' for stabilizers (see~\cite[Theorem 6.3]{seddonThesis}). This lemma establishes that the dyadic negativity lower bounds the channel extent.

\begin{lemma}[Fermionic sandwich lemma]
\label{lem:sandwich}
Consider a channel~$\cal{E}$ and with decomposition
\begin{align}
	\cal{E}(\cdot) = \sum_{j} p_j \cal{U}_j(\cdot) 
	\label{eq:Edecomp}
\end{align}
into unitary channels~$\cal{U}_j$.
We have
\begin{align}
	\label{eq:inequality-sandwich}
	\Lambda(\cal{E}(\ketbra{g})) \leq \Xi(\cal{E}) \leq  \sum_{j} p_j \xi(U_j)\ ,
\end{align}
for any Gaussian state~$\ket{g}\in\cal{G}_n$.
Moreover, if 
\begin{align}
\label{eq:dyadicneg-matches-channelextent}
\Lambda(\cal{E}(\ketbra{g})) = \sum_{j} p_j \xi(U_j)
\end{align}
then the decomposition~\cref{eq:Edecomp} is optimal with respect to the channel extent which is~$\sum_{j} p_j \xi(U_j)$.
\end{lemma}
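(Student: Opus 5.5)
The plan is to prove the two inequalities in~\cref{eq:inequality-sandwich} separately. The right-hand inequality $\Xi(\cal{E}) \leq \sum_j p_j \xi(U_j)$ is immediate from~\cref{def:channelextent}, since~\cref{eq:Edecomp} is a feasible decomposition of $\cal{E}$ into unitary channels. The final statement about optimality then follows by squeezing. If~\cref{eq:dyadicneg-matches-channelextent} holds, the chain $\Lambda \leq \Xi \leq \sum_j p_j\xi(U_j) = \Lambda$ forces equality throughout, so the decomposition attains $\Xi(\cal{E})$. The real content is therefore the left inequality $\Lambda(\cal{E}(\ketbra{g})) \leq \Xi(\cal{E})$.

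For the left inequality, take an arbitrary feasible decomposition $\cal{E} = \sum_j p_j \cal{V}_j$ with $p_j \ge 0$ and $\sum_j p_j=1$. For each $j$, take an optimal unitary-extent decomposition $V_j = \sum_a c^{(j)}_a K^{(j)}_a$ with $K^{(j)}_a \in G_n$ and $\|c^{(j)}\|_1^2 = \xi(V_j)$. Then
\begin{align}
\cal{E}(\ketbra{g}) = \sum_j p_j \sum_{a,b} c^{(j)}_a \overline{c^{(j)}_b}\, K^{(j)}_a\ketbra{g} K^{(j)\dagger}_b \ .
\end{align}
Each term has the form $\ket{L}\!\bra{R}$ with $\ket{L}=K^{(j)}_a\ket{g}$ and $\ket{R}=K^{(j)}_b\ket{g}$, both in $\cal{G}_n$, because Gaussian unitaries map Gaussian states to Gaussian states. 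This is therefore a feasible dyadic decomposition in the primal form of~\cref{def:dyadic-negativity}. Its coefficient $L^1$-norm is $\sum_j p_j \sum_{a,b}|c^{(j)}_a||c^{(j)}_b| = \sum_j p_j \|c^{(j)}\|_1^2 = \sum_j p_j \xi(V_j)$. Hence $\Lambda(\cal{E}(\ketbra{g})) \le \sum_j p_j\xi(V_j)$, and minimizing over decompositions gives $\Lambda \le \Xi(\cal{E})$.

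The only points needing care are technical rather than conceptual. First, the minima in the definitions must be attained; otherwise one argues with $\epsilon$-optimal decompositions and lets $\epsilon\to 0$. Second, the unitary channel $\cal{V}_j$ determines $V_j$ only up to a global phase, which changes neither $\xi(V_j)$ nor the $L^1$-norm. Third, one should confirm that the primal form of $\Lambda$ places no Hermiticity requirement on the individual terms, so that the complex coefficients $c_a\overline{c_b}$ are allowed; the definition permits $q_j\in\bb{C}$, so this is fine. I expect no step to be a serious obstacle. The main thing to verify is that the dyadic decomposition has norm exactly $\|c\|_1^2$, which holds because the double sum factorizes.
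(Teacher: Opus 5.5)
Your proof is correct and follows the paper's argument. You expand each unitary in a convex-unitary decomposition of $\cal{E}$ into its optimal Gaussian decomposition, observe that the terms $K_a\ketbra{g}K_b^\dagger$ form a feasible dyadic decomposition of $\cal{E}(\ketbra{g})$ with $L^1$-norm $\sum_j p_j\|c^{(j)}\|_1^2$, and squeeze for optimality. The only difference is that the paper starts from an optimal channel-extent decomposition, whereas you bound every feasible decomposition and take the infimum, which avoids assuming the minimum is attained.
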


\begin{proof}
	Consider a Gaussian state~$\ket{g}\in\cal{G}_n$. An optimal decomposition~$\cal{E}(\cdot) = \sum_j q_j \cal{V}_j (\cdot)$ of a channel~$\cal{E}$ with respect to the channel extent gives the following feasible decomposition of~$\cal{E}(\ketbra{g})$ with respect to the dyadic channel negativity 
\begin{align}
	\cal{E}(\ketbra{g}) 
	&= \sum_j q_j V_j^\dagger \ketbra{g} V_j \\
	\label{eq:sandwich-proof-aux1} &= \sum_{j,k,k'} q_j \overline{c_{j,k}} c_{j,k'} G_{j,k}^\dagger \ketbra{g} G_{j,k'} 
\end{align}
where~$V_j = \sum_k c_{j,k} G_{j,k}$ is an optimal unitary extent decomposition of~$V_j$ into Gaussian unitaries~$G_{j,k}\in G_n$ and~$\xi(V_j) = \|  c_j \|_1^2$. The decomposition in~\cref{eq:sandwich-proof-aux1} gives the following upper bound for the dyadic negativity
\begin{align}
	\Lambda( \cal{E}(\ketbra{g}) ) 
	\leq \sum_{j} q_j \left(\sum_k |c_{j,k}|\right)  \left(\sum_{k'} |c_{j,k'}| \right) 
	= \sum_{j} q_j \|c_{j}\|_1^2 
	= \sum_{j} q_j \xi(V_j) = \Xi(\cal{E}) \ ,
\end{align}
thus proving the first inequality in~\cref{eq:inequality-sandwich}. 
The second inequality in~\cref{eq:inequality-sandwich} results from considering the feasible decomposition given in~\cref{eq:Edecomp}.

Assume that~\cref{eq:dyadicneg-matches-channelextent} is satisfied. Then the set of inequalities  in~\cref{eq:inequality-sandwich} is tight, which implies that the channel extent is~$\sum_j p_j \xi(U_j)$ and the decomposition given in~\cref{eq:Edecomp} is optimal.

\end{proof}

\section{Rank and pure state extent of non-fermionic states \label{sec:extentplus}}

We define~$\ket{+_\delta} = (\ket{0} + e^{i \delta} \ket{1})/\sqrt{2}$.
In this section we prove the following lemma.

\begin{lemma}
\label{lem:xi-plus-state}
For~$t\in\bb{N}$ we have 
\begin{align}
	\chi\left( \bigotimes_{j=1}^t \ket{\psi_j} \right) &= \xi\left( \bigotimes_{j=1}^t \ket{\psi_j} \right) = 2 \ 
\end{align}
where~$\ket{\psi_j} \in \{\ket{0}, \ket{1}, \ket{+_{\delta_j}}\}$,~$\delta_j\in[0, 2\pi)$ for~$j\in[t]$ provided that at least one qubit is in a state of the form~$\ket{+_\delta}$, otherwise~$\chi( \bigotimes_{j=1}^t \ket{\psi_j}) = \xi( \bigotimes_{j=1}^t \ket{\psi_j} ) = 1$.
\end{lemma}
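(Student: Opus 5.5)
The plan is to work with parity. Every Gaussian state has definite parity, so a state of indefinite parity cannot be Gaussian, yet it splits into two Gaussian pieces of equal weight. Throughout, let $\ket{\psi}=\bigotimes_{j=1}^t\ket{\psi_j}$.

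\emph{The case without $\ket{+_\delta}$ factors.} First I would note that every $\ket{\phi}\in\cal{G}_n$ is an eigenstate of $Z^{\otimes n}$. This holds because every generator in $\gen_2(G_n)$ either commutes with $Z^{\otimes n}$ (quadratic Majorana rotations) or anticommutes with it (single $c_j$). If all factors lie in $\{\ket{0},\ket{1}\}$, then $\ket{\psi}=\prod_{j\in S}X_j\ket{0}^{\otimes t}$ for some set $S$ of qubits. This is Gaussian, so $\chi=\xi=1$.

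\emph{Lower bound.} Now assume at least one factor is $\ket{+_{\delta}}$. Writing $\Pi_\pm=(I\pm Z^{\otimes t})/2$, each factor $\ket{+_{\delta_j}}$ has weight $1/2$ on each parity and the other factors have fixed parity, so $\|\Pi_+\psi\|^2=\|\Pi_-\psi\|^2=1/2$. For any Gaussian $\ket{\phi}$ of parity $\pm$ we get $|\langle\phi|\psi\rangle|^2=|\langle\phi|\Pi_\pm\psi\rangle|^2\le 1/2$. Hence the Gaussian fidelity satisfies $F(\ket{\psi})\le 1/2$, and by \cref{eq:xi-fidelity-bound}, $\xi(\ket{\psi})\ge 2$. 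In particular $\ket{\psi}$ is not Gaussian, so $\chi(\ket{\psi})\ge 2$.

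\emph{Upper bound.} Write $\ket{\psi}=\tfrac{1}{\sqrt2}(\ket{e}+\ket{o})$ with $\ket{e}=\sqrt2\,\Pi_+\ket{\psi}$ and $\ket{o}=\sqrt2\,\Pi_-\ket{\psi}$ normalized. It remains to show that $\ket{e}$ and $\ket{o}$ are Gaussian; this gives a rank-2 decomposition with $\|c\|_1^2=2$, and hence $\chi=\xi=2$.

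To prove Gaussianity I would proceed in three reductions.
\begin{enumerate}[1)]
	\item Apply Gaussian $R_Z$ rotations to remove the phases $\delta_j$. Apply $X_j$ to turn every $\ket{1}$ factor into $\ket{0}$; this may swap the labels $e\leftrightarrow o$.
	\item Use the fact that conjugating by Gaussian unitaries preserves Gaussianity. The qubits set to $\ket{0}$ just contribute $+1$ eigenvalues of the quadratic operators $Z_j\propto c_{2j-1}c_{2j}$.
	\item Handle the $s$ qubits in $\ket{+}$, at positions $j_1<\dots<j_s$. The projected state is stabilized by the $s-1$ operators $X_{j_r}Z\cdots Z X_{j_{r+1}}$ (with $Z$'s on the intervening $\ket{0}$ qubits). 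Each of these is, up to a phase, the quadratic Majorana monomial $c_{2j_r}c_{2j_{r+1}-1}$. The projected state is also stabilized by $\pm Z^{\otimes t}$ and by the $Z_j$ on the $\ket{0}$ qubits.
\end{enumerate}
Combining these, the $2t$ Majoranas are perfectly matched into $t$ commuting pairs: $(c_{2j-1},c_{2j})$ on the $\ket{0}$ qubits, $(c_{2j_r},c_{2j_{r+1}-1})$ for consecutive $\ket{+}$ qubits, and the leftover pair $(c_{2j_1-1},c_{2j_s})$. The product of the latter pair equals the parity times the other pair products, up to a phase. A state that is a joint eigenstate of $t$ commuting disjoint Majorana-pair products is Gaussian, since such a pairing is mapped by a permutation in $O(2t)$, implemented by a Gaussian $U_R$, to the standard pairing stabilizing $\ket{0}^{\otimes t}$.

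The main obstacle I expect is this Gaussianity step. Getting the Jordan--Wigner signs right, and checking that the last pair's product is fixed by the parity constraint, needs care. As a fallback, I would directly compute the covariance matrix of $\ket{e}$ and check that it satisfies $\Gamma^2=-I$.
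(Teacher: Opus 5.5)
Your proof is correct, but the Gaussianity step follows a different route from the paper's.

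\textbf{The paper's route.} It first treats $\ket{+}^{\otimes t}$. It shows that the even and odd equal-weight superpositions $\ket{e_t},\ket{o_t}$ are Gaussian by exhibiting an explicit nearest-neighbour matchgate circuit $R_{X_{t-1}Y_t}(\pi/2)\cdots R_{X_1Y_2}(\pi/2)\ket{0}^{\otimes t}$, together with $\ket{o_t}=X_1\ket{e_t}$. It then reduces an arbitrary product state to the canonical form $\ket{0}^{\otimes k}\otimes\ket{+}^{\otimes(t-k)}$. This uses $R_Z$ rotations, $f\SWAP$ gates to move the $\ket{0}$ factors to the front, and $X_j$. The lower bound is the same parity/fidelity argument you give, applied to the canonical form.

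\textbf{Your route.} You project $\ket{\psi}$ directly onto the two parity sectors. You certify each projection as Gaussian by perfectly matching the $2t$ Majoranas into commuting pairs. So no $f\SWAP$ reordering is needed, and arbitrary interleavings of $\ket{0}$ and $\ket{+}$ factors are handled at once. Your bound $|\langle\phi|\psi\rangle|^2\le\|\Pi_\pm\psi\|^2=1/2$ likewise applies to the general product state without any reduction.

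The steps you flagged check out under the paper's Jordan--Wigner convention:
\begin{enumerate}[1)]
	\item $X_jZ\cdots ZX_k=-i\,c_{2j}c_{2k-1}$.
	\item The leftover pair $(c_{2j_1-1},c_{2j_s})$ is fixed by the parity eigenvalue together with the other pair eigenvalues.
	\item The $t$ disjoint pair products generate a maximal stabilizer group, so the joint eigenspace is one-dimensional.
	\item A permutation in $O(2t)$ is realized by a Gaussian $U_R$.
\end{enumerate}
The covariance-matrix fallback is therefore unnecessary.

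\textbf{What each buys.} The paper's route gives a concrete preparation circuit built only from nearest-neighbour matchgates. Such a circuit can be fed directly to a phase-sensitive simulator. Yours leans on the general correspondence between $O(2n)$ and Gaussian unitaries, plus the stabilizer characterization of pure Gaussian states. In exchange, it gives the pairing, and hence the covariance matrix, of each component immediately. It also makes transparent why the result is insensitive to where the $\ket{+_\delta}$ factors sit.
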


Denote by~$S^+_t$ ($S^-_t$) the set of~$t$-bit-strings with even (odd) parity, i.e., 
\begin{align}
	S_t^+ = \{ x \in \{0,1\}^t : |x| \text{ is even} \} 
	\quad\text{ and }\quad
	S_t^- = \{ x \in \{0,1\}^t : |x| \text{ is odd} \} \ ,
\end{align}
where~$|x| = \sum_{j=1}^t x_t$ is the Hamming weight of the string~$x\in\{0,1\}^t$. Consider the states 
\begin{align}
	\label{eq:enon} \ket{e_t} = \frac{1}{\sqrt{2^{t-1}}} \sum_{x\in S^+_t} \ket{x} 
	\quad\text{ and }\quad
	\ket{o_t} = \frac{1}{\sqrt{2^{t-1}}} \sum_{x\in S^-_t} \ket{x}
\end{align}
which are equal weight superpositions respectively of even and odd Hamming weight strings. 
The following lemma will be used to prove~\cref{lem:xi-plus-state}.

\begin{lemma}
	\label{lem:psi-plus-minus-gaussian}
	The states~$\ket{e_t}$ and~$\ket{o_t}$ are Gaussian states.
\end{lemma}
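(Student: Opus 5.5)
The most direct route is to exhibit an explicit Gaussian unitary in $G_n$ (with $n=t$) that maps $\ket{0}^{\otimes t}$ to $\ket{e_t}$, and another that maps it to $\ket{o_t}$. The natural candidate is the Hadamard-like single-mode rotations combined with the parity structure. Observe that $H^{\otimes t}\ket{0}^{\otimes t} = \ket{+}^{\otimes t}$ and that $\ket{+}^{\otimes t} = (\ket{e_t}+\ket{o_t})/\sqrt{2}$. Projecting onto even parity gives $\ket{e_t}$, but $H$ is not Gaussian. Instead I would use the two-qubit generators $\exp(\tfrac{\theta}{2}c_jc_k)$, which are parity-preserving Gaussian rotations, and proceed inductively on $t$.

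For $t=1$, we have $\ket{e_1}=\ket{0}$ and $\ket{o_1}=\ket{1}=X\ket{0}$, both Gaussian (since $X_1\in G_n$). For the inductive step, note the recursions
\begin{align}
\ket{e_{t+1}} &= \tfrac{1}{\sqrt2}\left(\ket{e_t}\ket{0}+\ket{o_t}\ket{1}\right), \\
\ket{o_{t+1}} &= \tfrac{1}{\sqrt2}\left(\ket{o_t}\ket{0}+\ket{e_t}\ket{1}\right).
\end{align}
Given a Gaussian $\ket{e_t}$ on $t$ qubits, the state $\ket{e_t}\ket{0}$ is Gaussian on $t+1$ qubits. Apply the nearest-neighbour matchgate $G(A,B)_{t,t+1}$ with $A=\tfrac{1}{\sqrt2}\left(\begin{smallmatrix}1&-1\\1&1\end{smallmatrix}\right)$ and $B=\tfrac{1}{\sqrt2}\left(\begin{smallmatrix}1&-1\\1&1\end{smallmatrix}\right)$; these have equal determinant $1$, so the gate is Gaussian. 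This gate maps $\ket{00}\mapsto(\ket{00}+\ket{11})/\sqrt2$ and $\ket{10}\mapsto(\ket{10}+\ket{01})/\sqrt2$ on qubits $t,t+1$. Writing $\ket{e_t}=\ket{e'_{t-1}}$-branches by the value of qubit $t$ (the $\ket{0}$ branch has even prefix, the $\ket{1}$ branch odd prefix), a direct check shows the image is exactly $\ket{e_{t+1}}$: each even string $x$ of length $t$ is sent to $(x0+\bar{x}1)/\sqrt2$, where $\bar x$ flips the last bit. The map $x\mapsto \bar x 1$ is a bijection onto odd-prefix strings ending in $1$, and the total weight stays even. This yields all even strings of length $t+1$ with uniform amplitude $2^{-t/2}$.

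Finally, $\ket{o_t}=X_1\ket{e_t}$, since flipping the first bit is a bijection between $S_t^+$ and $S_t^-$, and $X_1\in G_n$. So $\ket{o_t}$ is also Gaussian. The main obstacle is the bookkeeping in the inductive step, namely verifying that the chosen matchgate acts correctly on both parity branches of qubit $t$ (the $\ket{10}$ input) and not only on $\ket{00}$. If that check fails with my choice of $B$, I would instead reparametrize $B$ (for example, adjust it by a phase while keeping $\det A=\det B$) so that the branches combine with matching signs. As an alternative verification, one can check directly that the covariance matrix of $\ket{e_t}$ corresponds to a pure Gaussian state, i.e.\ that $\Gamma\Gamma^T=I$.
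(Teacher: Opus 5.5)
Your route is the paper's proof written as an induction. The paper sets $\ket{e_t}=R_{X_{t-1}Y_t}(\tfrac{\pi}{2})\cdots R_{X_1Y_2}(\tfrac{\pi}{2})\ket{0}^{\otimes t}$, which applies a nearest-neighbour matchgate to qubits $t,t+1$ of $\ket{e_t}\ket{0}$ at each step, and it also uses $\ket{o_t}=X_1\ket{e_t}$. Your bijection and amplitude bookkeeping is correct: $x\mapsto x0$ and $x\mapsto\bar x1$ cover the even strings of length $t+1$ ending in $0$ and in $1$, each with amplitude $2^{-t/2}$. This holds provided the gate acts as $\ket{00}\mapsto(\ket{00}+\ket{11})/\sqrt2$ and $\ket{10}\mapsto(\ket{10}+\ket{01})/\sqrt2$.

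The gate you wrote down does not act that way, so the step you flagged as the main obstacle does fail with your $B$. In $G(A,B)$ one has $G(A,B)\ket{10}=B_{12}\ket{01}+B_{22}\ket{10}$. With $B=\tfrac{1}{\sqrt2}\left(\begin{smallmatrix}1&-1\\1&1\end{smallmatrix}\right)$ this gives $\ket{10}\mapsto(\ket{10}-\ket{01})/\sqrt2$. The output then carries a minus sign on every even string ending in $\ldots01$, so it is not $\ket{e_{t+1}}$.

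Your fallback of adjusting $B$ by a phase cannot repair this. A phase rescales $B_{12}$ and $B_{22}$ together, and the error is in their relative sign. The correct choice is the transpose $B^{T}=\tfrac{1}{\sqrt2}\left(\begin{smallmatrix}1&1\\-1&1\end{smallmatrix}\right)$, which still satisfies $\det B^{T}=1=\det A$. Then $G(A,B^{T})=\tfrac{1}{\sqrt2}(I-iX\otimes Y)=\exp(-i\tfrac{\pi}{4}X\otimes Y)$, which is exactly the paper's $R_{X_tY_{t+1}}(\pi/2)$. With this correction the rest of your argument goes through unchanged.
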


\begin{proof}
The state~$\ket{e_t}$ can be obtained from the all zero state~$\ket{0}^{\otimes t}$ (which is a Gaussian state) as follows
\begin{align}
	\label{eq:psiplus-circuit-decomp}
	\ket{e_t} = R_{X_{t-1}Y_t}\left(\frac{\pi}{2}\right) R_{X_{t-2}Y_{t-1}}\left(\frac{\pi}{2}\right) \cdots R_{X_1 Y_2}\left(\frac{\pi}{2}\right) \ket{0}^{\otimes t} \ ,
\end{align}
where
\begin{align}
	R_{X_j Y_{j+1}} (\theta) = \exp(- i \frac{\theta}{2} X_j Y_{j+1}) 
\end{align}
is a matchgate. This implies that~$\ket{e_t}$ is a Gaussian state. 

We have~$\ket{o_t} = X_1 \ket{e_t}$. Since~$\ket{e_t}$ is a Gaussian state and~$X_1$ is a Gaussian unitary, it follows that~$\ket{o_t}$ is also a Gaussian state. This concludes the proof. 
\end{proof}

\begin{proof}[Proof of~\cref{lem:xi-plus-state}]
We first prove the lemma for the state~$\ket{+}^{\otimes t}$. 
We have 
\begin{align}
	\label{eq:plusstate_decomp}
	\ket{+}^{\otimes t} = \frac{1}{\sqrt{2}} \left( \ket{e_t}  + \ket{o_t}  \right) \ ,
\end{align}
where~$\ket{e_t}$ and~$\ket{o_t}$ defined in~\cref{eq:enon} are Gaussian states. The rank of a non-Gaussian state is at least two. Since the feasible decomposition~\cref{eq:plusstate_decomp} achieves this value, we have~$\chi(\ket{+}^{\otimes t}) = 2$.
Additionally, the feasible decomposition in~\cref{eq:plusstate_decomp} gives the upper bound 
\begin{align}
	\xi(\ket{+}^{\otimes t}) \leq 2 
\end{align}
for the pure state extent of the state~$\ket{+}^{\otimes t}$.
Recall the notion of Gaussian fidelity (see~\cref{def:fidelity}).
Since any Gaussian state~$\ket{\phi}\in\cal{G}_t$ has fixed parity, the overlaps~$\langle \phi, e_t\rangle$ and~$\langle \phi, o_t\rangle$ cannot be simultaneously non-zero. Then, 
\begin{align}
	\label{eq:F_plusn_state}
	F( \ket{+}^{\otimes t} ) = \max_{\phi \in \cal{G}_t }  \max\left\{ \frac{1}{2}|\langle \phi, e_t \rangle|^2, \frac{1}{2}|\langle \phi, o_t \rangle|^2\right\} = \frac{1}{2} 
\end{align}
which is achieved by choosing~$\ket{\phi} = \ket{e_t}$ or~$\ket{\phi} = \ket{o_t}$.
Then, from~\cref{eq:xi-fidelity-bound} we have
\begin{align}
	\label{eq:xi_plusn_state}
	\xi(\ket{+}^{\otimes t}) \geq \frac{1}{F(\ket{+}^{\otimes t})}  = 2 \ . 
\end{align}
This gives the claim for the state~$\ket{+}^{\otimes t}$.

We generalize the proof to any state~$\ket{\psi} = \bigotimes_{j=1}^t \ket{\psi_j}$ with~$\psi_j \in \{\ket{0}, \ket{1}, \ket{+_{\delta_j}}\}$,~$\delta_j\in[0, 2\pi)$ for~$j\in[t]$. Denote by~$k$ the number of factors~$\ket{0}$ and~$\ket{1}$ in the state~$\ket{\psi}$. 
We argue that~$\ket{\psi}$ is Gaussian-unitarily-equivalent to the state~$\ket{0}^{\otimes k}  \otimes \ket{+}^{\otimes (t-k)}$. This follows from three observations. Firstly, the state~$\bigotimes_{j=1}^{t-k} \ket{+_{\delta_j}}$ is Gaussian-unitarily-equivalent to~$\ket{+}^{\otimes (t-k)}$. Namely, we have 
\begin{align}
	\label{eq:+_+delta_equiv}
	\bigotimes_{j=1}^{t-k} \ket{+_{\delta_j}} = e^{i \delta (t-k) / 2} \left(\prod_{j=1}^{t-k} R_{Z_j}({\delta_j})\right) \ket{+}^{\otimes (t-k)} \ ,
\end{align}
where~$\prod_{j=1}^{t-k} R_{Z_j}({\delta_j})$ is a Gaussian circuit (notice that~$R_{Z}(\delta_1) \otimes R_{Z}(\delta_2)$ with~$\delta_1,\delta_2\in[0,2\pi)$ is a matchgate). Secondly, each state~$\ket{0}$ can be swapped through to the desired position using the matchgate~$f\SWAP$ defined in~\cref{eq:fswap}, this is
\begin{align}
	f\SWAP \ket{+_\delta,0} = \ket{0,+_\delta} \qquad \text{ for any } \delta\in[0,2\pi) \ .
\end{align}
Finally, the state~$\ket{0}$ at position~$j\in[k]$ is mapped to~$\ket{1}$ by the action of~$X_j$, which is also a Gaussian gate. Since Gaussian-unitarily-equivalent states have the same rank and the same extent, it suffices to show the claim for the state~$\ket{0}^{\otimes k} \otimes \ket{+}^{t-k}$.

We have
\begin{align}
	\ket{0}^{\otimes k} \otimes \ket{+}^{\otimes (t-k)} 
	= \frac{1}{\sqrt{2}} \ket{0}^{\otimes k} \otimes \left( \ket{e_{t-k}} + \ket{o_{t-k}} \right) \ .
\end{align}
Since~$\ket{0}^{\otimes k} \otimes \ket{e_{t-k}}$ and~$\ket{0}^{\otimes k} \otimes \ket{o_{t-k}}$ are Gaussian states with different parity we have~$\chi(\ket{0}^{\otimes k} \otimes \ket{+}^{\otimes (t-k)})=2$ and the reasoning used to determine~$\xi(\ket{+}^{\otimes t})=2$ (see~\cref{eq:F_plusn_state,eq:xi_plusn_state}) applies, giving~$\xi(\ket{0}^{\otimes k} \otimes \ket{+}^{\otimes (t-k)})=2$.

If none of the qubits is in a state of the form~$\ket{+_\delta}$ then~$\bigotimes_{j=1}^t \ket{\psi_j}$ is a computational basis state. Since computational basis states are Gaussian states, their rank and extent are both equal to one. This proves the claim. 
\end{proof}

\section{Optimal decompositions of resourceful unitary gates \label{sec:Udecomp}}

We give optimal decompositions with respect to the unitary extent for the following gates (see~\cref{tab:optimaldecomp} for the definition of the gates):
\begin{enumerate}[1)]
	\item any two-qubit fermionic gate (this includes the commonly used controlled-phase gate~$C(\theta)$ for an angle~$\theta\in[0,2\pi)$, the two-qubit rotation~$R_{ZZ}(\theta)$ around the~$ZZ$-axis by an angle~$\theta\in[0,2\pi)$, and the~$\SWAP$ gate)
	\item the Hadamard gate~$H$,
	\item the single-qubit rotation~$R_{Y}(\theta)$ around the~$Y$-axis by an angle~$\theta\in[0,2\pi)$.
\end{enumerate}
We summarize the results in~\cref{tab:optimaldecomp} and in~\cref{fig:unitary_extent}.
In the following subsections we provide the proofs for the results summarized in~\cref{tab:optimaldecomp}.

\begin{figure}[h]
    \centering
    \includegraphics[width=0.5\linewidth]{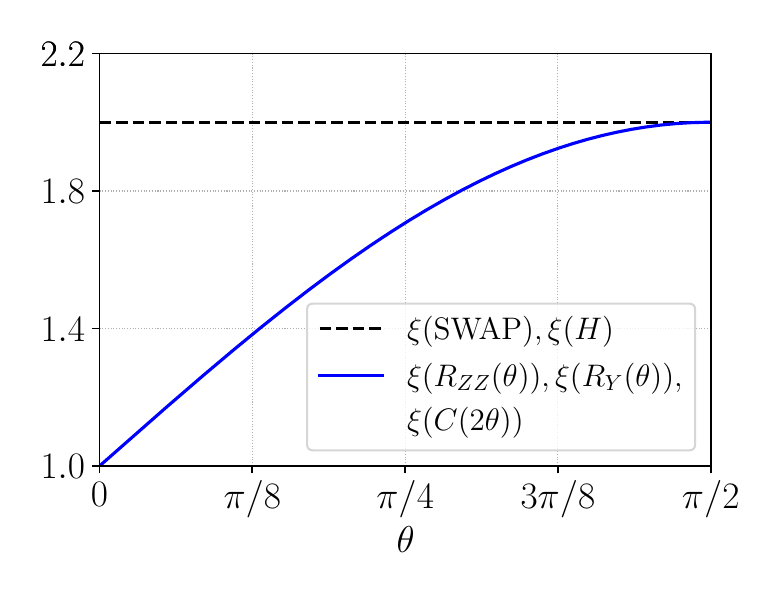}
    \caption{The gates~$\SWAP$ and~$H$ have unitary extent~$\xi(\SWAP) = \xi(H) = 2$. The gate~$C(\theta)$ has unitary extent~$C(\theta) = 1+|\sin(\theta/2)|$, the gates~$R_{ZZ}(\theta)$ and~$R_Y(\theta)$ have unitary extent~$\xi(R_{ZZ}(\theta)) = \xi(R_Y(\theta)) = 1+|\sin\theta|$.  }
    \label{fig:unitary_extent}
\end{figure}

In~\cref{sec:decomp_2qbit_gates_mom} we give optimal unitary extent decompositions for any two-qubit fermionic gate, and the respective value for the unitary extent. 
The proof relies on the so-called lifting lemma. This lemma allows `lifting' known optimal state extent decompositions of magic states to optimal unitary extent decompositions of the corresponding magic gates. This is, for a magic gate which can be implemented by the use of a Gaussian gadget together with the magic state. This construction was first considered in the context of stabilizer computations (see~\cite[Lemma 1]{Bravyi_2019}). In~\cref{sec:liftinglemma} we prove a similar result in the context of Gaussian computations. 
The gates~$ C(\theta), R_{ZZ}(\theta)$ and~$\SWAP$ can be implemented with a known Gaussian gadget that utilizes a fermionic magic state with known optimal state extent decomposition. This allows utilizing the lifting lemma to obtain optimal unitary extent decompositions, which we do in~\cref{sec:decomp_C_RZZ_SWAP}. In~\cref{sec:decomp_twoqbit_gates} we generalize this result to any fermionic two-qubit gate acting on nearest neighbour qubits. Because any two-qubit fermionic gate can be decomposed into a circuit with the~$ZZ$-rotation gate~$R_{ZZ}(\theta), \theta\in\bb{R}$ as the only non-Gaussian resource, an optimal decomposition of the latter gives an optimal decomposition of the former. 

In~\cref{sec:decomp_H_RY}, we present optimal decompositions for the gates~$H$ and~$R_Y(\theta)$. The proofs exploit the structure of these gates: although they are not fermionic (since they do not preserve parity), each can be expressed as the sum of a parity-preserving fermionic gate and a parity-flipping fermionic gate.

\subsection{Optimal unitary extent decompositions of two-qubit fermionic gates \label{sec:decomp_2qbit_gates_mom}}

Here, we give optimal decompositions for any two-qubit diagonal or Jordan-Wigner adjacent fermionic gate, and the respective unitary extent.

\subsubsection{Lifting lemma \label{sec:liftinglemma}}

Here we prove the so-called lifting lemma for Gaussian unitaries. Analogously to the stabilizer setting, this lemma allows mapping optimal decompositions of magic states to optimal decompositions of the corresponding magic gate. However, while the lemma in the stabilizer setting applies to diagonal gates and its proof is based on that (see~\cite[Lemma 1]{Bravyi_2019}), our proof relies instead on considering the gadget in~\cref{fig:gadget} which relates the magic state (whose optimal decomposition we wish to lift) to the magic gate (whose optimal decomposition we wish to obtain by lifting). Additionally, while for stabilizer computations the simplest magic gate is a one-qubit gate which can be applied using a gadget which takes as input a one-qubit magic state, for Gaussian computations the lowest-dimensional magic state is a four-qubit state, since all one-, two- and three-qubit fermionic states are Gaussian (see e.g.~\cite{Hebenstreit_2019} and~\cite[Proposition 1]{Melo_2013}). This leads to the peculiarity that gadgetizing the lowest-dimensional fermionic magic gates, which are two-qubit gates, requires a four-qubit magic state.

\begin{lemma}[Lifting lemma for Gaussian unitaries]
\label{lem:lifting}
	Consider a two-qubit fermionic unitary~$V$, the 4-qubit Gaussian state~$\ket{g} = \ket{\psi^+} \otimes \ket{\psi^+}$ with~$\ket{\psi^+} = \left( \ket{00} + \ket{11} \right) / \sqrt{2}$ and 
	\begin{align}
		\label{eq:ketVdecomposition}
		\ket{v} = \left(I \otimes V \otimes I \right) \ket{g} = \sum_{j=1}^\chi c_j \ket{g_j}
	\end{align}	
	where~$\ket{g_j} = (I \otimes K_j \otimes I) \ket{g}$ with~$K_j \in G_n$,~$j\in[\chi]$ Gaussian unitaries. 
	Then the following hold:
	\begin{enumerate}[1)]
          \item We have~$V^{\otimes t} =  \left(\sum_{j=1}^{\chi} c_j K_j\right)^{\otimes t}$ and~$\xi(V^{\otimes t}) \leq \norm{c}_1^{2t}$ for~$t\in\bb{N}$. 
	  \item If the decomposition of~$\ket{v}$ in~\cref{eq:ketVdecomposition} is optimal with respect to the state extent, i.e., if~$\xi(\ket{v}) = \| c \|_1^2$, then
		\begin{align}
			\label{eq:lem-lifting-xiV-equal}
			\xi(V) = \xi(\ket{v}) \ .
		\end{align}
		Furthermore, we have~$\xi(V^{\otimes t}) = \xi(V)^t$ for~$t\in\bb{Z}$.
	\end{enumerate}
\end{lemma}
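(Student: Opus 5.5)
The plan is to combine the Choi-state/gadget correspondence with the duality between unitary and state extent. Throughout, $V$ acts on qubits $2,3$ of the four-qubit register, and $\ket{g} = \ket{\psi^+}\otimes\ket{\psi^+}$ is Gaussian. This holds because $\ket{\psi^+}$ is an even-parity two-qubit state, all one-, two- and three-qubit fermionic states are Gaussian, and tensor products of Gaussian states on adjacent blocks are Gaussian.

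For part 1), I would first show that the map $K \mapsto (I\otimes K\otimes I)\ket{g}$ is injective and linear on two-qubit operators. Indeed, $(I\otimes K\otimes I)\ket{g}$ is, up to a fixed reordering of qubits and a factor $1/2$, the vectorization of $K$ (the Choi vector). Hence from $\ket{v} = \sum_j c_j \ket{g_j}$ we obtain $(I\otimes (V - \sum_j c_j K_j)\otimes I)\ket{g} = 0$, and therefore $V = \sum_j c_j K_j$. Tensoring gives $V^{\otimes t} = (\sum_j c_j K_j)^{\otimes t}$. Expanding this product yields a linear combination of the products $K_{j_1}\otimes\cdots\otimes K_{j_t}$. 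These are Gaussian, since each $K_j$ is a nearest-neighbour Gaussian unitary placed on disjoint adjacent pairs, and the generators act locally there. The $L^1$ norm of this combination is $\norm{c}_1^t$, so $\xi(V^{\otimes t}) \le \norm{c}_1^{2t}$.

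For part 2), the upper bound $\xi(V) \le \norm{c}_1^2 = \xi(\ket{v})$ is the case $t=1$ of part 1). For the reverse inequality, I take an optimal decomposition $V = \sum_k d_k W_k$ with $W_k$ Gaussian on the whole register. Applying it to $\ket{g}$ gives $\ket{v} = \sum_k d_k (I\otimes W_k\otimes I)\ket{g}$. Each term is a Gaussian state, because Gaussian unitaries map Gaussian states to Gaussian states. This is exactly \cref{eq:unitaryextentbound} applied to $I\otimes V\otimes I$, once we note that the extent of $I\otimes V\otimes I$ as a four-qubit gate is at most $\xi(V)$, since padding a Gaussian unitary with identities keeps it Gaussian. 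Hence $\xi(\ket{v}) \le \xi(V)$, and equality follows. The same argument applied to $V^{\otimes t}$ and the state $(I\otimes V\otimes I)^{\otimes t}\ket{g}^{\otimes t} = \ket{v}^{\otimes t}$ gives $\xi(\ket{v})^t = \xi(\ket{v}^{\otimes t}) \le \xi(V^{\otimes t}) \le \xi(V)^t$. The first equality uses the multiplicativity in \cref{lem:extent_multiplicative}; this requires $\ket{v}$ to be a fermionic four-qubit state, which holds because $V$ is parity preserving and $\ket{g}$ has even parity. Combining these gives $\xi(V^{\otimes t}) = \xi(V)^t$.

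The main obstacle I expect is the qubit ordering. $\ket{v}^{\otimes t}$ is a product over blocks of four qubits, while $V^{\otimes t}$ acts on blocks of two qubits, so I must check that the Gaussian property survives the embedding and reordering of blocks. I would handle this with $f\SWAP$ networks: they are Gaussian and permute fermionic (fixed-parity) blocks without sign issues, so Gaussian-unitary equivalence \cref{eq:extent_equiv} preserves all extents involved. A smaller point is the statement "$t\in\bb{Z}$": I will read it as $t\in\bb{N}$.
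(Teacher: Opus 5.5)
Your part 1) and the single-copy equality $\xi(V)=\xi(\ket{v})$ follow the paper. Injectivity of $K\mapsto(I\otimes K\otimes I)\ket{g}$ is what the gadget identity $A(I\otimes\ket{K_j}\otimes I)=K_j$ encodes. The reverse bound is \cref{eq:unitaryextentbound}, which applies cleanly here because qubits $2,3$ are adjacent.

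\textbf{The gap.} It lies in the step $\xi(\ket{v}^{\otimes t})\le\xi(V^{\otimes t})$, which the paper also only asserts (``similarly to before''). An optimal decomposition $V^{\otimes t}=\sum_k d_k W_k$ may use arbitrary $W_k\in G_{2t}$, not block-local ones. In your layout $W_k$ acts on the interleaved qubits $\{4k-2,4k-1\}_k$ of $\ket{g}^{\otimes t}$, and such a term need not produce a Gaussian state.

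For a counterexample, take $t=2$ and $W\ket{y_1y_2y_3y_4}=(-1)^{(y_1+y_2)(y_3+y_4)}\ket{y_3y_4y_1y_2}$. This is the fermionic swap of the pairs $(1,2)$ and $(3,4)$, an $f\SWAP$ network, so $W\in G_4$. Place it on qubits $2,3,6,7$.
\begin{itemize}
\item It maps $\ket{g}^{\otimes 2}$ to $\tfrac14\sum_{p,q,r,s}(-1)^{(p+q)(r+s)}\ket{p\,r\,s\,q\,r\,p\,q\,s}$.
\item The Gaussian mode permutation that makes the Bell pairs adjacent turns this into $\tfrac14\sum(-1)^{(p+s)(q+r)}\ket{pp\,rr\,ss\,qq}$.
\item Its reduced state on the first four modes is $\tfrac14\sum_{p,r}\ketbra{pprr}$. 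Every two-point Majorana correlator on these modes vanishes, yet $\langle Z_1Z_2\rangle=1$, contradicting Wick's theorem.
\end{itemize}
So the $W_k$ need not yield a feasible Gaussian decomposition of $\ket{v}^{\otimes t}$.

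Your $f\SWAP$ remedy does not repair this. The four-qubit blocks do have fixed parity, but making the support of $V^{\otimes t}$ contiguous requires moving middle qubits past outer ones. Neither of these is in a fixed-parity state, since each is Bell-paired across the cut, and that is exactly where the non-Gaussian sign above comes from.

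\textbf{A way to close the step.}
\begin{itemize}
\item Put $V^{\otimes t}$ on the contiguous qubits $1,\dots,2t$ of a $4t$-qubit register. Then every $W\in G_{2t}$ becomes $W\otimes I\in G_{4t}$.
\item Probe it with the nested state $\ket{h}=\bigotimes_{i=1}^{2t}\ket{\psi^+}_{i,\,4t+1-i}$. Nested pairs have no crossings, so the $f\SWAP$ network carrying adjacent Bell pairs to this pairing introduces no signs. Hence $\ket{h}$ is Gaussian and $\xi((V^{\otimes t}\otimes I)\ket{h})\le\xi(V^{\otimes t})$.
\item The state $(V^{\otimes t}\otimes I)\ket{h}$ is a product of even-parity copies of $\ket{v}$ on the mutually non-crossing groups $\{2k-1,2k,4t+1-2k,4t+2-2k\}$.
\item This is where fixed parity is genuinely used. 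The Jordan--Wigner strings of each group's even operators run only over whole enclosed groups, so they act trivially. A fermionic mode permutation followed by single-qubit $Z$ corrections therefore maps this state to $\ket{v}^{\otimes t}$.
\end{itemize}
With this, your chain $\xi(\ket{v})^t=\xi(\ket{v}^{\otimes t})\le\xi(V^{\otimes t})\le\xi(V)^t$ goes through.
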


\begin{proof}
Consider the gadget given in~\cref{fig:gadget} (see also~\cite[Fig. 1]{reardonsmith2024improved}). 
Consider the operator 
\begin{align}
	A = 4 \left( \bra{\psi^+} \otimes I^{\otimes 2} \otimes \bra{\psi^+} \right)
\end{align}
which implements this gadget.
We have
\begin{align}
	\label{eq:gadgetA1} A( I \otimes \ket{v} \otimes I  ) &=  V \ , \\
	\label{eq:gadgetA2} A( I \otimes \ket{K_j} \otimes I  ) &=  K_j  \quad\text{ with }\quad  \ket{K_j}:=\left(I  \otimes K_j \otimes I  \right) \ket{g}  \quad\text{ for all }\quad j\in[\chi] \ .
\end{align}
This together with~\cref{eq:ketVdecomposition} gives 
\begin{align}
	V 
	&= A( I  \otimes \ket{v} \otimes I  ) \\
	&= A\left( I \otimes \left( \sum_{j=1}^\chi c_j \ket{K_j} \right) \otimes I  \right) \ket{g}  \\
	&= \sum_{j=1}^\chi c_j  A( I  \otimes \ket{K_j} \otimes I  ) \\
	&= \sum_{j=1}^\chi c_j   K_j  \ .
\end{align}
This gives~$\xi(V) \leq \|c\|_1^2$ and more generally~$\xi(V^{\otimes t}) \leq \norm{c}_1^{2t}$. Considering this together with~\cref{eq:unitaryextentbound}, if the decomposition in~\cref{eq:ketVdecomposition} of~$\ket{v}$ is optimal, i.e.,~$\xi(\ket{v}) = \|c\|_1^2$, we obtain~\cref{eq:lem-lifting-xiV-equal}.

The~$t$-fold tensor product~$V^{\otimes t}$ can be implemented using~$t$ copies of the gadget in~\cref{fig:gadget} stacked vertically. We call the resulting gadget~$B$. Similar results to~\cref{eq:gadgetA1,eq:gadgetA2} hold. Namely,
\begin{align}
	B\left((I\otimes \ket{v}\otimes I)^{\otimes t}\right) = V^{\otimes t}
	\quad\text{ and }\quad
	B\left( \bigotimes\nolimits_{k=1}^t (I\otimes \ket{K_{j_k}} \otimes I)\right) = \bigotimes\nolimits_{k=1}^t K_{j_k} \ .
\end{align}
Hence, similarly to before we have~$\xi(V^{\otimes t}) = \xi(\ket{v}^{\otimes t})$. Moreover, due to~\cref{lem:extent_multiplicative} we have~$\xi(\ket{v}^{\otimes t}) = \xi(\ket{v})^t$, and the result~$\xi(V^{\otimes t}) = \xi(\ket{v})^t = \|c\|_1^{2t}$ follows.
\end{proof}

Next, we apply~\cref{lem:lifting} to find optimal decompositions for several magic gates of interest. We use the fact that the pure state extent is multiplicative for~$t$-fold tensor products of 4-qubit (magic) states -- see~\cref{lem:extent_multiplicative}. By~\cref{lem:lifting}, this gives multiplicativity of the unitary extent for~$t$-fold tensor products of two-qubit magic gates. This means that when considering the tensor product and applying several magic gates in the same circuit layer, there is no room for improvement over taking the tensor product of individual optimal decompositions. However, note that applying magic gates sequentially can admit improved decompositions. For example, applying the~$\SWAP$ gate twice gives the identity operator and consequently~$\xi(\SWAP^2) = 1$, a result which contrasts with~$\xi(\SWAP^{\otimes 2}) = 4$, which we will show next.

Notice that~\cref{fig:gadget} allows us to identify the state $\ket{v} = (I \otimes V \otimes I) \ket{\psi^+}^{\otimes 2}$ as the Choi-Jamiołkowski state~\cite{Jamiolkowski1972,Choi1975} of the two-qubit gate $V$. Note the unusual ordering of the qubits in the definition: instead of acting on the third and fourth qubits as per the usual definition $(I^{\otimes 2} \otimes V) \ket{\psi^+}^{\otimes 2}$ of the Choi state, $V$ acts on the second and third qubits.

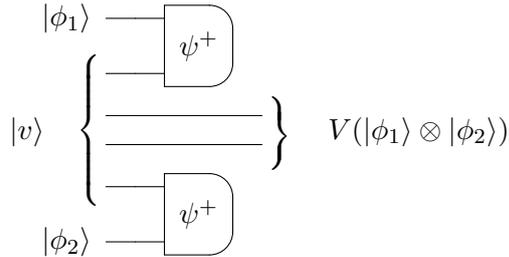
\begin{figure}[H]
	$$
	\Qcircuit @C=1em @R=1em {
	\lstick{\ket{\phi_1}} & \qw & \multimeasureD{1}{\psi^+} & \\
	\lstick{} & \qw & \ghost{\psi^+} & \\
	\lstick{} & \qw & \qw & \qw 
	\inputgroup{2}{5}{2em}{\hspace{9cm}V (\ket{\phi_1} \otimes \ket{\phi_2})} 
	\\
	\lstick{} & \qw & \qw & \qw 
	\inputgroupv{2}{5}{1.2em}{2em}{\ket{v}\hspace{2em}} 
	\gategroup{3}{4}{4}{4}{1.2em}{\}} \\
	\lstick{} & \qw & \multimeasureD{1}{\psi^+} & \\
	\lstick{\ket{\phi_2}} & \qw & \ghost{\psi^+} & \\
	} 
	$$
	\centering
        \caption{The gadget in the picture uses the magic state~$\ket{v} = (I \otimes V \otimes I) \ket{\psi^+}^{\otimes 2}$ to implement the associated magic gate~$V$ (see~\cite{Hebenstreit_2019}). The gadget consists of two Bell measurements which can be implemented using the matchgate~$G(H,H)$ and computational basis measurements~\cite{Hebenstreit_2019}. Postselecting the measurement outcome consisting of all zeros gives~$V (\ket{\phi_1} \otimes \ket{\psi_2})$ at the output of the circuit. Instead of postselecting, one can apply a circuit of Gaussian operations which uses ancillas initialized in computational basis states to correct when obtaining a measurement outcome different from all-zeros, see~\cite{Hebenstreit_2019}.}
\label{fig:gadget}
\end{figure}

\subsubsection{Controlled-phase~$C(\theta)$, the~$ZZ$-rotation~$R_{ZZ}(\theta)$, and~$\SWAP$ gates \label{sec:decomp_C_RZZ_SWAP}}

We apply~\cref{lem:lifting} to obtain optimal decompositions for the following gates. 

\hfill

\paragraph{Controlled-phase gate.}
The controlled-phase gate can be implemented using the gadget shown in~\cref{fig:gadget}~\cite{Hebenstreit_2019} (see also~\cite{reardonsmith2024improved}) using the magic state 
\begin{align}
	\label{eq:mtheta}
	\ket{m_\theta} = \frac{1}{2} \left( \ket{0000} + \ket{0011} + \ket{1100} + e^{i\theta} \ket{1111} \right)
\end{align}
and Gaussian operations. This gadget enables the implementation of a controlled-phase gate $C(\theta)_{j,k}$ on any two qubits $j$ and $k$, even when they are not nearest neighbours. The key observation is that the four-qubit resource state $\ket{m_\theta}$ has even parity on both its first and last pairs of qubits. As a result, each such pair of qubits can be moved to arbitrary positions in the circuit using $f\SWAP$ gates, allowing the controlled-phase gate to be applied to any two qubits.

In~\cite{reardonsmith2024improved} the authors proved that~$\ket{m_\theta}$ has state extent (see~\cite[Eq.~(30)]{reardonsmith2024improved})
\begin{align}
	\label{eq:extent-mtheta}
	\xi( \ket{m_\theta} ) = 1 + \left| \sin\left(\frac{\theta}{2}\right) \right| \ .
\end{align}
This value is achieved by the optimal decomposition
\begin{align}
	\label{eq:magic-state-m-optimal-decomposition}
	\ket{m_\theta} = e^{i \theta/4} \left( \cos\left(\frac{\theta}{4}\right) I + i \sin\left(\frac{\theta}{4}\right) Z_2 Z_3 \right) R_{Z_2}\left(\frac{\theta}{2}\right) R_{Z_3}\left(\frac{\theta}{2}\right) \ket{g} \ .
\end{align}

Notice that~$\ket{m_\theta} = (I  \otimes C(\theta) \otimes I  ) \ket{g}$.
This together with the optimal decomposition in~\cref{eq:magic-state-m-optimal-decomposition} and~\cref{lem:lifting} gives the unitary extent optimal decomposition
\begin{align}
	\label{eq:Ctheta-optimal-decomposition}
	C(\theta)  = e^{i \theta/4} \left( \cos\left(\frac{\theta}{4}\right) I + i \sin\left(\frac{\theta}{4}\right) Z \otimes Z \right)\left( R_{Z}\left(\frac{\theta}{2}\right) \otimes R_{Z}\left(\frac{\theta}{2}\right)\right) \ .
\end{align}
This decomposition achieves
\begin{align}
	\xi(  C(\theta) ) 
	&= \xi( \ket{m_\theta} )
	\label{eq:extent-C} = 1 + | \sin(\theta/2) |
\end{align}
where in the second equality we used~\cref{eq:extent-mtheta}. 

The optimal decomposition of~$\ket{m(\theta)}^{\otimes t}$ is the~$t$-fold tensor product of the optimal decomposition given in~\cref{eq:magic-state-m-optimal-decomposition}~\cite{cudby2024gaussiandecompositionmagicstates,reardonsmith2024fermioniclinearopticalextent}.  Hence, the result above generalizes to
\begin{align}
	\xi\left( C(\theta)^{\otimes t} \right) 
	&= \xi\left( \ket{m_\theta}^{\otimes t} \right)
	= \xi\left( \ket{m_\theta} \right)^t 
	=  \left(1 + | \sin(\theta/2) | \right)^t \ ,
\end{align}
and the~$t$-fold tensor product of the decomposition in~\cref{eq:Ctheta-optimal-decomposition} gives an optimal decomposition of~$C(\theta)^{\otimes t}$.

\hfill

\paragraph{Rotation about the~$ZZ$-axis.}
The~$ZZ$-rotation gate~$R_{ZZ}(\theta)$ acting on any two qubits (not necessarily nearest neighbour) is Gaussian-unitarily-equivalent to the gate~$C(\theta)$. Namely, we have 
\begin{align}
	R_{ZZ}(\theta) = e^{i \theta/2} \left(R_{Z}(\theta) \otimes R_{Z}(\theta)\right) C(-2\theta) \ ,
\end{align}
where~$R_{Z}(\theta)$ are matchgates. Then, by~\cref{eq:extent_equiv} we have 
\begin{align}
	\label{eq:extentRZZ}
	\xi( R_{ZZ}(\theta)  )= \xi( C(-2\theta) ) = 1 + | \sin\theta | \ .
\end{align}
An optimal decomposition is given by
\begin{align}
	R_{ZZ}(\theta) 
	\label{eq:RZZ-decomposition} &= \cos(\theta/2) I - i \sin(\theta/2) Z \otimes Z \ .
\end{align}
Similarly to the controlled-phase gate, \cref{eq:extentRZZ} generalizes to
\begin{align}
	\label{eq:xi_RZZ_tensor}
	\xi\left(R_{ZZ}(\theta)^{\otimes t} \right) = (1+|\sin(\theta)|)^t \ ,
\end{align}
and the~$t$-fold tensor product of  an optimal decomposition of~\cref{eq:RZZ-decomposition} is an optimal decomposition of~$R_{ZZ}(\theta)^{\otimes t}$.

\hfill

\paragraph{$\SWAP$ gate.}
The~$\SWAP$ gate acting on nearest neighbours is Gaussian-unitarily-equivalent to the controlled-phase gate~$C(\theta)$. Namely, we have
\begin{align}
	\label{eq:swap-ctheta} \SWAP 
	&= f\SWAP \cdot C(\pi) 
\end{align}
where~$f\SWAP$ given in~\cref{eq:fswap} is a matchgate (when acting on nearest neighbours). 
Then
\begin{align}
	\label{eq:xiSWAP}
	\xi(\SWAP) = \xi(C(\pi)) = 2 
\end{align}
and~\cref{eq:swap-ctheta} together with~\cref{eq:Ctheta-optimal-decomposition} gives the optimal decomposition
\begin{align}
	\SWAP 
	&= \frac{e^{i\pi/4}}{\sqrt{2}}f\SWAP\left( I + i Z \otimes Z \right) \left(R_Z\left(\frac{\pi}{2}\right) \otimes R_Z\left(\frac{\pi}{2}\right)\right) \\
	\label{eq:swap-optimal-decomposition} &= \frac{e^{i\pi/4}}{\sqrt{2}} G(-i I, X) \left( I + i Z \otimes Z \right) 
\end{align}
when acting on nearest-neigbours.
As for the controlled-phase gate, \cref{eq:xiSWAP} generalizes to
\begin{align}
	\xi\left( \SWAP^{\otimes t} \right) = 2^t \ ,
\end{align}
and the~$t$-fold tensor product of 
an optimal decomposition of~\cref{eq:swap-optimal-decomposition} is an optimal decomposition of~$ \SWAP^{\otimes t}$ if each $\SWAP$ acts on nearest neighbours. 

When acting on non-nearest neighbour qubits, the $f\SWAP$ gate is not a matchgate. Consequently, a non-nearest neighbour $\SWAP$ gate is not Gaussian-unitarily equivalent to the controlled-phase gate $C(\theta)$. Nevertheless, the decompositions presented here remain feasible. We leave an analysis of their optimality for future work.

\subsubsection{General two-qubit fermionic gates acting on nearest neighbours \label{sec:decomp_twoqbit_gates}}

The following lemma gives an optimal decomposition and the associated unitary extent for any two-qubit fermionic gate which 1) acts on nearest neighbours with respect to the Jordan-Wigner ordering or 2) is diagonal.

\begin{lemma}
\label{lem:extent2qubitgate}
Let~$U$ be a two-qubit fermionic gate.
The gate $U$ has unitary extent
\begin{align}
	\label{eq:decompU20} \xi(U) = 1 + |\sin c| 
\end{align}
and respective optimal decomposition
\begin{align}
	\label{eq:decompU2} U = (R_Z(t_1) \otimes R_Z(t_2)) \cdot R_{XX}(a) \cdot R_{YY}(b) \cdot \left( \cos\left(\frac{c}{2}\right) I - i \sin\left(\frac{c}{2}\right) Z \otimes Z \right) \cdot  (R_Z(t_3) \otimes R_Z(t_4)) \ ,
\end{align}
with~$t_1,t_2,t_3,a,b,c\in\bb{R}$ defined by~\cref{eq:gateUdecomp},
if
\begin{enumerate}[1)]
	\item \label{it:nn} $U$ acts on nearest neighbours or
	\item \label{it:diagonal}  $U$ is a diagonal gate.
\end{enumerate}
\end{lemma}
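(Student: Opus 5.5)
Starting from \cref{lem:fermionicgatedecomposition}, I write $U = W_1 \, R_{ZZ}(c)\, W_2$ with $W_1 = (R_Z(t_1)\otimes R_Z(t_2)) R_{XX}(a) R_{YY}(b)$ and $W_2 = R_Z(t_3)\otimes R_Z(t_4)$. The upper bound $\xi(U)\leq 1+|\sin c|$ then follows by inserting the decomposition \cref{eq:RZZ-decomposition} of $R_{ZZ}(c)$, provided $W_1$ and $W_2$ are Gaussian unitaries. This gives exactly \cref{eq:decompU2}, with $L^1$-norm squared $(|\cos(c/2)|+|\sin(c/2)|)^2 = 1+|\sin c|$. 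For the matching lower bound I want to invoke the Gaussian-unitary invariance \cref{eq:extent_equiv}, which is stated there only for left multiplication. I therefore first note that right multiplication by a Gaussian unitary is also harmless: if $U=\sum_j c_j K_j$ is feasible, then $UW=\sum_j c_j K_j W$ is feasible, and since $W^{-1}\in G_n$ the reverse holds too. Hence $\xi(W_1 V W_2)=\xi(V)$ for all $W_1,W_2\in G_n$, and so $\xi(U)=\xi(R_{ZZ}(c))=1+|\sin c|$ by \cref{eq:extentRZZ}.

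It remains to check, in each case, that $W_1,W_2\in G_n$. In case~\ref{it:nn}, $U$ acts on qubits $j,j+1$. There $R_Z(t)\otimes R_Z(t')$ is a matchgate (both blocks are diagonal, with determinant $1$). Likewise $R_{XX}(a)$ and $R_{YY}(b)$ are nearest-neighbour matchgates, since $X\otimes X$ and $Y\otimes Y$ act within the even and odd parity subspaces. One can also see this from the Majorana side: $X_jX_{j+1} = -i\,c_{2j}c_{2j+1}$ and $Y_jY_{j+1} = i\,c_{2j-1}c_{2j+2}$ are quadratic, so their exponentials lie in $\gen_2(G_n)$. In case~\ref{it:diagonal}, $U$ acts on arbitrary qubits $j,k$. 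Because $U$ is diagonal, I would argue that one may take $a=b=0$; equivalently, any diagonal fermionic two-qubit unitary is a global phase times $(R_Z(t_1)\otimes R_Z(t_2))R_{ZZ}(c)$. This holds because the diagonal unitaries form a four-parameter abelian group spanned by $I, Z_j, Z_k, Z_jZ_k$. Each single-qubit $R_{Z_j}(t)=\exp(-\tfrac{t}{2}\,c_{2j-1}c_{2j})$ (up to the factor $i$) is Gaussian regardless of adjacency, so $W_1,W_2$ are Gaussian. The upper bound also uses $Z_jZ_k$, which is non-Gaussian but appears only inside the decomposition of $R_{ZZ}$. For the decomposition, I need $Z_j\otimes Z_k = (-ic_{2j-1}c_{2j})(-ic_{2k-1}c_{2k})$ to be Gaussian; as a product of Majorana operators $c_\ell\in\gen_2(G_n)$ it is Gaussian for any $j,k$. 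Finally, $\xi(R_{ZZ}(c))$ on non-adjacent qubits follows from \cref{eq:extentRZZ}, which the paper states for any pair of qubits, via the $C(\theta)$ gadget with $fSWAP$-movable magic state.

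The main obstacle is the global phase and parameter bookkeeping in the diagonal case. I must make sure that the parameters in \cref{eq:gateUdecomp} can be chosen with $a=b=0$ (absorbing any global phase into $R_Z$ factors, since $R_Z(t)\otimes R_Z(t)$ with shifted angles produces phases $e^{\pm i t}$ on the diagonal). Alternatively, I can simply verify that any choice of $a,b$ is still fine: for non-adjacent qubits, $X_jX_k$ carries a Jordan--Wigner string and is not quadratic in general. So I would instead rederive the diagonal form directly rather than rely on \cref{lem:fermionicgatedecomposition}. A secondary subtlety is that \cref{eq:extentRZZ} for non-adjacent pairs rests on the lifting lemma with the swapped magic state; I take that as given from \cref{sec:decomp_C_RZZ_SWAP}.
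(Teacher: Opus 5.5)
Your proof is correct. In case~1 it is the paper's argument. Your observation that $\xi$ is also invariant under \emph{right} multiplication by elements of $G_n$ (because $G_n$ is a group) fills a step the paper uses tacitly, since \cref{eq:extent_equiv} is stated only for left multiplication.

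In case~2 you take a different route. The paper keeps an arbitrary decomposition from \cref{lem:fermionicgatedecomposition}. It then notes that diagonality forces $a=(k+\ell)\pi$ and $b=(k-\ell)\pi$. Indeed, on the even and odd parity sectors $R_{XX}(a)R_{YY}(b)$ acts as $e^{-i(a-b)\sigma_x/2}$ and $e^{-i(a+b)\sigma_x/2}$ respectively, and these must be diagonal because all the other factors are. Hence $R_{XX}(a)$ and $R_{YY}(b)$ are proportional to $I$, $X_jX_k$ or $Y_jY_k$. You instead write a diagonal $U$ directly as a phase times $(R_Z(t_1)\otimes R_Z(t_2))R_{ZZ}(c)$, i.e.\ you choose $a=b=0$. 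This is more elementary and avoids characterising $a,b$.

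Your stated reason for avoiding the paper's route is, however, a misconception. It is true that $X_jX_k$ is not quadratic for non-adjacent $j,k$. But it is the product of the generators $X_j$ and $X_k$ of $G_n$ (it acts on the Majorana operators by a diagonal sign matrix in $O(2n)$), so it is Gaussian for any $j,k$, and likewise $Y_jY_k$.

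What the paper's route buys is that \emph{every} parameter choice in \cref{eq:gateUdecomp} for a diagonal $U$ yields an optimal decomposition, whereas yours proves the formula for your canonical $c$. The two are consistent: odd multiples of $\pi$ give $R_{XX}(a)R_{YY}(b)=\pm Z\otimes Z$, which shifts $c$ by $\pi$ and leaves $|\sin c|$ unchanged.

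Finally, a global phase cannot be absorbed into the $R_Z$ factors as you suggest. Every factor in \cref{eq:gateUdecomp} has determinant $1$, so only the phases $\pm 1,\pm i$ are reachable. Absorb the phase instead into the complex coefficients, or note that $e^{i\beta}I$ is itself a matchgate; either way $\|c\|_1$ is unchanged. The same caveat applies to \cref{eq:gateUdecomp} as stated in the paper.
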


\begin{proof}
By~\cref{lem:fermionicgatedecomposition}, any fermionic two-qubit gate~$U$ admits a decomposition of the type
\begin{align}
	\label{eq:decompU3} U = (R_Z(t_1) \otimes R_Z(t_2)) \cdot R_{XX}(a) \cdot R_{YY}(b) \cdot R_{ZZ}(c) \cdot  (R_Z(t_3) \otimes R_Z(t_4)) 
\end{align}
where the parameters~$t_1,t_2,t_3,a,b,c$ can be computed by solving the corresponding system of equations. 

We first prove~\cref{it:nn}. When acting on nearest neighbours, the gates~$R_{XX}(t)$ and~$R_{YY}(t)$, $t\in\bb{R}$ are matchgates. The gates $R_Z(t) \otimes I$ and~$I \otimes R_Z(t), t\in\bb{R}$ are matchgates. Then, the gate~$R_{ZZ}(c)$ is the only non-Gaussian gate in~\cref{eq:decompU3}. Hence, by~\cref{eq:extent_equiv}, we have~$\xi(U) = \xi(R_{ZZ}(c))$. The claim in~\cref{eq:decompU20} is then a consequence of~\cref{eq:extentRZZ}. The decomposition in~\cref{eq:decompU2} is obtained by considering the optimal decomposition of~$R_{ZZ}(c)$ given by~\cref{eq:RZZ-decomposition}. As~\cref{eq:decompU2} achieves the unitary extent~$\xi(U) = \xi(R_{ZZ}(c)) = 1 + |\sin c|$, it is optimal. 

We now prove~\cref{it:diagonal}. The gate $U$ with decomposition given in~\cref{eq:decompU3} is diagonal 
if and only if $a = (k+\ell) \pi$ and $b = (k-\ell) \pi$ for $k,\ell\in\bb{Z}$. In this case, the gate $R_{XX}(a)$ is either proportional to the identity or to $XX$. Similarly, the gate $R_{YY}(b)$ is either proportional to the identity or to $YY$. Hence, $R_{XX}(a)$ and $R_{YY}(b)$ are Gaussian gates. The remain of the proof proceeds analogously to the proof of~\cref{it:nn}.
\end{proof}

Notice that for a two-qubit gate $U$ which 1) acts on nearest neighbours or 2) is diagonal, as $U^{\otimes t}$ and $R_{ZZ}(c)^{\otimes t}$ are Gaussian-unitarily-equivalent, by~\cref{eq:xi_RZZ_tensor,eq:extent_equiv} we have
\begin{align}
	\xi(U^{\otimes t}) = \xi( R_{ZZ}(c)^{\otimes t}) = (1 + |\sin c|)^t
\end{align}
and the tensor product of optimal decompositions \eqref{eq:decompU2} of $U$ is an optimal decomposition of $U^{\otimes t}$.

We do not give an optimal decomposition for the case when $U$ 1) acts on non-nearest neighbour qubits~$j$ and~$k$ and 2) is not diagonal. In this case, $R_{XX}(a)$ and $R_{YY}(b)$ are not matchgates, and finding an optimal decomposition of $U$ amounts to finding an optimal decomposition of $R_{XX}(a) R_{YY}(b) R_{ZZ}(c)$. Still, the decomposition \eqref{eq:decompU2} with each rotation gate~$R_{P}(\theta)$ decomposed as~$R_{P}(\theta) = \cos(\theta/2) I - i \sin(\theta/2) P$ with~$P \in \{X_jX_k,Y_jY_k,Z_jZ_k\}$ provides a feasible decomposition with extent~$\xi(U) = (1 + |\sin a|)(1 + |\sin b|)(1 + |\sin c|)$. We leave an analysis of optimality in this case for future work.

\subsection{Optimal unitary extent decompositions of single-qubit gates \label{sec:decomp_H_RY}}
We give optimal decompositions for the Hadamard~$H$ and~$Y$-rotation~$R_Y(\theta)$ gates, and the respective unitary extent.

\subsubsection{Hadamard gate \label{sec:hadamard}}

The Hadamard gate~$H$ admits the decomposition 
\begin{align}
	 \label{eq:Hdecomp}
	H = \frac{X + Z}{\sqrt{2}} \ .
\end{align}
Notice that~$X$ and~$Z$ are Gaussian unitaries. The feasible decomposition in~\cref{eq:Hdecomp} gives the upper bound
\begin{align}
	\label{eq:xiH-upperbound}
	\xi(H) \leq 2 \ .
\end{align}
By~\cref{eq:unitaryextentbound} and~\cref{lem:xi-plus-state} we have 
\begin{align}
	\label{eq:xiH-lowerbound}
	\xi(H) \geq \xi(H \ket{0}) = \xi(\ket{+}) = 2 \ .
\end{align}
Combining~\cref{eq:xiH-upperbound,eq:xiH-lowerbound} gives
\begin{align}
	\xi(H) = 2 \ .
\end{align}
Thus, \cref{eq:Hdecomp} is an optimal decomposition of the gate~$H$ with respect to the unitary extent.

\subsubsection{Rotations about the~$Y$- and~$X$-axes \label{sec:RY}}

The rotation gate about~$Y$-axis is
\begin{align}
	R_Y(\theta) 
	&= e^{-i \frac{\theta}{2} Y} \\
	\label{eq:RY-optimaldecomp} 
	&= \cos(\frac{\theta}{2}) I - i \sin(\frac{\theta}{2})  Y \ .
\end{align}
Each term in the decomposition in~\cref{eq:RY-optimaldecomp} is Gaussian. Hence, we have
\begin{align}
	\label{eq:xiRY-upperbound}
	\xi (R_Y(\theta) ) 
	&\leq \left( \left|\sin(\frac{\theta}{2})\right| + \left|\cos(\frac{\theta}{2})\right| \right)^2 \\
	&= 1 + \left|\sin\theta\right| \ .
\end{align}
Also, we have
\begin{align}
	\label{eq:xiRY-lowerbound}
	\xi (R_Y(\theta) ) 
	&\geq \xi (R_Y(\theta) \ket{0}) \\
	&\geq | \langle{\omega}|{R_Y(\theta) 0}\rangle |^2
\end{align}
where the first inequality follows from~\cref{eq:unitaryextentbound} and the second inequality follows the dual formulation of state extent given in~\cref{eq:extent-state-dual}, where~$\ket{\omega}$ is any state extent witness. Observe that 
\begin{align}
	\ket{\omega_{++}} = \ket{0} + \ket{1}
	\qquad \text{ and } \qquad
	\ket{\omega_{+-}} = \ket{0} - \ket{1}
\end{align}
are valid state witness since the only single-qubit Gaussian states are~$\ket{0}$ and~$\ket{1}$ with which~$\ket{\omega_{++}}$ and~$\ket{\omega_{+-}}$ have absolute overlap one. Observe that~$R_Y(\theta) \ket{0} = \cos(\theta/2) \ket{0} + \sin(\theta/2)\ket{1}$. For~$\theta \in [0, \pi]$ we have
\begin{align}
	\xi (R_Y(\theta) ) 
	&\geq | \langle{\omega_{++}}|{R_Y(\theta) 0} \rangle |^2 \\
	&= \left( \sin(\frac{\theta}{2}) + \cos(\frac{\theta}{2}) \right)^2 \\
	&= 1 + \sin\theta = 1 + |\sin\theta|
\end{align}
and for~$\theta \in [\pi, 2\pi]$ a similar result holds by considering the witness~$\ket{\omega_{+-}}$ instead.
Then, \cref{eq:xiRY-upperbound} together with~\cref{eq:xiRY-lowerbound} gives 
\begin{align}
	\label{eq:extent-RYtheta} \xi (R_Y(\theta) ) = 1 + |\sin\theta|
\end{align}
and~\cref{eq:RY-optimaldecomp} is an optimal decomposition of~$R_Y(\theta)$ with respect to the unitary extent.

The gate $R_X(\theta)$ is Gaussian-unitarily-equivalent to $R_Y(\theta)$. Namely,
\begin{align}
	R_X(\theta) = R_Z(-\pi/2) R_Y(\theta) R_Z(-\pi/2)^\dagger \, 
\end{align}
where the rotation $R_Z(-\pi/2)$ is a Gaussian gate. Then, by~\cref{eq:extent_equiv,eq:extent-RYtheta},
\begin{align}
	\xi(R_X(\theta)) = \xi(R_Y(\theta)) = 1 + |\sin\theta|
\end{align}
and 
\begin{align}
	R_X(\theta) = \cos(\frac{\theta}{2}) I - i \sin(\frac{\theta}{2}) X
\end{align}
is an optimal decomposition, since it achieves $\xi(R_X(\theta))$.

\section{Improved and optimal decompositions of noisy rotation channels \label{sec:noisydecomps}}

Here we give results about optimal (or improved) decompositions for rotation channels 
\begin{align}
	\cal{R_P}(\theta)(\cdot) = R_P(\theta) (\cdot) R_P(\theta)^\dagger
\end{align}
with~$R_P(\theta) = \exp(-i \frac{\theta}{2} P)$, subject to stochastic Pauli noise
\begin{align}
	\cal{E}_{\cal{P}'}(p)(\cdot) = (1-p) \cal{I} + p \cal{P}' = (1-p)(\cdot) + p P'(\cdot)P'{}^\dagger  \ ,
\end{align}
where~$P$ and~$P'$ are Pauli operators and~$p\in[0,1]$ is the error probability. This is, we are interested in decompositions of the channel~$\cal{E}_{\cal{P}'}(p) \circ \cal{R_P}(\theta)$.
Specifically, we consider the following cases:
\begin{enumerate}[1)]
	\item A rotation about the~$Y$-axis subject to~$Y$ noise
	\begin{align}
		\label{eq:NYdecomp1}
		\cal{N_Y} = \cal{E_Y}(p) \circ \cal{R_Y}(\theta) = (1-p)\cal{R_Y}(\theta) + p \cal{Y} \circ \cal{R_Y}(\theta) \ .
	\end{align} 
	\item A rotation about the~$ZZ$-axis subject to two-qubit~$ZZ$ noise
	\begin{align}
		\label{eq:NZZdecomp1}
		\cal{N_{ZZ}} = \cal{E_{ZZ}}(p) \circ \cal{R_{ZZ}}(\theta) = (1-p)\cal{R_{ZZ}}(\theta) + p \cal{ZZ} \circ \cal{R_{ZZ}}(\theta) \ .
	\end{align} 
	\item A rotation about the~$ZZ$-axis subject to single-qubit~$Z$ noise (without loss of generality we consider the noise in the first qubit)
	\begin{align}
		\label{eq:NZdecomp1}
		\cal{N'_{ZZ}} = \cal{N}_{\cal{ZZ}_{(1)}} = \cal{E}_{\cal{Z}_{1}}(p) \circ \cal{R_{ZZ}}(\theta) = (1-p)\cal{R_{ZZ}}(\theta) + p \cal{Z}_1 \circ \cal{R_{ZZ}}(\theta) \ .
	\end{align} 
	(We denote the channel with noise on the second qubit by~$\cal{N}_{\cal{ZZ}_{(2)}} = \cal{E}_{\cal{Z}_{2}}(p) \circ \cal{R_{ZZ}}(\theta)$.)
\end{enumerate}
In the cases considered, the Pauli channel~$\cal{P}'$ is a Gaussian unitary channel. As a result,~$\cal{E}_{\cal{P}'} \circ \cal{R_P}(\theta)$ can be expressed as equimagical decompositions, i.e., as a convex combination of unitary channels each of which has the same channel extent, in this case~$\xi(R_{P}(\theta))$.
Then, the decompositions in~\cref{eq:NYdecomp1,eq:NZZdecomp1,eq:NZdecomp1} allow upper bounding the channel extent of the noisy channel by the unitary extent of the respective noiseless unitary, i.e., 
\begin{align}
	\Xi( \cal{E}_{\cal{P}'} \circ \cal{R_P}(\theta) ) \leq \xi(R_{P}(\theta)) = 1 + |\sin\theta| \ .
\end{align}
In particular
\begin{align}
	\label{eq:xi-noisechannels-ub1}
	\Xi( \cal{N_Y} ), \Xi( \cal{N_{ZZ}} ) , \Xi( \cal{N_{ZZ}}' )  \leq 1 + |\sin\theta| \ .
\end{align}

This section is structured as follows. 
We give optimal decompositions for~$\cal{N_Y}$ and~$\cal{N_{ZZ}}$ in~\cref{sec:optimal_decomp_NY_NZZ}. We further give an improved decomposition of~$\cal{N_{ZZ}}'$ in~\cref{sec:improved_decomp_NZZ}.
By combining the decompositions for~$Z_1$,~$Z_2$ and~$Z_1 Z_2$ noise, in~\cref{sec:decompgeneraldephasing} we obtain an improved (though not necessarily optimal) decomposition for the case of general 2-qubit~$Z$-noise. 

Before proceeding, we give a feasible decomposition of any channel of the form: a rotation about the~$P$-axis subject to stochastic~$P$ noise,
\begin{align}
	\label{eq:NPdecomp1}
	\cal{N_P} = \cal{E_P} \circ \cal{R_P}(\theta) = (1-p)\cal{R_P}(\theta) + p \cal{P} \circ \cal{R_P}(\theta) \ ,
\end{align} 
where~$P$ is any Pauli operator.

\begin{lemma}
\label{lem:Protation_feasible_decomp}
	The channel~$\cal{N_P}$ with~$\cal{P}$ any Pauli operator has feasible decomposition
	\begin{align}
		\label{eq:ERP-decomp3}
		\cal{N_P} &= s \cal{R}_P(\varphi) + (1-s) \cal{R}_P(\pi-\varphi) 
	\end{align}
	where 
	\begin{align}
		\label{eq:varphi} \varphi &= \arcsin\left[ (1-2p) \sin\theta \right]
		\quad \text{ and }\quad
		s = \frac{1}{2} \left( 1 + (1-2p) \frac{\cos\theta}{\cos\varphi} \right) \ .
	\end{align}
\end{lemma}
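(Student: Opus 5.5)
The plan is to compare both sides of \cref{eq:ERP-decomp3} in the Pauli-operator basis. Since $P^2=I$, we have $R_P(\theta)=\cos(\theta/2)I-i\sin(\theta/2)P$. Expanding the rotation channel on an arbitrary operator $\rho$ gives
\begin{align}
	\cal{R}_P(\theta)(\rho) = \frac{1+\cos\theta}{2}\rho + \frac{1-\cos\theta}{2}P\rho P - \frac{i\sin\theta}{2}\left(P\rho - \rho P\right) \ .
\end{align}
Using $P R_P(\theta) = R_P(\theta)P$, the channel $\cal{P}\circ\cal{R}_P(\theta)$ is obtained from this expression by swapping the coefficients of $\rho$ and $P\rho P$. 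The coefficient of the commutator term is unchanged by this operation.

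Summing with weights $1-p$ and $p$, $\cal{N_P}(\rho)$ is a linear combination of the three operators $\rho$, $P\rho P$ and $-\tfrac{i}{2}[P,\rho]$. The coefficients are $\tfrac12\left(1+(1-2p)\cos\theta\right)$, $\tfrac12\left(1-(1-2p)\cos\theta\right)$ and $(1-2p)\sin\theta$ respectively.

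Next I expand the right-hand side of \cref{eq:ERP-decomp3} in the same way. The identities $\cos(\pi-\varphi)=-\cos\varphi$ and $\sin(\pi-\varphi)=\sin\varphi$ give the commutator coefficient $s\sin\varphi+(1-s)\sin\varphi=\sin\varphi$. The coefficient of $\rho$ is $\tfrac12\left(1+(2s-1)\cos\varphi\right)$, and the coefficient of $P\rho P$ is $\tfrac12\left(1-(2s-1)\cos\varphi\right)$.

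Matching coefficients then requires $\sin\varphi=(1-2p)\sin\theta$ and $(2s-1)\cos\varphi=(1-2p)\cos\theta$. These are exactly the definitions in \cref{eq:varphi}. The first is solvable because $|(1-2p)\sin\theta|\le 1$.

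The remaining step is feasibility, which means showing $s\in[0,1]$, so that the right-hand side is a genuine convex combination of unitary channels. This amounts to $|(1-2p)\cos\theta|\le\cos\varphi$, where $\varphi\in[-\pi/2,\pi/2]$ is the principal value of $\arcsin$, so $\cos\varphi=\sqrt{1-(1-2p)^2\sin^2\theta}\ge0$. Squaring reduces the inequality to $(1-2p)^2\cos^2\theta\le 1-(1-2p)^2\sin^2\theta$, which is equivalent to $(1-2p)^2\le1$. This holds for $p\in[0,1]$.

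The main subtlety is the degenerate case $\cos\varphi=0$, in which $s$ is undefined. This forces $(1-2p)^2=1$ and $\sin^2\theta=1$, so $\cos\theta=0$ as well. In that case both terms on the right-hand side are the same channel, $\cal{R}_P(\pi/2)$ or $\cal{R}_P(-\pi/2)$, so any $s\in[0,1]$ works. I would interpret $s$ as, say, $1/2$ in that case.
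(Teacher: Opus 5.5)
Your strategy is the same as the paper's. The paper uses $P=iR_P(\pi)$ to write $\mathcal{N}_P=(1-p)\mathcal{R}_P(\theta)+p\,\mathcal{R}_P(\theta+\pi)$ and then matches the transfer matrices of the two sides. That is your coefficient matching in the basis of maps $\rho\mapsto\rho$, $\rho\mapsto P\rho P$, $\rho\mapsto-\tfrac{i}{2}[P,\rho]$.

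However, one step as you wrote it is false. Conjugating by $P$ does \emph{not} leave the commutator term unchanged, because $P[P,\rho]P=\rho P-P\rho=-[P,\rho]$. Equivalently, $\mathcal{P}\circ\mathcal{R}_P(\theta)=\mathcal{R}_P(\theta+\pi)$ and $\sin(\theta+\pi)=-\sin\theta$. So $\mathcal{P}\circ\mathcal{R}_P(\theta)$ has coefficients $\tfrac{1-\cos\theta}{2}$, $\tfrac{1+\cos\theta}{2}$ and $-\sin\theta$.

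This sign flip is exactly where the factor $(1-p)-p=1-2p$ in the commutator coefficient of $\mathcal{N}_P$ comes from. If the coefficient really were unchanged, the weighted sum would give $\sin\theta$, and matching would yield $\sin\varphi=\sin\theta$ rather than the lemma's $\varphi$. The value $(1-2p)\sin\theta$ you state in the next paragraph is the correct one. So the slip is in the justification, not the result, but the sentence must be fixed. The simplest fix is to use $P=iR_P(\pi)$ and substitute $\theta+\pi$ into your expansion.

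With that correction, the rest of your argument is sound and more complete than the paper's. The paper only asserts that solving $T_1=T_2$ gives the result. It never checks $s\in[0,1]$, which is what makes the right-hand side a genuine convex combination of unitary channels and hence feasible for the channel extent. Two parts of your proposal fill that gap:
\begin{itemize}
\item your reduction of $0\le s\le 1$ to $(1-2p)^2\le 1$, using $\cos\varphi=\sqrt{1-(1-2p)^2\sin^2\theta}\ge 0$;
\item your treatment of the degenerate case $\cos\varphi=0$, where both terms are the same channel and any $s$ works.
\end{itemize}
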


\begin{proof}
	Notice that~$P = i R_P(\pi)$ which together with~\cref{eq:NPdecomp1} gives 
	\begin{align}
		\label{eq:NPdecomp2}
		\cal{N_P} 
		&= (1-p)\cal{R_P}(\theta) + p \cal{R_P}(\theta + \pi) \ .
	\end{align} 
	Consider the Ansatz
	\begin{align}
		\cal{N_P} &= s \cal{R}_P(\varphi) + (1-s) \cal{R}_P(\pi-\varphi) 
	\end{align}
	for a feasible decomposition of~$\cal{N_P}$ (see Fig. \cref{fig:circle-RP} for the intuition behind this Ansatz). We find the parameters~$s,\varphi$ such that the decomposition in~\cref{eq:ERP-decomp3} is feasible. This can be done by computing the transfer matrices associated respectively with~\cref{eq:NPdecomp2,eq:ERP-decomp3} which are
	\begin{align}
		\label{eq:T1} T_1 
		&= (1-p) R_P(\theta) \otimes \overline{R_P(\theta)} + p R_P(\theta+\pi) \otimes \overline{R_P(\theta+\pi)} \ , \\
		\label{eq:T2} T_2 
		&= s R_P(\varphi) \otimes \overline{R_P(\varphi)} + (1-s) R_P(\pi-\varphi) \otimes \overline{R_P(\pi-\varphi)} \ .
	\end{align}
	Solving~$T_1 = T_2$ gives the result. 
\end{proof}

\begin{figure}[h]
    \centering
    \begin{subfigure}[t]{0.485\textwidth}
        \centering
        \includegraphics[height=0.9\linewidth]{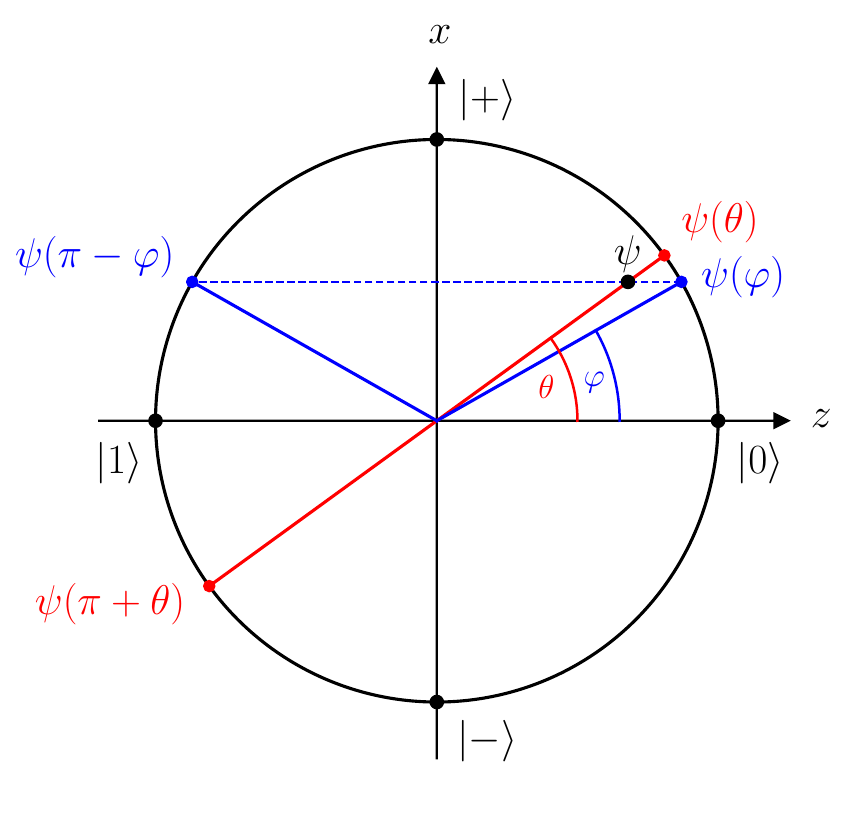}
		\caption{Decomposition of $\psi$ into Gaussian states. 
		The optimal decomposition is the convex combination $\psi = s \psi(\varphi)  + (1-s) \psi(\pi - \varphi)$ of the states $\psi(\varphi) = R_Y(\varphi) \ket{0}$ and $\psi(\pi - \varphi) = R_Y(\pi - \varphi) \ket{0}$ shown in blue, with $s$ and $\varphi$ given in~\cref{eq:varphi}.
		It is the equimagical decomposition: the states $\psi(\varphi)$ and $\psi(\pi - \varphi)$ are equidistant to the set of convex combinations of Gaussian states. The latter is the set of convex combinations of $\ket{0}$ and $\ket{1}$, represented by the line segment $[-1,+1]$ in the $Z$ axis.}
		\label{fig:circle-RP-a}
    \end{subfigure} \hfill
	\begin{subfigure}[t]{0.485\textwidth}
        \centering
        \includegraphics[height=0.9\linewidth]{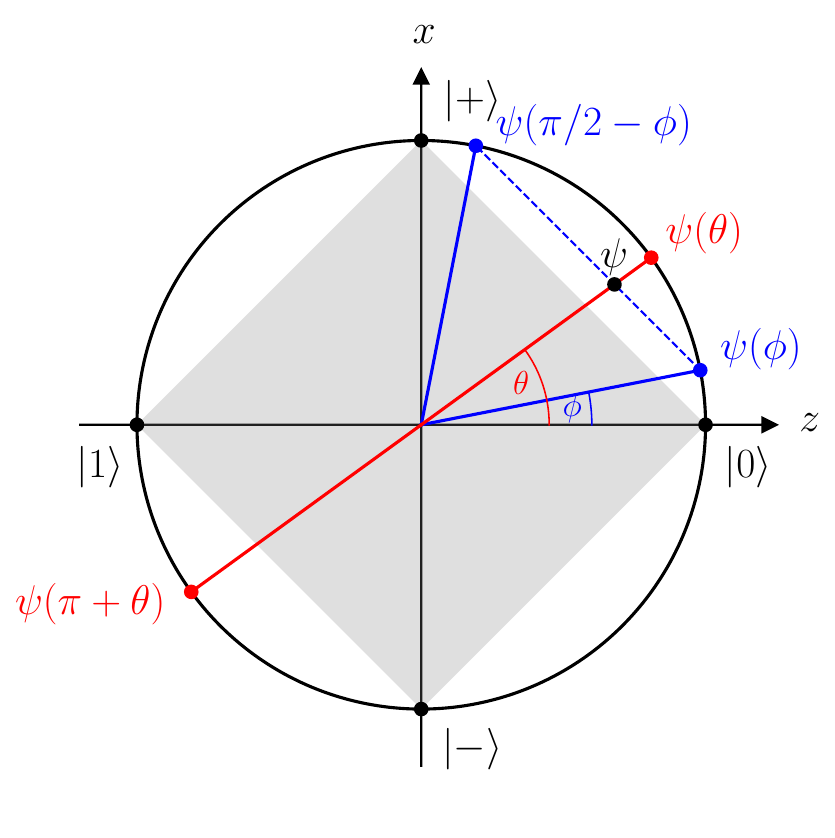}
        \caption{Decomposition of $\psi$ into stabilizer states. The optimal decomposition is a convex combination of the states $\psi(\phi) = R_Y(\phi) \ket{0}$ and $\psi(\pi/2 - \phi) = R_Y(\pi - \phi) \ket{0}$ shown in blue. It is the equimagical decomposition: the states $\psi(\phi)$ and $\psi(\pi/2 - \phi)$ are equidistant to the set of convex combinations of stabilizer states which is represented by the gray shaded region. See~\protect\cite{seddonThesis}.}
		\label{fig:circle-RP-b} 
    \end{subfigure} \flushbottom
    \caption{Bloch sphere illustrating the decomposition of the Choi-equivalent state $\psi = \cal{N}_\cal{Y}(\ketbra{0})$ of the noisy rotation channel $\cal{N}_\cal{Y} = (1-p) \cal{R_Y}(\theta) + p \cal{R_Y}(\theta + \pi)$ for $\theta = \pi/5$ and $p=0.08$. 
	A feasible (non-optimal) decomposition is the convex combination $\psi = (1-p) \psi(\theta)  + p \psi(\pi + \theta)$ of the states $\psi(\theta) = R_Y(\theta) \ket{0}$ and $\psi(\pi + \theta) = R_Y(\pi + \theta) \ket{0}$ shown in red.} 
    \label{fig:circle-RP}
\end{figure}

Recall the definition of an equimagical decomposition in~\cref{sec:magicmonotones}. 
The decomposition in~\cref{eq:ERP-decomp3} is equimagical because~$\xi(R_P(\varphi)) = \xi(R_P(\pi-\varphi))$ and it gives the upper bound for the channel extent
\begin{align}
	\label{eq:ERY-decomposition4-upperbound}
	\Xi(\cal{N_P}) \leq \xi(R_P(\varphi))  
	= 1 + |\sin\varphi| = 1 + (1-2p) |\sin\theta|
\end{align}
which is an improvement to the upper bound in~\cref{eq:xi-noisechannels-ub1} obtained with the initial decomposition~\cref{eq:NPdecomp1}. To build an intuition on how improved equimagical decompositions can be found, it is helpful to use the Choi–Jamiołkowski isomorphism~\cite{Jamiolkowski1972,Choi1975} to move to a geometric picture. We consider without loss of generality the case of single-qubit $Y$-rotations subject to Pauli $Y$ noise, $\cal{N_Y}$.  Recall that the Choi state corresponding to a single-qubit channel $\chan$ may be expressed $\rho_\chan = \id \otimes \cal{\chan} (\op{\psi_+})$, where $\ket{\psi^+} = \left( \ket{00} + \ket{11} \right) / \sqrt{2}$. Now notice that the Bell state $\ket{\psi^+}$ can be transformed into the $\ket{00}$ state by the Gaussian unitary $R_{XY}(\pi/2)$. Moreover this gate commutes with any convex combination of $Y$-rotations $\chan_Y =  \sum_i p_i \cal{R}_Y(\theta_i)$ on the second qubit. It follows that for any such $\chan_Y$, its Choi state $\id \otimes \chan_Y (\op{\psi_+})$ is Gaussian-unitarily equivalent to the product state $ \op{0} \otimes \chan_Y(\op{0})$.  Therefore every $\chan_Y$ may be uniquely identified with a single-qubit state $\chan_Y(\op{0})$, all of which lie in a single plane of the Bloch sphere (\cref{fig:circle-RP-a}). This allows us to reason about which probabilistic mixtures of $Y$-rotations yield the same channel by considering which ensembles of single-qubit pure states correspond to the same density matrix. Moreover, by considering the subset of states that correspond to the convex-Gaussian $Y$-channels (namely states of the form $\rho = (1-p) \op{0}+ p \op{1}$), we gain an intuition of which channels lie closer to the set of free operations. A similar argument holds for the stabilizer case~\cite{seddonThesis}, and we show the analogous situation in~\cref{fig:circle-RP-b}. One striking difference between the two pictures is that the set of free single-qubit states is simply a line segment of the $Z$ axis in the Gaussian case, whereas in the stabilizer case it is an octahedron embedded in the Bloch sphere (or a square when projected to the plane of interest). As a consequence, a much smaller component of $Y$ noise is required to bring a single-qubit channel inside the stabilizer polytope and thus make it free of magic, compared to the fermionic case, where only maximally noisy $Y$ channels become free.

In~\cref{fig:channel_extent} we show~\cref{eq:ERY-decomposition4-upperbound} for different values of noise. 

\begin{figure}[h]
    \centering
    \includegraphics[width=0.5\linewidth]{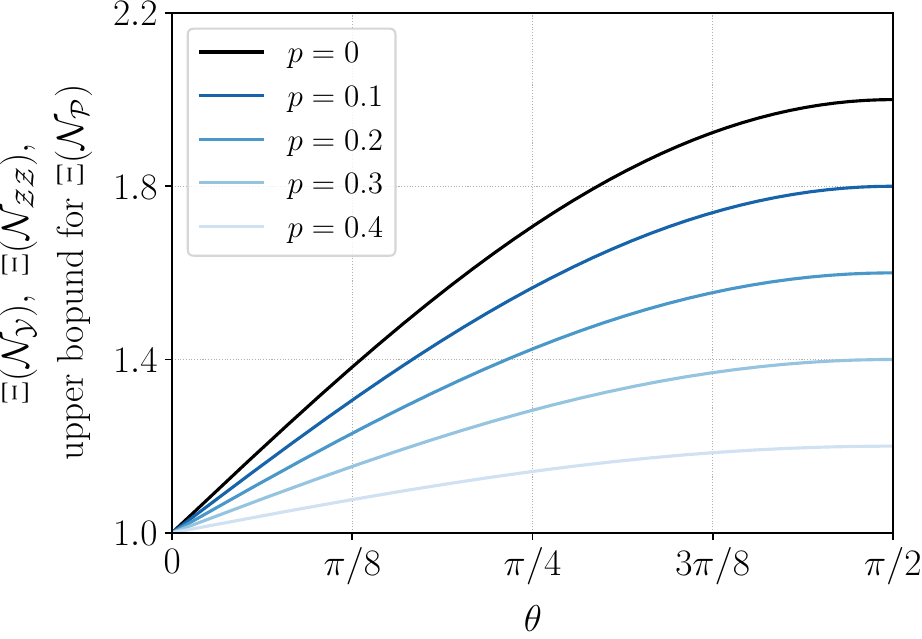}
    \caption{Upper bound for the channel extent~$\Xi(\cal{N_P})$ of the noisy rotation channel~$\cal{N_P} = \cal{E_P}(p) \circ \cal{R_P}(\theta)$ for any Pauli operator~$P$, for different values of the error probability~$p$. This upper bound is tight for the channels~$\cal{N}_Y$ and~$\cal{N}_{ZZ}$, i.e., the figure shows~$\Xi(\cal{N_Y})$ and~$\Xi(\cal{N_{ZZ}})$.}
    \label{fig:channel_extent}
\end{figure}

\subsection{Optimal decompositions for~$Y$ and~$ZZ$ rotations subject to Pauli noise \label{sec:optimal_decomp_NY_NZZ}}

Next we show that the feasible decomposition given in~\cref{lem:Protation_feasible_decomp} is optimal for the channels~$\cal{N_Y}$ and~$\cal{N_{ZZ}}$. This reduces to proving that the upper bound \eqref{eq:ERY-decomposition4-upperbound} is tight. The proof relies on finding an optimal dyadic negativity witness -- distinct in each case -- and using~\cref{lem:sandwich}. See~\cref{fig:channel_extent} for a visualization of~$\Xi(\cal{N_Y})$ and~$\Xi(\cal{N_{ZZ}})$ for different levels of noise. 

Observe that the decompositions of~$\cal{N_Y}$ and~$\cal{N_{ZZ}}$ given in~\cref{lem:NY_optimal,lem:NZZ_optimal}, respectively, are equimagical. This has implications for the runtime of the simulator proposed in~\cref{sec:algorithm} (see~\cref{thm:bitsamplingthm}). Namely, the actual runtime for simulating a circuit including $t$ copies of the channel $\cal{N_{ZZ}}$, i.e., $\cal{N_{ZZ}}^{\otimes t}$, scales linearly with $\Xi(\cal{N_{ZZ}})^t$ for every run of the simulator, and not merely on average. An analogous statement holds for $\cal{N_{Y}}$.

\begin{lemma}
\label{lem:NY_optimal}
The channel~$\cal{N_Y}$ has optimal decomposition 
\begin{align}
	\label{eq:ERY-decomp3}
	\cal{N_Y} &= s \cal{R}_Y(\varphi) + (1-s) \cal{R}_Y(\pi-\varphi) \ ,
\end{align}
with~$\varphi$ and~$s$ given in~\cref{eq:varphi}, and channel extent 
\begin{align}
	\Xi(\cal{N_Y}) = 1 + |\sin\varphi| = 1 + (1-2p)|\sin\theta| \ .
\end{align}
\end{lemma}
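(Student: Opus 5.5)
The upper bound is already in hand. \cref{lem:Protation_feasible_decomp} gives the feasible equimagical decomposition \cref{eq:ERY-decomp3}, and \cref{eq:ERY-decomposition4-upperbound} gives $\Xi(\cal{N_Y}) \leq 1+(1-2p)|\sin\theta|$. What remains is the matching lower bound, and I would obtain it from the fermionic sandwich lemma \cref{lem:sandwich}. I will choose a Gaussian input $\ket{g}$ and exhibit a dyadic negativity witness $W\in\mathcal{W}_\Lambda$ with $\tr[W\cal{N_Y}(\ketbra{g})] = 1+(1-2p)|\sin\theta|$. By the dual form \cref{eq:dyadicneg_dual} this gives $\Lambda(\cal{N_Y}(\ketbra{g}))\geq 1+(1-2p)|\sin\theta|$. \cref{lem:sandwich} then forces equality throughout, so the decomposition is optimal.

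I would take the simplest input, $\ket{g}=\ket{0}$, since single-qubit Gaussian states are just $\ket{0},\ket{1}$ up to phase. The output is
\[
\rho=\cal{N_Y}(\ketbra{0}) = \tfrac12\left(I + (1-2p)(\sin\theta\, X + \cos\theta\, Z)\right).
\]
Motivated by the pure-state witnesses $\ket{\omega_{++}},\ket{\omega_{+-}}$ from \cref{sec:RY}, I would take $W=\ketbra{\omega_{+\pm}}=I\pm X$, with the sign equal to the sign of $\sin\theta$. This $W$ is Hermitian. Validity requires $|\langle L|W|R\rangle|\leq 1$ for all single-qubit Gaussian $\ket{L},\ket{R}$, which up to phases means $L,R\in\{0,1\}$. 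The matrix $I\pm X$ has all entries of modulus one, so the condition holds. The witness value is then
\[
\tr[W\rho] = 1 + (1-2p)|\sin\theta|,
\]
which is exactly the required bound.

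The main obstacle, and the only subtle point, is to make sure that dyadic negativity is computed over $\cal{G}_1$, so that witness validity really reduces to checking the four matrix entries. Phases on $\ket{L},\ket{R}$ do not change moduli, so this reduction is sound. Once that is settled, equality in \cref{eq:dyadicneg-matches-channelextent} holds and the claimed value $\Xi(\cal{N_Y})=1+|\sin\varphi|$ follows from the definition of $\varphi$ in \cref{eq:varphi}.
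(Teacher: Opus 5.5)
Your proof is correct and follows the paper's own argument: the upper bound comes from \cref{lem:Protation_feasible_decomp}, and the lower bound comes from \cref{lem:sandwich} with input $\ket{0}$ and witness $\ketbra{w}$, $\ket{w}=\ket{0}\pm\ket{1}$, which is valid because all its entries in the $\{\ket{0},\ket{1}\}$ basis have modulus one. The only differences are cosmetic: you evaluate $\tr[W\rho]$ on the Bloch-vector form of $\cal{N_Y}(\ketbra{0})$ rather than term by term on the decomposition, and, like the paper, you tacitly assume $p\le 1/2$, so that $\sin\varphi$ and $\sin\theta$ have the same sign.
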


\begin{proof}
	By~\cref{lem:Protation_feasible_decomp} the decomposition in~\cref{eq:ERY-decomp3} is feasible. It remains to show optimality. Recall the definition of an (optimal) dyadic negativity witness in~\cref{sec:magicmonotones}.
	Consider the following Ansatz for the optimal witness of~$\Lambda(\cal{N_Y}(\ketbra{0}))$:~$\ketbra{w}$ with~$\ket{w} = \ket{0} + \ket{1}$ for~$\varphi\in[0,\pi]$ and~$\ket{w} = \ket{0} - \ket{1}$ for~$\varphi \in (\pi,2\pi)$. 
	Observe that~$\ketbra{w}$ is a valid witness since it is Hermitian and~$|\langle w | \phi\rangle| = 1$ for any single-qubit Gaussian state~$\ket{\phi}$ (recall that the only single-qubit Gaussian states are~$\ket{0}$ and~$\ket{1}$).
	We have 
	\begin{align}
		\Lambda(\cal{N_Y}(\ketbra{0})) 
		&\geq \tr[ \ketbra{w} \cal{N_Y}(\ketbra{0}) ] \\
		&= s |\langle w | R_Y(\varphi) | 0 \rangle |^2 + (1-s) | \langle w | R_Y(\pi - \varphi) | 0 \rangle |^2 \\
		&= 1 + |\sin\varphi| \\
		&= 1 + (1-2p)|\sin\theta| \ .
	\end{align}
	where in the third equality we used~$R_Y(\theta) \ket{0} = \cos(\theta/2) \ket{0} + \sin(\theta/2)\ket{1}$ and~$| \langle{w}|{R_Y(\theta) 0}\rangle|^2 = 1 + \sin\theta$ for any~$\theta\in[0,2\pi)$. By~\cref{lem:sandwich} this gives
	\begin{align}
		\Xi(\cal{N_Y}) \geq \Lambda(\cal{N_Y}(\ketbra{0}))  \geq 1 + (1-2p)|\sin\theta| \ .
	\end{align}
	This together with~\cref{eq:ERY-decomposition4-upperbound} gives the claim. 
\end{proof}

\begin{lemma}
\label{lem:NZZ_optimal}
	The channel~$\cal{N_{ZZ}}$ has optimal decomposition 
	\begin{align}
		\label{eq:ER{ZZ}-decomp3}
		\cal{N_{ZZ}} &= s \cal{R}_{ZZ}(\varphi) + (1-s) \cal{R}_{ZZ}(\pi-\varphi) \ ,
	\end{align}
	with~$\varphi$ and~$s$ given in~\cref{eq:varphi}, and channel extent 
	\begin{align}
		\label{eq:XiNZZ}
		\Xi(\cal{N_{ZZ}}) = 1 + \sin\varphi = 1 + (1-2p)|\sin\theta| \ .
	\end{align}
\end{lemma}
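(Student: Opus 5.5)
The plan is to combine the feasible decomposition of \cref{lem:Protation_feasible_decomp} with a matching lower bound from the fermionic sandwich lemma (\cref{lem:sandwich}), as in the proof of \cref{lem:NY_optimal}.

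\emph{Upper bound.} By \cref{lem:Protation_feasible_decomp} with $P=ZZ$, the decomposition \eqref{eq:ER{ZZ}-decomp3} is feasible. It is equimagical, since \cref{eq:extentRZZ} gives $\xi(R_{ZZ}(\varphi))=\xi(R_{ZZ}(\pi-\varphi))=1+|\sin\varphi|$. Hence
\begin{align}
\Xi(\cal{N_{ZZ}})\le 1+|\sin\varphi| = 1+(1-2p)|\sin\theta| .
\end{align}

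\emph{Choice of Gaussian input.} Single-qubit inputs are useless here, and two-qubit fermionic inputs only acquire phases under $R_{ZZ}$. I would therefore take the four-qubit Gaussian state $\ket{g}=\ket{\psi^+}\otimes\ket{\psi^+}$, with the rotation and noise acting on qubits $2,3$, as in \cref{lem:lifting}. Write $\ket{g'}=Z_2Z_3\ket{g}$. This is Gaussian, because $Z_2Z_3$ is a product of Jordan--Wigner operators. It is also orthogonal to $\ket{g}$, since $\bra{\psi^+}Z\ket{\psi^+}=0$ on each pair. Then
\begin{align}
R_{ZZ}(\alpha)\ket{g}=\cos(\alpha/2)\ket{g}-i\sin(\alpha/2)\ket{g'} ,
\end{align}
and $\cal{N_{ZZ}}(\ketbra{g}) = s\,R_{ZZ}(\varphi)\ketbra{g}R_{ZZ}(\varphi)^\dagger+(1-s)\,R_{ZZ}(\pi-\varphi)\ketbra{g}R_{ZZ}(\pi-\varphi)^\dagger$.

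\emph{Witness.} I take $W=\ketbra{\omega}$ with $\ket{\omega}=\ket{g}+ i\epsilon\ket{g'}$, where $\epsilon=\pm1$ is the sign of $\sin\varphi$. For $\alpha\in\{\varphi,\pi-\varphi\}$ we have $\cos(\alpha/2)=\cos(\varphi/2)$ or $\sin(\varphi/2)$, and $\sin(\alpha/2)$ is the other one. Working with $\varphi\in[-\pi/2,\pi/2]$, both terms of the mixture then give
\begin{align}
|\langle\omega|R_{ZZ}(\alpha)g\rangle|^2=\big(|\cos(\varphi/2)|+|\sin(\varphi/2)|\big)^2=1+|\sin\varphi| .
\end{align}
Hence $\tr[W\cal{N_{ZZ}}(\ketbra{g})]=1+|\sin\varphi|$. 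By \cref{lem:sandwich} this forces $\Xi(\cal{N_{ZZ}})\ge 1+(1-2p)|\sin\theta|$, which matches the upper bound and proves optimality.

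\emph{Main obstacle: validity of $W$.} For rank-one $W$, the condition $|\langle L|W|R\rangle|\le1$ for all Gaussian $\ket{L},\ket{R}$ reduces to $|\langle\omega|\phi\rangle|\le1$ for every Gaussian $\ket{\phi}$. So $\ket{\omega}$ must be a valid state-extent witness. Observe that $\ket{\omega}/\sqrt2$ equals, up to a global phase, $R_{ZZ}(\mp\pi/2)\ket{g}$, which is Gaussian-unitarily equivalent to $\ket{m_\pi}$. The required bound is therefore exactly the statement that the Gaussian fidelity satisfies $F(\ket{m_\pi})\le 1/2$, since then $|\langle\omega|\phi\rangle|\le\sqrt{2F}\le1$.

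I expect $F(\ket{m_\pi})\le 1/2$ to follow from the extent analysis of $\ket{m_\theta}$ in \cite{reardonsmith2024improved}, where for four-qubit fermionic states the optimum is attained through the fidelity. Failing that, I would check it directly. Any Gaussian $\ket{\phi}$ with nonzero overlap has even parity on the four qubits, and one can parametrize its covariance matrix and maximize $|\langle m_\pi|\phi\rangle|^2$. This is the step I expect to need the most care.
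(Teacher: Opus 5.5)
Your route is the paper's. You use the same input $\ket{\psi^+}^{\otimes 2}$ with the noisy rotation on qubits 2,3, a rank-one witness equal up to a global phase to the paper's $\ket{w}$ or $\ket{w'}$, and validity via a Gaussian fidelity of $1/2$.

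However, the witness as you wrote it has the wrong sign. Since $\bra{\omega}=\bra{g}-i\epsilon\bra{g'}$, you get $\langle\omega|R_{ZZ}(\alpha)g\rangle=\cos(\alpha/2)-\epsilon\sin(\alpha/2)$. Hence $|\langle\omega|R_{ZZ}(\alpha)g\rangle|^2=1-\epsilon\sin\alpha=1-|\sin\varphi|$ for both $\alpha\in\{\varphi,\pi-\varphi\}$ when $\epsilon=\mathrm{sign}(\sin\varphi)$, which is a vacuous bound. Take instead $\ket{\omega}=\ket{g}-i\epsilon\ket{g'}$. This is $e^{-i\pi/4}\ket{w}$ for $\epsilon=+1$ and $e^{i\pi/4}\ket{w'}$ for $\epsilon=-1$. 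The overlaps become $\cos(\alpha/2)+\epsilon\sin(\alpha/2)$, and both terms give $(\cos(\varphi/2)+|\sin(\varphi/2)|)^2=1+|\sin\varphi|$ for $\varphi\in[-\pi/2,\pi/2]$. The validity argument does not depend on this sign, so nothing else changes.

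Also state explicitly, as the paper does, that \cref{lem:sandwich} is applied to the four-qubit channel $\mathcal{I}\otimes\mathcal{N}_{ZZ}\otimes\mathcal{I}$. You then need $\Xi(\mathcal{N}_{ZZ})\ge\Xi(\mathcal{I}\otimes\mathcal{N}_{ZZ}\otimes\mathcal{I})$, which holds because a two-qubit Gaussian unitary placed on the adjacent qubits 2,3 remains Gaussian.

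Second, your first justification of $F(\ket{m_\pi})\le 1/2$ does not work. From $\xi(\ket{m_\pi})=2$ and $\xi\ge F^{-1}$ you only get $F(\ket{m_\pi})\ge 1/2$, the opposite inequality. Four-qubit extents are also not attained through the fidelity in general: for $\theta\in[0,\pi]$ one has $F(\ket{m_\theta})=\cos^2(\theta/4)$, so $F^{-1}<1+|\sin(\theta/2)|$ when $0<\theta<\pi$. You need the fidelity bound itself.

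The paper gets it by writing the normalized witness as $G_{34}\ket{a_8}$, with $\ket{a_8}=(\ket{0000}+\ket{1111})/\sqrt{2}$ and $G=G(A,I)$ the matchgate with $A=\frac{1}{\sqrt{2}}\left(\begin{smallmatrix}1&i\\ i&1\end{smallmatrix}\right)$; $\ket{w'}$ is handled analogously. It then cites $F(\ket{a_8})=1/2$.

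Your fallback of checking this directly also works and is short. Odd-parity Gaussian states have zero overlap with $\ket{a_8}$. Even four-qubit Gaussian states satisfy a single quadratic (pure-spinor) relation expressing $\phi_{0000}\phi_{1111}$ as a signed sum of $\phi_x\phi_{\bar x}$ over the other three complementary pairs $\{x,\bar x\}$. By AM--GM this gives $|\phi_{0000}||\phi_{1111}|\le\frac12(1-|\phi_{0000}|^2-|\phi_{1111}|^2)$, and therefore $|\langle a_8|\phi\rangle|^2\le\frac12(|\phi_{0000}|+|\phi_{1111}|)^2\le\frac12$.
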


\begin{proof}
Consider~$\varphi \in [0,\pi]$. Assume that 
\begin{align}
	\ket{w} = \frac{1}{\sqrt{2}} \left(  \ket{0000} + \ket{1111} + i \ket{0011} + i \ket{1100}  \right) 
\end{align}
is a valid witness for the dyadic negativity, i.e.,~$W = \ketbra{w} \in \cal{W}_\Lambda$. For~$\varphi \in (\pi,2\pi)$ consider instead 
\begin{align}
	\ket{w'} = \frac{1}{\sqrt{2}} \left(  \ket{0000} + \ket{1111} - i \ket{0011} - i \ket{1100}  \right) \ .
\end{align}
We have
\begin{align}
	 \cal{N}_{\cal{ZZ}_{23}}(\ketbra{\psi^+}^{\otimes 2}) = s \ketbra{u(\varphi)} + (1-s) \ketbra{u(\pi - \varphi)} 
\end{align}
where~$\cal{N}_{\cal{ZZ}_{23}}$ is the 4-qubit channel which acts non-trivially with~$\cal{N_{ZZ}}$ on qubits 2 and 3 and
\begin{align}
	\ket{u(\vartheta)} = \frac{1}{2} \left( \ket{0000} + \ket{1111} + e^{i\vartheta} \left( \ket{0011} + \ket{1100} \right) \right)  \quad\text{for}\quad \vartheta\in[0,2\pi) \ .
\end{align}
We have
\begin{align}
	\tr[ W  \ketbra{u(\varphi)} ]  &= | \langle{w}|{u(\varphi)}\rangle |^2 = 1 + |\sin\varphi| \ , \\
	\tr[ W  \ketbra{u(\pi - \varphi)} ] &= | \langle{w}|{u(\pi - \varphi)}\rangle |^2 = 1 + |\sin\varphi| \ .
\end{align}
Then
\begin{align}
	\Lambda(\cal{N}_{\cal{ZZ}_{23}}(\ketbra{\psi^+}^{\otimes 2})) 
	&\geq \tr[ W  \cal{N}_{\cal{ZZ}_{23}}(\ketbra{\psi^+}^{\otimes 2})] \\
	&= 1 + |\sin\varphi| \\
	&= 1 + (1-2p) |\sin\theta| \ . 
\end{align}
Then we have
\begin{align}
	\Xi(\cal{N}_{\cal{ZZ}})
	&\ge \Xi(\cal{N}_{\cal{ZZ}_{23}}) \\
	&\ge \Lambda(\cal{N}_{\cal{ZZ}_{23}}(\ketbra{\psi^+}^{\otimes 2}))   
	\qquad \text{ by~\cref{lem:sandwich}} \\
	&\ge 1 + (1-2p) |\sin\theta| \ ,
\end{align}
where the first inequality is due to the fact that a feasible decomposition of the 4-qubit channel~$\cal{N}_{\cal{ZZ}_{23}}$ gives a feasible decomposition of the 2-qubit channel~$\cal{N}_{\cal{ZZ}}$. This together with~\cref{eq:ERY-decomposition4-upperbound} gives the claim. (The proof with~$W' = \ketbra{w'}$ proceeds similarly.)

It remains to show that~$W = \ketbra{w}$ is a valid witness, i.e., that it satisfies the conditions (see~\cref{sec:magicmonotones}):
\begin{enumerate}[1)]
	\item~$W = \ketbra{w}$ is Hermitian (this is immediately observed to be true),
	\item \label{it:witness-condition-2}~$|\langle{w}|{\phi}\rangle |^2 \leq 1$ for any pure Gaussian state~$\ket{\phi}\in\cal{G}_n$.
\end{enumerate}
We proceed to show that~\cref{it:witness-condition-2} is satisfied.
Consider the state~$\ket{v} = \ket{w}/\sqrt{2}$. The state~$\ket{v}$ is Gaussian-unitarily-equivalent to~$\ket{a_8} = (\ket{0000}+\ket{1111})/\sqrt{2}$. Specifically, we have
\begin{align}
	\ket{v} = G_{34} \ket{a_8} \qquad\text{with}\qquad
	G = 
	\begin{pmatrix}
	1/\sqrt{2} & 0 & 0 & i/\sqrt{2} \\
	0 & 1 & 0 & 0 \\
	0 & 0 & 1 & 0 \\
	i/\sqrt{2} & 0 & 0 & 1/\sqrt{2}
	\end{pmatrix} \ ,
\end{align}
where~$G$ is a matchgate, and the subscript~$23$ indicates that~$G$ acts on qubits~$2$ and~$3$. By definition~\ref{def:fidelity} of the Gaussian fidelity, Gaussian-unitarily-equivalent states have the same Gaussian fidelity. From~\cite{reardonsmith2024fermioniclinearopticalextent}, we know that~$F(\ket{a_8}) = 1/2$. Then, for any Gaussian state~$\ket{\phi}\in\cal{G}_n$ we have
\begin{align}
	|\langle{w}|{\phi}\rangle |^2 \leq \max_{\ket{\phi'}\in\cal{G}_n} |\langle{w}|{\phi'}\rangle |^2 
	= 2 F(\ket{v}) = 2 F(\ket{a_8}) = 1 \ .
\end{align}
Hence, \cref{it:witness-condition-2} is satisfied. The proof that~$\ketbra{w'}$ is a valid witness proceeds analogously. 
\end{proof}

\subsection{Rotation about the~$ZZ$-axis with single-qubit~$Z$ noise \label{sec:improved_decomp_NZZ}}

We have thus far considered noisy channels where the Pauli error~$P$ matches the axis of the Pauli rotation~$R_P(\theta)$. Now we focus on a channel where a two-qubit rotation is subject to commuting Pauli noise on just one of the qubits, namely
\begin{align}
	\label{eq:NZZ-Z-decomp0}
	\cal{N}'_\cal{ZZ} 
	&= (1-p) \cal{R}_{\cal{Z}\cal{Z}}(\theta) + p  \cal{Z}_1 \circ  \cal{R}_{\cal{Z} \cal{Z}}(\theta) \ .
\end{align}

While we cannot find a better decomposition as a convex combination of unitary channels, which is the search space for the channel extent optimization program, allowing the use of non-unitary Gaussian channels in the decomposition leads to a lower augmented channel extent. The latter is closely related to the channel extent and to the cost of classical simulating these channels. 

The decomposition in~\cref{eq:NZZ-Z-decomp0} is a feasible decomposition for the augmented channel extent. It gives the upper bound
\begin{align}
	\label{eq:NZZp-extent-bound0}
	\Xiaug(\cal{N}'_\cal{ZZ} ) \leq \Xi(\cal{N}'_\cal{ZZ} ) \leq 1 + |\sin\theta| \ .
\end{align}
We give an alternative feasible decomposition of~$\cal{N}'_\cal{ZZ}$ for the augmented channel extent.

\begin{lemma}
\label{lem:better_decomp_aug_channel_extent}
We show that~$\cal{N}'_\cal{ZZ}$ has feasible decomposition
\begin{align}
	\label{eq:decomposition-E-Znoise}
	\cal{N}'_\cal{ZZ} = (1-2p) \cal{R}_\cal{ZZ}(\theta) + 2 p \cal{E}
\end{align}
where~$\cal{E} (\cdot) = K_0 (\cdot) K_0^\dagger + K_1 (\cdot) K_1^\dagger~$ is a convex-Gaussian channel with
\begin{align}
	K_0 &= \Pi_0 \otimes R_{Z}(\theta) \ ,  \\
	K_1 &= \Pi_1 \otimes R_{Z}(-\theta) \ ,
\end{align}
where~$\Pi_0=\ketbra{0}$ and~$\Pi_1=\ketbra{1}$.
\end{lemma}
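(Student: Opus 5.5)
The claim splits into two parts: the channel identity, and the fact that $\cal{E}$ is convex-Gaussian. For the identity, I would compute $\cal{E}$ directly and rewrite it as an equal mixture of $\cal{R}_\cal{ZZ}(\theta)$ and $\cal{Z}_1\circ\cal{R}_\cal{ZZ}(\theta)$. Since $Z\otimes Z$ acts as $\pm Z_2$ on the subspaces $\Pi_0\otimes I$ and $\Pi_1\otimes I$, we have
\begin{align}
	R_{ZZ}(\theta) = \Pi_0\otimes R_Z(\theta) + \Pi_1\otimes R_Z(-\theta) = K_0 + K_1 \ , \qquad Z_1 R_{ZZ}(\theta) = K_0 - K_1 \ .
\end{align}
Hence
\begin{align}
	\tfrac12\cal{R}_\cal{ZZ}(\theta)(\rho) + \tfrac12 \cal{Z}_1\circ\cal{R}_\cal{ZZ}(\theta)(\rho) = \tfrac12\left[(K_0+K_1)\rho(K_0+K_1)^\dagger + (K_0-K_1)\rho(K_0-K_1)^\dagger\right] = K_0\rho K_0^\dagger + K_1\rho K_1^\dagger = \cal{E}(\rho) \ ,
\end{align}
because the cross terms cancel. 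Therefore $\cal{E} = \tfrac12(\cal{R}_\cal{ZZ}(\theta) + \cal{Z}_1\circ\cal{R}_\cal{ZZ}(\theta))$. Substituting $\cal{Z}_1\circ\cal{R}_\cal{ZZ}(\theta) = 2\cal{E} - \cal{R}_\cal{ZZ}(\theta)$ into \cref{eq:NZZ-Z-decomp0} gives $(1-p)\cal{R}_\cal{ZZ}(\theta) + p(2\cal{E}-\cal{R}_\cal{ZZ}(\theta)) = (1-2p)\cal{R}_\cal{ZZ}(\theta) + 2p\cal{E}$. This is a convex combination when $p\le 1/2$, which is the meaningful regime; for $p>1/2$ one first swaps the roles of the two terms via $\cal{Z}_1$.

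For the second part, I would show that $\cal{E}$ is a CPTP map that sends pure Gaussian states to convex mixtures of pure Gaussian states. Convex mixtures are then handled by linearity. Trace preservation follows from $K_0^\dagger K_0 + K_1^\dagger K_1 = \Pi_0\otimes I + \Pi_1\otimes I = I$. The key observation is that $\cal{E}$ can be realized operationally: measure qubit 1 in the computational basis, which is a Gaussian measurement, then apply $R_Z(\theta)$ or $R_Z(-\theta)$ on qubit 2 conditioned on the outcome, and discard the outcome. For a pure Gaussian input $\ket{\phi}$ (on two or more qubits, with the gate on the relevant pair), the outcome $m$ occurs with probability $p(m)$. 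The post-measurement state $\Pi_m\ket{\phi}/\sqrt{p(m)}$ is Gaussian by the Gaussian measurement rule in \cref{eq:cov_mat_post_measure}. Applying $R_Z(\pm\theta)$, which is a Gaussian unitary (it is $\exp$ of a quadratic Majorana term $c_{2j-1}c_{2j}$), keeps it Gaussian. So $\cal{E}(\ketbra{\phi}) = \sum_m p(m)\,\ketbra{\phi_m}$ with each $\ket{\phi_m}$ Gaussian.

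The step needing the most care is this convex-Gaussian claim. I need to check that $R_Z(\pm\theta)$ is Gaussian on qubit 2 regardless of its Jordan-Wigner position, which is true because $Z_j = -i c_{2j-1}c_{2j}$. I also need to confirm that a computational-basis projection on qubit 1 preserves Gaussianity even when qubit 1 is not adjacent to qubit 2, and the measurement rule holds for any mode. The algebraic identity itself is routine.
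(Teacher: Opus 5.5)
Your argument is correct. It uses the same two facts as the paper's proof. First, $p$-dephasing on qubit 1 equals the identity with weight $1-2p$ plus full dephasing (a computational-basis measurement) with weight $2p$. Second, once qubit 1 is in $\ket{a}$, $R_{ZZ}(\theta)$ acts as $R_Z((-1)^a\theta)$ on qubit 2.

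The difference is the level at which you verify the identity. The paper compares Choi states: it writes $(1-p)\ketbra{\psi^+}+p\ketbra{\psi^-}=(1-2p)\ketbra{\psi^+}+2p\cdot\tfrac12(\ketbra{00}+\ketbra{11})$ and then pushes $R_{Z_2Z_3}(\theta)$ through $\ket{aa}\otimes\ket{\psi^+}$. You work directly with operators via $R_{ZZ}(\theta)=K_0+K_1$ and $Z_1R_{ZZ}(\theta)=K_0-K_1$. Then $\cal{E}=\tfrac12\bigl(\cal{R}_\cal{ZZ}(\theta)+\cal{Z}_1\circ\cal{R}_\cal{ZZ}(\theta)\bigr)$ follows from the cancellation of cross terms. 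Your route is shorter and avoids the Choi isomorphism and its qubit-relabelling bookkeeping.

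You also supply two points the paper leaves implicit:
\begin{itemize}
\item A proof that $\cal{E}$ is convex-Gaussian. The paper only remarks that $\cal{E}$ is a measurement followed by a conditioned rotation. It never checks that pure Gaussian inputs go to mixtures of Gaussian states, which you do via the post-measurement rule and $Z_j=-ic_{2j-1}c_{2j}$.
\item The observation that the stated decomposition is convex, and hence feasible for $\Xiaug$, only when $p\le 1/2$. For $p>1/2$, the swap you allude to gives $(2p-1)\,\cal{Z}_1\circ\cal{R}_\cal{ZZ}(\theta)+2(1-p)\cal{E}$. This is valid, but it is a different decomposition from the one in the statement.
\end{itemize}
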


Applying the channel~$\cal{E}$ corresponds to performing a computational basis measurement on the first qubit and applying either~$R_{Z}(\theta)$ or~$R_{Z}(-\theta)$ on the second qubit conditioned on the measurement outcome. Such a decomposition signifies that a two-qubit~$ZZ$ rotation on a state where the first qubit is fully dephased can be simulated by a single-qubit rotation on the second qubit with the rotation orientation classically conditioned on the dephased qubit.

\begin{proof}[Proof of~\cref{lem:better_decomp_aug_channel_extent}]
The Choi state of the channel~$\cal{N}'_\cal{ZZ}$ is
\begin{align}
	\cal{N}'_{\cal{ZZ}_{23}} ( \ketbra{\psi^+}^{\otimes 2})
	&= \left( \cal{R}_{\cal{Z}_2\cal{Z}_3}(\theta) \circ \left( (1-p) \cal{I}+ p  \cal{Z}_2  \right) \right) \ketbra{\psi^+}^{\otimes 2}\\
	&=  \cal{R}_{\cal{Z}_2\cal{Z}_3}(\theta)  \left( \left( (1-p) \ketbra{\psi^+} + p \ketbra{\psi^-} \right) \otimes \ketbra{\psi^+}  \right)
\end{align}
where~$\ket{\psi^-} = (\ket{00} - \ket{11})/\sqrt{2}$ and the subscript~$23$ indicates that~$\cal{N}'_{\cal{ZZ}}$ acts on qubits~$2$ and~$3$.
Notice that
\begin{align}
	(1-p) \ketbra{\psi^+} + p \ketbra{\psi^-} = (1-2p) \ketbra{\psi^+} + 2p \left( \frac{1}{2} \ketbra{00} + \frac{1}{2} \ketbra{11}\right) \ .
\end{align}
where the expression in brackets is the Choi state for a single-qubit computational basis measurement, corresponding to Kraus operators~$M_0 = \ketbra{0}$ and~$M_1 = \ketbra{1}$. This indicates that the single-qubit dephasing channel with strength~$p$ is equivalent to measuring out a qubit with probability~$2p$.
Then
\begin{align}
	&\cal{N}'_{\cal{ZZ}_{23}} ( \ketbra{\psi^+}^{\otimes 2}) \\
	&=  (1-2p) \cal{R}_{\cal{Z}_2\cal{Z}_3}(\theta) (\ketbra{\psi^+}^{\otimes 2}) + 2p  \cal{R}_{\cal{Z}_2\cal{Z}_3}(\theta) \left( \frac{1}{2}  \left( \ketbra{00} +\ketbra{11} \right) \otimes  \ketbra{\psi^+} \right) \\
	\label{eq:choifinal} &=  (1-2p) \cal{R}_{\cal{Z}_2\cal{Z}_3}(\theta) (\ketbra{\psi^+}^{\otimes 2}) \\ \nonumber
                             &+ 2p \left( \frac{1}{2} \ketbra{00} \otimes \cal{R}_{\cal{Z}_1}(\theta)  \ketbra{\psi^+} + \frac{1}{2} \ketbra{11} \otimes \cal{R}_{\cal{Z}_1}(-\theta) \ketbra{\psi^+}\right)   \\
	\label{eq:choifinal2} &=  \left((1-2p) \cal{R}_{\cal{Z}_2\cal{Z}_3}(\theta) + 2p \left( \cal{M}_0 \otimes \cal{R}_{\cal{Z}_1}(\theta) + \cal{M}_1 \otimes \cal{R}_{\cal{Z}_1}(-\theta) \right) \right) \ketbra{\psi^+}^{\otimes 2}  \ ,
\end{align}
with~$M_0 = \ketbra{00}$ and~$M_1 = \ketbra{11}$
where we used
\begin{align}
	 R_{Z_2Z_3}(\theta) \left( \ket{aa} \otimes \ket{\psi^+} \right) &=  R_{Z_3}((-1)^a\theta)  \left( \ket{aa} \otimes \ket{\psi^+} \right) 
	 \quad \text{for } a\in\{0,1\} \ .
\end{align}
Notice that the expression in~\cref{eq:choifinal2} is the Choi state for the channel~$\cal{N}'_\cal{ZZ}$ in~\cref{eq:decomposition-E-Znoise}, and it straightforwardly gives the channel decomposition in~\cref{eq:decomposition-E-Znoise}. 
\end{proof}

The decomposition~\cref{eq:decomposition-E-Znoise} leads to the bound
\begin{align}
	\Xiaug(\cal{N}'_\cal{ZZ}) 
	&\leq (1-2p) \xi(\cal{R}_\cal{ZZ}(\theta)) + 2p \\
	&= (1-2p) ( 1 + |\sin\theta| ) + 2p \\ 
	&= 1 + (1-2p) |\sin\theta| \label{eq:xiaugNZ}
\end{align}
which is an improvement over~\cref{eq:NZZp-extent-bound0}.
Notice that the bound is tight for~$p=1/2$, in which case we have~$\Xiaug(\cal{N}'_\cal{ZZ}) = 1$. 

Similar arguments can be used to show that the channel~$\cal{N}_{\cal{ZZ}_{(2)}}$ where noise is applied instead on the second qubit, as decomposition
\begin{align}
	\label{eq:NZZ2}
	\cal{N}_{\cal{ZZ}_{(2)}} = (1-2p) \cal{R}_\cal{ZZ}(\theta) + 2 p \cal{E}'
\end{align}
where~$\cal{E} (\cdot) = K_0' (\cdot) K_0'{}^\dagger + K_1' (\cdot) K_1'{}^\dagger~$ is a convex-Gaussian channel with
\begin{align}
	K_0' &=  R_{Z}(\theta)  \otimes \Pi_0 \  ,  \\
	K_1' &= R_{Z}(-\theta) \otimes \Pi_1 \ .
\end{align}
Applying the same reasoning as in~\cref{eq:xiaugNZ}, the decomposition~\cref{eq:NZZ2} gives the upper bound 
\begin{align}
	\Xiaug(\cal{N}_{\cal{ZZ}_{(2)}}) 
	&\leq 1 + (1-2p) |\sin\theta| \ . \label{eq:xiaugNZ2}
\end{align}

We do not prove optimality of the decomposition \eqref{eq:decomposition-E-Znoise} for the channel~$\cal{N}'_\cal{ZZ}$.  We performed a numerical search over decompositions of the form \eqref{eq:decomposition-E-Znoise}, but did not identify any with a lower augmented channel extent. Still, this does not rule out the existence of better decompositions that may leverage more general convex-Gaussian channels than the computational basis measurement and feedforward strategy considered here. We leave this for future work.

It is natural to ask whether it is really necessary to extend the library of free operations with adaptive channels in order to find an improved decomposition of the noisy gate~$\cal{N'_{ZZ}}$ described above. Here we offer evidence that the extension is necessary at least for some convex-unitary channels, by showing that there exists a channel that is not only convex-unitary but convex-Gaussian, and yet cannot be expressed as a convex combination of unitary Gaussian channels.
\begin{lemma}
Consider the channel~$\mathcal{N}'_\cal{ZZ}$ with~$p = 1/2$, so that one of the qubits is fully dephased, i.e.,
\begin{align}
\cal{N}'_\cal{ZZ} 
	&=\frac{1}{2}\left(\cal{R}_{\cal{Z}\cal{Z}}(\theta) + \cal{Z}_1 \circ  \cal{R}_{\cal{Z} \cal{Z}}(\theta) \right).
\end{align}
If~$\theta \in \bb{R}$ is not an integer multiple of~$\pi$, then there exists no decomposition as a convex combination into two-qubit unitary channels~$\cal{N}'_\cal{ZZ}  = \sum_j p_j \curlU_j$ such that the average extent~$\sum_j p_j \xi(U_j)~$ is equal to 1. Conversely,~$\cal{N}'_\cal{ZZ}$ with~$p = 1/2$ can be expressed as a channel composed of Gaussian operations with classical feed-forward, and therefore is convex-Gaussian and has augmented channel extent equal to 1.
\end{lemma}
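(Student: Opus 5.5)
The ``conversely'' part follows directly from \cref{lem:better_decomp_aug_channel_extent} at $p=1/2$. There the term $(1-2p)\cal{R}_\cal{ZZ}(\theta)$ vanishes, so $\cal{N}'_\cal{ZZ}=\cal{E}$. Here $\cal{E}$ is a computational basis measurement of qubit 1 followed by $R_Z(\pm\theta)$ on qubit 2, conditioned on the outcome, and it is built only from Gaussian operations with feed-forward. Hence it is convex-Gaussian, and taking $q_1=1$ in \cref{def:channelextentaugmented} gives $\Xiaug\le 1$. Since every term costs at least one, $\Xiaug=1$.

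For the main claim, the plan is to argue by contradiction. A unitary has $\xi(U)=1$ iff $U$ is proportional to a Gaussian unitary: if $\xi(U)=1$ then $\xi(U\ket{\phi})=1$ for all Gaussian $\ket{\phi}$, and one can alternatively argue through the lifting gadget. I would avoid this subtlety where possible. Note that $\xi(U_j)\ge 1$ always, because $U=\sum c_k K_k$ gives $1=\norm{U}\le\norm{c}_1$. So average extent $1$ forces $\xi(U_j)=1$ for every $j$ with $p_j>0$. Then, by \cref{eq:unitaryextentbound}, each $U_j$ maps every Gaussian state to a state of extent $1$, that is, to a Gaussian state up to normalization. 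It therefore suffices to find a Gaussian input $\ket{g}$, possibly on an extended register via Choi states, such that $\cal{N}'_\cal{ZZ}(\ketbra{g})$ is not a convex mixture of pure Gaussian states.

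I would take the two-qubit input $\ket{g}=\ket{\psi^+}\otimes\ket{\psi^+}$ with the channel acting on qubits $2,3$, or more simply $\ket{0}\otimes(\ket{00}+\ket{11})/\sqrt2$ on three qubits. Note that $R_{XY}(\pi/2)$ creates Bell pairs Gaussianly, as used earlier. From the proof of \cref{lem:better_decomp_aug_channel_extent}, the Choi state at $p=1/2$ is
\[
\rho=\tfrac12\ketbra{00}\otimes\cal{R}_{\cal{Z}}(\theta)(\ketbra{\psi^+})+\tfrac12\ketbra{11}\otimes\cal{R}_{\cal{Z}}(-\theta)(\ketbra{\psi^+}).
\]
Any pure state $\ket{\chi}$ in the support of $\rho$ has the form
\[
\ket{\chi}=\alpha\ket{00}\otimes R_Z(\theta)\ket{\psi^+}+\beta\ket{11}\otimes R_Z(-\theta)\ket{\psi^+}.
\]
Every element of a pure-state ensemble for $\rho$ lies in this span. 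The key step is to show that $\ket{\chi}$ can be Gaussian only when $\alpha\beta=0$, or at least to exclude ensembles consisting solely of Gaussian $\ket{\chi}$'s. The cases $\alpha\beta=0$ are product states $\ket{aa}\otimes R_Z(\pm\theta)\ket{\psi^+}$ and are Gaussian. So an ensemble of Gaussian states is not ruled out directly. Indeed, this reflects that $\rho$ is convex-Gaussian, and this route fails.

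The argument must therefore use unitarity of the $U_j$ more strongly. Each $U_j$ is proportional to a Gaussian unitary $V_j$, and $\sum_j p_j \cal{V}_j=\cal{N}'_\cal{ZZ}$ is a mixed-unitary decomposition. Mixed-unitary decompositions of a channel are all unitarily related to its minimal Kraus set. The minimal Kraus operators of $\cal{N}'_\cal{ZZ}$ are $R_{ZZ}(\theta)/\sqrt2$ and $Z_1R_{ZZ}(\theta)/\sqrt2$, which span the same space as $\Pi_0\otimes R_Z(\theta)$ and $\Pi_1\otimes R_Z(-\theta)$. Hence each $V_j=\alpha_j\,\Pi_0\otimes R_Z(\theta)+\beta_j\,\Pi_1\otimes R_Z(-\theta)$. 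Unitarity forces $|\alpha_j|=|\beta_j|=1$, so $V_j$ is the diagonal gate $e^{i\mu}\,\mathrm{diag}(e^{-i\theta/2},e^{i\theta/2},e^{i(\nu+\theta/2)},e^{i(\nu-\theta/2)})$. By \cref{lem:extent2qubitgate} (diagonal case), this gate is Gaussian-equivalent to $R_{ZZ}(c)$ with $\xi=1+|\sin c|$. Computing $c$, the $ZZ$ phase component is $c=\theta+\nu/2$ modulo an appropriate $\pi$. Requiring $\sin c=0$ gives $\nu\equiv-2\theta \pmod{2\pi}$. The main obstacle is to verify that the mixture $\sum_j p_j\cal{V}_j$ then cannot reproduce the channel. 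With $\nu$ fixed, all $V_j$ are the same gate up to phase, so the mixture is a single unitary channel. It is not dephasing, since its Choi state is pure, whereas that of $\cal{N}'_\cal{ZZ}$ has rank two. This yields the contradiction. I would double-check the phase bookkeeping for $c$, which is where $\theta\notin\pi\bb{Z}$ must enter, together with the $\nu$ branches.
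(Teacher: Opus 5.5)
The final step does not go through as written: the phase bookkeeping is wrong, and the closing argument never uses the hypothesis $\theta\notin\pi\mathbb{Z}$. Your Kraus-span reduction is fine, and every unitary term has the form $V_j=e^{i\mu}\bigl(\Pi_0\otimes R_Z(\theta)+e^{i\nu}\,\Pi_1\otimes R_Z(-\theta)\bigr)$. But this factorizes as
\[
V_j=e^{i\mu}\,\bigl(\mathrm{diag}(1,e^{i\nu})\otimes I\bigr)\,R_{ZZ}(\theta),
\]
so the $ZZ$ angle is $c=\theta$ for every $\nu$, not $\theta+\nu/2$. In your diagonal phases, $\phi_{00}-\phi_{01}-\phi_{10}+\phi_{11}=-2\theta$, and $\nu$ cancels. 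As a sanity check, $\nu=\pi$ gives $Z_1R_{ZZ}(\theta)$, which plainly has the same extent as $R_{ZZ}(\theta)$, whereas your formula predicts $1+|\cos\theta|$. So your conclusion that $\nu$ is pinned, all $V_j$ coincide up to phase, and the Choi state would be pure is unfounded. The flaw shows in that this argument never invokes $\theta\notin\pi\mathbb{Z}$. Run at $\theta=0$, it would show that $\tfrac12(\mathcal{I}+\mathcal{Z}_1)$ has no convex-unitary decomposition of average extent $1$. That is false, since this expression is itself such a decomposition.

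The repair is immediate and actually shortens the proof. Every unitary in the Kraus span is Gaussian-unitarily-equivalent, up to a global phase, to $R_{ZZ}(\theta)$. So \cref{eq:extent_equiv,eq:extentRZZ} give $\xi(U_j)=1+|\sin\theta|>1$ for every term of every convex-unitary decomposition, and the average extent cannot be $1$. This also makes the claim that $\xi(U)=1$ forces $U$ to be proportional to a Gaussian unitary unnecessary. Your first suggested justification for that claim would not suffice anyway: on two qubits, $\SWAP$ maps every Gaussian state to a Gaussian state, yet $\xi(\SWAP)=2$. The correct reason is that unitaries are extreme points of the operator-norm unit ball.

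Once corrected, your route genuinely differs from the paper's. The paper forces the unitary terms to be diagonal via the $Z$-twirling argument of \cref{sec:DIAGONAL}. It then uses that the free diagonal gates are tensor products of $Z$ rotations, whose convex mixtures have Choi matrices with equal diagonal blocks; the target violates this unless $\theta\in\pi\mathbb{Z}$. Your Kraus-space argument gets diagonality for free and proves more. Because the Kraus span is the same for all $p\in(0,1)$, it yields $\Xi(\mathcal{N}'_{\mathcal{ZZ}})=1+|\sin\theta|$ exactly. In other words, no convex-unitary decomposition improves on the noiseless cost, which is the stronger statement asserted in the paper's conclusions. Your treatment of the converse part matches the paper's.
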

\begin{proof}
First we note that the Choi state~\cite{Jamiolkowski1972,Choi1975} of a channel diagonal in the computational basis has a restricted form. In particular, the Choi states for two-qubit diagonal channels only have support on the basis~$\{\ket{aa}\otimes \ket{bb} \}$ in our convention, where~$a,b\in\{0,1\}$. Projected onto this 4-dimensional subspace, the Choi matrix for~$\cal{N}'_\cal{ZZ}$ with~$p = 1/2$ can be represented  in block diagonal form
\begin{equation}
\cal{N}'_{\cal{ZZ}_{23}}\left(\ketbra{\psi^+}^{\otimes 2}\right) = \frac{1}{4} \begin{pmatrix} 
1 & e^{-i \theta} &0& 0\\
e^{+i \theta} & 1& 0 &0\\
0 & 0 & 1 & e^{+i \theta} \\
0 & 0 & e^{-i\theta} & 1 
\end{pmatrix}\label{eq:dephasedchoi}
\end{equation}
In \red{\cref{sec:DIAGONAL}} we show that given a convex-unitary channel diagonal in the computational basis, any convex decomposition into unitaries must be in terms of unitary operations that are themselves diagonal. The only diagonal two-qubit unitary operations that are considered free in our simulation scheme are tensor products of single-qubit~$Z$ rotations (this includes the identity channel as the trivial case, and Pauli~$Z\otimes Z$ gates). The Choi matrix for a single-qubit~$Z$ rotation~$R_Z$ (acting on qubit~$2$), expressed in the reduced basis~$\{\ket{00},\ket{11}\}$, has the form
\begin{equation}
\mathcal{R}_{\mathcal{Z}_2} (\ketbra{\psi^+}) = 
\frac{1}{2} \begin{pmatrix}
1 & \alpha \\
\bar{\alpha} & 1 
\end{pmatrix}
\end{equation}
for some phase~$\alpha$. Then in our convention, the Choi state for the tensor product of this gate with \emph{any} other single-qubit unitary~$U$ can be expressed in block form as
\begin{equation}
\mathcal{R}_{\mathcal{Z}_2}(\ketbra{\psi^+})\otimes \rho_U  = \frac{1}{2} \begin{pmatrix} \rho_U & \alpha \rho_U \\
			\bar{\alpha}{\rho_U} & \rho_U
\end{pmatrix} \ .
\end{equation}
It should be clear that any convex combination of such unitary channels must have the form
\begin{equation}
\rho' = \begin{pmatrix}
A & B \\
C & A
\end{pmatrix} 
\end{equation}
for some~$2\times 2$ matrices~$A$,~$B$ and~$C$. In particular the diagonal blocks~$A$ are constrained to be equal. But inspecting~\cref{eq:dephasedchoi} we see that for the channel we wish to decompose, the blocks on the diagonal are equal only for~$\theta =k \pi$, where~$k$ is an integer. Therefore for a general~$\theta$ no such convex combination over Gaussian unitary channels is possible.

Conversely, in~\cref{eq:decomposition-E-Znoise} setting~$p=1/2$ explicitly gives a valid decomposition where the unitary operation applied after either outcome of the single-qubit measurement can be efficiently simulated.
\end{proof}

\subsection{Application: improved decomposition for generalized two-qubit~$Z$ noise \label{sec:decompgeneraldephasing}}

By combining the decompositions for dephasing~$Z_1$,~$Z_2$, and~$Z_1 Z_2$ noise, we can obtain improvements for more general dephasing channels. We give an example here for the~$ZZ$ rotation channel~$\cal{R_{ZZ}}(\theta)$ subject to the dephasing noise channel 
\begin{align}
	\cal{E}(p) = (1-p) \cal{I} + \frac{p}{3}( \cal{Z}_1 + \cal{Z}_2 + \cal{Z}_1\cal{Z}_2 ) \ ,
	\label{eq:channelNgendephasing}
\end{align}
this is, 
\begin{align}
	\cal{N}' &= \cal{E}(p) \circ \cal{R_{ZZ}}(\theta)  \label{eq:Nprime00} \\
	&= (1-p) \cal{R_{ZZ}}+ \frac{p}{3}( \cal{Z}_1 \circ \cal{R_{ZZ}} + \cal{Z}_2 \circ \cal{R_{ZZ}} + \cal{Z}_1\cal{Z}_2 \circ\cal{R_{ZZ}} ) \ . \label{eq:Nprime0}
\end{align}
Notice that the immediate decomposition in~\cref{eq:Nprime0} gives the upper bound for the channel extent
\begin{align}
	\Xi(\cal{N}') 
	&\leq (1-p) \xi(R_{ZZ}) + \frac{p}{3} \left( \xi(Z_1R_{ZZ}) +  \xi(Z_2R_{ZZ}) +  \xi(Z_1Z_2R_{ZZ})\right) \\
	&= \xi(R_{ZZ})
\end{align}
where we used the fact that~$Z_1$ and~$Z_2$ are Gaussian unitaries, and consequently~$\xi(Z_1R_{ZZ}) = \xi(Z_2R_{ZZ}) = \xi(Z_1Z_2R_{ZZ}) = \xi(R_{ZZ})$.

The following decomposition 
\begin{align}
	\cal{N}'
	&= \frac{1}{3}\left(\cal{N}_{\cal{ZZ}_{(1)}} + \cal{N}_{\cal{ZZ}_{(2)}} + \cal{N}_{\cal{ZZ}} \right) \ ,
\end{align} 
leads to an improved lower bound for the channel extent. We utilize the optimal decomposition of~$\cal{N}_{\cal{ZZ}}$ obtained in~\cref{lem:NZZ_optimal}. The resulting decomposition has channel extent
\begin{align}
	\Xi(\cal{N}') 
	&\leq \frac{1}{3}( \Xi(\cal{N}_{\cal{ZZ}_{(1)}}) + \Xi(\cal{N}_{\cal{ZZ}_{(2)}})  + \Xi(\cal{N}_{\cal{ZZ}}) )  \\
	&\leq \frac{2}{3}\left( 1 + |\sin\theta| \right) + \frac{1}{3}\left(1 + (1-2p)|\sin\theta|\right) \\
	&= 1 + (1-2/3p)|\sin\theta| \ . \label{eq:Xigeneraldephasing}
\end{align} 

If we augment the set of allowed Gaussian channels, including Gaussian measurements and feedforward, we may utilize the improved decompositions for~$\cal{N}_{\cal{ZZ}_{(1)}}$ and~$\cal{N}_{\cal{ZZ}_{(2)}}$ given respectively in~\cref{lem:better_decomp_aug_channel_extent,eq:NZZ2} and obtain an improved  decomposition for~$\cal{N}'$ with augmented channel extent upper bounded by 
\begin{align}
	\Xi^*(\cal{N}') 
	&\leq \frac{1}{3}( \Xi^*(\cal{N}_{\cal{ZZ}_{(1)}}) + \Xi^*(\cal{N}_{\cal{ZZ}_{(2)}})  + \Xi^*(\cal{N}_{\cal{ZZ}})  )  \\
	&\leq \frac{1}{3}( \Xi(\cal{N}_{\cal{ZZ}_{(1)}}) + \Xi(\cal{N}_{\cal{ZZ}_{(2)}})  + \Xi^*(\cal{N}_{\cal{ZZ}})  )  \\
	&\leq 3 \cdot \frac{1}{3}(1 + (1-2p)|\sin\theta|) \\
	&= 1 + (1-2p)|\sin\theta| \ ,
\end{align} 
where we used~\cref{eq:xiaugNZ} and~\cref{eq:XiNZZ} from~\cref{lem:NZZ_optimal}.

\section{Improved and optimal decompositions for the fermionic nonlinearity \label{sec:FNL}}

The fermionic nonlinearity, mentioned in~\cref{sec:priorwork}, is a magic monotone for channels. It was introduced in~\cite{Hakkaku_2022} and it is formally defined as follows. 

\begin{definition}[Fermionic nonlinearity, see Eq.~3 in~\cite{Hakkaku_2022}]
	\label{def:FNL}
	The fermionic nonlinearity of a fermionic channel~$\cal{E}$ is
	\begin{align}
		\FNL(\cal{E}) = \min \left\{ \|q\|_1 : \cal{E}(\cdot) = \sum_j q_j L_j (\cdot) R_j ; L_j, R_j \in G_n , q_j \in \bb{R} \right\} \ .
	\end{align}
\end{definition}

In~\cref{sec:Udecomp} we gave optimal decompositions for any two-qubit fermionic gate with respect to the unitary extent. In~\cref{sec:optimalFNLdecomp}, we show these provide optimal decompositions for the respective unitary channels with respect to the fermionic non-linearity (see~\cref{lem:FNLoptimal}). 
In~\cref{sec:noisydecomps} we obtained optimal (and improved) decompositions for rotation channels subject to dephasing noise. In~\cref{sec:improvedFNLdecomp_noisychannel}, we show these give a decomposition of the same channel with reduced $L^1$-norm $\|q\|_1$. For certain parameter regimes, our decompositions are an improvement over the decompositions obtained in~\cite{Hakkaku_2022}, thus providing a tighter bound on the fermionic nonlinearity.
\subsection{Optimal decompositions of two-qubit unitary channels \label{sec:optimalFNLdecomp}}

The fermionic nonlinearity is closely related to the dyadic negativity given in~\cref{def:dyadic-negativity}. It can be thought of as the dyadic negativity for channels, instead of density operators, and is therefore also closely related to the dyadic channel negativity introduced in the stabilizer setting by~\cite{seddonThesis}. The fermionic nonlinearity of a channel~$\cal{E}$ upper bounds the dyadic negativity of the state~$\cal{E}(\ketbra{g})$ for any Gaussian state~$\ket{g}\in\cal{G}_n$, i.e.,
\begin{align}
	\label{eq:FNL_bound_Lambda}
	\FNL( \cal{E}) \geq \Lambda(\cal{E}(\ketbra{g})) \quad\text{ for any }\quad \ket{g}\in\cal{G}_n \ .
\end{align}

In the following lemma, we argue that optimal unitary extent decompositions for a gate~$U$ give optimal fermionic nonlinearity decompositions for the corresponding unitary channel~$\cal{U}(\cdot)=U (\cdot) U^\dagger$. 
As a consequence, we have an optimal fermionic nonlinearity decomposition of any fermionic two-qubit unitary channel~$\cal{U}$, which is built using the unitary extent optimal decompositions of the respective gate~$U$ obtained in~\cref{lem:extent2qubitgate}. Specifically, this applies to the channels~$\cal{C}(\theta),\cal{R_{ZZ}}(\theta)$ and~$\cal{SWAP}$. Their optimal fermionic nonlinearity decompositions are built using the unitary-extent-optimal decompositions presented in~\cref{tab:optimaldecomp}.

\begin{lemma}
\label{lem:FNLoptimal}
Let~$U$ be a two-qubit unitary with optimal unitary extent decomposition
\begin{align}
	U = \sum_{j=2}^\chi c_j V_j \ ,
\end{align} 
i.e.,~$\xi(U)=\|c\|_1^2$. Let~$\ket{u} = (I \otimes U \otimes I)\ket{\psi^+}^{\otimes 2}$
be the Choi state~\cite{Jamiolkowski1972,Choi1975} of~$U$. Assume that 
\begin{align}
	\label{eq:FNLassump} \xi(\ket{u}) = \xi(U) \ .	
\end{align}
Then the channel~$\cal{U}(\cdot)=U (\cdot) U^\dagger$ has optimal fermionic nonlinearity decomposition
\begin{align}
	\label{eq:FNLoptimaldecomp}
	\cal{U}(\cdot) = \sum_{j,k=1}^\chi c_j \overline{c_k} V_j (\cdot) V_k^\dagger \ .
\end{align}
Moreover,~$\FNL(\cal{U}) = \xi(U)$. 
\end{lemma}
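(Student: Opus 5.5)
The plan is to sandwich $\FNL(\cal{U})$ between $\xi(U)$ from above and from below. The upper bound comes from substituting the decomposition into the channel. The lower bound combines \cref{eq:FNL_bound_Lambda} with a dyadic negativity lower bound on the Choi state. The assumption \cref{eq:FNLassump} is exactly what closes the gap between them.

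\emph{Upper bound.} Substituting $U=\sum_j c_j V_j$ into $U(\cdot)U^\dagger$ gives
\begin{align}
\cal{U}(\cdot)=\sum_{j,k}c_j\overline{c_k}\,V_j(\cdot)V_k^\dagger ,
\end{align}
with $L_j=V_j$ and $R_k=V_k^\dagger$ Gaussian. The coefficient vector has $L^1$-norm $\sum_{j,k}|c_j||c_k|=\|c\|_1^2=\xi(U)$.

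\begin{itemize}
\item \cref{def:FNL} asks for real $q_j$, whereas $c_j\overline{c_k}$ is in general complex.
\item I would handle this by pairing the $(j,k)$ and $(k,j)$ terms, which are Hermitian conjugates of each other.
\item Failing that, I would absorb the phases into the Gaussian unitaries. A global phase times a Gaussian unitary stays in $G_n$ under the phase-sensitive convention, so one can write $c_j=|c_j|e^{i\alpha_j}$ and replace $V_j$ by $e^{i\alpha_j}V_j$.
\end{itemize}
After this, all coefficients are $|c_j||c_k|\ge 0$, so $\FNL(\cal{U})\le\xi(U)$. The decomposition \cref{eq:FNLoptimaldecomp} is then feasible.

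\emph{Lower bound.} By \cref{eq:FNL_bound_Lambda}, applied to the four-qubit extension of $\cal{U}$ acting on the Gaussian state $\ket{g}=\ket{\psi^+}^{\otimes 2}$, we get $\FNL(\cal{U})\ge\Lambda(\op{u})$. A decomposition of the two-qubit channel yields one for the embedded four-qubit channel with the same norm, exactly as in the proof of \cref{lem:NZZ_optimal}. It then remains to show $\Lambda(\op{u})\ge\xi(\ket{u})$ for a pure state.

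\begin{itemize}
\item Take an optimal dual witness $\ket{\omega}$ for the state extent from \cref{eq:extent-state-dual}. It satisfies $|\langle\omega|\phi\rangle|\le 1$ for all Gaussian $\phi$ and $|\langle\omega|u\rangle|^2=\xi(\ket{u})$.
\item Set $W=\op{\omega}$. This is Hermitian, and $|\langle L|W|R\rangle|=|\langle L|\omega\rangle||\langle\omega|R\rangle|\le 1$, so $W\in\cal{W}_\Lambda$.
\item Hence $\Lambda(\op{u})\ge\Tr[W\op{u}]=|\langle\omega|u\rangle|^2=\xi(\ket{u})$.
\end{itemize}

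Combining these, $\FNL(\cal{U})\ge\xi(\ket{u})=\xi(U)$ by \cref{eq:FNLassump}. Together with the upper bound this gives $\FNL(\cal{U})=\xi(U)$, so the decomposition is optimal.

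\emph{Expected obstacles.} I expect the main subtleties to be the real-coefficient convention in \cref{def:FNL} and the step from the two-qubit to the four-qubit channel. The witness step is mechanical once the dual form of the extent is assumed. For the two-qubit fermionic gates of \cref{lem:extent2qubitgate}, the hypothesis \cref{eq:FNLassump} holds by \cref{lem:lifting} applied to the $R_{ZZ}(c)$ core. The assumption requires an optimal decomposition of $\ket{u}$, which is supplied by the known magic-state results.
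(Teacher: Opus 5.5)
Your proposal is correct and follows the paper's proof. It uses the same feasible-decomposition upper bound and the same chain $\FNL(\cal{U})\ge\FNL(\cal{I}\otimes\cal{U}\otimes\cal{I})\ge\Lambda(\op{u})\ge\xi(\ket{u})=\xi(U)$. The only differences are these: the paper cites \cite[Lemma~1]{Seddon_2021} for $\Lambda(\op{u})=\xi(\ket{u})$, whereas you prove the needed direction directly with the rank-one witness $\op{\omega}$ built from the extent dual. You also handle the real-coefficient convention, which the paper skips, and absorbing the phases into the Gaussian unitaries is the fix that works, since $e^{i\alpha}I$ is a matchgate; pairing the $(j,k)$ and $(k,j)$ terms on its own still leaves imaginary coefficients.
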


\begin{proof}
	The decomposition in~\cref{eq:FNLoptimaldecomp} is a feasible decomposition of the channel~$\cal{U}$. It gives the upper bound 
	\begin{align}
		\label{eq:FNLaux1}
		\FNL(\cal{U}) \leq \sum_{j,k=1}^\chi |c_j\overline{c_k}| = \|c\|_1^2 = \xi(U) \ .
	\end{align}
	We now prove the converse bound. Consider~$\ket{u} = (I \otimes U \otimes I) \ket{g}$ with~$\ket{g}\in G_4$ a Gaussian state. By~\cite[Lemma 1]{Seddon_2021} (see also~\cite{Regula_2018}) we have
	\begin{align}
		\label{eq:FNLaux2}
		\Lambda(\ketbra{u}) = \xi( \ket{u} ) \ .
	\end{align}
	This result was proven for superpositions of stabilizer states, but it extends more generally, including to superpositions of Gaussian states.
	Combining~\cref{eq:FNLaux2} with~\cref{eq:FNL_bound_Lambda} gives
	\begin{align}
		\FNL(\cal{U}) \geq \FNL(\cal{I} \otimes \cal{U} \otimes \cal{I}) \geq \Lambda((I\otimes U \otimes I)\ketbra{g} (I\otimes U^\dagger \otimes I)) = \Lambda(\ketbra{u}) = \xi(\ket{u}) \ ,
	\end{align}
	where the first inequality is due to the fact that a feasible decomposition of~$\cal{U}$ gives a feasible decomposition of~$\cal{I} \otimes \cal{U} \otimes \cal{I}$. 
	We chose~$\ket{g} = \ket{\psi^+}^{\otimes 2}$. Then, from the assumption in~\cref{eq:FNLassump}, we have
	\begin{align}
		\FNL(\cal{U}) \geq \xi(\ket{u}) = \xi(U)
	\end{align}
	which together with~\cref{eq:FNLaux1} gives the claim. 
\end{proof}

In~\cite{Hakkaku_2022} the authors give upper bounds for the fermionic nonlinearity of the rotation channel~$\cal{R_{ZZ}}(2 \theta)$. These bounds are obtained by numerically searching for feasible decompositions using a discrete subset of fermionic Gaussian gates, rather than the full continuous set of all such gates. They find (see~\cite{Mitarai_2021} and~\cite[Section 6]{reardonsmith2024improved})
\begin{align}
	\label{eq:HakFNLbound}
	\FNL(\cal{R_{ZZ}}(2 \theta)) \leq 1 + 2|\sin(2\theta)| \ .
\end{align}
Here, by~\cref{lem:FNLoptimal} and the optimal decomposition of~$\cal{R_{ZZ}}$ given in~\cref{tab:optimaldecomp} we have established that 
\begin{align}
	\FNL(\cal{R_{ZZ}}(2 \theta)) = \xi(R_{ZZ}(2 \theta)) = 1 + |\sin(2\theta)| \ ,
\end{align}
which improves upon the bound in~\cref{eq:HakFNLbound}. The resulting performance gain in the quasiproba\-bi\-li\-ty-based simulation of this channel was previously observed in~\cite{reardonsmith2024improved} (see Section~6 therein). Furthermore, by providing analytical expressions for optimal fermionic nonlinearity decompositions, we eliminate the need to solve the linear program in~\cref{def:FNL}, which in practice relies on a discretization of the feasible space, as done in~\cref{eq:HakFNLbound}, a choice that can significantly influence the optimization outcome.

\begin{figure}[h]
    \centering
    \includegraphics[width=0.5\linewidth]{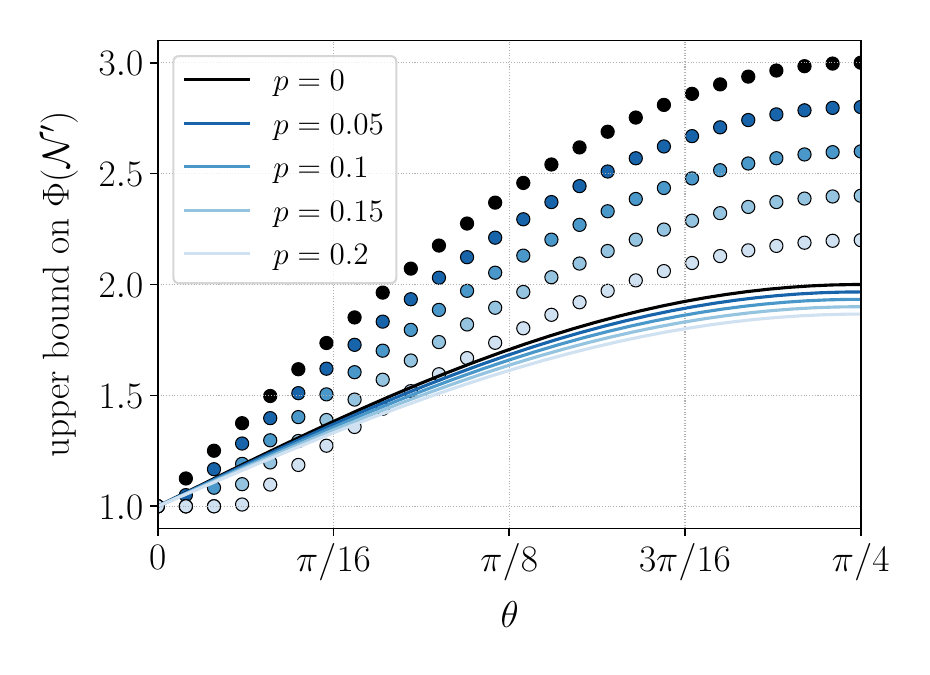}
    \caption{The full lines correspond to~$1 + (1-2/3p)|\sin (2 \theta)|$, the upper bound obtained in~\cref{eq:FNLNprime} for the fermionic nonlinearity~$\FNL(\cal{N}')$ of the channel~$\cal{N}' = \cal{E}(p) \circ \cal{R_{ZZ}}(\theta)$ defined in~\cref{eq:Nprime00}, for different values of the error probability~$p$. The circles correspond to the upper bound for~$\FNL(\cal{N}')$ obtained in~\protect\cite{Hakkaku_2022} (see Fig.~2 therein) -- the data was kindly provided by the authors. }
    \label{fig:channel_extent}
\end{figure}

\subsection{Improved decomposition of the dephased~$ZZ$ rotation channel \label{sec:improvedFNLdecomp_noisychannel}}

In~\cite{Hakkaku_2022} the authors analyse the effect of the dephasing channel~$\cal{N}$ defined in~\cref{eq:channelNgendephasing} in reducing the fermionic nonlinearity and thus the computational cost in quasiprobability-based simulation. Specifically, they study the fermionic nonlinearity~$\Phi(\cal{N}')$ of the~$ZZ$ rotation channel~$\cal{N}' = \cal{N} \circ \cal{R_{ZZ}}(\theta)$ subject to dephasing noise defined in~\cref{eq:Nprime00}. The authors numerically obtain feasible decompositions over a discretized set of allowed operations to upper bound~$\Phi(\mathcal{N}')$. The resulting values are shown in~\cite[Figure 2]{Hakkaku_2022}.

We obtain improved bounds on~$\Phi(\mathcal{N}')$ for some parameter regimes. A feasible decomposition with respect to the channel extent gives a feasible decomposition with respect to the fermionic nonlinearity. Then, the decomposition of~$\mathcal{N}'$ presented in~\cref{sec:decompgeneraldephasing}, with associated channel extent~$\Xi(\mathcal{N}')$ defined in~\cref{eq:Xigeneraldephasing}, constitutes a feasible fermionic nonlinearity decomposition. This yields the upper bound
\begin{align}
	\FNL(\cal{N}') \leq \Xi(\cal{N}') 
	&= 1 + (1-2/3p)|\sin (2 \theta)| \ . \label{eq:FNLNprime}
\end{align}
In~\cref{fig:channel_extent} we compare these values to those presented in~\cite[Figure 2]{Hakkaku_2022}. Notice that for~$p > 0$ and small angles~$\theta$ the decompositions obtained in~\cite{Hakkaku_2022} achieve a tighter bound for the fermionic nonlinearity, and for certain points the channel seems to be convex-Gaussian (i.e., it can be written as a convex combination of unitary Gaussian channels) -- this suggests our decomposition may not be optimal, we leave finding optimal decompositions for future work.

\section{Classically simulated sampling from non-Gaussian circuits\label{sec:algorithm}}

Here we briefly review algorithmic primitives for simulating circuits with non-free states and operations common to the stabilizer and Gaussian setting. 
We describe how we can simulate sampling for non-Gaussian circuits subject to convex-unitary noise channels. This is an adaptation of analogous algorithms developed in the context of stabilizer-based simulation. Reference~\cite{Seddon_2021} gave an algorithm for sampling from Clifford circuits with noisy magic state inputs, while~\cite{seddonThesis} sketched how this could be extended to circuits composed of non-stabilizer convex-unitary channels. The structure of the algorithm described below is essentially identical to that in the stabilizer setting. The difference in the Gaussian setting is only in which channels are non-free, how they are decomposed, and polynomial factors in the simulation cost.

The simulation task is the following.
\begin{definition}[Bit-string sampling task (approximate)]\label{def:bitsampling_approx}
Given a positive integer~$w\leq n$ and a description of a~$n$-qubit channel 
\begin{equation}
	\label{eq:channelseq}
\chan = \chan_T \circ \chan_{T-1} \circ \ldots \circ \chan_1 \ ,
\end{equation}
where each~$\chan_t$ is a convex-unitary channel with known decomposition, output a bit-string~$x$ of length~$w$ drawn with probability~$p_\text{sim}(x)$ such that the distribution~$p_\text{sim}$ is~$\epsilon$-close in~$L^1$-distance to the distribution
\begin{equation}
p(x) = \Tr(\Pi_x \chan(\op{0^n}))
\end{equation}
where~$\Pi_x = \op{x}{x} \otimes I_{n-w}$, with~$\ket{x}$ is a computational basis state on the first~$w$ qubits, and~$I_{n-w}$ is the identity on the unmeasured subsystem.
\end{definition}
Note that this simulation task subsumes several related tasks as specific cases:
\begin{enumerate}[1)]
\item we can draw from the distribution~$p(x) = \Tr(\Pi_x \chan(\op{g}))$ for any Gaussian state~$\ket{g}$ by inserting a suitable matchgate circuit before the first channel;
\item we can draw from the distribution~$p(x) = \Tr(\Pi_x \chan(\rho))$ for a non-Gaussian state~$\rho$ with some additional runtime overhead provided we have a known decomposition of a convex-unitary channel that prepares the state~$\rho$ from~$\op{0^n}$;
\item noiseless circuits are included since unitary channels are the extreme points of the set of convex-unitary channels.
\end{enumerate}
We will show that, if the optimal decomposition is known, the expected simulation runtime scales with the product of the convex-unitary channel extent for each channel in the sequence \eqref{eq:channelseq}. However, since in general we do not always know the optimal decomposition of a convex-unitary channel, we need a notion of cost with respect to a given decomposition.

\begin{definition}[Channel decomposition cost]\label{def:oraclecost}
Let~$A =\{\chan_t\}_{t=1}^T$ be a set of convex-unitary channels. 
Let~$\mathcal{D}$ be an oracle that, given an index~$t$ (or other description of the channel~$\chan_t$), returns with probability~$p_j$ a description
\begin{align}
	\label{eq:DUt}
	\cal{D}(U_j) = \bb{U}_j = \{(c_{j,k},R_{j,k})\}_k
\end{align}
of a decomposition of a unitary operator
\begin{equation} 
U_j = \sum_k c_{j,k} G_{j,k}
\end{equation}
where every~$G_{j,k}$ is Gaussian, such that~$\chan_t = \sum_j p_j \mathcal{U}_j$. The matrix~$R_{j,k} \in O(2n)$ is the matrix that describes the Gaussian unitary~$G_{j,k}$ according to~\cref{eq:UR_R_relation}.
Then the cost~$E$ associated with the channel~$\chan_t$ with respect to the oracle~$\mathcal{D}$ is
\begin{equation}
E(\mathcal{D},\mathcal{\chan}_t) = \sum_j p_j \left(\sum_k \abs{c_{j,k}} \right)^2 \ .
\end{equation}
An oracle is called equimagical over the channel set~$A$ if, for any~$\chan_t \in A$, the~$L^1$-norm of the decomposition returned by the oracle does not depend on the unitary operator~$U_j$. That is, $\sum_k \abs{c_{j,k}} = \sum_{k} \abs{c_{j',k}}$ for any pair~$(j,j')$, so that
\begin{equation}
E(\mathcal{D},\mathcal{\chan}_t) = \left(\sum_k \abs{c_{j,k}} \right)^2 \quad \text{ for any } j \ .
\end{equation}
An oracle is called optimal over~$A$ if~$E(\mathcal{D},\mathcal{\chan}_t) = \Xi(\chan_t)$ for all~$\chan_t \in A$.
\end{definition}

The oracle in~\cref{def:oraclecost} gives a way to formalize the notion of a recipe to construct a good decomposition of a given class of channels. This allows us to make the following statement about the  classical simulation cost.

\begin{theorem}\label{thm:bitsamplingthm}
Given an oracle~$\mathcal{D}$ for the channel set~$A = \{ \chan_t \}_{t=1}^T$, there exists a classical algorithm such that the runtime~$\tau$ to perform the task specified in~\cref{def:bitsampling_approx} has expectation
\begin{equation}
\mathbb{E}(\tau) = O\left(\poly\left(n,T,w,\frac{1}{\epsilon}\right)\log(p^{-1}_\text{sim})\prod_{t=1}^T E(\mathcal{D},\chan_t)\right),
\end{equation}
where the average is over independent runs of the algorithm.
If the oracle~$\mathcal{D}$ is equimagical over the set~$A$, then the expression gives the actual runtime.
If the oracle is optimal over~$A$, then 
\begin{align}
	\mathbb{E}(\tau) = O\left(\poly\left(n,T,w,\frac{1}{\epsilon}\right)\log(p^{-1}_\text{sim})\prod_{t=1}^T \Xi(\chan_t)\right) \ .
\end{align}
\end{theorem}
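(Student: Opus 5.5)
The proof adapts the stabilizer-setting sampler of~\cite{Seddon_2021,seddonThesis}, with Gaussian primitives replacing the stabilizer ones. The algorithm has three layers.

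\emph{Trajectory sampling.} For each $t$ we query the oracle $\mathcal{D}$ once. With probability $p_j$ it returns a decomposition $U_{j_t}=\sum_k c_{j_t,k}G_{j_t,k}$. Because $\chan_t=\sum_j p_j\mathcal{U}_j$, averaging over the independent draws $(j_1,\dots,j_T)$ reproduces $\chan(\op{0^n})$ exactly:
\begin{equation}
\mathbb{E}_{\vec j}\,\op{\psi_{\vec j}}=\chan(\op{0^n}), \qquad \ket{\psi_{\vec j}}=U_{j_T}\cdots U_{j_1}\ket{0^n}.
\end{equation}
Expanding the product, $\ket{\psi_{\vec j}}$ is a superposition of Gaussian states $\prod_t G_{j_t,k_t}\ket{0^n}$ with coefficients $\prod_t c_{j_t,k_t}$, so its $L^1$-norm is $L_{\vec j}=\prod_t\sum_k\abs{c_{j_t,k}}$.

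\emph{Pure-state sampling.} For a fixed trajectory we run the Gaussian analogue of the Bravyi--Gosset sampler. We sparsify $\ket{\psi_{\vec j}}$ into a sum of $K=O(L_{\vec j}^2/\epsilon^2)$ Gaussian terms, by the sparsification lemma (the $T=0$ case of \cref{lem:ensembleAdapt}). Each term is tracked with the phase-sensitive Gaussian simulator of~\cite{Dias_2024,reardonsmith2024improved}, at $\poly(n,T)$ cost per term. The $w$ output bits are then drawn one at a time by chain-rule marginal sampling. Each marginal $\norm{\Pi_{x_1\dots x_i}\ket{\Omega}}^2$ is estimated with the Gaussian fast-norm estimator of~\cite{bravyiComplexityQuantumImpurity2017a}, whose cost is linear in $K$. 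Boosting the success probability of these estimates to overall failure at most $\epsilon$ costs the $\log(p_\text{sim}^{-1})$ and $\poly(w,1/\epsilon)$ factors. The output distribution is then $O(\epsilon)$-close to $\bra{\psi_{\vec j}}\Pi_x\ket{\psi_{\vec j}}$.

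\emph{Closeness.} The error of the mixture over trajectories is at most the average of the per-trajectory errors, by convexity of the $L^1$-distance, so the overall distribution is $\epsilon$-close to $p(x)$.

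\emph{Runtime.} The runtime for trajectory $\vec j$ is $\poly(\cdots)\,L_{\vec j}^2$. Since the draws are independent,
\begin{equation}
\mathbb{E}_{\vec j}\,L_{\vec j}^2=\prod_t\sum_j p_j\Big(\sum_k\abs{c_{j,k}}\Big)^2=\prod_t E(\mathcal{D},\chan_t).
\end{equation}
If the oracle is equimagical, $L_{\vec j}$ is the same for every $\vec j$, so the bound holds on every run and not just on average. If the oracle is optimal, $E(\mathcal{D},\chan_t)=\Xi(\chan_t)$ by definition, giving the last claim.

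\emph{Main obstacle.} The delicate step is the sparsification-plus-norm-estimation error analysis. The sparsified state is unnormalized, and both its normalization and the fast-norm estimates carry errors that have to be propagated through the $w$ sequential conditionals without blowing up. I would follow the stabilizer argument of~\cite[Theorem 16]{Seddon_2021} essentially verbatim. Two inputs are Gaussian-specific: the variance bound for the sparsified norm (\cref{lem:varianceAdapt} with trivial adaptive layers), and the fact that marginal projectors $\Pi_{x_1\dots x_i}$ map Gaussian states to Gaussian states with efficiently tracked phases (\cref{tab:FLOsimulation_runtimes}). These two inputs are what supply the polynomial factors in the claimed runtime.
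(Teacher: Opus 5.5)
Your proposal is correct and follows the paper's own argument: sample a unitary trajectory from the oracle, sparsify it as in \textsc{SparsifyCircuit}, evolve the terms with the phase-sensitive Gaussian simulator, and draw the bits from \textsc{FastNorm}-estimated conditionals; closeness then comes from the ensemble sampling lemma together with convexity over trajectories, and independence of the oracle draws gives $\mathbb{E}\,L_{\vec j}^2=\prod_t E(\mathcal{D},\chan_t)$. The differences are only bookkeeping: you take $K=O(L_{\vec j}^2/\epsilon^2)$, which is the arbitrary-precision regime the paper mentions (replacing $\delta^{-1}$ by $\delta^{-2}$), instead of $k=\lceil 4E/\delta\rceil$ above the critical precision, and you fold the \textsc{FastNorm} failure probability directly into the $L^1$ error budget, which is slightly cleaner than the paper's ``with probability at least $1-p_{\text{fail}}$'' phrasing.
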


This result motivates the study of channel decompositions in~\cref{sec:Udecomp,sec:noisydecomps}. In our study of noisy channels, we are guided by the intuition that in many cases we expect the simulation cost factor per channel to smoothly approach unity as the noise rate increases. Using the cost model above, even a small reduction in cost at the level of individual channels can lead to significant savings in runtime for intermediate-sized systems.  
Since~\cref{thm:bitsamplingthm} follows by adapting algorithms already known from the stabilizer setting~\cite{bravyiComplexityQuantumImpurity2017a,Bravyi_2019,Seddon_2021,seddonThesis} to the fermionic setting, we do not give a full proof here, instead sketching the argument. For a full technical proof see~\cref{sec:convex_unitary_channel_simulation}. A key component in ensuring that the cost is linear rather than quadratic in the cost function $E(\mathcal{D},\chan_t)$ is the following result from~\cite{bravyiComplexityQuantumImpurity2017a}.
\begin{lemma}[Fermionic fast norm estimation~\cite{bravyiComplexityQuantumImpurity2017a}]
\label{lem:fastnorm}
Given a vector~$\ket{\Psi} = \sum_{a=1}^\chi c_a \ket{g_a}$ where~$g_a$ are Gaussian states, there is a procedure that takes as input the vector~$\Psi$, precision~$\epsilon > 0$, and failure probability~$p_f > 0$ and outputs a value~$y$ such that, with probability at least~$1-p_f$, the following inequality holds
\begin{equation}
(1-\epsilon) \norm{\Psi}^2 \leq y \leq (1+\epsilon) \norm{\Psi}^2 \ .
\end{equation}
The procedure has runtime~$O(n^{7/2} \epsilon^{-2} p_f^{-1} \chi)$.
\end{lemma}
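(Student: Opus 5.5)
The plan is to follow the Bravyi--Gosset random-overlap strategy for stabilizer fast norm estimation, replacing random stabilizer states with random fermionic Gaussian states.

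First I reduce to a single parity sector. Each Gaussian state $\ket{g_a}$ has fixed parity, so $\ket{\Psi}=\ket{\Psi_+}+\ket{\Psi_-}$ splits by parity, and $\norm{\Psi}^2=\norm{\Psi_+}^2+\norm{\Psi_-}^2$. It therefore suffices to estimate each $\norm{\Psi_\pm}^2$ to relative precision $\epsilon$ with failure probability $p_f/2$.

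Within one sector, say the even one of dimension $D=2^{n-1}$, I draw $\ket{\phi}$ from the invariant ensemble of even Gaussian states, i.e.\ $\ket{\phi}=U_R\ket{0^n}$ with $R$ Haar on $SO(2n)$. The estimator is $X = D\,\abs{\bra{\phi}\ket{\Psi_+}}^2$.

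Unbiasedness comes from the first moment. Gaussian unitaries act on each parity sector via a half-spin representation of $\mathrm{Spin}(2n)$, which is irreducible. By Schur's lemma $\mathbb{E}\,\op{\phi}=I_+/D$, so $\mathbb{E}X=\norm{\Psi_+}^2$.

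Each sample costs $\chi$ overlaps $\bra{\phi}\ket{g_a}$. With the phase-sensitive covariance-matrix formalism of~\cite{Dias_2024,reardonsmith2024improved} (a Pfaffian or determinant formula), each overlap takes $O(n^3)$ time, so one sample costs $O(n^3\chi)$.

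The key step, and the one I expect to be the main obstacle, is a second-moment bound $\mathbb{E}X^2 \le C(n)\,\norm{\Psi_+}^4$ with $C(n)=O(\sqrt{n})$. I would compute $\mathbb{E}\,\op{\phi}^{\otimes 2}$ by twirling over Gaussian unitaries. Its commutant on the two-copy space is generated by the swap and the operator $\Lambda=\sum_{j=1}^{2n} c_j\otimes c_j$, which commutes with $U_R\otimes U_R$. Hence the averaged state is a combination of spectral projectors of $\Lambda$ restricted to the symmetric subspace. Gaussian states satisfy $\Lambda\ket{\phi}^{\otimes 2}=0$ (the Gaussianity criterion of Bravyi), so the average is $\Pi_0/\Tr\Pi_0$, where $\Pi_0$ is the projector onto the kernel of $\Lambda$. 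This gives
\begin{equation}
\mathbb{E}X^2 = D^2\,\frac{\bra{\Psi_+}^{\otimes 2}\Pi_0\ket{\Psi_+}^{\otimes 2}}{\Tr\Pi_0} \le \frac{D^2}{\Tr\Pi_0}\norm{\Psi_+}^4 .
\end{equation}
The bound then reduces to the dimension count $\Tr \Pi_0 = \Omega(D^2/\sqrt{n})$. This is a central binomial coefficient ratio, $\binom{2n}{n}/2^{2n}\sim 1/\sqrt{\pi n}$, coming from the eigenvalue multiplicities of $\Lambda$, whose eigenspaces are indexed by the number of $\pm$ signs as for a sum of commuting $\pm1$ terms.

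Given the moment bounds, I average $L$ independent samples and apply Chebyshev's inequality. This yields relative error $\epsilon$ with probability at least $1-p_f$ whenever $L=O(C(n)\,\epsilon^{-2}p_f^{-1})=O(\sqrt{n}\,\epsilon^{-2}p_f^{-1})$. The total runtime is then $O(L\cdot n^3\chi)=O(n^{7/2}\epsilon^{-2}p_f^{-1}\chi)$, as claimed. Sampling a Haar-random $R\in SO(2n)$ costs $O(n^3)$, which is absorbed into this bound. The odd sector is handled identically, starting from $X_1\ket{0^n}$.
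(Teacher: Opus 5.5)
The paper does not prove this lemma; it quotes it from Bravyi--Gosset. Your overall scheme is the standard route and gives the stated $O(n^{7/2}\epsilon^{-2}p_f^{-1}\chi)$: an overlap estimator against a random Gaussian state, Schur's lemma for the first moment, a two-copy twirl with a $\binom{2n}{n}$ factor for the second moment, Chebyshev, and $O(n^3)$ phase-sensitive overlaps. The gap is in the step you yourself flag as the crux.

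You average over $R\in SO(2n)$, and for that group the commutant of $\{U_R\otimes U_R\}$ is \emph{not} generated by $\mathrm{SWAP}$ and $\Lambda=\sum_j c_j\otimes c_j$. For example, $Z^{\otimes n}\otimes I$ commutes with every parity-preserving $U_R\otimes U_R$. It anticommutes with $\Lambda$, however, so it lies outside the commutative algebra generated by $\mathrm{SWAP}$ and $\Lambda$. Relatedly, $\Lambda$ maps $\mathcal{H}_+\otimes\mathcal{H}_+$ into $\mathcal{H}_-\otimes\mathcal{H}_-$, so ``spectral projectors of $\Lambda$ on the even symmetric subspace'' is meaningful only for $\Pi_0$.

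This matters. Your sign-pattern count gives $\dim\bigl(\ker\Lambda\cap\mathcal{H}_+^{\otimes 2}\bigr)=\tfrac12\binom{2n}{n}$, but that bounds $\dim W$, with $W=\mathrm{span}\{\ket{\phi}\otimes\ket{\phi}\}$, only from \emph{above}. The twirl equals $\Pi_W/\dim W$, so your bound $D^2/\Tr\Pi_0$ requires $W=\ker\Lambda\cap\mathcal{H}_+^{\otimes 2}$. That is exactly irreducibility of this kernel, and your argument does not establish it.

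The simplest repair is to drop the parity split and draw $R$ Haar-randomly from all of $O(2n)$, i.e.\ average over the full group $G_n$, which contains every $c_j$. Your commutant claim is then true, and the proof is elementary.
\begin{itemize}
\item The $2n$ operators $c_j\otimes c_j$ are independent, commuting Pauli involutions on $2n$ qubits. Their joint eigenspaces $\ket{s}$, $s\in\{\pm1\}^{2n}$, are therefore one-dimensional.
\item Any operator commuting with all $U\otimes U$, $U\in G_n$, is thus diagonal in this basis.
\item The Gaussian unitaries $e^{\frac{\pi}{4}c_jc_k}$ send $c_j\otimes c_j\mapsto c_k\otimes c_k$ and so permute sign patterns. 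Invariance under them forces the diagonal entries to depend only on the number $m$ of minus signs, i.e.\ on the $\Lambda$-eigenvalue $2n-2m$.
\end{itemize}
Hence $\mathbb{E}\,(\ketbra{\phi})^{\otimes 2}=\Pi_0/\binom{2n}{n}$. Since the $c_j$ generate the full matrix algebra, $\mathbb{E}\,\ketbra{\phi}=I/2^n$. So $X=2^n|\langle\phi|\Psi\rangle|^2$ is unbiased and satisfies $\mathbb{E}X^2\le 4^n\binom{2n}{n}^{-1}\norm{\Psi}^4\le 2\sqrt{n}\,\norm{\Psi}^4$.

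If you prefer to keep the even-sector $SO(2n)$ estimator, add the missing step instead. By the argument above, $\ker\Lambda$ is irreducible under $G_n$. $Z^{\otimes n}\otimes I$ splits it into nonzero even--even and odd--odd invariant pieces. An irreducible representation restricts to an index-two subgroup, here the parity-preserving Gaussian unitaries, as a sum of at most two irreducibles. Every nonzero proper invariant subspace, and in particular the even--even piece, is therefore irreducible. The rest of your argument then goes through unchanged.
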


The procedure in~\cref{lem:fastnorm} can be slightly modified by using a median of means argument as done in~\cite[Section 4.1]{Bravyi_2019} (see also e.g.~\cite[Lemma 10]{reardonsmith2024improved}) for a similar fast norm algorithm for linear combinations of stabilizer states. This gives an improvement in the runtime from~$O(n^{7/2} \epsilon^{-2} p_f^{-1} \chi)$ to~$O(n^{7/2}  \epsilon^{-2} \log(1/p_f) \chi)$. We refer to this procedure as \textsc{FastNorm}.

Including $\textsc{FastNorm}$, the bit-string sampling algorithm relies on four main subroutines, as follows.
\begin{itemize}
\item \textsc{SparsifyCircuit}: Given an integer $k$ and a description of a unitary circuit as a list of $T$ unitary operators with known decomposition, returns a list of $k$ Gaussian circuits of length $T$, equipped with phase and normalization factor, in time $O(kT)$. (Check~\cref{algo:sparsifyCircuit} in~\cref{sec:sampling_simulator_details} for the detailed algorithm.)
\item \textsc{GaussianEvolve}: The subroutine called ``evolve'' in~\cite{Dias_2024}, this takes as input a description of a Gaussian state $\ket{\phi}$ and Gaussian unitary operation $V$, and returns a phase-sensitive description of the updated state $V\ket{\phi}$ in time $O(n^3)$.
\item \textsc{EvolveCircuit}: Given a list of $k$ Gaussian circuits $V_j$, each of gate count $T$, coefficients $c_j$ and initial Gaussian state $\ket{g}$, return a description of the final state $\sum_{j=1}^k c_j V_j \ket{g}$ as a superposition of Gaussian states, in time $O(k T n^3)$. Technically this is achieved by repeated calls to \textsc{GaussianEvolve}. (Check~\cref{algo:evolveCircuit} in~\cref{sec:sampling_simulator_details} for the detailed algorithm.)
\item \textsc{FastNorm}: Given a vector $\ket{\Psi}$ described as a linear combination of $k$ Gaussian states, returns an  estimate of $\norm{\Psi}^2$ up to multiplicative precision $\epsilon$ with probability $1-p_f$, in time $O(n^{7/2}  \epsilon^{-2} \log(1/p_f) k)$.
\end{itemize}

The main algorithm then proceeds as follows.
\begin{enumerate}
\item For each convex-unitary channel $\chan_t $ in the composed circuit $\chan = \chan_T \circ \chan_{T-1} \circ \ldots \circ \chan_1$, query the oracle $D$ to obtain at random a unitary evolution $U_t$, thereby selecting a unitary trajectory through the noisy circuit, $U = \prod_{t=1}^T U_t$. Suppose for simplicity that the optimal decomposition of each $U_t$ is known. Then the cost for each unitary gate is $\xi(U_t)$, and the total cost for this run of the algorithm will be $E = \prod_{t=1}^T \xi(U_t)$.
\item Set $k = \left\lceil 4 E / \delta \right\rceil$, then call \textsc{SparsifyCircuit} for the unitary trajectory $U$ to obtain a sparsified circuit $U_{\text{sparse}} = \sum_{j=1}^k c_k V_k$, where each $V_k$ is a Gaussian unitary circuit, where $\delta$ is a parameter we call the sparsification error. \label{step:sparsifycircstep} 
\item  Call \textsc{EvolveCircuit} to obtain a sparsified approximation of the final state for the selected unitary trajectory, $\ket{\Omega} = U_{\text{sparse}} \ket{0^n} \approx U \ket{0^n}$.
\item For $i=1$ to $w$: \label{step:FastNormEst}
\begin{enumerate}
\item Call \textsc{FastNorm} to estimate $\Pr(x_i=0) = \norm{\Pi_{i}\ket{\Omega}}^2/\norm{\ket{\Omega}}^2$, the probability of obtaining the $+1$ outcome for a $Z$ measurement on the $i$-th qubit. \
\item Randomly select $x_i=0$ with probability $\Pr(x_i=0) $, otherwise $x_i=1$.
\item Make the update
\begin{equation}
\ket{\Omega} \leftarrow \frac{(I + (-1)^{x_i} Z_i)\ket{\Omega}}{\norm{(I + (-1)^{x_i} Z_i)\ket{\Omega}}} \ . 
\end{equation}
\end{enumerate}
\item Output the string $x_1, \ldots, x_w$.
\end{enumerate}
It should be clear that averaging over all possible unitary trajectories, and ignoring the sparsification step, the ensemble of final pure states generated by the algorithm is precisely equivalent to the target density operator $\rho = \chan(\op{0^n})$. Then if the calls to \textsc{FastNorm} were replaced by exact computation of $\Tr[\Pi_x U \op{0^n} U^\dagger] $, the distribution of samples returned by the algorithm would exactly replicate the output distribution of the noisy quantum device. 

For the approximate algorithm, by adapting the arguments given in~\cite{Seddon_2021,seddonThesis}, it can be shown that by choosing $k$ as described above (subject to some technical caveats, see~\cref{sec:SPARSE_delta}), the ensemble of sparsified pure states $\ket{\Omega}$ generated by the algorithm, $\rho_S = \sum_{\Omega} \Pr(\Omega) \op{\Omega}/\bra{\Omega}\ket{\Omega}$, is $\delta$-close to the target noisy final state,
\begin{equation}
\norm{\rho_1 - \chan(\op{0^n})}_1 \leq \delta + O(\delta^2) \ .
\end{equation}
Then, by~\cref{lem:fastnorm}, which guarantees the correctness of the \textsc{FastNorm} procedure~\cite{bravyiComplexityQuantumImpurity2017a}, it follows that the distribution of bit strings output by the algorithm is close in $L^1$-norm to the exact distribution $p(x) = \Tr(\Pi_x \chan(\op{0^n}))$. 

For circuits with significant magic, the runtime of the algorithm will be dominated by the $O(k)$ factor in each of the steps \ref{step:sparsifycircstep}-\ref{step:FastNormEst}, since this grows exponentially with the number of magic gates. Since we set $k \propto E = \prod_{t=1}^T \xi(U_t)$, we have the expected linear dependence on the (product of) unitary extent. Then by~\cref{def:oraclecost}, the average runtime of the algorithm is linear in the oracle cost $\prod_{t=1}^T E(\mathcal{D},\chan_t)$. In the case where the oracle is optimal over the channels $\chan_t$ involved in the circuit, the cost will be linear in $\prod_{t=1}^T \Xi(\chan_t)$.  Factors polynomial in the number of qubits $n$, the string length $w$, the circuit length $T$, and the precision $\epsilon^{-1}$ arise from the calls to \textsc{GaussianEvolve} and \textsc{FastNorm}. See~\cref{sec:sampling_simulator_details} for the full technical proof for this setting.

In sections~\cref{sec:Udecomp} and~\cref{sec:noisydecomps} we showed that an optimal and efficient oracle exists for many important two-qubit gates and channels by giving simple analytic decompositions. In particular, in~\cref{sec:noisydecomps}  we saw that noise often reduces the overhead of simulating a circuit in this context. For certain noisy gates where the optimal decomposition is not known, we gave decompositions with cost reduced relative to the noiseless case.

For specific noisy channels we were not able to find an improved decomposition in terms of convex-unitary channels alone. Instead we obtained improved average cost by augmenting the set of convex-unitary channels with adaptive Gaussian channels. For this reason, in~\cref{sec:SPARSE_adapt} we describe a novel sparsification scheme for circuits where the sequence of gates is conditioned on mid-circuit measurements, and prove that the sparsified maps output from this procedure satisfy a generalized version of the ensemble sampling lemma introduced the non-adaptive case~\cite{Seddon_2021,seddonThesis}. Then in~\cref{sec:modified_simulator_sparsifications} we sketch how the bit-string sampling algorithm can be extended to such circuits.  In these cases we must contend with the fact that the mid-circuit measurement probabilities are state-dependent and not known in advance, and must be estimated on the fly by the simulator.

\section{Approximating non-Gaussian channels with sparsified circuits\label{sec:SPARSE}}

In this section we make the link between the Gaussian extent of a state or operation and its simulation cost, by reviewing the sparsification techniques of~\cite{Bravyi_2019,Seddon_2021} in~\cref{sec:SPARSE_delta} and discussing their application to fermionic states before showing how sparsification can be extended to more general fermionic circuits in~\cref{sec:SPARSE_adapt}. This connection motivates the programme of work reported in prior sections, where we gave optimal and improved decompositions of non-Gaussian states, unitary gates and more general channels, with respect to the relevant resource measure.
\subsection{Sparsification of Gaussian states \label{sec:SPARSE_delta}} 
In Ref.~\cite{Bravyi_2019} the authors introduce a routine that, given a superposition~$\ket{\psi}$, randomly generates a sparsified vector~$\ket{\Omega}$ close to~$\ket{\psi}$ in expectation. The degree of approximation is established by the so-called sparsification lemma, see~\cite[Lemma 6]{Bravyi_2019} and also~\cite[Corollary 1]{Seddon_2021}, which establishes that a sparsification~$\ket{\Omega}$ is close to~$\ket{\psi}$ on average. Although the sparsification algorithm and lemma were first established for superpositions of stabilizer states, they hold also for superpositions of Gaussian states. For completeness we give here the sparsify routine. 

\begin{algorithm}
\caption{Sparsify}
\begin{algorithmic}[1]
\Require (description of) state~$\ket{\psi} = \sum_j c_j \ket{\psi_j}$, integer~$k\in\bb{N}$
\Ensure vector~$\ket{\Omega}$ with~$k$ terms satisfying~\cite[Lemma 6]{Bravyi_2019}
\For{$m = 1$ to~$k$}
    \State choose index~$j$ with probability~$|c_j| / \|c\|_1$
    \State Set~$\ket{\omega_m} \gets c_j / |c_j| \ket{\psi_j}$
\EndFor
\State \textbf{return}~$\displaystyle \ket{\Omega} = (\|c\|_1 / k ) \sum_{m=1}^k \ket{\omega_m}$
\end{algorithmic}
\end{algorithm}

In Ref.~\cite{Seddon_2021} the authors establish a new sparsification lemma (see~\cite[Lemma 16]{Seddon_2021}) which guarantees that the ensemble~$\rho_1$ of normalized sparsifications
\begin{align}
	\label{eq:rho1}
	\rho_1 = \sum_\Omega \Pr(\Omega) \frac{\ketbra{\Omega}}{\langle{\Omega}|{\Omega}\rangle}
\end{align}
is close to the state~$\ketbra{\psi}$. 
The proof of~\cite[Lemma 16]{Seddon_2021} relies on two lemmas: the ensemble sampling lemma~\cite[Lemma 17]{Seddon_2021} and the variance bound~\cite[Lemma 18]{Seddon_2021}. Since these do not make use of the structure of stabilizer states, only on the notion of superpositions over free states, they hold true more generally. In particular, in the fermionic case, the sparsification lemma can be stated as follows.
\begin{lemma}[Sparsification lemma, see Lemma 16 in~\cite{Seddon_2021} for the stabilizer analogue]\label{sec:stateSparseLemma}
Let~$\rho_1$ be the mixed state in Eq.~\eqref{eq:rho1}. Let~$\ket{\psi}$ be an input state with known decomposition~$\ket{\psi} = \sum_{j} c_j \ket{\phi_j}$, where~$\ket{\phi_j}$ are Gaussian states. Let~$c$ be the vector whose elements are the coefficients~$c_j$, and 
\begin{align}
	\cte(\psi) = \|c\|_1 \sum_j |c_j| \cdot |\langle{\psi}|{\phi_j}\rangle|^2 \ .
	\label{eq:cteC}
\end{align}
Then there is a critical precision~$\delta_c = 8(\cte(\psi) - 1)/\|c\|_1^2$ such that for every target precision~$\delta_S$ satisfying~$\delta_S > \delta_c$, one can sample pure states from an ensemble~$\rho_1$, where every pure state drawn from~$\rho_1$ has Gaussian rank at most~$ k = \lceil 4 \|c\|_1^2 / \delta_S \rceil$ and 
$$\|\rho_1 - \ketbra{\psi}\|_1 \leq \delta_S + O(\delta_S^2) \ .~$$
\end{lemma}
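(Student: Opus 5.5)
The plan is to follow the structure of the stabilizer proof in~\cite{Seddon_2021}, splitting the argument into an ensemble sampling step and a variance bound, and then combining them with a choice of $k$. First, I run \textsc{Sparsify} on $\ket{\psi}=\sum_j c_j\ket{\phi_j}$ with $k=\lceil 4\|c\|_1^2/\delta_S\rceil$, producing the random vector $\ket{\Omega}=(\|c\|_1/k)\sum_{m=1}^k\ket{\omega_m}$, with the $\ket{\omega_m}$ drawn i.i.d.; by construction $\ket{\Omega}$ has Gaussian rank at most $k$. Since each $\ket{\omega_m}$ has mean $\mathbb{E}\ket{\omega_m}=\ket{\psi}/\|c\|_1$ and the draws are independent, a direct computation gives
\begin{align}
\mathbb{E}\,\ketbra{\Omega} = \Big(1-\tfrac1k\Big)\ketbra{\psi} + \frac{\|c\|_1}{k}\sum_j |c_j|\,\ketbra{\phi_j} \ .
\end{align}
Taking traces yields $\mathbb{E}\langle\Omega|\Omega\rangle = 1 + (\|c\|_1^2-1)/k$, and hence $\|\mathbb{E}\ketbra{\Omega}-\ketbra{\psi}\|_1\le 2\|c\|_1^2/k$. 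Only linearity and the Born-type overlaps are used here, never any structure specific to stabilizer states, so everything transfers verbatim to Gaussian superpositions.

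Second, I pass from the unnormalized mean to the normalized ensemble $\rho_1=\mathbb{E}[\ketbra{\Omega}/\langle\Omega|\Omega\rangle]$; this is the ensemble sampling lemma. I write
\begin{align}
\rho_1-\mathbb{E}\ketbra{\Omega} = \mathbb{E}\Big[\ketbra{\Omega}\Big(\frac{1}{\langle\Omega|\Omega\rangle}-1\Big)\Big] \ ,
\end{align}
whose trace norm is at most $\mathbb{E}\,|1-\langle\Omega|\Omega\rangle|\le \sqrt{\mathbb{E}(\langle\Omega|\Omega\rangle-1)^2}$ by Cauchy--Schwarz. The right-hand side equals the square root of $\Var\langle\Omega|\Omega\rangle$ plus the squared bias $(\mathbb{E}\langle\Omega|\Omega\rangle-1)^2$. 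Combining with the first step gives
\begin{align}
\|\rho_1-\ketbra{\psi}\|_1\le 2\|c\|_1^2/k+\sqrt{\Var\langle\Omega|\Omega\rangle}
\end{align}
up to lower-order terms.

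Third, I bound the variance, which I expect to be the main obstacle, as it requires a fourth-moment computation. I would expand $\langle\Omega|\Omega\rangle^2$ as a sum over quadruples $(m_1,m_2,m_3,m_4)$ and classify the terms by their index coincidence patterns. The dominant contributions come from the terms where exactly one index is shared between a bra-ket pair. Those terms are governed by $\sum_j (|c_j|/\|c\|_1)\,|\langle\psi|\phi_j\rangle|^2\,\|c\|_1^2$, which is exactly $\cte(\psi)$. The result should be
\begin{align}
\Var\langle\Omega|\Omega\rangle\le \frac{4(\cte-1)}{k}+O\big(\|c\|_1^4/k^2\big) \ .
\end{align}
Substituting $k\approx 4\|c\|_1^2/\delta_S$, the first term becomes $\delta_S(\cte-1)/\|c\|_1^2$, so $\sqrt{\Var}\le \delta_S/2\cdot\sqrt{\delta_c/(2\delta_S)}+O(\delta_S)$, where the critical threshold $\delta_c=8(\cte-1)/\|c\|_1^2$ appears naturally. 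When $\delta_S>\delta_c$, this contribution is below $\delta_S/2$, and $2\|c\|_1^2/k\le\delta_S/2$. Together these give $\|\rho_1-\ketbra{\psi}\|_1\le\delta_S+O(\delta_S^2)$. The careful bookkeeping of the $k^{-2}$ terms, in order to keep them at order $O(\delta_S^2)$, is the routine but delicate part, and I would copy it from~\cite[Lemma 18]{Seddon_2021}.
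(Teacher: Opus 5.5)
Your overall route matches the paper's: it imports Lemmas 17--18 of~\cite{Seddon_2021} and reproves them for adaptive circuits in the appendix. Your first-moment formula for $\mathbb{E}\,|\Omega\rangle\langle\Omega|$ is correct. The gap is that two quantities you discard as lower order are in fact first order in $\delta_S$. They are exactly what fixes the constant $1$ in $\delta_S+O(\delta_S^2)$ and the factor $8$ in $\delta_c$.

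\textbf{Step 2.} You compare $\rho_1$ with the \emph{unnormalized} mean, so the squared bias $(\mu-1)^2$, with $\mu=\mathbb{E}\langle\Omega|\Omega\rangle$, sits under the square root. But $\mu-1=(\|c\|_1^2-1)/k\approx\delta_S/4$, which is first order. Combined with your Step 1 bound, you actually get about $3\|c\|_1^2/k+\sqrt{\Var}$, not $2\|c\|_1^2/k+\sqrt{\Var}$. This is repairable in either of two ways.
\begin{itemize}
\item Use $\sqrt{\Var+(\mu-1)^2}\le\sqrt{\Var}+(\mu-1)$ together with the sharper bound $\|\mathbb{E}|\Omega\rangle\langle\Omega|-|\psi\rangle\langle\psi|\|_1\le(\|c\|_1^2+1)/k$. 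The $1/k$ pieces then combine to exactly $2\|c\|_1^2/k$.
\item As the paper does, compare with $\rho_2=\mathbb{E}(\Omega)/\mu$ and use $\mu\ge1$, so no bias term arises.
\end{itemize}

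\textbf{Step 3.} The variance term $2(\|c\|_1^2/k)^2\approx\delta_S^2/8$ contributes about $\delta_S/(2\sqrt2)$ to $\sqrt{\Var}$. So the ``$+O(\delta_S)$'' in your last display does not go away. The threshold also does not come from the $\cte$ term alone: your $\tfrac{\delta_S}{2}\sqrt{\delta_c/(2\delta_S)}$ is already below $\delta_S/2$ once $\delta_S>\delta_c/2$. You need the explicit bound
\[
\Var\langle\Omega|\Omega\rangle\le\frac{4(\cte-1)}{k}+2\Big(\frac{\|c\|_1^2}{k}\Big)^2+\frac{10-12\cte}{k^2}+O\Big(\frac{\cte}{k^3}\Big).
\]
\begin{itemize}
\item The coefficient $2$ comes from the two fully-shared index patterns ($p=s,q=r$ and $p=r,q=s$), each bounded by $k(k-1)$ via $|\langle\phi_v|\phi_w\rangle|\le1$.
\item You also need $\cte\ge1$, which follows from Cauchy--Schwarz applied to $1=\sum_j c_j\langle\psi|\phi_j\rangle$. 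This makes the $(10-12\cte)/k^2$ term non-positive; otherwise it would be another first-order contribution.
\end{itemize}
Then $k\ge4\|c\|_1^2/\delta_S$ gives $\Var\le\delta_S(\delta_c+\delta_S)/8+O(\delta_S^3)$. Keeping both terms under a single root yields $\sqrt{\Var}\le\delta_S/2+O(\delta_S^2)$ precisely when $\delta_c\le\delta_S$. Adding $2\|c\|_1^2/k\le\delta_S/2$ then gives the lemma as stated.
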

Recall that the runtime of each of the subroutines \textsc{EvolveCircuit} and \textsc{FastNorm} is linear in this effective rank $k\propto \norm{c}_1^2$, so it is this that ultimately leads to classical simulation methods whose runtime scales linearly with extent. It is further noted in~\cite{Seddon_2021} that, if required, precision $\delta_S < \delta_c$ is still possible to achieve by increasing the effective Gaussian rank to
\begin{equation}
k = 4 \norm{c}_1^2 \left( \frac{\cte-1}{\delta_S \norm{c}_1^2} + \frac{1}{\delta_S} \right) + O(1).
\end{equation}
 It can be argued that the ratio $\cte/ \norm{c}^2_1$ is typically exponentially small for many-qubit magic states~\cite{Seddon_2021}, which can be understood as follows. Since the states $\ket{\phi_j}$ are Gaussian, $\cte$ can be upper bounded as follows
\begin{align}
\cte(\psi) = \|c\|_1 \sum_j |c_j| \cdot |\langle{\psi}|{\phi_j}\rangle|^2 \leq  \|c\|^2_1 F(\psi),
\end{align} 
where $F(\cdot)$ is the Gaussian fidelity. Because we expect the Gaussian fidelity to be exponentially small for non-trivial many-qubit magic states, we should find that the ratio $\cte(\psi) / \|c\|^2_1$ decays exponentially with the number of qubits.	Indeed, the quantity can be computed exactly and efficiently for tensor product magic states in cases where the extent is multiplicative. Moreover, in the stabilizer setting,~\cite{Seddon_2021} showed that for the class of magic states known as Clifford magic states, the critical precision is exactly zero, so for these states arbitrary precision is always achievable with rank scaling as $O( \norm{c}_1^2/\delta_S)$. 

Here we give the value of $\cte$ for several non-Gaussian states in the fermionic setting, finding that indeed several important states satisfies $\cte(\psi) = 1$ for their optimal decompositions, implying $\delta_c=0$. In general, given a state with known decomposition $\ket{\psi} = \sum_j c_j \ket{\phi_j}$, if the Gaussian states $\{\phi_j\}$ are orthogonal (which is known to be the case for the optimal decomposition of key magic states), we have that
\begin{align}
	\cte(\psi) = \| c \|_1 \sum_j |c_j|^3 \quad \Longrightarrow \quad \frac{\cte}{\| c \|^2_1} = \frac{ \sum_j |c_j|^3}{  \sum_k |c_k|} \ .
\end{align}
For a normalized magic state, we have $\sum_j |c_j|^2 = 1$, so  $\sum_j |c_j|^3 < 1$, whereas $\sum_k |c_k| \geq \sqrt{\xi(\psi)} > 1$, so we find that the ratio is smaller than 1, and exponentially so for many-qubit states. 

Consider in particular  the fermionic magic state~\cite{Hebenstreit_2019, reardonsmith2024improved}
\begin{align}
	\label{eq:mtheta_sparse}
	\ket{m(\theta)} = \frac{1}{2} \left( \ket{0000} + \ket{0011} + \ket{1100} + e^{i\theta} \ket{1111} \right)\ ,
\end{align}
which is the resource state for the two-qubit controlled-phase gate $\mathrm{diag}(1,1,1,e^{i\theta})$.  For $\theta=\pi$ this is the resource state for $CZ$, and Gaussian-unitary equivalent to the magic state for SWAP.  This family of states was shown in~\cite{reardonsmith2024improved} to have optimal decomposition of the form 
\begin{equation}
\ket{m(\theta)} =\cos(\frac{\theta}{4}) \ket{\phi_\theta} + \sin (\frac{\theta}{4}) \ket{\phi'_{\theta}}
\end{equation}
where $\ket{\phi_\theta}$ and $\ket{\phi'_\theta}$ are Gaussian and $\bra{\phi_\theta}\ket{\phi'_\theta}=0$. This yields extent
\begin{equation}
\xi(\ket{m(\theta)}) = \left(\abs{\cos(\frac{\theta}{4}) } +  \abs{\sin(\frac{\theta}{4}) }\right)^2 = 1 + \abs{\sin(\frac{\theta}{2})}\ .
\end{equation}
It follows that $\cte$ takes the value
\begin{align}
	\cte(\ket{m(\theta)}) &= \sqrt{\xi(\psi)} \cdot \left( \left|\sin(\frac{\theta}{4})\right|^3 + \left|\cos(\frac{\theta}{4})\right|^3 \right) \label{eq:C_m_theta}
\end{align}
It is straightforward to show that $\sum_j |c_j| \cdot |\langle{\psi}|{\phi_j}\rangle|^2$ is multiplicative for tensor product states $\ket{\psi_t} = \ket{\psi}^{\otimes t}$ when the extent is multiplicative, as is the case for four-mode fermionic magic states~\cite{cudby2024gaussiandecompositionmagicstates,reardonsmith2024fermioniclinearopticalextent}. Therefore we have 
\begin{equation}
\cte(\ket{m(\theta)}^{\otimes t}) = \cte(\ket{m(\theta)})^t\ .
\end{equation}
For the special case $\theta = \pi$, evaluating~\cref{eq:C_m_theta} gives
\begin{equation}
\cte(\ket{m(\pi)}^{\otimes t}) = 1 \quad \forall t\ .
\end{equation}
We saw in section~\cref{sec:extentplus} that the same holds for a class of non-fermionic states that includes, for example, tensor products of the $\ket{+}$ state,
\begin{equation}
\cte\left(\ket{+}^{\otimes t}\right) = 1 \quad \forall t\ .
\end{equation}
We saw in~\cref{sec:Udecomp}  that there is a tight connection between Gaussian decompositions of unitary gates, and those of the magic resource states that can be used to implement them. It was shown in~\cite{seddonThesis} that the ensemble sampling sparsification technique of~\cite{Seddon_2021} can be readily adapted to non-Gaussian unitary circuits (and, by linearity, to convex-unitary non-Gaussian circuits), such that terms from gates composed in sequence are sampled from in a similar way to terms from subsystem states in a product state, so we omit the full description here. Instead, in the next subsection we report a new result for sparsification of circuits that include intermediate non-unitary operations.

\subsection{Ensemble sampling lemma for adaptive circuits\label{sec:SPARSE_adapt}}

In this section we develop an analogue  for sparsifying adaptive circuits of the form 
\begin{equation}
\chan = \curlU_T \circ \curlA_T \circ \ldots \curlU_1 \circ \curlA_1 \circ \curlU_0  
\end{equation}
where $\curlU_t$ denotes the unitary channel $U_t(\cdot)U_t^\dagger$ and $\curlA_t$ is an adaptive channel with Kraus operators $K_{t,0}$ and $K_{t,1}$.  We will assume that the unitaries $U_t$ are in general non-free gates with known decomposition $U_t = \sum_j \alpha_{t,j} V_{t,j}$ for free gates $V_{t,j}$, whereas $K_{t,m}$ are free operators (for example, in the fermionic Gaussian case, projectors onto computational basis states composed with a Gaussian unitary). Replacing $ \curlA_t$ with Gaussian unitary operations recovers the unitary-case variant of~\cite{seddonThesis}. Then we can write the channel as a sum over trajectories,
\begin{equation}
\chan = \sum_{\arr{m}} K_{\arr{m}} (\cdot) K_{\arr{m}}^\dagger, \quad \text{where} \quad K_{\arr{m}}=U_T K_{T,m_T} \ldots U_1 K_{1,m_1} U_0.
\label{eq:kraus_sparified}
\end{equation}
Here the summation is over all bit strings $\arr{m}$ of length $T$. Note that for any initial free state $\ket{\phi_0}$, we can expand 
\begin{align}
K_{\arr{m}} \ket{\phi_0} &= \sum_{v_T} \alpha_{T,v_T} V_{T,v_T}  K_{T,m_T} \ldots \sum_{v_1} \alpha_{1,v_1} V_{1,v_1} K_{1,m_1} \sum_{v_0} \alpha_{0,v_0} V_{0,v_0} \ket{\phi_0} \\
			& = \sum_{v_0,\ldots,v_T} c_{v_0,\ldots,v_T} |\phi_{\arr{m},v_0,\ldots,v_T}\rangle = \sum_{\arr{v}} c_{\arr{v}} |\phi_{\arr{m},\arr{v}}\rangle  \label{eq:origdecomp}
\end{align}
where $|\phi_{\arr{m},v_0,\ldots,v_T}\rangle$ are subnormalized free states. The form of each vector
\begin{align}
	|\phi_{\arr{m},\arr{v}} \rangle  = |\phi_{\arr{m},v_0,\ldots,v_T} \rangle = V_{T,v_T} K_{T,m_T} \cdots V_{1,v_1} K_{1,m_1} V_{0,v_0} | \phi_0 \rangle
	\label{eq:phimv}
\end{align}	
depends on the sequence of measurement outcomes $\arr{m}$, but the coefficients 
\begin{align}
	c_{\arr{v}} = c_{v_0,\ldots,v_T} =  \prod_{t=0}^T \alpha_{t,v_t}
	\label{eq:defc}
\end{align}
do not, depending only on the unitary operations in the sequence. We will use the shorthand $\norm{c}_1 = \sum_{v_0,v_1,\ldots,v_T} \abs{c_{v_0,\ldots,v_T}}$. This is equal to the product of the $L^1$-norms of the individual gate decompositions. In general $\ket{\psim} = K_{\arr{m}} \ket{\phi_0}$ will be subnormalized, but since each adaptive channel is CPTP we have 
\begin{equation} 
\sum_{\arr{m}}\Tr[K_{\arr{m}} \op{\phi_0} K_{\arr{m}}^\dagger] = 1 \ .
\end{equation}
We define the sparsification of the final state $\rho_f = \chan(\op{\phi_0})$ as follows. First we specify that we require~$k$ terms in the specification. Then:
\begin{enumerate}
\item \label{it:sparsification1}For~$i = 1$ to~$k$:
\begin{enumerate}
\item For~$t = 1$ to~$T$:\label{step:samplegate}
\begin{enumerate}
\item Sample an index~$v$ from the decomposition of~$U_t$ with probability~$\abs{\alpha_{t,v}}/\norm{\alpha_t}_1$, where~$\norm{\alpha_t}_1 = \sum_v \abs{\alpha_{t,v}}$, and set~$W_{i,t} = (\alpha_{t,v}/\abs{\alpha_{t,v}})V_{t,v}$. 
\end{enumerate}
\item Let~$S_{i,\arr{m}} = (\prod_{t=T}^1 W_{i,t} K_{t,m_t}) W_{i,0}$. \label{step:composegates}
\end{enumerate}
\item \label{it:sparsification2} Let~$S_{\arr{m}}$ be defined by
\begin{equation}
S_{\arr{m}} =  \frac{\norm{c}_1}{k} \sum_{i=1}^k S_{i,\arr{m}} \ . \label{eq:sparseKraus}
\end{equation} 
\item \label{it:sparsification3} Let the sparsification of the final state of the circuit be defined by
\begin{equation}
\Omega = \sum_{\arr{m}} S_{\arr{m}}\op{\phi_0} S^\dagger_{\arr{m}} = \sum_{\arr{m}} \op{\Omega_{\arr{m}}} \ .  \label{eq:sparseFinalStateAdapt}
\end{equation}
\end{enumerate}
We note that this description of the sparsified final state does not need to be stored explicitly in the context of a simulation algorithm. One need only store the $T$ sampled Gaussian unitaries and the total phase for each of the $k$ terms in the sparsified Kraus operator in~\cref{eq:sparseKraus}. This pattern of Gaussian unitary gates does not depend on $\arr{m}$, which indexes each trajectory through the adaptive circuit, so we use the same sparsification pattern for all $\arr{m}$ and the space required is $O(kT)$. It can be shown that the following results hold.
\begingroup
\renewcommand{\thelemma}{\ref{lem:ensembleAdapt}}
\begin{lemma}[Restated]
The procedure defined in~\cref{it:sparsification1,it:sparsification2,it:sparsification3}, which has runtime $O(kT)$, returns a description of a sparsified state $\Omega$ with probability $Pr(\Omega)$, such that the expected normalized state defined by
\begin{equation}
\rho_1 = \mathbb{E}\left( \frac{\Omega}{\Tr[\Omega]} \right) = \sum_\Omega \Pr(\Omega) \frac{\Omega}{\Tr[\Omega]}
\end{equation}
satisfies
\begin{equation}
\norm{\rho_1 - \chan(\op{\phi_0})}_1 \leq  2  \frac{\norm{c}_1^2}{k} + \sqrt{\Var(\Tr[\Omega])}\ . \label{eq:ensembleresult}
\end{equation}
\end{lemma}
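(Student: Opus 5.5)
The plan follows the ensemble sampling lemma of~\cite[Lemma 17]{Seddon_2021}. The strategy is to split the error into a bias term for the unnormalized operator and a normalization-fluctuation term, using the triangle inequality:
\begin{equation}
\norm{\rho_1 - \rho_f}_1 \leq \norm{\mathbb{E}\left(\tfrac{\Omega}{\Tr\Omega}\right) - \mathbb{E}(\Omega)}_1 + \norm{\mathbb{E}(\Omega) - \rho_f}_1 ,
\end{equation}
where $\rho_f = \chan(\op{\phi_0}) = \sum_{\arr{m}} \op{\psim}$. The runtime claim $O(kT)$ is immediate, since the procedure makes $kT$ independent index draws.

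\textbf{Step 1: the second-moment computation.} Each $S_{i,\arr{m}}$ is built from the same random gate pattern $W_{i,0},\ldots,W_{i,T}$ for every $\arr{m}$. With the coefficients $c_{\arr{v}}$ of \cref{eq:defc}, a single pattern is drawn with probability $\abs{c_{\arr{v}}}/\norm{c}_1$ and gives
\begin{equation}
S_{i,\arr{m}}\ket{\phi_0} = \frac{c_{\arr{v}}}{\abs{c_{\arr{v}}}}\ket{\phimv{m}{v}} .
\end{equation}
Hence $\mathbb{E}(S_{i,\arr{m}}\ket{\phi_0}) = \ket{\psim}/\norm{c}_1$, and this holds for each $\arr{m}$ simultaneously, with the same expectation structure across all $\arr{m}$.

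Expanding $\op{\Omega_{\arr{m}}}$ as a double sum over $i,i'$ and using independence of distinct samples $i\neq i'$, I obtain
\begin{equation}
\mathbb{E}(\Omega) = \frac{k(k-1)}{k^2}\rho_f + \frac{\norm{c}_1}{k}\sum_{\arr{v}}\abs{c_{\arr{v}}}\sum_{\arr{m}}\op{\phimv{m}{v}} .
\end{equation}
Call the second operator $\sigma$. It is positive, and its trace is bounded using CPTP-ness:
\begin{equation}
\sum_{\arr{m}}\norm{\phimv{m}{v}}^2 = \sum_{\arr{m}}\Tr\left[\tilde K_{\arr{m},\arr{v}}\op{\phi_0}\tilde K_{\arr{m},\arr{v}}^\dagger\right] = 1 .
\end{equation}
Here $\tilde K_{\arr{m},\arr{v}}$ is the Kraus sequence with the unitaries replaced by the Gaussian unitaries $V_{t,v_t}$. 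The adaptive Kraus sets $\{K_{t,0},K_{t,1}\}$ remain complete when interleaved with arbitrary unitaries, which gives the equality above. Therefore $\Tr\sigma = \norm{c}_1^2$, and
\begin{equation}
\norm{\mathbb{E}(\Omega)-\rho_f}_1 \leq \frac{1}{k}\norm{\rho_f}_1 + \frac{\norm{c}_1^2}{k} \leq \frac{2\norm{c}_1^2}{k},
\end{equation}
using $\norm{c}_1\geq 1$.

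\textbf{Step 2: the normalization term.} For this term I would write
\begin{equation}
\frac{\Omega}{\Tr\Omega} - \Omega = \frac{\Omega}{\Tr\Omega}\,(1-\Tr\Omega) .
\end{equation}
Since $\Omega\geq 0$, the operator $\Omega/\Tr\Omega$ has unit trace norm. Convexity of the trace norm then gives
\begin{equation}
\norm{\mathbb{E}\left(\tfrac{\Omega}{\Tr\Omega}\right)-\mathbb{E}(\Omega)}_1 \leq \mathbb{E}\abs{1-\Tr\Omega} \leq \sqrt{\mathbb{E}(1-\Tr\Omega)^2}
\end{equation}
by Cauchy--Schwarz. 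From Step 1, $\mathbb{E}\Tr\Omega = 1 + O(\norm{c}_1^2/k)$, so
\begin{equation}
\mathbb{E}(1-\Tr\Omega)^2 = \Var(\Tr\Omega) + (\mathbb{E}\Tr\Omega-1)^2 .
\end{equation}

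\textbf{Main obstacle.} The difficulty is obtaining exactly the stated constants rather than an extra bias-squared contribution. Following~\cite{Seddon_2021}, I would try first to avoid the bias term entirely: replace $\mathbb{E}(\Omega)$ in the triangle inequality by $\mathbb{E}(\Omega)/\mathbb{E}\Tr\Omega$ and compare $\Omega/\Tr\Omega$ with $\Omega/\mathbb{E}\Tr\Omega$. This yields $\mathbb{E}\abs{\Tr\Omega-\mathbb{E}\Tr\Omega}/\mathbb{E}\Tr\Omega \leq \sqrt{\Var}$, since $\mathbb{E}\Tr\Omega\geq 1-1/k$ up to lower-order terms. The bias part is then absorbed into the $2\norm{c}_1^2/k$ term, because normalizing $\mathbb{E}(\Omega)$ only rescales the computation in Step 1. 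The care needed is in tracking the $(k-1)/k$ factor so that the total bias stays at most $2\norm{c}_1^2/k$.
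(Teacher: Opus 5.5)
Your proposal is correct. The fallback in your last paragraph is exactly the paper's proof. The paper sets $\rho_2=\mathbb{E}(\Omega)/\mu$ with $\mu=\mathbb{E}(\Tr\Omega)$, and bounds $\norm{\rho_1-\rho_2}_1\le \mu^{-1}\,\mathbb{E}\abs{\Tr\Omega-\mu}\le\sqrt{\Var(\Tr\Omega)}$. With your $\sigma$ (trace $\norm{c}_1^2$) it then uses $\mathbb{E}(\Omega)-\mu\rho_f=\frac{1}{k}\left(\sigma-\norm{c}_1^2\rho_f\right)$ to get $\norm{\rho_2-\rho_f}_1\le 2\norm{c}_1^2/(k\mu)$.

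The one thing to fix is your estimate $\mu\gtrsim 1-1/k$. Taking the trace of your own Step 1 formula gives exactly $\mu=1+(\norm{c}_1^2-1)/k\ge 1$. You need $\mu\ge 1$ to drop the $1/\mu$; a bound of only $1-1/k$ would leave a factor $k/(k-1)$ on $\sqrt{\Var(\Tr\Omega)}$.

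Your original split (comparing with the unnormalized $\mathbb{E}(\Omega)$) was also never obstructed. Keep the Step 1 bound as $(1+\norm{c}_1^2)/k$ instead of weakening it via $\norm{c}_1\ge 1$. In Step 2, use $\mathbb{E}\abs{1-\Tr\Omega}\le \mathbb{E}\abs{\Tr\Omega-\mu}+(\mu-1)\le \sqrt{\Var(\Tr\Omega)}+(\norm{c}_1^2-1)/k$, which is equivalent to $\sqrt{a+b^2}\le\sqrt{a}+b$. The two bias contributions then add to exactly $2\norm{c}_1^2/k$, which is the stated bound.

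So both routes work. The paper's normalized comparison is marginally tighter: its bound carries an overall factor $1/\mu\le 1$, which is then discarded. Yours never divides by $\mu$.
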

\endgroup

\begingroup
\renewcommand{\thelemma}{\ref{lem:varianceAdapt}}
\begin{lemma}[Restated]
The variance in the normalization of the randomly drawn operator $\Omega$ from~\cref{lem:ensembleAdapt} can be bounded as
\begin{equation}
\Var(\Tr \Omega ) \leq \frac{4(\cte - 1)}{k} + 2 \left( \frac{\norm{c}_1^2}{k} \right)^2 + \frac{10 - 12\cte}{k^2} +  O\left( \frac{\cte}{k^3}\right).
\end{equation}
\end{lemma}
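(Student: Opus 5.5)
The plan is to write $\Tr\Omega$ as a U-statistic in the $k$ independent samples and compute its variance term by term, following the variance bound for the non-adaptive case (\cite[Lemma 18]{Seddon_2021}, \cite{seddonThesis}). Throughout, the one change from that case is that inner products are summed over the trajectory label $\arr{m}$.

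\emph{Setup.} For sample $i$, write $\arr{w}_i$ for the sampled index sequence. Set $\ket{\omega_{i,\arr{m}}} = S_{i,\arr{m}}\ket{\phi_0} = (c_{\arr{w}_i}/|c_{\arr{w}_i}|)\ket*{\phi_{\arr{m},\arr{w}_i}}$. Then
\begin{equation}
\Tr\Omega = \frac{\norm{c}_1^2}{k^2}\sum_{i,j=1}^k X_{ij}, \qquad X_{ij} = \sumvec{m} \bra{\omega_{i,\arr{m}}}\ket{\omega_{j,\arr{m}}} .
\end{equation}
The sequences $\arr{w}_i$ are i.i.d., with $\Pr(\arr{w}) = |c_{\arr{w}}|/\norm{c}_1$ by \cref{eq:defc}. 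Hence $\mathbb{E}\,\ket{\omega_{i,\arr{m}}} = \ket{\psim}/\norm{c}_1$ for each fixed $\arr{m}$.

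\emph{Diagonal terms.} Each $V_{t,v}$ is unitary and the $K_{t,m}$ form a CPTP adaptive channel. So for fixed $\arr{w}$ the vectors $\ket*{\phimv{m}{w}}$ are the trajectory branches of a genuine channel applied to $\ket{\phi_0}$. It follows that $X_{ii} = \sumvec{m}\norm{\phimv{m}{w}}^2 = 1$ deterministically, so the diagonal contributes no variance.

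\emph{Off-diagonal terms, $i\neq j$.} By independence, $\mathbb{E}X_{ij} = \sumvec{m}\norm{\psim}^2/\norm{c}_1^2 = 1/\norm{c}_1^2$. The variance of $\sum_{i\neq j}X_{ij}$ then splits into three kinds of covariance according to index overlap.
\begin{enumerate}[(i)]
\item Disjoint pairs are independent and contribute zero.
\item Pairs sharing exactly one index number $O(k^3)$. Their covariances reduce, after averaging the unshared samples, to
\begin{equation}
\mathbb{E}_i\Big|\sumvec{m}\bra*{\psim}\ket{\omega_{i,\arr{m}}}\Big|^2\Big/\norm{c}_1^2 = \frac{\norm{c}_1^{-1}\sum_{\arr{w}}|c_{\arr{w}}|\,\big|\sumvec{m}\bra*{\psim}\ket*{\phimv{m}{w}}\big|^2}{\norm{c}_1^{2}} = \frac{\cte}{\norm{c}_1^4},
\end{equation}
minus the product of the means, $\norm{c}_1^{-4}$. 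This gives the leading $\cte - 1$ contribution.
\item Pairs $(i,j)$ with $(i,j)$ or $(j,i)$ repeated number $O(k^2)$. Here I would use Cauchy--Schwarz twice: $|X_{ij}| \le \sumvec{m}\norm{\omega_{i,\arr{m}}}\norm{\omega_{j,\arr{m}}} \le (X_{ii}X_{jj})^{1/2} = 1$, so $\mathbb{E}|X_{ij}|^2\le 1$.
\end{enumerate}

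\emph{Assembly and main obstacle.} Multiplying by $\norm{c}_1^4/k^4$ and collecting terms should give $4(\cte-1)/k$ from (ii) and $2(\norm{c}_1^2/k)^2$ from (iii). The remaining exact counting factors, such as $k(k-1)(k-2)$ expanded against $k^4$, should produce the $(10-12\cte)/k^2$ and $O(\cte/k^3)$ corrections. I expect the main obstacle to be this bookkeeping. In particular, one must check that the shared-index case covers the four placements of the shared index (row/row, row/column, and so on), each of which evaluates to $\cte/\norm{c}_1^4$ or its complex conjugate and hence gives the factor 4. One must also check that the sum over $\arr{m}$ always stays inside the modulus, so that the result involves $\cte$ as defined and not the weaker $\sum_{\arr{m}}|\cdot|$ version.
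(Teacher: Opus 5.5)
Your proposal is correct and follows the paper's proof essentially step for step. The diagonal inner products sum to $1$ deterministically, terms with four distinct sample indices factorize by independence, the $4k(k-1)(k-2)$ terms with one shared index are controlled by $\cte/\norm{c}_1^4$, and the $2k(k-1)$ terms with two shared indices are bounded by $1$ via Cauchy--Schwarz over trajectories; the paper works with the raw moments $\mathbb{E}(B^2)-(\mathbb{E}B)^2$ rather than covariances, which is the same bookkeeping.

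Two small corrections:
\begin{enumerate}[(a)]
\item In the two placements where the shared sample sits on the same side of both inner products, averaging the other samples gives $\mathbb{E}[Z^2]/\norm{c}_1^2$ with $Z=\sum_{\arr{m}}\bra*{\psim}\ket*{\omega_{i,\arr{m}}}$, not $\mathbb{E}|Z|^2/\norm{c}_1^2$. The covariance there is therefore only bounded by $(\cte-1)/\norm{c}_1^4$, via $|\mathbb{E}(Z-\mathbb{E}Z)^2|\le\mathbb{E}|Z-\mathbb{E}Z|^2$; the paper likewise uses the triangle inequality at this step.
\item In the two-shared-index class you must keep the subtracted $\norm{c}_1^{-4}$ in each covariance. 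Bounding those covariances by $1$ instead yields $(12-12\cte)/k^2$ rather than the stated $(10-12\cte)/k^2$.
\end{enumerate}
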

\endgroup
We leave the full proofs of~\cref{lem:ensembleAdapt,lem:varianceAdapt} to~\cref{sec:sparseTechnical}, since they follow those of~\cite{Seddon_2021} for the case of non-stabilizer mixed states, with some additional steps to account for as yet undetermined trajectories through adaptive circuits. In~\cref{sec:modified_simulator_sparsifications} we sketch how state descriptions of the form \eqref{eq:sparseFinalStateAdapt} can be used in the context of a classical simulation algorithm, leading to a runtime that scales with the cost of each channel as described in~\cref{thm:bitsamplingthm}, except that the decomposition of each channel now admits adaptive Gaussian operations. The optimal cost is given by the augmented channel extent (see~\cref{def:channelextentaugmented}).

\section{Conclusions and discussion \label{sec:conclusion}}

The main motivation and context for our work has been the recently developed framework for classically simulating near-Gaussian fermionic quantum circuits~\cite{Dias_2024,reardonsmith2024improved}, and its relevance to quantum simulation on near-term hardware~\cite{phasecraft2025quantinuum,phasecraft2025google}. In particular~\cref{sec:algorithm} described an algorithm for simulating sampling from the output distribution of a (possibly noisy) quantum circuit, closely related to simulation techniques previously known in the near-Clifford setting~\cite{Bravyi_2016_fastnorm, Bravyi_2019, Seddon_2021, seddonThesis}. The notion of sparsification~\cite{Bravyi_2019, Seddon_2021} considered in~\cref{sec:SPARSE} connects the runtime of this algorithm for a given circuit to the (average) $L^1$-norm of decomposition of resourceful channels into Gaussian gates. Minimizing the $L^1$-norm for some target gate or state (or its average in the case of non-unitary channels or mixed states) corresponds to evaluating the relevant magic monotone (see~\cref{sec:magicmonotones}), for example the unitary extent in the case of unitary operations. Our main contribution is to provide analytic expressions for the provably optimal decomposition of important resource gates and channels in near-Gaussian circuits, and improved feasible decompositions in several other cases. We believe the various improved decompositions we have found can be used to streamline and accelerate classical simulation of near-Gaussian circuits, and we illustrate in~\cref{fig:practicalCosts} how modest improvements in the $L^1$ norm of the available decomposition at the gate level can lead to dramatic differences in runtime for circuits containing many non-Gaussian circuit elements.
We proved optimal decompositions for any two-qubit fermionic gate acting on nearest neighbours -- including the two-qubit controlled-phase gate, the $\SWAP$ gate, controlled-phase gates including $CZ$, and the $ZZ$-rotation. Moreover, we showed that the unitary extent is multiplicative when these gates are applied in parallel, so our two-qubit decompositions immediately extend to optimal decompositions for the gates applied as a layer across many qubits. When the two-qubit gates are applied to qubits that are not nearest neighbour, the decompositions remain optimal if the gates are diagonal. If not, the decompositions are still feasible, but not necessarily optimal. We leave a proof of optimality for future work. We also gave optimal decompositions for the non-parity-preserving single-qubit gates Hadamard and $Y$-rotation. 

The relevant monotone for the decomposition of non-unitary channels into probabilistic mixtures of Gaussian unitary gates is the convex-unitary channel extent. 
Here the quantity we seek to minimize is the expected unitary extent when decomposing the channel into a probabilistic mixture of unitary channels.
We provided an improved decomposition of the $P$-rotation subject to dephasing $P$ noise, where $P$ is any Pauli operator. We showed this decomposition is optimal with respect to convex-unitary extent when $P$ is $Y$ or nearest neighbour $ZZ$. These optimal decompositions have the advantage of being equimagical, meaning that the runtime of the simulator becomes deterministic, so that the $t$-fold product of the convex-unitary extent for $t$ magic gates quantifies the actual runtime for drawing a bit string from the output distribution, rather than the expected runtime.

We also extended the notion of channel extent by introducing the augmented channel extent, supplementing the set of unitary channels with adaptive Gaussian channels, thereby extending the type of decomposition we are able to minimize over, and enabling potential reductions in the effective average extent. In~\cref{sec:SPARSE} we provided a new protocol for sparsification of this new type of decomposition, extending previous results~\cite{Bravyi_2019,Seddon_2021,seddonThesis} guaranteeing closeness of the sparsified approximation in the trace-norm, and thereby maintaining the connection between monotone value and classical simulation cost. We showed that this provides a practical benefit by finding an improved decomposition of the two-qubit $ZZ$-rotation subject to single-qubit $Z$ noise. We proved that there exists no convex decomposition into unitary channels that reduces the average extent relative to the noiseless $ZZ$ rotation; it is only by admitting Gaussian measurement with classical feed-forward that we obtain our improved decomposition. This is somewhat remarkable since the target channel is itself convex-unitary.

While our main focus in this work has been on decompositions of gates and channels, we also found that simulation with $n$ copies of the resource state $\ket{+}$ as input has constant cost: a simulation using $\ket{+}^{\otimes n}$ is as cheap in our setting as a simulation with a single copy $\ket{+}$. This extends to $n$-fold tensor products of any equally weighted superposition $\ket{+_\delta} = \ket{0} + e^{i\delta}\ket{1}$. This apparent gain is offset by the limited usage of the state $\ket{+}$: although it can in principle be used to implement the Hadamard and $Y$-rotation gates, because it is not fermionic it cannot be injected at arbitrary positions in the circuit. Still, the optimal decomposition of $\ket{+_\delta}^{\otimes n}$ can find applications in specific circuit instances where the input state on some subsystem is initialised in a superposition of non-fixed parity for algorithmic reasons \cite{Cleve1998HadTest,Villa2020quench,Sun2025quench}.

We further established that optimal extent decompositions provide optimal fermionic nonlinearity~\cite{Hakkaku_2022} decompositions in the case of unitary gates. We found that our decompositions for the dephased $ZZ$-rotation improve on those obtained numerically by Hakkaku et al.~\cite{Hakkaku_2022} for certain parameter regimes (large angles or small angles with weak noise). At the same time, we observe that the decompositions from~\cite{Hakkaku_2022} retain the advantage in the regime of small angles subject to strong noise. Our results focus on the gate and its convex-roof extension, as well as the extent of the channel. Extent-based simulators are usually preferred to quasiprobability methods because their runtime scale linearly (instead of quadratically) with the relevant magic monotone, which typically leads to improved runtime. This speedup is due to a fast-norm approximation algorithm used to estimate outcome probabilities which can be expressed as norms of linear combinations of Gaussian states (as is the case for projective-valued measurements such as occupation number measurements). It is worth pointing out that this speedup is not available when the task is to compute expectation values of observables that do not have a low-rank decomposition in terms of Gaussian projectors. In this case, it may be beneficial to use quasiprobability simulators if polynomial factors in system size and precision turn out to be smaller in practice. In this situation the fermionic nonlinearity optimal decompositions we propose may be helpful.

Our work has many parallels with previous research in stabilizer-based simulation, but we observe several features that are specific to the fermionic Gaussian setting. Firstly, the set of Gaussian gates is continuous, in contrast to the discrete set of Clifford gates. This makes the optimization problem of finding optimal decompositions of gates and channels harder. Numerically, it requires discretizing the search space, i.e. choosing a discrete set of Gaussian gates or channels into which to decompose. We avoid these obstacles by proving optimal decompositions analytically. Additionally, qubit order matters in the Gaussian setting, as certain gates are only matchgates if acting on nearest neighbours. This constraint is why our decompositions for some two-qubit fermionic gates are optimal only when the addressed qubits are nearest neighbours in the Jordan-Wigner chain -- we leave a generalization to non-nearest neighbours as an open question. Thirdly, we have observed that fermionic magic is highly robust to noise, in contrast to stabilizer magic. Specifically, there are rotation channels subject to dephased noise which only become convex-Gaussian when the noise is maximally dephasing. This dramatically contrasts with the stabilizer case, where modest levels of noise render the channel a mixture of stabilizer channels. This is illustrated by the fact that, in the case of one qubit, the set of convex-Gaussian states is a line in the Bloch sphere, while convex combinations of stabilizer states form an octahedron -- see~\cref{fig:circle-RP}.

We leave the following questions open for future research. We have obtained improved decompositions for $ZZ$-rotations subject to generalized $Z$ noise by combining known decompositions for single-qubit $Z$ and two-qubit $ZZ$ dephasing noise, but we expect these composite decompositions to be non-optimal with respect to unitary or augmented channel extent. Constructing improved decompositions and proving their optimality remains open.
Another direction would be to search for improved and optimal decompositions for other types of noisy gate, for example, non-Gaussian Pauli rotations subject to non-commuting stochastic Pauli noise, or amplitude-damping noise. A final question relating to optimality is concerned with gate sequences. In this work we have mainly considered the decomposition of individual Pauli rotation gates (and tensor products thereof). This raises the question of whether there is an advantage to be had in composing sequences of rotation gates before searching for an optimal linear decomposition. Magic monotones must in general be submultiplicative under composition, as trivially a unitary operation can always be followed by its inverse. Conversely, we have seen that the unitary extent is exactly multiplicative for certain gates applied in parallel. It is less clear under what conditions the extent of composed non-cancelling rotations with overlapping qubit support is strictly submultiplicative, and whether this can be used to obtain speedups in classical simulation.

Another open research direction is to what extent these newly found decompositions can reduce the cost of the Majorana Propagation algorithm~\cite{miller2025simulationfermioniccircuitsusing}, which, instead of Schrödinger-evolving a (sparsified, approximate) description of the state, Heisenberg-evolves a (sparsified, approximate) description of the observable. Like the simulation techniques used in this paper, the Majorana Propagation algorithm is at its most performant for purely Gaussian circuits and its cost for exact simulation scales exponentially with the number of non-Gaussian gates. However, to the best of our knowledge no tight relation between its simulation cost and any magic monotone is known.

Finally, we note that in the final stages of preparing this manuscript we became aware of recent parallel work studying the classical simulation of Clifford or Gaussian circuits with noisy magic initial states~\cite{heo2026noisymagicinputs}, which also appears to be closely related to the density-operator stabilizer rank simulator proposed in~\cite{Seddon_2021}. Our results are complementary to those of~\cite{heo2026noisymagicinputs}, since we primarily study decompositions of channels rather than states, and we resolve some questions around optimality not addressed in~\cite{heo2026noisymagicinputs}.
\hfill

\begin{figure}[h]
    \centering
    \begin{subfigure}[t]{0.485\textwidth}
        \centering
        \includegraphics[height=0.7\linewidth]{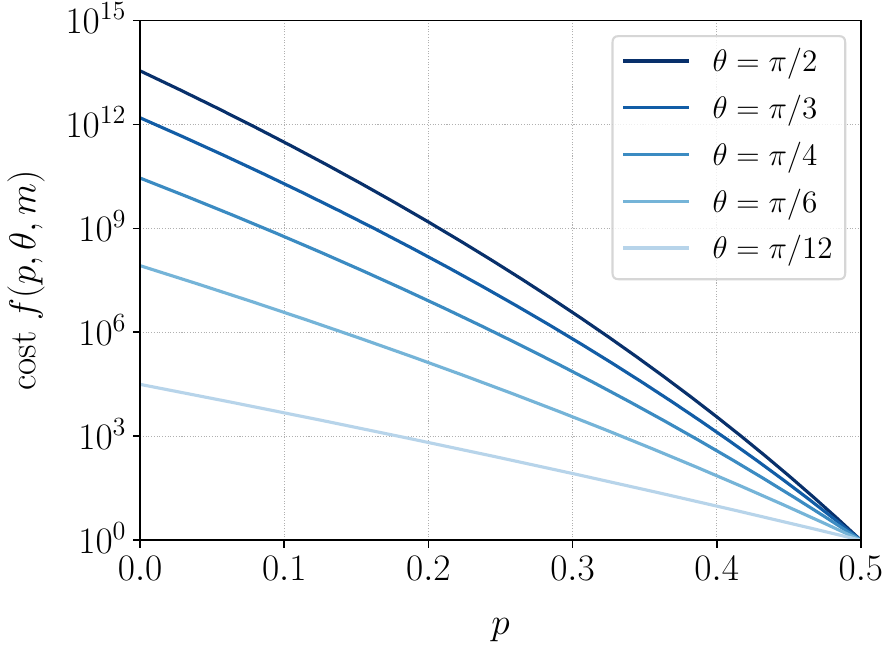}
		\caption{$m = 45$}
		\label{fig:}
    \end{subfigure} \hfill
	\begin{subfigure}[t]{0.485\textwidth}
        \centering
        \includegraphics[height=0.7\linewidth]{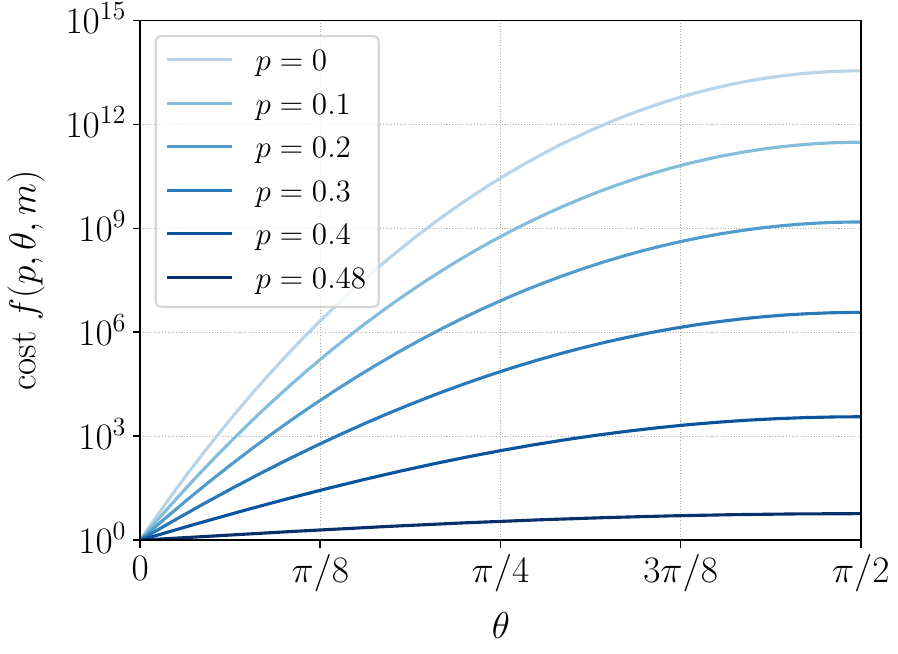}
        \caption{$m = 45$}
		\label{fig:} 
    \end{subfigure} \flushbottom \\ \vspace{4mm}
	\begin{subfigure}[t]{0.485\textwidth}
        \centering
        \includegraphics[height=0.7\linewidth]{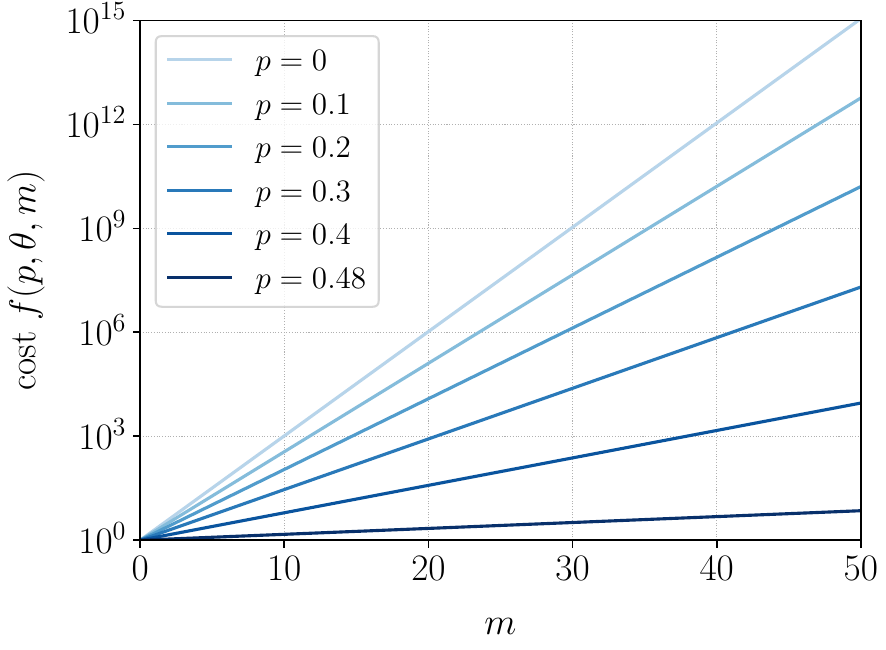}
		\caption{$\theta = \pi/2$}
		\label{fig:}
    \end{subfigure} \hfill
	\begin{subfigure}[t]{0.485\textwidth}
        \centering
        \includegraphics[height=0.7\linewidth]{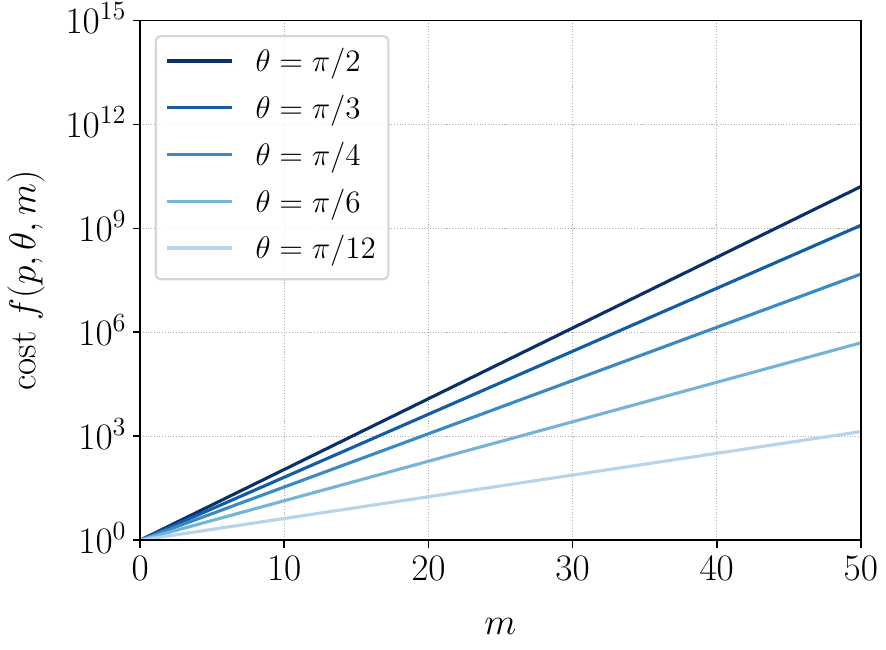}
        \caption{$p = 0.2$}
		\label{fig:} 
    \end{subfigure} \flushbottom 
    \caption{The function cost $f(p,\theta,m) = (1-(1-2p)\sin\theta)^m$ for different values of $m, p$ and $\theta$. 
	The function $f(p,\theta,m)$ is equal to $\Xi(\cal{N_Y})^m = \Xi(\cal{N_{ZZ}})^m $ and it is an upper bound for $\Xi^*(\cal{N'_{ZZ}})^m$ and $\Xi^*(\cal{N'})^m$, as shown in~\cref{sec:improved_decomp_NZZ,sec:decompgeneraldephasing}.
This upperbounds the cost of simulating $m$ copies of certain channels in parallel: 	
The cost of simulating $\cal{N_Y}^{\otimes m}$ grows linearly with $\Xi(\cal{N_Y}^{\otimes m}) \leq \Xi(\cal{N_Y})^m$. Similarly, the cost of simulating $\cal{N_{ZZ}}^{\otimes m}$ or $\cal{N'_{ZZ}}^{\otimes m}$ grows linearly with $\Xi(\cal{N_Y})^m$, as $\Xi^*(\cal{N'_{ZZ}}),\Xi^*(\cal{N'}) \leq \Xi(\cal{N_Y})$.
	}
    \label{fig:practicalCosts}
\end{figure}

\hfill
\newpage
\noindent\textbf{Acknowledgements.} 
BD acknowledges support by the European Research Council under grant
agreement no. 101001976 (project EQUIPTNT). BD thanks Marc Langer for insightful discussions. All authors would also like to thank Shigeo Hakkaku et al. for kindly providing the numerical data for the data points in Figures 1 and 2 from their article~\cite{Hakkaku_2022}. The authors would also like to thank Faisal Alam, Adrian Chapman, Steve Flammia, Joel Klassen, Raul Santos and Sabrina Wang for helpful discussions on our work and comments on the manuscript.

\phantomsection
\addcontentsline{toc}{section}{References}
\printbibliography
\newpage
\phantomsection
\addcontentsline{toc}{section}{Appendix}

\appendix
\crefalias{section}{appendix}
\crefalias{subsection}{appendix}
\noindent{\textbf{\LARGE{Appendix}}}
\numberwithin{equation}{section}
\section{Proofs for sparsification of adaptive circuits\label{sec:sparseTechnical}}

In this appendix we prove~\cref{lem:ensembleAdapt,lem:varianceAdapt}.

Recall that in~\cref{sec:SPARSE_adapt} we outlined a new procedure for sparsifying adaptive circuits that generalizes the results of~\cite{Seddon_2021} and~\cite{seddonThesis}, which respectively dealt with (convex combinations of) states and unitary channels. In our setting, we assume we are given a channel
\begin{equation}
\chan = \curlU_T \circ \curlA_T \circ \ldots \curlU_1 \circ \curlA_1 \circ \curlU_0  \label{eq:appendixChannelSequence}
\end{equation}
where $\mathcal{U}_t$ are non-Gaussian unitary operations, and $\mathcal{A}_j$ are adaptive Gaussian channels, for a Gaussian initial state $\ket{\phi_0}$. 
The final state can be expressed in terms of Kraus operators as $\chan(\op{\phi_0}) = \sum_{\arr{m}} K_{\arr{m}} \op{\phi_0} K_{\arr{m}}^\dagger$, so that for each Kraus operator $K_{\arr{m}}$, labelled by a trajectory $\arr{m}$, the final state $K_{\arr{m}} \ket{\phi_0}= \sum_{\arr{v}} c_{\arr{v}} \ket*{\phi_{\arr{m},\arr{v}}}$ is a linear combination of Gaussian states $ \ket*{\phi_{\arr{m},\arr{v}}}$ defined in~\cref{eq:phimv} with coefficients $c_{\arr{v}}$ defined in~\cref{eq:defc}. Notice that the states $\ket*{\phi_{\arr{m},\arr{v}}}$ depend on the trajectories $\arr{m}$ but the coefficients $c_{\arr{v}}$ do not, since they depend only on the decompositions of the unitary gates. This is crucial for making the sparsification scheme practical; since there can be exponentially many trajectories $\arr{m}$ we propose that the description of the final state is in terms of the Gaussian unitaries sampled at each step, rather than an explicit description of the density operator. 

The main result is that there is a procedure that randomly outputs a description of an operator $\Omega = \sum_{\arr{m}} \op{\Omega_{\arr{m}}}$ in time $O(kT)$, where each $\ket{\Omega_{\arr{m}}}$ is a non-normalized pure state with $k$ Gaussian terms, such that $\Omega/\Tr[\Omega]$ can be made close in expectation to the final state $\chan(\op{\phi_0})$ of the circuit by choosing large enough $k$. 

Before proving~\cref{lem:ensembleAdapt}, we first generalize an argument from the proof of the sparsification tail bound (Lemma 7) in~\cite{Bravyi_2019}, concerning the normalization $\Tr[\Omega]$. Recall that we assume each unitary operator $U_t$ involved in~\cref{eq:appendixChannelSequence} has a known Gaussian decomposition  $U_t = \sum_j \alpha_{t,j} V_{t,j}$, so that the $L^1$-norm associated with the total circuit is $\norm{c}_1 = \prod_{t} \left(\sum_j \abs{\alpha_{t,j}} \right)$. Then the statement of our intermediate result is as follows.
\begin{lemma}[Average normalization of a sparsified operator]\label{lem:avgnormalisation}
Let $\Omega$ be an operator randomly drawn by the sparsification procedure defined in~\cref{it:sparsification1,it:sparsification2,it:sparsification3} in~\cref{sec:SPARSE_adapt}. Then its expected normalization is given by
\begin{equation}
 \mathbb{E}(\Tr[\Omega]) = \frac{\norm{c}_1^2}{k} + \frac{k(k-1)}{k^2} = 1 + \frac{\norm{c}_1^2-1}{k}.
\end{equation}
\end{lemma}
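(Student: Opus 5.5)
The plan is to expand $\Tr[\Omega]$ as a double sum over the $k$ sampled terms, split it into diagonal ($i=j$) and off-diagonal ($i\neq j$) contributions, and then evaluate each expectation using independence of the samples together with the completeness relation of the adaptive Kraus operators.

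\textbf{Expansion.} Substituting \cref{eq:sparseKraus} into \cref{eq:sparseFinalStateAdapt} gives
\begin{equation}
\Tr[\Omega] = \frac{\norm{c}_1^2}{k^2}\sum_{i,j=1}^k \sum_{\arr{m}} \bra{\phi_0} S_{j,\arr{m}}^\dagger S_{i,\arr{m}} \ket{\phi_0} \ .
\end{equation}
The key observation is that the sampled pattern $(W_{i,0},\ldots,W_{i,T})$ does not depend on $\arr{m}$. Moreover, the patterns for different $i$ are independent and identically distributed.

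\textbf{Diagonal terms ($i=j$).} Each $W_{i,t}$ is a Gaussian unitary times a phase, so for a fixed sample
\begin{equation}
\sum_{\arr{m}} \bra{\phi_0} S_{i,\arr{m}}^\dagger S_{i,\arr{m}}\ket{\phi_0} = 1 \ .
\end{equation}
To see this, peel off the layers from the outside in. The outermost $W_{i,T}$ is unitary and drops out. Then $\sum_{m_T} K_{T,m_T}^\dagger K_{T,m_T} = I$ by trace preservation of $\curlA_T$. Iterating this down to $t=0$ leaves $\bra{\phi_0}\ket{\phi_0}=1$. The $k$ diagonal terms therefore contribute exactly $k\norm{c}_1^2/k^2 = \norm{c}_1^2/k$, deterministically.

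\textbf{Off-diagonal terms ($i\neq j$).} Here independence gives
\begin{equation}
\mathbb{E}\bigl[S_{j,\arr{m}}^\dagger S_{i,\arr{m}}\bigr] = \mathbb{E}\bigl[S_{j,\arr{m}}\bigr]^\dagger\, \mathbb{E}\bigl[S_{i,\arr{m}}\bigr] \ .
\end{equation}
Within a single sample the gate choices at different $t$ are also independent, and $\mathbb{E}[W_{i,t}] = \sum_v \frac{|\alpha_{t,v}|}{\norm{\alpha_t}_1}\frac{\alpha_{t,v}}{|\alpha_{t,v}|}V_{t,v} = U_t/\norm{\alpha_t}_1$. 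Since each $S_{i,\arr{m}}$ is multilinear in the $W_{i,t}$, it follows that
\begin{equation}
\mathbb{E}[S_{i,\arr{m}}] = \frac{K_{\arr{m}}}{\prod_t\norm{\alpha_t}_1} = \frac{K_{\arr{m}}}{\norm{c}_1} \ .
\end{equation}
Each off-diagonal pair therefore contributes
\begin{equation}
\frac{1}{k^2}\sum_{\arr{m}}\bra{\phi_0}K_{\arr{m}}^\dagger K_{\arr{m}}\ket{\phi_0} = \frac{1}{k^2} \ ,
\end{equation}
using that $\chan$ is trace preserving. There are $k(k-1)$ such ordered pairs, giving a total of $k(k-1)/k^2$.

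\textbf{Conclusion.} Adding the two contributions gives $\norm{c}_1^2/k + (k-1)/k = 1 + (\norm{c}_1^2-1)/k$, as claimed.

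I expect no serious obstacle. The only care needed is in the diagonal step: the completeness relation must be applied layer by layer from the outside in, and this works because the Kraus operators $K_{t,m_t}$ are interleaved with sample-dependent but $\arr{m}$-independent unitaries $W_{i,t}$. This $\arr{m}$-independence is also what allows the expectation to be exchanged with the sum over $\arr{m}$ in the off-diagonal step.
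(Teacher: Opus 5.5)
Your proof is correct and follows essentially the same route as the paper's. Like the paper, you split $\Tr[\Omega]$ into $p=q$ and $p\neq q$ terms. For each diagonal term you use completeness of the Kraus operators attached to a fixed sampled gate pattern to get exactly $1$. For the off-diagonal terms you use independence of distinct samples, which reduces each one to $\sum_{\arr{m}}\bra{\phi_0}K_{\arr{m}}^\dagger K_{\arr{m}}\ket{\phi_0}/\norm{c}_1^2$. The only (cosmetic) difference is that you obtain $\mathbb{E}[S_{i,\arr{m}}]=K_{\arr{m}}/\norm{c}_1$ by multilinearity at the operator level, whereas the paper sums explicitly over patterns $\arr{v},\arr{w}$ with weights $\abs{c_{\arr{v}}}/\norm{c}_1$.
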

\begin{proof}
Recall that we have defined
\begin{equation}
\Omega = \sum_{\arr{m}} S_{\arr{m}}\op{\phi_0} S^\dagger_{\arr{m}} = \sum_{\arr{m}} \op{\Omega_{\arr{m}}}
\end{equation}
where
\begin{equation}
S_{\arr{m}} =  \frac{\norm{c}_1}{k} \sum_{i=1}^k S_{i,\arr{m}}
\end{equation}
and the $S_{i,\arr{m}}$ are generated randomly during steps \ref{step:samplegate} and \ref{step:composegates} of the procedure, defined by
\begin{equation}
S_{i,\arr{m}} = \left(\prod_{t=T}^1 W_{i,t} K_{t,m_t}\right) W_{i,0}
\end{equation}
where $K_{t,m_t}$ are the (Gaussian) Kraus operators associated with each Gaussian adaptive channel $\mathcal{A}_t$, and $W_{i,t}$ are selected at random from the decomposition of each gate $U_t$, setting $W_{i,t} = (\alpha_{t,v}/\abs{\alpha_{t,v}})V_{t,v}$ with probability $\abs{\alpha_{t,v}}/\norm{\alpha_t}_1$. 
Note that we have defined the procedure in such a way that for fixed $i$, the same Gaussian gates $W_{i,t}$ are selected for all $\arr{m}$, and the efficiently simulable Kraus operators $K_{t,m_t}$ are fixed for all sparsifications.

By linearity and cyclicity of the trace, we have 
\begin{align}
\Tr[\Omega] = \sum_{\arr{m}} \bra{\Omega_{\arr{m}}}\ket{\Omega_{\arr{m}}} = \frac{\norm{c}_1^2}{k^2} \sum_{p,q} \sum_{\arr{m}} \bra{\phi_0} S^\dagger_{p,\arr{m}} S_{q,\arr{m}} \ket{\phi_0} =  \frac{\norm{c}_1^2}{k^2} \sum_{p,q} \sum_{\arr{m}} \bra*{\omega_{p,\arr{m}}} \ket*{\omega_{q,\arr{m}}}
\end{align}
where we have defined $|\omega_{i,\arr{m}}\rangle = S_{i,\arr{m}} \ket{\phi_0}$. Note that each index $i$ labels a particular randomly chosen pattern of Gaussian gates, sampled independently for each $i$, so we need to differentiate between terms where $p=q$ and those where $p\neq q$, i.e.,
\begin{equation}
\Tr[\Omega]  =  \frac{\norm{c}_1^2}{k^2} \sum_p \left( \sum_{\arr{m}} \bra*{\omega_{p,\arr{m}}}\ket*{\omega_{p,\arr{m}}} +  \sum_{q\neq p} \sum_{\arr{m}} \bra*{\omega_{p,\arr{m}}} \ket*{\omega_{q,\arr{m}}} \right).
\end{equation}
Referring back to the original decomposition~\eqref{eq:origdecomp} and steps \ref{step:samplegate} and \ref{step:composegates} of the procedure, let $\arr{v} = (v_0,v_1,\ldots,v_T)$ define a pattern of Gaussian gates from the known decomposition. Recalling that this pattern does not depend on $\arr{m}$, we see that when generating $S_{i,\arr{m}}$, the pattern $\arr{v}$ is sampled with probability
\begin{equation}
\mathrm{Pr}({\arr{v} })= \prod_{t=1}^T \frac{\abs{\alpha_{t,v_t}}}{\norm{\alpha_t}_1} = \frac{\abs*{c_{\arr{v}}}}{\norm{c}_1} \label{eq:patternprob}
\end{equation}
where $c_{\arr{v}} = \prod_{t=1}^T \alpha_{t,v_t}$, and the last equality follows since
\begin{equation}
\prod_{t=1}^T \norm{\alpha_t}_1 = \prod_{t=1}^T \sum_{v_t} \abs{\alpha_{t,v_t}} = \sum_{\arr{v}} \abs{\prod_{t=1}^T \alpha_{t,v_t}} = \sum_{\arr{v}} \abs*{c_{\arr{v}}} = \norm{c}_1.
\end{equation}
So $|\omega_{i,\arr{m}}\rangle = (c_{\arr{v}}/\abs*{c_{\arr{v}}}) |\phi_{\arr{m},\arr{v}}\rangle$ with probability $\mathrm{Pr}({\arr{v} })$ given by~\cref{eq:patternprob}. Notice that with each pattern of Gaussian gates labelled $\arr{v}$ we can associate a complete set of Kraus operators 
\begin{equation}
K_{\arr{m},\arr{v}} = \prod_{t=T}^1 (V_{t,v_t} K_{t,m_t}) V_{0,v_0}
\end{equation}
such that $K_{\arr{m},\arr{v}} \ket{\phi_0} = |\phi_{\arr{m},\arr{v}}\rangle$ and $\sumvec{m} K^\dagger_{\arr{m},\arr{v}} K_{\arr{m},\arr{v}}  = \id$. This follows from the fact that each $V_{t,v_t}$ is unitary, and each pair of operators $\{K_{t,0}$, $K_{t,1}\}$ is itself a complete set of Kraus operators associated with a CPTP map. Taking the expected value of the trace, 
\begin{align}
\mathbb{E}(\Tr[\Omega]) = \frac{\norm{c}_1^2}{k^2} \left(  \sum_p \mathbb{E}\left[ \sum_{\arr{m}}  \bra*{\omega_{p,\arr{m}}}\ket*{\omega_{p,\arr{m}}} \right] + \sum_p \sum_{q\neq p} \mathbb{E}\left[   \sum_{\arr{m}} \bra*{\omega_{p,\arr{m}}} \ket*{\omega_{q,\arr{m}}} \right] \right). \label{eq:expTrOmega}
\end{align}
For the first term we have
\begin{align}
\mathbb{E}\left[ \sum_{\arr{m}} \langle\omega_{p,\arr{m}} | \omega_{p,\arr{m}}\rangle \right] & = \sum_{\arr{v}} \frac{\abs*{c_{\arr{v}}}}{\norm{c}_1} \sum_{\arr{m}} \bra*{\phi_{\arr{m},\arr{v}}} \frac{c^*_{\arr{v}}}{\abs*{c_{\arr{v}}}} \cdot \frac{c_{\arr{v}}}{\abs*{c_{\arr{v}}}} \ket*{\phi_{\arr{m},\arr{v}}}  \\
 & = \sumvec{v}  \frac{\abs{c_{\arr{v}}}}{\norm{c}_1}   \bra*{\phi_0} \sumvec{m} K^\dagger_{\arr{m},\arr{v}} K_{\arr{m},\arr{v}}  \ket*{\phi_0}= 1,
\end{align}
where the last equality comes from the completeness of the Kraus operators $K_{\arr{m},\arr{v}}$ over $\arr{m}$, and the fact that $\sum_{\arr{v}}\abs{c_{\arr{v}}} = \norm{c}_1$. Then for the second term, where $p\neq q$, we have 
\begin{align}
\mathbb{E}\left[   \sum_{\arr{m}} \bra*{\omega_{p,\arr{m}}} \ket*{\omega_{q,\arr{m}}} \right] &=\sumvec{v} \sumvec{w} \frac{\abs*{c_{\arr{v}}}}{\norm{c}_1}  \frac{\abs*{c_{\arr{w}}}}{\norm{c}_1} \sumvec{m} \bra*{\phi_{\arr{m},\arr{v}}} \frac{c^*_{\arr{v}}}{\abs*{c_{\arr{v}}}} \cdot \frac{c_{\arr{w}}}{\abs*{c_{\arr{w}}}} \ket*{\phi_{\arr{m},\arr{w}}}   \\
& = \frac{1}{\norm{c}_1^2} \sumvec{m} \left( \sumvec{v} \bra*{\phi_{\arr{m},\arr{v}}} c^*_{\arr{v}} \right)  \left(\sumvec{w} c_{\arr{w}}\ket*{\phi_{\arr{m},\arr{w}}}  \right) \\
& = \frac{1}{\norm{c}_1^2} \sumvec{m} \bra{\phi_0} K^\dagger_{\arr{m}} K_{\arr{m}} \ket{\phi_0} =  \frac{1}{\norm{c}_1^2}.
\end{align}
Substituting back into~\cref{eq:expTrOmega} and recalling that there are $k$ independently sampled terms in total, we obtain
\begin{align}
\mathbb{E}(\Tr[\Omega])& = \frac{\norm{c}_1^2}{k^2} \left(  \sum_p 1 + \sum_p \sum_{q\neq p} \frac{1}{\norm{c}_1^2} \right) \\
  & =  \frac{\norm{c}_1^2}{k^2}  \left( k + \frac{k(k-1)}{\norm{c}_1^2} \right) \\
  & = \frac{\norm{c}_1^2}{k} + \frac{k(k-1)}{k^2},
\end{align}
which is the required expression.
\end{proof}
Next we restate the main lemma for convenience, and prove the result below.
\begingroup
\renewcommand{\thelemma}{\ref{lem:ensembleAdapt}}
\begin{lemma}[Restated]
The procedure defined in~\cref{it:sparsification1,it:sparsification2,it:sparsification3} in~\cref{sec:SPARSE_adapt}, which has runtime $O(kT)$, returns a description of a sparsified state $\Omega$ with probability $Pr(\Omega)$, such that the expected normalized state defined by
\begin{equation}
\rho_1 = \mathbb{E}\left( \frac{\Omega}{\Tr[\Omega]} \right) = \sum_\Omega \Pr(\Omega) \frac{\Omega}{\Tr[\Omega]}
\end{equation}
satisfies
\begin{equation}
\norm{\rho_1 - \chan(\op{\phi_0})}_1 \leq  2  \frac{\norm{c}_1^2}{k} + \sqrt{\Var(\Tr[\Omega])}\ . \label{eq:ensembleresult}
\end{equation}
\end{lemma}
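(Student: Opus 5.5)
We follow the structure of the ensemble sampling lemma of~\cite{Seddon_2021}, working with the unnormalized random operator $\Omega$ before passing to the normalized ensemble $\rho_1$. The runtime claim is immediate: the procedure samples $T+1$ indices for each of the $k$ terms and stores the resulting Gaussian unitaries and phases, independently of $\arr{m}$. For the distance bound we introduce the unnormalized average $\rho_2 := \mathbb{E}(\Omega)$ and split via the triangle inequality
\begin{equation}
\norm{\rho_1 - \chan(\op{\phi_0})}_1 \leq \norm{\rho_2 - \chan(\op{\phi_0})}_1 + \norm{\rho_1 - \rho_2}_1 \ .
\end{equation}

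\textbf{Bias term.} We compute $\rho_2$ exactly, mirroring the calculation in the proof of~\cref{lem:avgnormalisation} but without taking the trace. Expanding $\Omega = \frac{\norm{c}_1^2}{k^2}\sum_{p,q}\sum_{\arr{m}} \ket*{\omega_{q,\arr{m}}}\bra*{\omega_{p,\arr{m}}}$, we treat the diagonal and off-diagonal terms separately. For the $k(k-1)$ independent pairs $p\neq q$, each expectation factorizes, since $\mathbb{E}\ket*{\omega_{q,\arr{m}}} = \norm{c}_1^{-1}\sum_{\arr{w}} c_{\arr{w}}\ket*{\phi_{\arr{m},\arr{w}}} = \norm{c}_1^{-1}\ket{\psim}$. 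These terms therefore contribute $\frac{k(k-1)}{k^2}\chan(\op{\phi_0})$. The $k$ diagonal terms contribute $\frac{\norm{c}_1^2}{k}\sigma$, where
\begin{equation}
\sigma = \sum_{\arr{v}}\frac{|c_{\arr{v}}|}{\norm{c}_1}\sum_{\arr{m}}\op{\phi_{\arr{m},\arr{v}}} \ .
\end{equation}
The inner sum is a quantum channel $\sum_{\arr{m}} K_{\arr{m},\arr{v}}(\cdot)K_{\arr{m},\arr{v}}^\dagger$ applied to $\op{\phi_0}$, by the completeness relation already used in~\cref{lem:avgnormalisation}. Hence $\sigma$ is a density operator with $\norm{\sigma}_1=1$. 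It follows that $\rho_2 - \chan(\op{\phi_0}) = \frac{\norm{c}_1^2}{k}\sigma - \frac{1}{k}\chan(\op{\phi_0})$, and its trace norm is at most $\frac{\norm{c}_1^2+1}{k}$. I will also try to absorb this into the target bound, using $\norm{c}_1\ge1$, when combining with the second term.

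\textbf{Normalization term.} We write $\rho_1-\rho_2 = \mathbb{E}\big[\Omega\,(1/\Tr\Omega - 1)\big]$. Since $\Omega\geq 0$, we have $\norm{\Omega}_1=\Tr\Omega$, so
\begin{equation}
\norm{\rho_1-\rho_2}_1 \leq \mathbb{E}\left[\Tr\Omega\,\left|\frac{1}{\Tr\Omega}-1\right|\right] = \mathbb{E}|1-\Tr\Omega| \ .
\end{equation}
We then bound this by $|1-\mathbb{E}\Tr\Omega| + \mathbb{E}|\Tr\Omega-\mathbb{E}\Tr\Omega|$. By~\cref{lem:avgnormalisation} the first piece equals $\frac{\norm{c}_1^2-1}{k}$, and by Cauchy--Schwarz the second is at most $\sqrt{\Var(\Tr\Omega)}$. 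Adding this to the bias bound gives $\frac{2\norm{c}_1^2}{k}+\sqrt{\Var(\Tr\Omega)}$, as the $\pm1/k$ terms cancel, which is exactly~\cref{eq:ensembleresult}.

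The main obstacle I expect is the bias step, specifically justifying the factorization for $p\neq q$ uniformly over all $\arr{m}$. This relies on the sampled pattern $\arr{v}$ being independent of $\arr{m}$ and shared across trajectories, which the procedure guarantees. A second point of care is the event $\Tr\Omega=0$, which we handle by a convention, for instance defining the normalized state arbitrarily there. Such an event does not affect the bound, because the factor $\Tr\Omega$ multiplies the offending term.
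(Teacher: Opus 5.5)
Your proof is correct and gives exactly the stated bound. It has the same overall structure as the paper's proof but uses a different intermediate operator.

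The paper takes the \emph{normalized} intermediate $\rho_2=\mathbb{E}(\Omega)/\mu$, with $\mu=\mathbb{E}(\Tr[\Omega])$. This makes the $\chan$-component cancel exactly, since $\mathbb{E}(\Omega)-\mu\,\chan(\op{\phi_0})=\frac{\norm{c}_1^2}{k}\bigl(\sigma-\chan(\op{\phi_0})\bigr)$. The fluctuation term becomes $\frac{1}{\mu}\mathbb{E}\abs{\mu-\Tr[\Omega]}$. The resulting intermediate bound $\frac{1}{\mu}\bigl(2\norm{c}_1^2/k+\sqrt{\Var(\Tr[\Omega])}\bigr)$ is marginally sharper, and $\mu\geq 1$ is used only to drop the prefactor.

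Your unnormalized intermediate $\mathbb{E}(\Omega)$ avoids dividing by $\mu$. In exchange, it leaves a residual $-\frac{1}{k}\chan(\op{\phi_0})$ in the bias and a deterministic offset $\abs{1-\mu}=(\norm{c}_1^2-1)/k$ in the fluctuation term, and these two then cancel. Both routes rest on the same fact $\norm{c}_1\geq 1$. This holds because $1=\norm{U_t}_\infty\leq\sum_j\abs{\alpha_{t,j}}\,\norm{V_{t,j}}_\infty=\norm{\alpha_t}_1$ for each $t$.

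One small correction concerns the event $\Tr[\Omega]=0$. There the contribution is not killed by a factor $\Tr[\Omega]$. The integrand of $\rho_1-\rho_2$ on that event is the arbitrary state you chose, which has trace norm $1=\abs{1-\Tr[\Omega]}$. So the event is exactly accounted for by $\mathbb{E}\abs{1-\Tr[\Omega]}$ and your inequality still holds. The paper's proof silently ignores this event, and the same remark applies to it, with $\frac{1}{\mu}\abs{\mu-0}=1$.
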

\addtocounter{lemma}{-1} 
\endgroup
\begin{proof}
The proof closely follows the arguments of~\cite{Seddon_2021}, but requires taking care of some additional technical details to deal with adaptive channels. We seek to bound the distance
\begin{equation}
\norm{\rho_1 - \chan(\op{\phi_0})}_1 = \Delta .
\end{equation}
Using the triangle inequality, we have
\begin{equation}
\Delta \leq \norm{\rho_1 - \rho_2}_1 + \norm{\rho_2 - \chan(\op{\phi_0})}_1, \label{eq:triangled}
\end{equation}
where $\rho_2 = \mathbb{E}(\Omega)/\mu$, and $\mu=\mathbb{E}(\Tr[\Omega])$. Then
\begin{align}
\norm{\rho_1 - \rho_2}_1 &= \left\| \mathbb{E}\left( \frac{\Omega}{\Tr[\Omega]}\right) - \frac{\mathbb{E}(\Omega)}{\mu} \right\|_1 = \left\|\mathbb{E}\left[\Omega\left( \frac{1}{\Tr[\Omega]}- \frac{1}{\mu} \right)\right] \right\|_1 \ .
\end{align}
 Then by Jensen's inequality,
\begin{align} 
\norm{\rho_1 - \rho_2}_1  &\leq \mathbb{E} \left[ \left\|\Omega\left( \frac{1}{\Tr[\Omega]}- \frac{1}{\mu}\right) \right\|_1 \right] \\
 &=  \mathbb{E} \left( \abs{ \frac{1}{\Tr[\Omega]}- \frac{1}{\mu}} \norm{\Omega}_1 \right)  \\
  &=  \mathbb{E} \left( \abs{ \frac{1}{\Tr[\Omega]}- \frac{1}{\mu}}\Tr[\Omega] \right)  \\
  &  = \frac{1}{\mu} \mathbb{E} (\abs{\mu - \Tr[\Omega]}) \leq \mathbb{E} (\abs{\mu - \Tr[\Omega]}) 
\end{align}
where the last step used $\mu \geq 1$, which is guaranteed by~\cref{lem:avgnormalisation}. The final expression is the expected deviation of $\Tr[\Omega]$ from the mean, so as in the original ensemble sampling lemma we can again use Jensen's inequality to obtain
\begin{equation}
\norm{\rho_1 - \rho_2}_1 \leq \sqrt{(\mathbb{E}\abs{\mu - \Tr[\Omega]})^2 }\leq  \sqrt{\mathbb{E}(\abs{\mu - \Tr[\Omega]}^2) } = \sqrt{\Var(\Tr[\Omega])}. 
\label{eq:boundrho1rho2}
\end{equation}
Now, to bound the second term in~\cref{eq:triangled} we observe that 
\begin{equation}
\norm{\rho_2 - \chan(\op{\phi_0})}_1 = \left\|\frac{\mathbb{E}(\Omega)}{\mu} - \sumvec{m} \op{\psi_{\arr{m}}} \right\|_1 =  \frac{1}{\mu} \left\|\mathbb{E}(\Omega) - \mu \sumvec{m} \op{\psi_{\arr{m}}} \right\|_1 \label{eq:normpart2}
\end{equation}
where $\ket{\psim} = K_{\arr{m}}\ket{\phi_0}$. Expanding the expected $\Omega$, using the same notation as in the proof of~\cref{lem:avgnormalisation},
\begin{align}
\mathbb{E}(\Omega) &=  \frac{\norm{c}_1^2}{k^2} \mathbb{E}\left(\sumvec{m} \sum_{p,q} \op*{\omega_{p,\arr{m}}}{\omega_{q,\arr{m}}} \right)\\
				 &= \frac{\norm{c}_1^2}{k^2}\left(\sum_{p}  \mathbb{E}\left(\sumvec{m}  \op*{\omega_{p,\arr{m}}}{\omega_{p,\arr{m}}}\right) + \sum_p \sum_{q\neq p}  \mathbb{E}\left(\sumvec{m}  \op*{\omega_{p,\arr{m}}}{\omega_{q,\arr{m}}}\right)\right) \label{eq:expOmega}.
\end{align}
For the first term, we define $\sigma = \mathbb{E}\left(\sumvec{m}  \op*{\omega_{p,\arr{m}}}{\omega_{p,\arr{m}}}\right)$, then
\newcommand{\Kmv}{K_{\arr{m},\arr{v}}}
\begin{align}
  \sigma & = \sumvec{v} \frac{\abs{c_{\arr{v}}}}{\norm{c}_1} \sumvec{m} \frac{c^*_{\arr{v}}}{\abs*{c_{\arr{v}}}} \cdot \frac{c_{\arr{v}}}{\abs*{c_{\arr{v}}}} \op*{\phi_{\arr{m},\arr{v}}} \\
  			& =  \sumvec{v} \mathrm{Pr}(\arr{v})\sumvec{m} \Kmv \op*{\phi_0} \Kmv^\dagger
\end{align}
where $K_{\arr{m},\arr{v}}$ is as defined in the proof of~\cref{lem:avgnormalisation}. For any $\arr{v}$, the sum $\sumvec{m}  \op*{\phi_{\arr{m},\arr{v}}} $ is a normalised density operator, since $K_{\arr{m},\arr{v}}$ over $\arr{m}$ are the Kraus operators for a CPTP map. Moreover $  \mathrm{Pr}(\arr{v})$ is a probability distribution over $\arr{v}$, so $\sigma$ is itself a normalised density operator, $\norm{\sigma}_1 = \Tr[\sigma] = 1$. Now, considering the second term in~\cref{eq:expOmega},
\begin{align}
\mathbb{E}\left(\sumvec{m}  \op*{\omega_{p,\arr{m}}}{\omega_{q,\arr{m}}}\right) &= \sumvec{v} \sumvec{w} \frac{\abs*{c_{\arr{v}}}}{\norm{c}_1}  \frac{\abs*{c_{\arr{w}}}}{\norm{c}_1} \sumvec{m}  \frac{c_{\arr{v}}}{\abs*{c_{\arr{v}}}} \op*{\phi_{\arr{m},\arr{v}}}{\phi_{\arr{m},\arr{w}}} \frac{c^*_{\arr{w}}}{\abs*{c_{\arr{w}}}} \\
			&= \frac{1}{\norm{c}_1^2}\sumvec{m} \left(\sumvec{v} c_{\arr{v}} \ket*{\phi_{\arr{m},\arr{v}}} \right) \left(\sumvec{w} c^*_{\arr{w}} \ket*{\phi_{\arr{m},\arr{w}}} \right) \\
			&=  \frac{1}{\norm{c}_1^2}\sumvec{m} \op{\psim} = \frac{1}{\norm{c}_1^2} \chan(\op{\phi_0}) \ .
\end{align}
Substituting back into~\cref{eq:expOmega}
\begin{align}
\mathbb{E}(\Omega) = \frac{\norm{c}_1^2}{k^2}\left( k \sigma + \frac{k(k-1)}{\norm{c}_1^2} \chan(\op{\phi_0}) \right) =  \frac{\norm{c}_1^2}{k} \sigma + \left(1-\frac{1}{k}\right) \chan(\op{\phi_0})\ .
\end{align}
Then from~\cref{eq:normpart2}, and using~\cref{lem:avgnormalisation} for the value of $\mu$,
\begin{align}
\norm{\rho_2 - \chan(\op{\phi_0})}_1 &= \frac{1}{\mu} \left\|  \frac{\norm{c}_1^2}{k} \sigma +\left[ \left(1-\frac{1}{k}\right) - \mu  \right]\chan(\op{\phi_0}) \right\|_1 \\
							&=    \frac{\norm{c}_1^2}{k\mu}\norm{   \sigma - \chan(\op{\phi_0}) }_1  \\
							& \leq \frac{\norm{c}_1^2}{k} \left(\norm{\sigma}_1 + \norm{ \chan(\op{\phi_0})}_1\right) \leq 2  \frac{\norm{c}_1^2}{k} \ .
\end{align}
Putting this together with our bound in~\cref{eq:boundrho1rho2} on $\norm{\rho_1 - \rho_2}_1$ we have
\begin{equation}
\norm{\rho_1 -  \chan(\op{\phi_0})}_1 \leq  2  \frac{\norm{c}_1^2}{k} + \sqrt{\Var(\Tr[\Omega])} \ , \label{eq:ensembleresult}
\end{equation}
which is the required expression.
\end{proof}
Finally, we again follow a similar argument to~\cite{Seddon_2021} to prove~\cref{lem:varianceAdapt}, which states that the variance of $\Tr(\Omega)$ can be bounded as 
\begin{equation}
\Var(\Tr[\Omega]) \leq \frac{4(\cte - 1)}{k} + 2 \left( \frac{\norm{c}_1^2}{k} \right)^2 + \frac{10 - 12\cte}{k^2} +  O\left( \frac{\cte}{k^3}\right)\ .
\end{equation}
\begin{proof}
By definition,
\begin{equation}
\Var(\Tr[\Omega]) = \mathbb{E}(\Tr[\Omega]^2) - (\mathbb{E}(\Tr[\Omega]))^2\ .
\end{equation}
The second term is $\mu^2$, which we have already bounded in~\cref{lem:avgnormalisation}. To bound the first term, following the proof steps of~\cite{Seddon_2021} we first expand 
\begin{align}
\Tr[\Omega]^2 &= \left( \sum_{\arr{m}} \braket{\Omega_{\arr{m}}} \right)^2 \\
			& = \frac{\norm{c}_1^4}{k^4}  \left( \sum_{p} \sumvec{m}\bra*{\omega_{p,\arr{m}}}\ket*{\omega_{p,\arr{m}}} + \sumvec{m}  \sum_{p} \sum_{q\neq p } \bra*{\omega_{q,\arr{m}}}\ket*{\omega_{p,\arr{m}}} \right)^2 \\
			& =  \frac{\norm{c}_1^4}{k^4}  \left( \sum_{p} 1 + \sumvec{m}  \sum_{p} \sum_{q\neq p } \bra*{\omega_{q,\arr{m}}}\ket*{\omega_{p,\arr{m}}} \right)^2 \\
			& =  \frac{\norm{c}_1^4}{k^4}  \left( k^2 + 2kB +B^2 \right) \label{eq:var1stterm}
\end{align}
where $\sumvec{m}\bra*{\omega_{p,\arr{m}}}\ket*{\omega_{p,\arr{m}}}=1$ follows from the arguments of~\cref{lem:avgnormalisation}, and we define 
\begin{equation}
B = \sumvec{m}  \sum_{p} \sum_{q\neq p } \bra*{\omega_{q,\arr{m}}}\ket*{\omega_{p,\arr{m}}}\ . \label{eq:defB}
\end{equation}
From the proof of~\cref{lem:avgnormalisation} we have
\begin{align}
 (\mathbb{E}(\Tr[\Omega]))^2 = \mu^2 &= \left(\frac{\norm{c}_1^2}{k^2} \left[ k + \mathbb{E}\left(\sumvec{m}  \sum_{p} \sum_{q\neq p } \bra*{\omega_{q,\arr{m}}}\ket*{\omega_{p,\arr{m}}} \right)\right]\right)^2 \\
	&= \frac{\norm{c}_1^4}{k^4} \left(k^2 + 2k \mathbb{E}(B) + \mathbb{E}(B)^2\right) \label{eq:var2ndterm}\ .
\end{align}
Comparing~\cref{eq:var1stterm} and~\cref{eq:var2ndterm}, we have
\begin{align}
\Var(\Tr[\Omega]) & = \mathbb{E}(\Tr[\Omega]^2) - (\mathbb{E}(\Tr[\Omega]))^2 \\
			 & =  \frac{\norm{c}_1^4}{k^4} \left(\mathbb{E}(B^2) - (\mathbb{E}(B))^2 \right)=  \frac{\norm{c}_1^4}{k^4} \Var(B)\ . \label{eq:VarintermsofB}
\end{align}
Then by~\cref{lem:avgnormalisation},
\begin{equation}
(\mathbb{E}(B))^2 = \left(\frac{k(k-1)}{\norm{c}_1^2} \right)^2 = \frac{k^2(k-1)^2}{\norm{c}_1^4}\ . \label{eq:squaredexpectedB}
\end{equation}
It remains to obtain an expression for $\mathbb{E}(B^2)$. We note that the key technical difference with~\cite{Seddon_2021} is that whereas in the prior work the equivalent quantity to $B$ is a sum over inner products between normalized pure states, in our definition the inner products are between  unnormalized $\ket*{\omega_{p,\arr{m}}}$ associated with Kraus operators $K_{\arr{m},\arr{v}}$. Expanding $B^2$, 
\begin{align}
B^2 &=\left(\sumvec{m}  \sum_{p} \sum_{q\neq p } \bra*{\omega_{q,\arr{m}}}\ket*{\omega_{p,\arr{m}}}\right) \left(\sumvecp{m}  \sum_{r} \sum_{s\neq r } \bra*{\omega_{s,\arr{m}'}}\ket*{\omega_{r,\arr{m}'}}\right) \\
 & = \sum_{(p,q,r,s)\in \mathcal{A}} \sumvec{m}  \bra*{\omega_{q,\arr{m}}}\ket*{\omega_{p,\arr{m}}} \sumvecp{m}   \bra*{\omega_{s,\arr{m}'}}\ket*{\omega_{r,\arr{m}'}} + B'
\end{align}
where $\mathcal{A}$ is the set of all tuples $(p,q,r,s)$ such that $p,q,r,s$ are integers in the range $[1,k]$ and are all distinct, and $B'$ is the sum over all remaining terms where at least one pair of indices is shared between the two inner products. The first term is the simplest to deal with:
\begin{align}
\mathbb{E}\left(  \sumvec{m}  \bra*{\omega_{q,\arr{m}}}\ket*{\omega_{p,\arr{m}}} \sumvecp{m}   \bra*{\omega_{s,\arr{m}'}}\ket*{\omega_{r,\arr{m}'}} \right)
& =   \sum_{\arr{m},\arr{m}'} \mathbb{E}(\!\bra*{\omega_{q,\arr{m}}})\mathbb{E}(\!\ket*{\omega_{p,\arr{m}}})\mathbb{E}(\!\bra*{\omega_{s,\arr{m}'}})\mathbb{E}(\!\ket*{\omega_{r,\arr{m}'}}) \ .
\end{align}
We can make this move because $\ket*{\omega_{p,\arr{m}}}$ are $\ket*{\omega_{q,\arr{m}'}}$ are sampled independently when $p\neq q$. Note that they remain independent when $\arr{m}'=\arr{m}$, since $\arr{m}$ just labels a fixed term in the summation, and does not correspond to correlations in the sense of a random variable. We have
\begin{equation}
\mathbb{E}(\!\ket*{\omega_{p,\arr{m}}}) = \sumvec{v} \frac{\abs{c_{\arr{v}}}}{\norm{c}_1} \frac{c_{\arr{v}}}{\abs{c_{\arr{v}}}} \ket*{\phi_{\arr{m},\arr{v}}} = \frac{\ket{\psi_{\arr{m}}}}{\norm{c}_1} =  \frac{K_{\arr{m}} \ket{\phi_0}}{\norm{c}_1} \ ,
\end{equation}
where we used the decomposition of the exact unnormalized state $\sum_{\arr{v}} c_{\arr{v}}  \ket*{\phi_{\arr{m},\arr{v}}}= \ket{\psi_{\arr{m}}} = K_{\arr{m}} \ket{\phi_0}$. Then we have
\begin{equation}
\sum_{\arr{m}} \mathbb{E}(\bra*{\omega_{q,\arr{m}}}\ket*{\omega_{p,\arr{m}}}) = \sum_{\arr{m}} \frac{\bra{\phi_0} K^\dagger_{\arr{m}} K_{\arr{m}}  \ket{\phi_0}}{\norm{c}^2_1} = \frac{1}{\norm{c}^2_1}
\end{equation}
so
\begin{equation}
\sum_{\arr{m},\arr{m}'} \mathbb{E}\left(    \bra*{\omega_{q,\arr{m}}}\ket*{\omega_{p,\arr{m}}}   \bra*{\omega_{s,\arr{m}'}}\ket*{\omega_{r,\arr{m}'}} \right)  = \frac{1}{\norm{c}^4_1}\ .
\end{equation}
Then we have 
\begin{equation}
B^2 = \sum_{(p,q,r,s)\in \mathcal{A}} \norm{c}_1^{-4} + B' 
	= \frac{k(k-1)(k-2)(k-3)}{ \norm{c}_1^4} + B' \label{eq:Bsquared}.
\end{equation}
We break $B'$ into terms depending on which indices are shared
\begin{align}
B' = B_{p=r,q\neq s} +  B_{p=s,q \neq r} + B_{p\neq r,q = s} + B_{p\neq s, q=r} + B_{p=r,q=s}+B_{p=s,q=r}\ .
\end{align}
We deal with each in turn, using $\mathcal{A}_n$ to mean the set of length $n$ tuples where all entries are distinct. We can check that 
\begin{equation}
B_{p=r,q\neq s} = \sum_{\arr{m},\arr{m}'}  \sum_{(p,q,s) \in \mathcal{A}_3}   \bra*{\omega_{q,\arr{m}}}\ket*{\omega_{p,\arr{m}}}   \bra*{\omega_{s,\arr{m}'}}\ket*{\omega_{p,\arr{m}'}} = \overline{B}_{p\neq r, q=s}
\end{equation}
and similarly we can check that $B_{p=s,q\neq r} = \overline{B}_{p\neq s,q= r}$. Then we must bound
\begin{equation}
\mathbb{E}(B' ) = 2 \Re[\mathbb{E}(B_{p=r,q\neq s} + B_{p=s,q\neq r})] + \mathbb{E}(B_{p=r,q=s}+B_{p=s,q=r}) \label{eq:expBprime} \ . 
\end{equation}
We have
\begin{align}
\frac{\mathbb{E}(B_{p=s,q\neq r})}{k(k-1)(k-2)} &= \sum_{\arr{v},\arr{w},\arr{x}} \frac{\abs{c_{\arr{v}}c_{\arr{w}}c_{\arr{x}} }}{\norm{c}_1^3}\sum_{\arr{m},\arr{m}'}    \frac{c^*_{\arr{v}}  c^*_{\arr{w}} c_{\arr{w}}  c_{\arr{x}}}
{\abs{c^*_{\arr{v}}  c^*_{\arr{w}} c_{\arr{w}}  c_{\arr{x}}} }
\bra*{\phimv{m}{v}} \ket*{\phimv{m}{w}} \bra*{\phimpv{m}{w}} \ket*{\phimpv{m}{x}} \\
		&=  \sum_{\arr{w}} \frac{\abs{c_{\arr{w}}}}{\norm{c}_1^3} \sum_{\arr{m},\arr{m}'}   \left( \sum_{\arr{v}} c^*_{\arr{v}} \bra*{\phimv{m}{v}}\right) \op*{\phimv{m}{w}}{\phimpv{m}{w}}\left( \sum_{\arr{x}} c_{\arr{x}} \ket*{\phimpv{m}{x}}\right)  \\
				& = \sum_{\arr{w}} \frac{\abs{c_{\arr{w}}}}{\norm{c}_1^3}  \cdot  \abs{ \sum_{\arr{m}} \bra*{\psim}\ket*{\phimv{m}{w}} }^2 \ .
\end{align}
For $B_{p=r,q\neq s}$ we obtain the same bound on the absolute value by the triangle inequality,
\begin{align}
\abs{\frac{\mathbb{E}(B_{p=r,q\neq s})}{k(k-1)(k-2)}}
& \leq \sum_{\arr{w}} \frac{\abs{c_{\arr{w}}}}{\norm{c}_1^3}   \cdot \abs{\sumvec{m}   \bra*{\psim}\ket*{\phimv{m}{w}}} \cdot \abs{\sum_{\arr{m}'} \bra*{\psi_{\arr{m}'}}\ket*{\phimpv{m}{w}} } \\
& =  \sum_{\arr{w}} \frac{\abs{c_{\arr{w}}}}{\norm{c}_1^3}  \cdot \abs{\sumvec{m}   \bra*{\psim}\ket*{\phimv{m}{w}}}^2\ .
\end{align}
For $B_{p=s,q=r}$, we have
\begin{align}
\mathbb{E}(B_{p=s,q= r}) &= \mathbb{E}\left( \sum_{\arr{m},\arr{m}'}  \sum_{(p,q) \in \mathcal{A}_2}   \bra*{\omega_{q,\arr{m}}}\ket*{\omega_{p,\arr{m}}}   \bra*{\omega_{p,\arr{m}'}}\ket*{\omega_{q,\arr{m}'}} \right) \\
		&= k(k-1) \sum_{\arr{v},\arr{w}}  \sum_{\arr{m},\arr{m}'} \frac{\abs{c_{\arr{v}} c_{\arr{w}}}}{\norm{c}_1^2} \frac{c_{\arr{v}} c_{\arr{w}} c^*_{\arr{v}} c^*_{\arr{w}}}{\abs{c_{\arr{v}} c_{\arr{w}}}^2} \bra*{\phimv{m}{v}} \ket*{\phimv{m}{w}}     \bra*{\phimpv{m}{w}}\ket*{\phimpv{m}{v}} \\
		&=  k(k-1)  \sum_{\arr{v},\arr{w}}   \frac{\abs{c_{\arr{v}} c_{\arr{w}}}}{\norm{c}_1^2} \sumvec{m} \bra*{\phimv{m}{v}} \ket*{\phimv{m}{w}}    \sum_{\arr{m}'}  \bra*{\phimpv{m}{w}}\ket*{\phimpv{m}{v}}  \\
		& =  k(k-1)  \sum_{\arr{v},\arr{w}}   \frac{\abs{c_{\arr{v}} c_{\arr{w}}}}{\norm{c}_1^2} \cdot \abs{ \sumvec{m} \bra*{\phimv{m}{v}} \ket*{\phimv{m}{w}}    }^2\ .
\end{align}
Define an ordering of the trajectories $\arr{m}$, so that each is labelled by some index $i$, $\arr{m}_i$, such that $1\leq i \leq 2^T$ (since each adaptive channel has Kraus rank 2, and there are $T$ such channels). Then letting $d$ be the dimension of the Hilbert space, define matrices $M_{\arr{v}}$ of dimension $d \times 2^T$, where the $i$-th column is the vector $\ket*{\phimv{m_i}{v}}$,
\begin{equation}
M_{\arr{v}} = \left(\ket*{\phimv{m_1}{v}}  \, \ket*{\phimv{m_2}{v}} \cdots  \ket*{\phimv{m_{2^T}}{v}} \right)\ .
\end{equation}
Then we can express the sum $ \sumvec{m} \bra*{\phimv{m}{v}} \ket*{\phimv{m}{w}} $ in terms of the Frobenius inner product,
\begin{equation}
\langle M_{\arr{v}}, M_{\arr{w}} \rangle_F = \Tr(M_{\arr{v}}^\dagger M_{\arr{w}}) =  \sumvec{m} \bra*{\phimv{m}{v}} \ket*{\phimv{m}{w}}\ .
\end{equation}
Then by the Cauchy-Schwarz inequality
\begin{align}
\abs{\langle M_{\arr{v}}, M_{\arr{w}} \rangle_F}^2 & \leq \langle M_{\arr{v}}, M_{\arr{v}} \rangle_F \langle M_{\arr{w}}, M_{\arr{w}} \rangle_F \\
								& = \left(\sumvec{m} \bra*{\phimv{m}{v}} \ket*{\phimv{m}{v}}\right) \left( \sum_{\arr{m}'}  \bra*{\phimpv{m}{w}}\ket*{\phimpv{m}{w}} \right) \\
								& = 1 \ , 
\end{align}
where the last equality follows since the operators $K_{\arr{m},\arr{v}}$ that define the vectors $\ket*{\phimv{m}{v}}= K_{\arr{m},\arr{v}}\ket{\phi_0}  $ satisfy  $\sumvec{m}  K^\dagger_{\arr{m},\arr{v}} K_{\arr{m},\arr{v}} = \id$. We therefore have
\begin{equation}
\mathbb{E}(B_{p=s,q= r})  \leq  k(k-1)  \sum_{\arr{v},\arr{w}}   \frac{\abs{c_{\arr{v}} c_{\arr{w}}}}{\norm{c}_1^2} = k(k-1) 
\end{equation}
since $\frac{\abs{c_{\arr{v}} c_{\arr{w}}}}{\norm{c}_1^2}$ form a probability distribution. Similarly for $B_{p=r,q=s}$ we obtain
\begin{align}
\abs{\mathbb{E}(B_{p=s,q= r})} 
	& = \abs{k(k-1)  \sum_{\arr{v},\arr{w}}   \frac{\abs{c_{\arr{v}} c_{\arr{w}}}}{\norm{c}_1^2} \sumvec{m} \bra*{\phimv{m}{v}} \ket*{\phimv{m}{w}}    \sum_{\arr{m}'}  \bra*{\phimpv{m}{v}}\ket*{\phimpv{m}{w}}  }\\
		&\leq k(k-1)  \sum_{\arr{v},\arr{w}}   \frac{\abs{c_{\arr{v}} c_{\arr{w}}}}{\norm{c}_1^2} \cdot \abs{\sumvec{m} \bra*{\phimv{m}{v}} \ket*{\phimv{m}{w}}   }^2 \leq k(k-1) \ , 
\end{align}
where the first inequality comes from the triangle inequality.

Taking the absolute value of~\cref{eq:expBprime}, using the triangle inequality and substituting all the obtained expressions back in we have
\begin{align}
\abs{\mathbb{E}(B')} & \leq 4k(k-1)(k-2) \sum_{\arr{w}} \frac{\abs{c_{\arr{w}}}}{\norm{c}_1^3} \cdot  \abs{\sumvec{m}   \bra*{\psim}\ket*{\phimv{m}{w}}}^2 + 2  k(k-1) \\
			& = 4k(k-1)(k-2) \frac{\cte}{\norm{c}_1^4} +2k(k-1)  \label{eq:expectedBprime}
\end{align} 
where we define $\cte =  \norm{c}_1\sum_{\arr{w}} \abs{c_{\arr{w}}} \cdot  \abs*{\sumvec{m}   \bra*{\psim}\ket*{\phimv{m}{w}}}^2$.
Substituing the expressions~\eqref{eq:Bsquared},~\eqref{eq:expectedBprime} and~\eqref{eq:squaredexpectedB} into the expression for the variance of $\Tr[\Omega]$
\begin{align}
\Var(\Tr[\Omega])  &=   \frac{\norm{c}_1^4}{k^4} \left(\mathbb{E}(B^2) - (\mathbb{E}(B))^2 \right) \\
			&\leq  \frac{\norm{c}_1^4}{k^4} \Bigg[ \frac{k(k-1)(k-2)(k-3)}{ \norm{c}_1^4} +  4k(k-1)(k-2) \frac{\cte}{\norm{c}_1^4} + 2  k(k-1) \\
			& \quad \quad  \quad \quad -  \frac{k^2(k-1)^2}{\norm{c}_1^4} \Bigg] \\
			& = 4\left( \frac{1}{k} - \frac{3}{k^2} + \frac{2}{k^3}\right)\cte + 2\norm{c}^4_1\left(\frac{1}{k^2} - \frac{1}{k^3} \right) - \left(\frac{4}{k} - \frac{10}{k^2} + \frac{6}{k^3}\right) \\
			& \leq \frac{4(\cte - 1)}{k} + 2 \left( \frac{\norm{c}_1^2}{k} \right)^2 + \frac{10 - 12\cte}{k^2} +  O\left( \frac{\cte}{k^3}\right)\ , 
\end{align}
which is the result we set out to prove.
\end{proof}
This result is identical to that of~\cite{Seddon_2021}, except that we have defined $\cte$ as
\begin{equation}
\cte =  \norm{c}_1\sum_{\arr{w}} \abs{c_{\arr{w}}} \cdot \left| \sumvec{m}   \bra*{\psim}\ket*{\phimv{m}{w}}\right|^2 \ .
\end{equation}
The value of $\cte$ may be more difficult to compute than was the case for~\cite{Seddon_2021}. It may be possible to upper bound it by considering a tensor product of corresponding resource states. Nevertheless we can argue that $\cte/  \norm{c}^2_1 $ should not be large, since for non-trivial circuits we should expect $\ket*{\psim}$ to have small overlap with any Gaussian state.

\section{Sampling simulator technical details \label{sec:sampling_simulator_details}}

Here we detail the subroutines used in the simulator for sampling from non-Gaussian circuits laid out in~\cref{sec:algorithm}.
In~\cref{sec:convex_unitary_channel_simulation} we consider a circuit composed on convex-unitary channels. In~\cref{sec:modified_simulator_sparsifications} we modify the simulator to account for the use of adaptive channels in the circuit. 

\subsection{Convex-unitary channel simulation \label{sec:convex_unitary_channel_simulation}}

In this appendix, we give further technical details of classical simulation subroutines for near-Gaussian circuits in order to justify the statements about runtime given in the main text. We first outline an exact algorithm for bit-string sampling in~\cref{algo:samplingExact}. We then give the approximate algorithm, \cref{algo:samplingApprox}, whose runtime is reduced relative to the exact algorithm in two ways. First, sparsification reduces the effective rank of the state vector tracked by the algorithm. Then, the use of the \textsc{FastNorm} subroutine~\cite{bravyiComplexityQuantumImpurity2017a} reduces the runtime to estimate a Born rule probability from quadratic to linear in the channel decomposition cost. 

\begin{algorithm}[htbp]
\begin{algorithmic}[1]
\algrenewcommand\algorithmicrequire{\textbf{Input:}}
\algrenewcommand\algorithmicensure{\textbf{Output:}} 
\Require List of convex-unitary channels~$\{ \chan_1, \ldots, \chan_T \}$ defined on a system of~$n$ qubits and decomposable by an oracle~$\mathcal{D}$; integer~$w \leq n$.
\Ensure Bit string of length~$w$.
\For{$t \gets 1$ to~$T$}
\State \label{line:alg1_Ut}~$\mathbb{U}_t \gets \mathcal{D}(\chan_t)$
\Comment $\mathbb{U}_t$ is a~$\chi_t$-length list of coefficients~$c_{t,j}$ and (decriptions of) Gaussian unitaries~$G_{t,j}$ as in~\cref{eq:DUt}. \label{step:exactoracle}
\EndFor
\State \label{line:alg1_evolve}~$\ket{\Psi} \gets~$ \Call{EvolveCircuit}{$\{\mathbb{U}_1, \ldots,\mathbb{U}_T \}$,~$\ket{0^n}$ } \label{step:exactstate}

\Comment The state vector is stored as a length~$k$ list of Gaussian states~$\ket{g_i}$ and amplitudes~$\alpha_i$ including phase. 
\State $\ket{\Psi_0} \gets \ket{\Psi}$
\For{$b \gets 1$ to~$w$} \label{line:alg1_forstart}
\State $\Pr(x_b = 0) \gets  \norm{(I+Z_b)\ket{\Psi_{b-1}}/2}^2$ \label{step:exactnorm}
\Comment Computed by evaluating the inner product~$\bra{\Psi_{b-1}} (I+Z_b)/2 \ket{\Psi_{b-1}}$, expanded as a linear combination of Gaussian inner products.
\State $x_b \gets 0$,~$p_b \gets \Pr(x_b = 0)$ with probability~$\Pr(x_b = 0)$, otherwise~$x_b \gets 1$, $p_b\gets 1 - \Pr(x_b = 0)$. \label{step:exactsample}
\State $\ket{\Psi_b} \gets  \frac{I+(-1)^{x_b} Z}{2 \sqrt{p_b} } \ket{\Psi_{b-1}}$ \label{step:exactproject}
\EndFor \label{line:alg1_forend}
\State \Return~$x = (x_1,x_2,\ldots,x_w)$
\end{algorithmic}
\caption{Exact bit-string sampling algorithm}\label{algo:samplingExact}
\end{algorithm}

Both the exact and approximate variants of the algorithm make use of a procedure \linebreak~$\textsc{EvolveCircuit}$ (\cref{algo:evolveCircuit}) that updates a representation of the state vector under evolution by a sequence of gates described as a sum over Gaussian operations. This procedure can be built using the phase-sensitive Gaussian update method of either~\cite{reardonsmith2024improved} or~\cite{Dias_2024}. Here we assume the procedure described in~\cite{Dias_2024} as ``evolve'' (after mapping into the qubit picture via the Jordan-Wigner transformation). We refer to this subroutine as \textsc{GaussianEvolve}, to differentiate from \textsc{EvolveCircuit}: \textsc{GaussianEvolve} is used to evolve (Gaussian states) under Gaussian unitaries, and \textsc{EvolveCircuit} is used to evolve under possibly non-Gaussian unitaries. The procedure \textsc{GaussianEvolve} was shown in~\cite{Dias_2024} to have runtime~$O(n^3)$. If the input to \textsc{EvolveCircuit} is a list of~$T$ operators~$\{U_t\}_{t=1}^T$, then during the update associated with~$U_t$ the number of calls to  \textsc{GaussianEvolve} is the rank~$\chi$ of the state prior to the update, multiplied by the rank~$\chi_t$ of the decomposition of the current operator~$U_t$. The rank is non-decreasing, so the most expensive update will be the final step, involving~$\prod_{t=1}^T \chi_t$ calls to \textsc{GaussianEvolve}. (In the context of our bit-string sampling algorithm, we assume without loss of generality that the initial state input to \textsc{EvolveCircuit} has rank~$1$.) Therefore \textsc{EvolveCircuit} has runtime upper bounded by~$O(n^3 T\prod_{t=1}^T \chi_t)$. 

\begin{algorithm}[htbp]
\begin{algorithmic}[1]
\algrenewcommand\algorithmicrequire{\textbf{Input:}}
\algrenewcommand\algorithmicensure{\textbf{Output:}} 
\Require A description~$\mathbb{U}_t = \{ (c_{t,1}, R_{t,1}),\ldots,(c_{t,{\chi_t}},R_{t,{\chi_t}}) \}$ of a (possibly non-Gaussian) linear combination $U_t = \sum_{i=1}^{\chi_t} c_{t,i} G_{t,i}$ of Gaussian unitary operators~$G_{t,i}$ with~$c_{t,i}\in \mathbb{C}$, where each~$R_{t,i} \in O(2n)$ is the orthogonal matrix representing the Gaussian unitary~$G_{t,i}$ (see~\cref{eq:UR_R_relation}). The initial state~$\ket{\psi_0}=\sum_{j=1}^{\chi_0} \alpha_{0,j} \ket{\phi_{0,j}}$ is stored as a list~$d(\psi_0) = \{(\alpha_{0,1},d(\phi_{0,1})), \ldots, (\alpha_{0,\chi_0},d(\phi_{0,\chi_0}))\}$ of coefficients and suitable phase-sensitive descriptions~$\{d(\phi_{0,j})\}_j$ respectively for the Gaussian states~$\{\ket{\phi_j}\}_j$~\cite{Dias_2024,reardonsmith2024improved}. 
\Procedure{EvolveCircuit}{$\{\mathbb{U}_1, \ldots,\mathbb{U}_T\}, d(\psi_0)$}
\State $\chi \gets \chi_0$
\For{$t=1$ to~$T$} 
	\State $j' \gets 1$
	\For{$i = 1$ to~$\chi_t$}
		\For{$j = 1$ to~$\chi$}
			\State $\alpha_{t,j'} \gets c_{t,i} \cdot\alpha_{t-1,j}~$
			\State $d(\phi_{t,j'}) \gets~$\Call{GaussianEvolve}{$R_{t,i},d(\phi_{t-1,j})$}
			\State $j' \gets j' + 1$
		\EndFor
	\EndFor
	\State $\chi \gets \chi \cdot \chi_t$
\EndFor
\State \Return~$d(\psi_T) = \{(\alpha_{T,j}, d(\phi_{T,j})) \}_{j=1}^\chi$ 
\Comment The list~$d(\psi_T)$ represents the state~$\ket{\psi_T}$.
\EndProcedure
\end{algorithmic}
\caption{Subroutine for unitary evolution of pure states}\label{algo:evolveCircuit}
\end{algorithm}

We return to analysing the exact sampling algorithm, \cref{algo:samplingExact}. 
It is straightforward to check the correctness of the algorithm. In Line~\ref{step:exactoracle}, the oracle returns a description of some unitary operator~$U_{t,j}$ with probability~$p_{t,j}$ such that~$\chan_t = \sum_{j} p_{t,j} \curlU_{t,j}$, and so the sequence of~$T$ calls to the oracle returns a description of a sequence of~$T$ unitary channels~$\smash{U_{\arr{j}} = U_{t,j_t} \ldots U_{1,j_1}}$ with probability~$p_{\arr{j}} = \prod_{t=1}^T p_{t,j_t}$. Then, \textsc{EvolveCircuit} computes the exact evolution of the initial state under this sequence of unitary operators, so by Line~\ref{step:exactstate} the algorithm has prepared a description of the final state~$\smash{\ket*{\Psi_{\arr{j}}} = U_{\arr{j}}\ket{0^n}}$ with probability~$\smash{p_{\arr{j}}}$. The density operator representing the ensemble of states generated by the algorithm is then
\begin{align}
\rho & = \sum_{\arr{j}} p_{\arr{j}} \curlU_{\arr{j}}(\op{0^n}) \\
	  &  = \left(\left(\sum_{j_T} p_{t,j_t} \curlU_{t,j_t}\right)\circ \ldots  \circ \left(\sum_{j_1} p_{2,j_2} \curlU_{2,j_2}\right) \circ \left(\sum_{j_1} p_{1,j_1} \curlU_{1,j_1}\right) \right) (\op{0^n}) \\
	  & = \left( \chan_T \circ \ldots \circ \chan_2 \circ \chan_1 \right) (\op{0^n} ) = \chan  (\op{0^n} ),
\end{align}
which is exactly the final mixed state that we want to simulate, see~\cref{def:bitsampling_approx}. The final task of the simulator is to draw a bit-string with probability
\begin{align}
	p(x) = \Tr(\Pi_x \chan(\op{0^n}))= \sum_{\arr{j}} p_{\arr{j}}(x) \ ,
\end{align}
where~$\smash{p_{\arr{j}}(x) = \| \Pi_x \ket*{\Psi_{\arr{j}} }\|^2}$ is the probability of sampling the bit-string~$x$ having drawn the pure state~$\smash{\ket*{\Psi_{\arr{j}}}}$ from the ensemble. The conditional probability of sampling bit $x_b$ having first sampled bits $(x_1,\ldots,x_{b-1})$ is computed by taking a ratio of Born rule probabilities, 
\begin{align}
\Pr(X_{b} = x_{b}| X_1=x_1,\ldots,X_{b-1} = x_{b-1})  & = \frac{\|\Pi_{x_b,b} \ldots \Pi_{x_1,1} \ket{\Psi}\|^2}{\norm{\Pi_{x_{b-1},b-1} \ldots \Pi_{x_1,1} \ket{\Psi}}^2} \\
& = \| \Pi_{x_b,b} \ket{\Psi_{b-1}} \|^2 \ ,
\end{align}
where~$\Pi_{x_b,b} = (I + (-1)^{x_b} Z_b)/2$. By inspection we can see that Lines~\ref{step:exactnorm}, \ref{step:exactsample} and \ref{step:exactproject} of~\cref{algo:samplingExact} exactly reproduce the procedure of sampling bitwise from the target distribution. 

To analyse the runtime of~\cref{algo:samplingExact}, first note in Line \ref{line:alg1_Ut} that the oracle returns the description of the $t$-th unitary operator as a list of $\chi_t$ complex coefficients paired with $2n\times 2n$ orthogonal matrices, so we can assume that to store $T$ of these description takes time $O(T \chi n^2) $, where $\chi = \prod_{t=1}^T \chi_t$. Next, in Line \ref{line:alg1_evolve} there is a single call to \textsc{EvolveCircuit} which takes time $O(T \chi n^3)$. In the for-loop where bit-strings are counted (Lines \ref{line:alg1_forstart} to \ref{line:alg1_forend}), each norm is computed by evaluating $\chi^2$ Gaussian inner products, each taking time $O(n^3)$~\cite{Dias_2024,reardonsmith2024improved}, whereas the time for updating $\ket{\Psi_{b-1}}$ to $\ket{\Psi_b}$ by projection takes time $O(\chi n^3)$. Overall, then, for circuits with significant magic, the runtime for exact sampling is dominated by the quadratic scaling with the rank for computing $w$ norms, so will be  $O(w \chi^2 n^3)$. Note that the rank $\chi$ depends on the particular unitary trajectory sampled, so the average runtime will be $\smash{\sum_{\arr{j}} p_{\arr{j}} O(w \chi^2_{\arr{j}} n^3)}$.

Similarly to the approaches in previous work based on stabilizer~\cite{Bravyi_2016_fastnorm,Bravyi_2019,Seddon_2021} and Gaussian~\cite{bravyiComplexityQuantumImpurity2017a,Dias_2024,reardonsmith2024improved} rank/extent, we can improve the scaling with rank from quadratic to linear by replacing exact computation of probabilities with fast norm estimation~\cite{Bravyi_2016_fastnorm,bravyiComplexityQuantumImpurity2017a}, and reduce the effective rank by means of sparsification~\cite{Bravyi_2019,Seddon_2021,Dias_2024}. We propose the approximate bit-string sampling algorithm given in~\cref{algo:samplingApprox}. The procedure is essentially the same as that described for the stabilizer case for pure states in~\cite{Bravyi_2019}, mixed states in~\cite{Seddon_2021} and for convex-unitary channels in~\cite{seddonThesis}. The key differences are in the subroutines adapted for the fermionic setting, and that we can implement the oracle~$\mathcal{D}$ efficiently for some important cases based on the analytic decompositions we provide in~\cref{sec:decomp_2qbit_gates_mom,sec:noisydecomps}. In several cases these decompositions prove to be optimal.  Aside from the phase-sensitive subroutine \textsc{EvolveCircuit}, already described, we make use of~\cref{lem:fastnorm} and the associated $\textsc{FastNorm}$ procedure used to estimate norms of linear combinations of Gaussian states. 

\begin{algorithm}[htbp]
\begin{algorithmic}[1]
\algrenewcommand\algorithmicrequire{\textbf{Input:}}
\algrenewcommand\algorithmicensure{\textbf{Output:}} 
\Require List of convex-unitary channels~$\{ \chan_1, \ldots, \chan_T \}$ defined on a system of~$n$ qubits and decomposable by an oracle~$\mathcal{D}$; integer~$w \leq n$; $\delta$,~$\varepsilon >0$ and~$p \in (0,1)$.
\Ensure Bit-string of length~$w$.
\State $E \gets 1$ \Comment Initialize cost variable.
\For{$t \gets 1$ to~$T$}
\State $\mathbb{U}_t \gets \mathcal{D}(\chan_t)$ \Comment  $\mathbb{U}_t$ is a~$\chi$-length list of coefficients $c_{t,j}$ and (descriptions of) Gaussian unitaries~$G_{t,j}$ as in~\cref{eq:DUt}.
\State $E \gets E \cdot \norm{c_t}^2_1~$  
\EndFor
\State $k \gets \left\lceil 4 \frac{E}{\delta} \right\rceil$ \Comment This assumes~$\delta$ is above a critical precision~\cite{Seddon_2021}, see~\cref{sec:SPARSE}. 
\State $\mathbb{V} \gets~$ \Call{SparsifyCircuit}{$k$,~$\{\mathbb{U}_1,\ldots \mathbb{U}_T\}$} \Comment~$\mathbb{V}$ is a list of~$k$ (descriptions of) unitary Gaussian circuits~$\mathbb{G}_i$ incorporating a phase and normalization factor.

\Comment Each~$\mathbb{G}_i$ is a~$T$-length list of (descriptions of) Gaussian unitary gates~$G_{i,t}$.
\State $\ket{\Psi} \gets~$ \Call{EvolveCircuit}{\{$\mathbb{V}$\},~$\ket{0^n}$ }

\Comment The state vector is stored as a length~$k$ list of Gaussian states~$\ket{g_i}$ with suitable phase-sensitive description and amplitudes~$\alpha_i$ including phase. 
\State $N \gets~$ \Call{FastNorm}{$\ket{\Psi}$,~$\varepsilon$,~$p$}
\State $\ket{\Psi_0} \gets \ket{\Psi}/N$
\For{$b \gets 1$ to~$w$}
\State $\Pr(x_b = 0) \gets~$ \Call{FastNorm}{$(I+Z_b)\ket{\Psi_{b-1}}/2$,~$\varepsilon$,~$p$}
\If{$\Pr(x_b = 0) >\frac{1}{2}$} 
	\State $\Pr(x_b = 0) \gets~$~$1-$\Call{FastNorm}{$(I-Z_b)\ket{\Psi_{b-1}}/2$,~$\varepsilon$,~$p$}
	
\EndIf
\State $x_b \gets 0$,~$p_b \gets \Pr(x_b = 0)$ with probability~$\Pr(x_b = 0)$, otherwise~$x_b \gets 1$, $p_b\gets 1 - \Pr(x_b = 0)$.
\State $\ket{\Psi_b} \gets  \frac{I+(-1)^{x_b} Z}{2\sqrt{p_b}} \ket{\Psi_{b-1}}$
\EndFor
\State \Return~$x = (x_1,x_2,\ldots,x_w)$
\end{algorithmic}
\caption{Approximate bit-string sampling algorithm}\label{algo:samplingApprox}
\end{algorithm}
\begin{algorithm}[htbp]
\begin{algorithmic}[1]
\Require Integer~$k>0$; list of unitary operations described as a linear combination of Gaussian unitary operators, as per the input to~\cref{algo:evolveCircuit}.
\Ensure List of~$k$ Gaussian circuits described as a sequence of Gaussian gates with phase and normalization factor.
\algrenewcommand\algorithmicrequire{\textbf{Input:}}
\algrenewcommand\algorithmicensure{\textbf{Output:}} 
\Procedure{SparsifyCircuit}{$k,\{\mathbb{U}_1,\ldots,\mathbb{U}_T\}$}
\For{$i \gets 1$ to~$k$}
	\State Initialise empty circuit~$\mathbb{G}_i$.
	\State $\alpha_i \gets 1$
	\For{$t \gets 1$ to~$T$}
		\Comment Recall~$\mathbb{U}_t = \{ (c_{t,1}, R_{t,1}),\ldots,(c_{t,{\chi_t}},R_{t,{\chi_t}}) \}$,~$\norm{c_t}_1 = \sum_j \abs{c_{t,j}}$
		\State Select the~$j$-th term in the decomposition with probability~$\abs{c_{t,j}}/\norm{c_t}_1$, append~$R_{t,j}$ to~$\mathbb{G}_i$.
		\State $\alpha_i \gets \alpha_i \cdot c_{t,j} / \abs{c_{t,j}}$
	\EndFor
	\State $\alpha_i \gets \frac{\prod_{t=1}^T \norm{c_t}_1}{k} \cdot \alpha_i$
\EndFor
\State \Return~$\mathbb{V} = \{ (\alpha_1,\mathbb{G}_1),\ldots, (\alpha_k,\mathbb{G}_k) \}$
\EndProcedure
\end{algorithmic}
\caption{Subroutine for sparsifying a unitary circuit~\cite{Bravyi_2019}\label{algo:sparsifyCircuit}}
\end{algorithm}

To be concrete, we give the specific implementation of \textsc{SparsifyCircuit} we propose in~\cref{algo:sparsifyCircuit}, following the approach used in the sum-over-Cliffords algorithm from~\cite{Bravyi_2019}. Following the arguments used to prove the ensemble sampling lemmas in~\cite[Theorem 16, Lemmas 17 and 18]{Seddon_2021} and~\cite[Appendix E]{seddonThesis} it can be shown that randomly generating a~$k$-term sparsified state~$\ket{\Omega}$ by following the procedure \textsc{SparsifyCircuit} to sparsify a circuit~$U$, evolving a Gaussian initial state~$\ket{g_0}$ under the sparsified circuit, and normalizing the final state, we are in effect drawing from an ensemble~$\rho_1 = \sum_{\Omega} \Pr(\Omega) \op{\Omega}/\Tr[\Omega]$ close in trace-norm to the exact final state,
\begin{equation}
\norm{\rho_1 - \curlU(\op{g_0})}_1 \leq 2\frac{\norm{c}_1^2}{k}  + \sqrt{\Var(\Tr[\Omega])},
\end{equation}
where~$\norm{c}_1 = \prod_{t=1}^T \norm{c_t}_1$ is the product of the~$L^1$-norm for each decomposition of a non-Gaussian gate in the circuit. The variance in the normalization of $\Omega$ can be bounded  as (see Appendix~\ref{sec:sparseTechnical})
\begin{equation}
 \Var(\Tr[\Omega]) \leq \frac{4(\cte - 1)}{k} + 2 \left( \frac{\norm{c}_1^2}{k} \right)^2 + \frac{10 - 12\cte}{k^2} +  O\left( \frac{\cte}{k^3}\right).
 \end{equation}
Here~$\cte$ is a constant for the effective decomposition of the final state, 
\begin{equation}
\cte = \norm{c}_1 \sum_j \abs{c_j} \cdot \abs*{ \bra{g_0}U^\dagger G_j \ket{g_0}}^2, \quad U = \sum_j c_j G_j.
\end{equation}
In Appendix~\ref{sec:sparseTechnical} we proved a more general result for circuits involving intermediate Gaussian measurement and feed-forward, which reduces to the setting described here by removing non-unitary operations. This is analogous to the result for unitary circuits given in~\cite{seddonThesis}, except that stabilizer states and operations are replaced with Gaussian ones.

Having introduced each subroutine, we can now argue for the runtime and accuracy guarantees given in~\cref{thm:bitsamplingthm}. Once again, these can be derived by following the arguments of~\cite{bravyiComplexityQuantumImpurity2017a,Bravyi_2019,Seddon_2021,seddonThesis} and substituting stabilizer states/operations for Gaussian ones as necessary, so for brevity we omit a full technical proof and instead sketch the reasoning.
\begin{enumerate}[1)]
\item First define an algorithm \textsc{SparseExact}. It computes probabilities exactly, then it draws bit-strings as per the exact bit-string sampling procedure in~\cref{algo:samplingExact}, but for a sparsified and normalized final state~$\ket{\Omega}/\norm{\ket{\Omega}}$ output from \textsc{SparsifyCircuit}~(\cref{algo:sparsifyCircuit}). Call the output distribution from this algorithm~$p_{SE}$.
\item Use the ensemble sampling lemma to show that since~$\norm{\rho_1 - \curlU(\op{g_0})}_1 \leq \delta$, for some~$\delta$, we have~$\norm{p_{SE} - p}_1 \leq \delta$, where~$p$ is the true distribution. Note that~$\delta$ can be made arbitarily small by increasing the number of terms in the sparsification.
\item  \textsc{SparseExact} is identical to the approximate bit-string sampling algorithm~\cref{algo:samplingApprox}, except that the exact computations of bitwise probabilities are replaced by calls to \textsc{FastNorm}, so that the bit-string~$x$ is returned with probability~$p_{sim}$. We set the parameters of \textsc{FastNorm} such that~$\norm{p_{sim} - p_{SE}}_1 \leq \epsilon'~$ with probability at least~$1-p_{\text{fail}}$.
\item Then~$\norm{p_{sim} - p}_1 \leq \delta + \epsilon'$ with probability at least~$1-p_{\text{fail}}$.
\end{enumerate}

The runtime of~\cref{algo:samplingApprox} can be dominated by the call to \textsc{EvolveCircuit} or by the sequence of calls to \textsc{FastNorm}, depending on the number of qubits~$w$ measured, the depth of the circuit~$T$ and the size~$n$ of the system. From our earlier arguments, for a given run of the algorithm, the call to \textsc{EvolveCircuit} takes time~$\mathcal{O}(T k n^3)$ where~$k$ is the number of terms in the sparsified circuit decomposition. It can be shown~\cite{Seddon_2021} that to make~$2w$ calls to \textsc{FastNorm} achieving estimation error~$\epsilon'$ with total failure probability no larger than~$p_{\text{fail}}$, we must set the parameters for \textsc{FastNorm} to~$\epsilon' \rightarrow \epsilon/3w$ and~$p_f \rightarrow p_{\text{fail}}/2w$. Then by~\cref{lem:fastnorm} the total runtime for~$2w$ calls to \textsc{FastNorm} for a~$k$-term superposition is~$O(w^3 n^{7/2} \epsilon^{-2} \log(w/p_\text{fail}) k)$. Now recall that~$k$ is a function of the cost~$E$ and the desired precision~$\delta$ of the sparsification. For simplicity assume the favourable case where~$\delta~$ is above a critical precision (see~\cref{sec:SPARSE}) where we set~$k \geq 4 E\delta$. The total cost~$E=\norm{c}^2_1$ is the~$L^1$-norm squared for the decomposed unitary circuit returned by the oracle~$\mathcal{D}$. Recall that the oracle is probabilistic, and has a cost function~$E(\mathcal{D},\chan_t)$ defined over the set of channels involved in the target circuit. Then the average number of terms in the sparsification is~$\mathbb{E}(k) = \delta^{-1} \prod_{t=1}^T E(\mathcal{D},\chan_t)~$. Therefore the total runtime~$\tau$ for a generating a single bit-string of length~$w$ has expected value
\begin{equation}
\mathbb{E}(\tau) = {O}\left(\left(T n^3 + w^3 n^{7/2} \epsilon^{-2} \log(w/p_f^{-1})\right) \delta^{-1}  \prod_{t=1}^T E(\mathcal{D},\chan_t) \right).
\end{equation}
It can be shown~\cite{Seddon_2021} that given total error budget~$\delta' = \delta + \epsilon$, where~$\delta$ is the sparsification error and~$\epsilon$ is the error in fast norm estimation, we achieve optimal runtime by setting~$\delta =\delta'/3$ and~$\epsilon=2\delta'/3$. If the oracle is equimagical over the set of channels composed in the circuit, then the runtime becomes deterministic. If the oracle is optimal, then~$\mathbb{E}(k) = \delta^{-2} \prod_{t=1}^T \Xi(\mathcal{D},\chan_t)$. Finally we note that arbitrary precision (beyond the critical precision described in~\cref{sec:SPARSE})  can be achieved at a cost of the~$\delta^{-1}$ factor being replaced by~$\delta^{-2}$ (albeit with often favourable constants). This justifies the statement in~\cref{thm:bitsamplingthm}.

In~\cref{sec:Udecomp} and~\cref{sec:noisydecomps} of the main text, we showed that an optimal and efficient oracle exists for many important two-qubit gates and channels by giving simple analytic decompositions. In particular, in~\cref{sec:noisydecomps}  we saw that noise often reduces the overhead of simulating a circuit in this context. For certain noisy gates where the optimal decomposition is not known, we give decompositions with cost reduced relative to the noiseless case.

\subsection{Modifying the simulator for adaptive channels \label{sec:modified_simulator_sparsifications}}

For certain noisy channels we were not able to find an improved decomposition in terms of convex unitary channels alone. Instead we obtain improved average cost by augmenting the set of convex-unitary channels with adaptive Gaussian channels. In particular we saw this for $ZZ$-rotations subject to single-qubit~$Z$ noise in~\cref{sec:noisydecomps}. 
More generally, it may be of interest to simulate circuits that involve intermediate measurement and feed-forward as part of the quantum algorithm.  

In these cases we must contend with the fact that the mid-circuit measurement probabilities are state-dependent and not known in advance, and must be estimated on the fly by the simulator. We have already shown in~\cref{sec:sparseTechnical} that the ensemble sampling sparsification method can be extended to channels involving Gaussian adaptive channels interleaved with non-Gaussian unitary operations. Here we sketch how the sampling simulator already described can be modified for this setting.

We assume without loss of generality the target circuit with~$N_c$ circuit elements has the form
\begin{equation}
\chan_{\mathrm{Tot}} = \chan^{(N_c)} \circ \chan^{(N_c -1)} \circ \ldots \circ \chan^{(1)} \label{eq:noisyAdaptiveCircuit}
\end{equation}
where each circuit element has a known decomposition
\begin{equation}
\chan^{(j)} = (1-p^{(j)}) \curlU^{(j)} + p^{(j)}\mathcal{A}^{(j)}
\end{equation}
where each~$\curlU^{(j)}$ is a non-Gaussian unitary map with known decomposition as superposition of Gaussian gates, and each~$\mathcal{A}^{(j)}$ is an adaptive channel with free Gaussian Kraus operators~$\smash{K^{(j)}_0}$ and~$\smash{K^{(j)}_1}$ (for example, representing a single-qubit measurement followed by a Gaussian unitary conditioned on the outcome). Expanding out all elements, the total channel can be re-written as 
\begin{equation}
\chan_{\mathrm{Tot}} = \sum_{\arr{z}} p_{\arr{z}} \chan_{\arr{z}}
\end{equation}
where~$\arr{z}$ is a bit-string representing a trajectory through the circuit, such that~$z_j=0$ means the unitary path is taken at step~$j$, and~$z_j=1$ means the adaptive channel is chosen. Then $\{p_{\arr{z}}\}$ is a product probability distribution that is known at the outset and can be efficiently sampled from. Each~$\chan_{\arr{z}}$ is a properly normalized CPTP map, comprising a sequence of unitary or adaptive channels. We can write each of these in the form
\begin{equation}
\chan_{\arr{z}} =\curlU_T \circ \curlA_T \circ \ldots \curlU_1 \circ \curlA_1 \circ \curlU_0  
\end{equation}
where we have merged the gates in between each adaptive channel, and note that the number of adaptive channels~$T$ will depend on~$\arr{z}$. Notice that this interleaved circuit is the setting described in~\cref{sec:sparseTechnical}. For non-trivial noisy circuits, the probability of selecting the adaptive channel at a given step can be expected to be low, so that for typical trajectories~$T \ll N_C$.

 The simulation task is then to sample a bit-string~$x$ of length~$L$ from a distribution
\begin{equation}
\Pr(x) = \Tr[\Pi_x \chan_{\mathrm{Tot}} ( \op{\phi_0}) ] = \sum_{\arr{z}}  p_{\arr{z}} \Tr[ \Pi_x \chan_{\arr{z}} (\op{\psi_0})] 
\end{equation}
where~$\Pi_x$ is the projector onto the outcome~$x$ of~$Z$ measurements on some subset of~$L$ qubits. The high-level outline of an exact simulator is as follows.
\begin{enumerate}
\item Select a CPTP trajectory~$\chan_{\arr{z}} =\curlU_T \circ \curlA_T \circ \ldots \curlU_1 \circ \curlA_1 \circ \curlU_0$ with probability~$p_{\arr{z}}$.
\item For~$t = 1$ to T:
\begin{enumerate}
\item Update the description of the pure state to~$\ket{\psi'_t} = U_{t-1} \ket{\psi_{t-1}}$. Note that this causes a branching of the superposition, leading to the effective Gaussian rank $\chi_t$ increasing at each time step. This takes time~$O(\chi_t)$.
\item Compute probabilities~$p_{t,y_t}=\norm{K_{t,y_t}\ket{\psi'_t}}^2$ of outcomes~$y_t\in{0,1}$ for the adaptive channel~$\curlA_t$ acting on the state. The Kraus operators are assumed free, so for exact simulation this takes time~$O(\chi^2_t)$. (With fast norm estimation would be~$O(\chi_t)$.)
\item Sample outcome~$y_t$ with the probabilities computed in the previous step to select the update~$\ket{\psi_t} = K_{t,y_t} \ket{\psi'_t}/\sqrt{p_{t,y_t}}$.
\end{enumerate}
\item For~$i = 1$ to~$L$:
\begin{enumerate}
\item Compute the probability~$\Pr(x_i = 0 \mid \arr{x_{i-1}}, \psi_T)$, which is the probability of sampling zero for the~$i$-th bit~$x_i$, given the bits~$\arr{x}_{i-1}$ sampled so far (defining~$\arr{x}_{0}$ to be the empty string) and final circuit state~$\ket{\psi_T}$ (we absorb the final unitary~$U_T$ into~$\psi_T$). This can be done exactly in time~$O(\chi^2_T)$.
\item Select~$x_i=0$ or~$x_i=1$ given the distribution just computed, and project the state accordingly.
\end{enumerate}
\item Output~$x$.
\end{enumerate}
The probability of sampling~$x$ via this procedure is given by 
\begin{equation}
P_x = \sum_{\arr{z}}\sum_{\psi_T \mid \arr{z}}\prod_{i=1}^L \Pr(x_i \mid \arr{x}_{i-1},\psi_T)\Pr(\psi_T \mid \arr{z}) \cdot p_{\arr{z}}
\end{equation}
where~$\Pr(\psi_T \mid \arr{z})$ is the probability of having obtained final state~$\psi_T$ during circuit simulation given the trajectory~$\arr{z}$, and the summation is over all possible such trajectories, and all possible pure state outcomes on that trajectory~$\psi_T\mid\arr{z}$.  The probability of obtaining the state~$\ket{\psi_T}$ depends on the sequence of outcomes selected for the adaptive channels. Let~$K'_{t,y_t} = K_{t,y_t} U_{t-1}$, so that
\begin{equation}
\ket{\psi_t} =  \frac{K'_{t,y_t} \ket{\psi_{t-1}}}{\sqrt{p_{t,y_t} }}, \quad p_{t,y_t} = \norm{K'_{t,y_t} \ket{\psi_{t-1}}}^2 \ .
\end{equation}
Here $p_{t,y_t}$ are the conditional probabilities of obtaining the outcome $y_t$ from the $t$-th intermediate measurement, given normalized pre-measurement state $\ket{\psi_{t-1}}$, and in general depend on the preceding sequence $(y_1,\ldots,y_{t-1})$. Therefore the probability $\Pr(\psi_T \mid \arr{z})$ is equal to the probability $\Pr(y) = \prod_{t=1}^T p_{t,y_t}$ of obtaining some bit string $y$ encoding the outcomes of the sequence of intermediate measurements,
\begin{equation}
\Pr(y) = \prod_{t=1}^T \norm{K'_{t,y_t}\ket{\psi_{t-1}}}^2 =  \prod_{t=1}^T \frac{\norm{K'_{t,y_t}\ldots K'_{1,y_1} \ket{\phi_0}}^2}{\norm{K'_{t-1,y_{t-1}}\ldots K'_{1,y_1} \ket{\phi_0}}^2}= \norm{K'_{T,y_T}\ldots K'_{1,y_1} \ket{\phi_0}}^2 \ .
\end{equation}
Here $K'_{T,y_T}\ldots K'_{1,y_1} = K^{(\arr{z})}_{\arr{y}}$ in the notation of~\cref{sec:sparseTechnical}, where $\arr{y}$ indexes the Kraus operators of the channel $\chan_{\arr{z}}$. It follows that the expected final state 
\begin{equation}
\rho = \sum_{\arr{z}} p_{\arr{z}} \sum_{\arr{y}} K^{(\arr{z})}_{\arr{y}} \op{\phi_0} \left(K^{(\arr{z})}_{\arr{y}}\right)^\dagger
\end{equation}
is the correct density operator for the channel $\chan$, and hence the bit string $x$ will be drawn from the true distribution.

The runtime for producing one sample using the exact simulator sketched above scales as~$O(\chi^2_{\arr{z}})$ where the effective Gaussian rank $\chi_{\arr{z}}$ depends on the channel trajectory ${\arr{z}}$ chosen in step 1, but not on which Kraus operators were selected at each step $t$. Therefore the expected runtime will be~$O(\sum p_{\arr{z}} \chi^2_{\arr{z}})$. As in the convex-unitary case, to achieve runtime linear in the rank, we can use the fast norm method to estimate the probabilities $p_{t,y_t}$ at the cost of incurring some multiplicative estimation error in the usual way. Finally we can reduce the effective rank using an approximate sparsified description of each selected  channel $\chan_{\arr{z}}$ as explained in~\cref{sec:sparseTechnical}. In this case we should expect the average runtime to scale as $O(\prod_j \Xiaug(\cal{E}^{(j)}))$, where $\Xiaug(\cal{E}^{(j)})$ is the augmented extent for each channel in the original circuit~\cref{eq:noisyAdaptiveCircuit}. We leave a rigorous proof of runtime guarantees for the approximate algorithm for future work.

\section{Decomposition of a convex-unitary diagonal channel  \label{sec:DIAGONAL}}
Here we argue that the unitary terms in any convex-unitary decomposition of a diagonal channel must themselves be diagonal. This argument applies specifically to convex linear combinations of channels, as one can easily construct non-convex linear combinations of channels where non-diagonal elements cancel each other out. We first review some basic properties of linear maps. Any linear map~$T$ on a space of~$n$ qubits can be decomposed in the Pauli basis as 
\begin{equation}
T(A) = \sum_{j,k} c_{j,k} P_j A P_k, \label{eq:generalmap}
\end{equation} and as a transfer matrix in the form
\begin{equation}
\hat{T} = \sum_{j,k} c_{j,k} P_j \otimes \overline{P}_k
\end{equation}
where~$c_{j,k} \in \mathbb{C}$ and~$j,k$ index a pair of~$n$-qubit Pauli operators~$(P_j,P_k)$. The dual~$T^*$ of a map is defined as the unique map such that 
\begin{equation}
\Tr[A T(B)] = \Tr[T^*(A) B].
\end{equation}
A linear map is trace-preserving if and only if its dual satisfies~$T^*(\id) = \id$~\cite[Theorem 1.55]{wolf2022mathematical}. 
In the Pauli basis, the dual of the general map~\eqref{eq:generalmap} is  written
\begin{equation}
T^*(A)  = \sum_{j,k} c_{j,k} P_k A P_j.
\end{equation}
The trace-preserving condition then becomes
\begin{equation}
	\label{eq:tracepreserving_necessarycondition}
T^*(\id) = \sum_{j,k} c_{j,k} P_k P_j = \id \quad \Rightarrow \quad  2^n = \Tr[T^*(\id)] = \sum_{j} c_{j,j} 2^n
\end{equation}
where the second equation follows from the orthogonality of the Pauli operators,~$\Tr[P_j P_k] = \delta_{j,k} 2^n$. Therefore~$\sum_j c_{j,j} = 1$ is a necessary but not sufficient condition for a map to be trace-preserving.

Consider an~$n$-qubit unitary matrix~$U = \sum_j \alpha_j P_j$ decomposed in the Pauli basis, and the corresponding linear map~$\cal{U} (\cdot) = U (\cdot) U^\dagger$. The coefficients of the transfer matrix of the map~$\cal{U}$ are~$c_{j,k} = \alpha_j \overline{\alpha_k}~$, with the~$j=k$ elements satisfying 
\begin{equation}
c_{j,j} = \abs{\alpha_j}^2, \quad \sum_j c_{j,j} = 1.
\end{equation}
Now, consider the following twirling operation (see e.g.,~\cite{PhysRevLett.76.722,PhysRevA.54.3824}) which acts on the space of~$n$-qubit transfer matrices (that is, on~$4^n \times 4^n$ matrices),
\begin{equation}
\mathcal{T}_Z ( \cdot ) = \frac{1}{4^n} \sum_{x \in \{0,1\}^{2n}} Z(x) (\cdot) Z(x),
\end{equation}
where~$Z(x) = \bigotimes_{j=1}^{2n} Z_j^{x_j}$. This map projects any~$4^n \times 4^n$ matrix onto the commutant of the group~$\mathcal{G}$ of~$2n$-qubit~$Z$-strings,
\begin{equation}
\mathcal{G} = \langle Z_1, Z_2, \ldots , Z_{2n}\rangle.
\end{equation}
For unsigned Pauli strings~$P$, we have
\begin{equation}
\mathcal{T}_z (P)  = \begin{cases}
P  \quad \mathrm{if} \quad P \in \mathcal{G} \\
0 \quad \mathrm{otherwise}.
\end{cases}
\end{equation}
Since all diagonal matrices commute, for any diagonal unitary~$U_D$, the transfer matrix is invariant under the~$Z$-twirling map, i.e., 
\begin{equation}
\mathcal{T}_Z (\hat{\mathcal{U}}_D) = \mathcal{T}_Z ( U_D \otimes \overline{U}_D ) = U_D \otimes \overline{U}_D,
\end{equation}
and similarly for any convex combination of diagonal unitary channels.

On the other hand, consider a non-diagonal unitary operator~$V = \sum_j \beta_j P_j$, with transfer matrix
\begin{equation}
V \otimes \overline{V} = \sum_{j,k} \beta_j \overline{\beta_k} P_j \otimes \overline{P}_k.
\end{equation}
Before twirling we have~$\sum_{j} \abs{\beta_j}^2 = 1$, by the trace-preserving property. Define the index set
\begin{equation}
\mathcal{S} = \{ (j,k) : P_j \otimes P_k \in \mathcal{G} \}.
\end{equation}
Applying the~$Z$-twirling map, only~$Z$-strings survive, so 
\begin{equation}
\mathcal{T}_Z ( V \otimes \overline{V}) =\sum_{j,k} q_{j,k} P_j \otimes \overline{P}_k,
\end{equation}
where
\begin{equation}
q_{j,k} = \begin{cases} \beta_j \overline{\beta_k} \quad \mathrm{if} \quad (j,k) \in \mathcal{S} \\
 0 \quad \mathrm{otherwise}.
\end{cases}
\end{equation}
If~$V$ is not diagonal, there is at least one~$\beta_j \neq 0$ such that the Pauli string~$P_j$ contains an~$X$ or~$Y$. Then the Pauli decomposition of the transfer matrix~$V \otimes \overline{V}$ contains a term~$ \beta_j \overline{\beta_j} P_j \otimes \overline{P}_j$ proportional to a non-diagonal Pauli string. This will be zeroed out by the twirling operation, so~$q_{j,j} = 0~$. Since all the diagonal elements~$q_{j,j}$ of the twirled transfer matrix are either equal to~$\abs{\beta_j}^2$ or zero, and there is at least one~$q_{j,j} = 0$, we must have~$\sum_j q_{j,j}  < 1$. Let $T'$ be the linear map represented by the transfer matrix . Next we argue that twirling does not increase the operator norm of the dual applied to the identity, so $\norm{T'^*(\id)}\leq 1$.
Each $2n$-qubit Pauli string $Z(x) = Z'(r) \otimes Z'(s)  $ string can be rewritten as a product of two $n$-qubit strings $Z'(y) = \bigotimes_{j=1}^{n} Z_j^{y_j}$, for some $r$ and $s$. So the twirled transfer matrix can equivalently be represented as
\begin{align}
\mathcal{T}_Z ( V \otimes \overline{V})  &= \frac{1}{4^n}\sum_{r,s \in \{0,1\}^{n}} (Z'(r) \otimes Z'(s))( V \otimes \overline{V} )(Z'(r) \otimes Z'(s)) \\
	& = \frac{1}{4^n}\sum_{r,s \in \{0,1\}^{n}} (Z'(r) V Z'(r)) \otimes (Z'(s) \overline{V}  Z'(s)) \\
	& = \frac{1}{4^n} \sum_{r,s \in \{0,1\}^{n}} V_r \otimes \overline{V}_s
\end{align}
where each $V_y = Z'(y) V  Z'(y)$ is some unitary matrix. The linear map represented by $\mathcal{T}_Z ( V \otimes \overline{V}) $ can therefore be written 
\begin{equation}
T'(\cdots) = \frac{1}{4^n}\sum_{r,s \in \{0,1\}^{n}} V_r (\cdots)V^\dagger_s,
\end{equation}
and its dual applied to the identity gives
\begin{equation}
T'^*(\id) =  \frac{1}{4^n}\sum_{r,s \in \{0,1\}^{n}} V^\dagger_s V_r.
\end{equation}
Then by the triangle inequality and the unitarity of $ V^\dagger_s V_r$ we obtain
\begin{equation}
\norm{T'^*(\id)}_\infty \leq  \frac{1}{4^n}\sum_{r,s \in \{0,1\}^{n}} \norm{V^\dagger_s V_r}_\infty = 1.
\end{equation}
Finally we use H{\"o}lder's inequality to show that a $Z$-twirled non-diagonal unitary map must be strictly trace-decreasing. One can check by symmetry of the twirling map preserves the hermiticity-preserving property of $\mathcal{V} = V(\cdots)\overline{V}$. It follows that the trace of $T'(\rho)$ is real for any density matrix $\rho$. Then using H{\"o}lder's inequality,
\begin{equation}
\Tr[T'(\rho)] = \Tr[T'^*(\id) \rho] \leq \norm{T'^*(\id) \rho}_1 \leq \norm{T'^*(\id)}_{\infty} \norm{\rho}_1 \leq \Tr(\rho).
\end{equation}
In other words, the twirled map $T'$ is trace-non-increasing. But we saw earlier that when $V$ is non-diagonal, the twirled map $T'$ cannot be trace-preserving. Therefore it must be strictly trace-decreasing.

Now assume for contradiction that the convex-unitary channel~$\chan_D$ is diagonal, and has a convex decomposition
\begin{equation}
\hat{\chan}_D = \sum_i p_i V_i \otimes \overline{V}_i 
\quad\text{ with }\quad 0 \leq p_i \leq 1, \sum_{i}p_i = 1
\end{equation}
where some of the unitary matrices~$V_i$ are non-diagonal. Then applying the twirling map, we obtain
\begin{equation}
\mathcal{T}_Z( \hat{\chan}_D) = \sum_i p_i \mathcal{T}_Z(V_i \otimes \overline{V}_i) 
\end{equation}
On the left-hand side,~$\mathcal{T}_Z( \hat{\chan}_D) = \hat{\chan}_D$, so it remains a representation of a CPTP map. On the right-hand side, each~$V_i$ is either diagonal or it is not. If it is diagonal, then~$\mathcal{T}_Z(V_i \otimes \overline{V}_i) = V_i \otimes \overline{V}_i$ and that term preserves the trace. If it is not diagonal then~$\mathcal{T}_Z(V_i \otimes \overline{V}_i)$ is trace-decreasing. Since we have assumed that some unitary terms~$V_i$ are not diagonal, then the right-hand side is a convex combination of trace-preserving and trace-decreasing terms, so is trace-decreasing overall. But the left-hand side is trace-preserving, so this is a contradiction. Therefore all feasible decompositions of~$\chan_D$ as a convex combination of unitary channels must be termwise diagonal. Note that this result holds only for convex combinations of unitary maps. For quasiprobability distributions, where the coefficients $p_i$ can be negative, we can add and subtract non-diagonal terms that cancel each other out.

\end{document}